\documentclass[reqno,11pt,a4paper]{amsart}

\usepackage{amsmath}
\usepackage{amsthm}
\usepackage{mathrsfs}
\usepackage{amsfonts}
\usepackage{graphicx}
%%%%%%%%%%%%%%%%%%%%%%
%\usepackage[colorlinks=true,linkcolor=blue,citecolor=red]{hyperref}
%%%%%%%%%%%%%%%%%%%%%%
\usepackage{epsfig}

\usepackage{amsmath,amssymb,amsfonts,amsthm,amstext,verbatim}
\usepackage{enumerate,color,graphicx,stmaryrd}
\usepackage{psfrag}  % for replacing text in eps figures.
\usepackage{url}
\usepackage{mdwlist}   % lists and itemize
\usepackage{mathtools} % special math symbols (eg: xarrows)
\usepackage{wasysym}   % some special caracters (hexagon)
\usepackage{mathrsfs}
\usepackage{epsfig}
\usepackage{subcaption}
\usepackage{hyperref}
\definecolor{colorlinks}{RGB}{0, 24, 168}
\definecolor{colorcites}{RGB}{124, 10, 2}
\hypersetup{
	colorlinks=true,
	linkcolor=colorlinks,
	citecolor=colorcites,
	urlcolor=colorlinks,
	pdfborder={0 0 0}
}

\usepackage[a4paper,left=3cm,right=3cm,top=3cm,bottom=3cm]{geometry}

%%%%%%%%%%%%%%%%%%%%%%

%\usepackage{a4wide}

\setlength\parskip{2pt}

\newcommand\tempskiped[1]{#1}

%%%%%%%%%%%%%%%%%%%%%%

\newtheorem{theorem}{Theorem}
\newtheorem{proposition}{Proposition}[section]
\newtheorem{lemma}[proposition]{Lemma}
\newtheorem{corollary}[proposition]{Corollary}
\theoremstyle{remark}
\newtheorem{definition}[proposition]{Definition}
\newtheorem{remark}[proposition]{Remark}

%%%%%%%%%%%%%%%%%%%%%%%%%%%%%%%%%%%%%%%%%%%%% cal font  %%%%%%%%%%%%%%%%%%%%%%%%%%%%%%%%%%%%%

\newcommand{\calE}{\mathcal{E}}
\newcommand{\calF}{\mathcal{F}}

\newcommand{\calT}{\mathcal{T}}

\newcommand{\calV}{\mathcal{V}}

%%%%%%%%%%%%%%%%%%%%%% frak font%%%%%%%%%%%%%%%%%%%%%

%%%%%%%%%%%%%%%%%%%%%%%%%%%% bb font%%%%%%%%%%%%%%%%%%

\newcommand{\bbC}{\mathbb{C}}

\newcommand{\bbE}{\mathbb{E}}

\newcommand{\bbP}{\mathbb{P}}

\newcommand{\bbR}{\mathbb{R}}

\newcommand{\bbZ}{\mathbb{Z}}

%%%%%%%%%%%%%%%%%%%%%%%%%%%%%%%%%%%%%%%%%%%
%%%%%%%%%%%%%%%% NOTATION %%%%%%%%%%%%%%%%%
%%%%%%%%%%%%%%%%%%%%%%%%%%%%%%%%%%%%%%%%%%%

\newcommand{\Vertices}{\mathcal{V}}
\newcommand{\Edges}{\mathcal{E}}
\newcommand{\Faces}{\mathcal{F}}

\newcommand{\sle}{\mathrm{SLE}}

\newcommand{\sgn}{\mathrm{sgn}}
%sign

%length
\newcommand{\loops}{\mathrm{loops}}
%loops
\newcommand{\loopsU}{\mathrm{loops}_{[u]}}
%loops around u
\newcommand{\schw}{\mathcal{S}}
%Scwarzian derivative
\newcommand{\edges}{\mathrm{edges}}
%edges

%Pfaffian
\newcommand{\R}{\mathbb{R}}
%real numbers
\newcommand{\C}{\mathbb{C}}
%complex numbers
\newcommand{\T}{\mathbb{T}\mathrm{ri}}
%the triangular lattice
\newcommand{\Hex}{\mathrm{Hex}}
%the hexagonal lattice

\renewcommand{\H}{\mathbb{H}}
%upper half-plane

%upper half-plane branching at u

\newcommand{\Te}{T_{\mathrm{edge}}}       \newcommand{\Ce}{c_{\mathrm{edge}}}
%stress-energy on edges
\newcommand{\Tm}{T_{\mathrm{mid}}}        
%stress-energy on midlines
\newcommand{\CmT}{c_T}
\newcommand{\CmR}{c_R}
%constants Cmid Cedge

\newcommand{\Tgeom}{\mathcal{T}}
%geometric stress-energy tensor

%Cauchy-Riemann stress-energy tensor
\newcommand{\Teta}{T^{[\eta]}}
%stress-energy Ising projection eta
\newcommand{\Reta}{R^{[\eta]}}
%rest of stress-energy Ising projection

%stress-energy Ising projection i\eta

%rest of stress-energy Ising projection \ieta
\newcommand{\Trho}{T^{[\rho]}}
%stress-energy Ising projection rho - NEW DEFINITION
\newcommand{\Rrho}{R^{[\rho]}}
%rest of stress-energy Ising projection - NEW DEFINITION
\newcommand{\Tmrho}{T^{\left[-\rho \right]}}
%stress-energy Ising projection -\rho - NEW DEFINITION
\newcommand{\Rmrho}{R^{\left[-\rho \right]}}
%rest of stress-energy Ising projection -\rho - NEW DEFINITION

\newcommand{\Tmu}{T^{[\mu]}}
%stress-energy Ising projection mu

%stress-energy Ising projection imu

\newcommand{\Ttau}{T^{[\tau]}}
%stress-energy Ising projection tau - NEW DEFINITION
\newcommand{\Tmtau}{T^{[-\tau]}}
%stress-energy Ising projection -tau - NEW DEFINITION
\newcommand{\Rtau}{R^{[\tau]}}  %NEW DEFINITION

\renewcommand{\b}{\mathfrak{b}}
%general boundary conditions
\newcommand{\dob}{{b,b'}}
%Dobrushin boundary conditions

\newcommand{\E}{\mathbb{E}_\Omega}
%expectation

\newcommand{\Edelta}{\mathbb{E}_{\Omega_\delta}}

\newcommand{\OmegaDual}{\Omega^{\mathrm{dual}}}
%dual lattice
\newcommand{\OmegaE}{\OmegaDual_{\mathrm{edge}}}
%dual lattice with a defect on an edge

%dual lattice with a defect on a midline

\newcommand{\OmegaU}{{[\Omega,u]}}
%the double cover of \Omega branching at u
\newcommand{\OmegaUprime}{{[\Omega',u']}}
%the double cover of \Omega' branching at u

%faces around w

\newcommand{\weOne}{w_{\rho}}
\newcommand{\weTwo}{w_{\varrho\rho}}
\newcommand{\weThree}{w_{\varrho^2\rho}}
\newcommand{\weFour}{w_{\!-\rho}}
\newcommand{\weFive}{w_{\!-\varrho\rho}}
\newcommand{\weSix}{w_{\!-\varrho^2\rho}}
%faces around w turned by rho

\newcommand{\seOne}{\sigma_{\rho}}
\newcommand{\seTwo}{\sigma_{\varrho\rho}}
\newcommand{\seThree}{\sigma_{\varrho^2\rho}}
\newcommand{\seFour}{\sigma_{\!-\rho}}
\newcommand{\seFive}{\sigma_{\!-\varrho\rho}}
\newcommand{\seSix}{\sigma_{\!-\varrho^2\rho}}

\newcommand{\weta}{w^{\left[\eta \right]}}
%\eta oriented face w

%i \eta oriented face w
\newcommand{\wrho}{w^{\left[\rho \right]}}
%\rho oriented face w - NEW DEFINITION
\newcommand{\wmrho}{w^{\left[-\rho \right]}}
% -\rho oriented face w - NEW DEFINITION
\newcommand{\wup}{\weta_{\vphantom{\mathrm{d}}\mathrm{up}}}
\renewcommand{\wup}{\wrho_{\vphantom{\mathrm{d}}\mathrm{up}}}
%up edge in w - REDEFINED
\newcommand{\wdown}{\weta_{\mathrm{down}}}
\renewcommand{\wdown}{\wrho_{\mathrm{down}}}
%down edge in w - REDEFINED

 %NEW DEFINTIION
         %NEW DEFINTIION

\newcommand{\wpdown}{\wpmu_{\mathrm{down}}}
\newcommand{\wpup}{\wpmu_{\mathrm{up}}}

\newcommand{\wptau}{w^{\prime[\tau]}}                   %NEW DEFINTIION
\newcommand{\wpmtau}{w^{\prime[-\tau]}}                 %NEW DEFINTIION
\renewcommand{\wpdown}{\wptau_{\mathrm{down}}}            %REDEFINED
\renewcommand{\wpup}{\wptau_{\mathrm{\vphantom{d}up}}}    %REDEFINED

%edges pointing out from a

\newcommand{\wind}{\mathrm{wind}}
%winding
\newcommand{\windU}{\mathrm{wind}_{[u]}}
%winding around u
\newcommand{\phase}{\phi}
%winding
\newcommand{\phaseU}{\phi_{[u]}}
%winding around u
\newcommand{\weight}{\mathrm{w}}
%weight of a configuration

%a normal vector pointing out

\newcommand{\F}{F_\Omega}       \newcommand{\Fdelta}{F_{\Omega_\delta}}

%fermionic observable
\newcommand{\FU}{F_\OmegaU}     \newcommand{\FdeltaU}{F_{[\Omega_\delta,u_\delta]}}
                                
%spinor observable around u
\newcommand{\feta}{f_{\Omega}^{[\eta]}}
%\eta-fermion in Omega
\newcommand{\fetaprime}{f_{\Omega'}^{[\eta']}}
%\eta-fermion in Omega'

%\eta-fermion in C

%\eta-fermion in H
\newcommand{\f}{f_{\Omega}}
%fermion in Omega
%\newcommand{\fdag}{f_{\Omega}^\dag}
%dag-fermion in Omega
\newcommand{\fstar}{f_{\Omega}^\star}
%star-fermion in Omega
\newcommand{\freg}{f_{\Omega}^\dag}
%regularised fermion in Omega
\newcommand{\fprime}{f_{\Omega'}}
%fermion in Omega'
%\newcommand{\fdagprime}{f_{\Omega'}^\dag}
%dag-fermion in Omega'
\newcommand{\fstarprime}{f_{\Omega'}^\star}
%star-fermion in Omega'

%fermion in H
%\newcommand{\fdagH}{f^\dag_{\H}}
%dag-fermion in H
\newcommand{\fstarH}{f^\star_{\H}}
%star-fermion in H
\newcommand{\fmu}{f_{\Omega}^{[\mu]}}
%\mu-fermion in Omega
\newcommand{\fetaU}{f_\OmegaU^{[\eta]}}
%spinor in direction eta branching at u

%spinor in direction eta branching at u in the half-plane

%spinor in Omega' in the direction eta branching at u
\newcommand{\fU}{f_\OmegaU}
%spinor branching at u
%\newcommand{\fdagU}{f^\dag_\OmegaU}
%dag-spinor branching at u
\newcommand{\fstarU}{f^\star_\OmegaU}
%star-spinor branching at u
\newcommand{\fregU}{\fU^\dag}
%regularised spinor branching at u
\newcommand{\spinor}{g_\OmegaU}
%spinor branching at u
\newcommand{\spinorReg}{\spinor^\dag}
%regularised spinor branching at u
\newcommand{\spinorprime}{g_\OmegaUprime}
%spinor branching at u

\newcommand\opsi{\psi^\star}

\newcommand{\conf}{\mathrm{Conf}}
%set of configurations

\newcommand{\dOne}{\partial_1}
%discrete derivative in the first argument
\newcommand{\dTwo}{\partial_2}
%discrete derivative in the second argument
\newcommand{\sOne}{\varsigma_1}
%discrete mean value in the first argument
\newcommand{\sTwo}{\varsigma_2}
%discrete mean value in the second argument

%main half-edge of e

%second half-edge of e

%main half-edge of b

%second half-edge of b

\renewcommand\Re{\operatorname{Re}}
%real part
\renewcommand\Im{\operatorname{Im}}
%imaginary part

%Schwarzian derivative

\newcommand\Corr[1]{\langle #1 \rangle}
\newcommand\CorrO[2][+]{\Corr{#2}_\Omega^{#1}}

\newcommand\CorrFerm[2]{\prec\! #2 \!\succ_{#1}{}}

\newcommand\ConfO[1]{\conf_\Omega(#1)}
\newcommand\ConfOe{\conf^{[e]}_\Omega}

\newcommand\ConfOs[1]{\conf^{\mathrm{spin}}_\Omega(#1)}
\newcommand\ConfOl[1]{\conf^{\mathrm{loop}}_\Omega(#1)}

%operator of the insertion of a conical singularity

\newcommand{\tedge}{t^{[e]}}
%coefficients in \Te

%coefficients in \Tm

\def\cZ{\mathcal Z}

\def\Weight{\mathrm{W}}

\def\LambdaU{U}

%%%%%%%%%%%%%%%%%%%%%%%%%%%%%%%%%%%%%%%%%%%
%%%%%%%%%%%%%%%% Table of contents %%%%%%%%%%%%%%%%%
%%%%%%%%%%%%%%%%%%%%%%%%%%%%%%%%%%%%%%%%%%%

%\makeatletter
%\def\l@section{\@tocline{1}{0pt}{1pc}{}{}}
%\def\l@subsection{\@tocline{2}{0pt}{1pc}{4.6em}{}}
%\def\l@subsubsection{\@tocline{3}{0pt}{1pc}{7.6em}{}}
%\renewcommand{\tocsection}[3]{%
%  \indentlabel{\@ifnotempty{#2}{\makebox[2.3em][l]{%
%   \ignorespaces#1 #2.\hfill}}}#3}
%\renewcommand{\tocsubsection}[3]{%
%  \indentlabel{\@ifnotempty{#2}{\hspace*{2.3em}\makebox[2.3em][l]{%
%    \ignorespaces#1 #2.\hfill}}}#3}
%\renewcommand{\tocsubsubsection}[3]{%
%  \indentlabel{\@ifnotempty{#2}{\hspace{•}e*{4.6em}\makebox[3em][l]{%
%    \ignorespaces#1 #2.\hfill}}}#3}
%\makeatother
%
%\DeclareRobustCommand{\SkipTocEntry}[5]{}

%comments

\begin{document}

\title{Discrete stress-energy tensor in the loop~$O(n)$ model}
\date{\today}

\author[Dmitry Chelkak]{Dmitry Chelkak$^\mathrm{A}$}
\email{ddchelkak@umich.edu}
\author[Alexander Glazman]
{Alexander Glazman$^\mathrm{B}$}
\email{alexander.glazman@uibk.ac.at}
\author[Stanislav Smirnov]
{Stanislav Smirnov$^\mathrm{C,D}$}
\email{stanislav.smirnov@unige.ch}

\thanks{\textsc{${}^\mathrm{A}$ University of Michigan, USA}}

\thanks{\textsc{${}^\mathrm{B}$ Universität Innsbruck, Austria.}}

\thanks{\textsc{${}^\mathrm{C}$
D\'epartement de Math\'ematiques, Universit\'e de Gen\`eve, Switzerland.}}
\thanks{\textsc{${}^\mathrm{D}$ Chebyshev Laboratory, Saint-Petersburg State University, Russia}}

\subjclass[2000]{82B20}
\keywords{Discrete stress-energy tensor, discrete holomorphic observables, 2D Ising model, convergence of correlation functions}

\begin{abstract}
We study the loop $O(n)$ model on the honeycomb lattice. By means of local non-planar deformations of the lattice, we construct a discrete stress-energy tensor. 
For $n\in [0,2]$, it gives a new observable satisfying a part of Cauchy-Riemann equations. We conjecture that it is approximately discrete-holomorphic and converges to the stress-energy tensor in the continuum, which is known to be a holomorphic function with the Schwarzian conformal covariance. 
In support of this conjecture, we prove it for the case of~$n=1$ which corresponds to the Ising model. Moreover, in this case, we show that the correlations of the discrete stress-energy tensor with primary fields converge to their continuous counterparts, which satisfy the OPEs given by the CFT with central charge~$c=1/2$.

Proving the conjecture for other values of~$n$ remains a challenge. In particular, this would open a road to establishing the convergence of the interface to the corresponding~$\mathrm{SLE}_\kappa$ in the scaling limit. 
\end{abstract}

\maketitle

\tableofcontents

\section{Introduction}
Ising model was introduced by Lenz in 1921 to describe the phenomenon of ferromagnetism~\cite{Len20}.
It is arguably the most studied model of statistical mechanics, see~\cite[Chapter~3]{FriVel17} for a brief introduction. 
The interest in the Ising model is due to its relatively simple definition allowing to a very rich behaviour.
In any dimension~$d\geq 2$, the model undergoes an order/disorder phase transition, due to a classical argument of Peierls~\cite{Pei36}.
We focus on the planar case~$d=2$, where the Conformal Field Theory (CFT) was introduced in the seminal works by Belavin, Polyakov, Zamolodchikov~\cite{BelPolZam84,BelPolZam84a}. 
Based on the ideas of the Renormalization Group Theory asserting scale invariance, the CFT puts a bridge between two-dimensional minimal models of statistical mechanics and a family of Virasoro algebras (e.g., see~\cite{DiFMatSen97} or~\cite{Mus10}). 
The latter are Li algebras satisfying certain relations. 
These Virasoro algebras are expected to describe the whole hierarchy of limits of discrete correlation functions as the mesh-size of the lattice tends to zero.
The CFT corresponding to the critical Ising model has the central charge~$c=1/2$.

Before 2000s, this passage was usually considered in the infinite-volume (or half-infinite-volume) setup. 
During the two last decades new techniques appeared, which led to a number of rigorous convergence results (confirming the existing CFT predictions) for various correlators in the critical Ising model considered on discrete approximations to a \emph{general} planar domain.
The first results of this kind were obtained by the third author in~\cite{Smi06,Smi10} for holomorphic observables in the so-called FK-Ising model (also known as the random cluster model with~$q=2$). 
In a similar manner, one can view these observables as solutions to some \emph{well-posed} discrete Riemann-type boundary value problems (see Remark~\ref{rem:bvp_in_discrete}). This opens a way for the treatment of their limits as the mesh-size tends to zero in general planar domains.
Further developments of the techniques proposed in~\cite{Smi06,Smi10} include the convergence results for the basic fermionic observables~\cite{CheSmi11}, energy density correlations~\cite{HonSmi11,Hon10}, spin correlations~\cite{CheIzy13,CheHonIzy15} and mixed correlations~\cite{CheHonIzy21}; the latter two works rely on the convergence of discrete spinor observables, see Section~\ref{sub:T-spin-via-spinors} below for their definition.

Still, it remains a challenge to find a direct connection between the hierarchy of fields provided by the CFT and discrete correlation functions in the Ising model. The stress-energy tensor is an operator of an infinitesimal change of the metric. It plays a crucial role in the hierarchy of fields in the CFT as it allows relate different fields to each other. The goal of this work is to provide a geometric construction of a discrete stress-energy tensor on the hexagonal lattice that is partially discrete holomorphic and converges to the continuous counterpart that has Schwarzian conformal covariance. We also show that the correlations of the discrete stress-energy tensor and other fields satisfy the Operator Product Expansions (OPEs) predicted by the CFT.

Our construction of the stress-energy tensor extends to the critical loop~$O(n)$ model on the hexagonal lattice.
This is the model on collections of non-intersecting cycles introduced in~1981 by Domani et al~\cite{DomMukNie81}.
The model has two real parameters~$n,x>0$ and has a rich conjectured phase diagram 
In particular, Nienhuis~\cite{Nie82} non-rigorously derived that a phase transition in terms of loop lengths occurs when~$n\leq 2$ at~$x_c(n) =1/\sqrt{2+\sqrt{2-n}}$: there should be macroscopic when~$x\geq x_c(n)$ (with the Russo--Seymour--Welsh estimates) and exhibit exponential tails when~$x<x_c(n)$.
This has been established rigorously only at~$n=1$ (Ising model)~\cite{AizBarFer87,Tas16}.
There is also a way to extend the model to~$n=0$; this corresponds to the self-avoiding walk and the critical point (inverse connective constant) was confirmed to be~$1/\sqrt{2+\sqrt{2}}$~\cite{DumSmi12}.
The macroscopic behaviour has been recently established at~$x_c(n)$ when~$n\in [1,2]$~\cite{DumGlaPel21}, at~$n=2$ when~$x\in [x_c(2),1]$~\cite{GlaMan21,GlaLam23} and in the vicinity of~$n=x=1$ (critical site percolation on the triangular lattice)~\cite{CraGlaHarPel20}.
When~$n>2$, the loops are expected to exhibit exponential decay at all values of~$x>0$.
The work~\cite{DumPelSamSpi17} has proved this when~$n$ is large enough and established a different type of a transition: from a dilute (when~$nx^6$ is small) to a dense regime (when~$nx^6$ is large).
At integer values of~$n$, there are heuristic connections to the spin~$O(n)$ model.
The latter extends the Ising model ($n=1$) to spin systems with values on the~$n$-dimensional sphere: $n=2$ is the XY model with the seminal Berezinskii--Kosterlitz--Thouless transition~\cite{Ber71,KosTho72} established by Fröhlich and Spencer~\cite{FroSpe81} (see also~\cite{LisEng23} for an elementary argument that also proves sharpness, it relies on~\cite{Lam21});
$n=3$ is the Heisenberg model, for which the Polyakov's conjecture asserts exponential decay at all temperatures~\cite{Pol75}.
We refer to~\cite{PelSpi17} for a survey on the loop~$O(n)$ and the spin~$O(n)$ models.

Nienhuis~\cite{Nie84} non-rigorously related the loop~$O(n)$ model to the Coulomb gas, see~\cite{Mus10} for more details on the subject.
This allowed him to derive the exact values of the universal critical exponents.
Expressing the central charge corresponding to the loop~$O(n)$ model as a function of~$n$, one obtains
\begin{align}
\label{eq:intro-central-charge}
	c = (6-\kappa)(3\kappa - 8) /2\kappa,
	\quad
	\text{where}
	\quad
	\kappa = 
	\begin{cases}
		\tfrac{4\pi}{2\pi-\arccos(-n/2)} &\text{ if } x=x_c(n),\\
		\,\,\,\,\tfrac{4\pi}{\arccos(-n/2)} &\text{ if } x>x_c(n).\\
	\end{cases}	
\end{align}
In particular, this description would mean that the scaling limit of the model is described by the Conformal Loop Ensemble (CLE) of parameter~$\kappa$ (see eg. \cite[Section~5.6]{KagNie04}).

The same conjecture can be obtained by considering the parafermionic observables introduced by the third author in the case of
the random cluster model in~\cite{Smi06,Smi10}:
\begin{align}
\label{eq:intro-fermion}
	F_\Omega(b,z;\sigma) = \sum_{\gamma\,:\,b\to z}{n^{\#\loops(\gamma)}x^{\#\edges(\gamma)}e^{-i\sigma\wind(\gamma)}}\,,
\end{align}
where the sum is taken over all non-intersecting collections of loops and a path linking two fixed bound points~$b$ and~$z$ on a given domain~$\Omega$ on the hexagonal lattice,
$\wind(\gamma)$ denotes the full rotation of this path as one goes from~$b$ to~$z$ and~$\sigma$ is determined by~$n$ and~$x$:
\[
	\begin{cases}
		\sigma = 1 - \tfrac{3}{4\pi}\arccos(-n/2) & \text{ if } x = \tfrac{1}{\sqrt{2+\sqrt{2-n}}} = x_c(n),\\
		\sigma = 1 - \tfrac{3}{4\pi} (2\pi-\arccos(-n/2)) & \text{ if } x = \tfrac{1}{\sqrt{2-\sqrt{2-n}}} = \tilde{x}_c(n).\\
	\end{cases}
\]
This observable satisfies a part of the discrete Cauchy-Riemann equations and its complex phase at each boundary point can be determined.
This led the third author~\cite{Smi06} to conjecture that the parafermionic observable, when properly normalised, converges to the unique holomorphic solution of the corresponding boundary value problem in the continuous domain:
\begin{align}
\label{eq:intro-parafermion-convergence}
\delta^{-\sigma}\cdot F_\Omega(b,z)/\cZ_\Omega^{b,b'} \to \mathcal{C}\cdot (\varphi'/\varphi)^\sigma\,,
\end{align}
where~$\mathcal{C}$ is a lattice dependent constant and~$\varphi$ is a conformal map from~$\Omega$ onto the upper half-plane.
Note that the lefthand side in~\eqref{eq:intro-parafermion-convergence} is a martingale with respect to the growing interface and~$ (\varphi'/\varphi)^\sigma$ is a martingale with respect to~$\sle_\kappa$ for~$\kappa = 3/(\sigma + 1/2)$ (one needs to consider slit domains, see~\cite{Smi06}).
Replacing~$\sigma$ by its expression in terms of~$n$, one gets the same value of~$\kappa$ as the one derived via the Coulomb gas formalism.

The conjecture~\eqref{eq:intro-parafermion-convergence} was proven for the critical Ising model ($n=1,x=1/\sqrt{3}$).
The work in this direction was originated by the third author~\cite{Smi10} who proved the same statement for the FK Ising model.
A series of papers~\cite{CheSmi11,CheSmi12,DumHonNol11,KemSmi12} by a group of authors led to~\cite{CheDumHonKemSmi14}
establishing convergence of the interface to the~$\sle_3$.
Conformal invariance was also proven for the critical site percolation on the triangular lattice ($n=x=1$) by establishing convergence of a similar discrete observable~\cite{Smi01,CamNew06} (see also~\cite{KhrSmi21}).
While conformal invariance and convergence of the parafermionic observable remain open for~$n\neq 1$,
the observable has proven to be useful in the study of the phase diagram~\cite{DumSmi12,DumGlaPel21}.
This motivates the interest in other combinatorial observables satisfying local relations
and having (conjectural) conformally covariant limits, especially in the case of the self-avoiding walks where
the conformal invariance remains a big challenge for the probabilistic community.

In this work, we introduce a novel observable for the loop~$O(n)$ model that satisfies a part of Cauchy--Riemann equations and that we expect to behave as a discrete stress-energy tensor; we prove this rigorously for~$n=1,x=1/\sqrt{3}$ that corresponds to the critical Ising model on the triangular lattice.
A discrete analogue of the deformation of a metric is a modification of the underlying graph and of the coupling constants. 
In the loop~$O(n)$ model, Nienhuis discovered~\cite{Nie90} a family of integrable weights satisfying the Yang-Baxter equation that extends the model to general rhombic tilings in an integrable way.
This allows to define a discrete stress-energy tensor as an operator describing a particular local infinitesimal deformation of the tiling (see Section~\ref{sec:combi} for the details).
In a particular case of the Ising model, the resulting object can be written as a polynomial of several spins.
Compared to an independent concurrent work of Hongler, K{\"y}t{\"o}l{\"a} and Viklund~\cite{HonKytVik22}, we construct the discrete stress-energy tensor as an explicit local field.

\subsection{Loop~$O(n)$ model: observables and relations}
\label{sub:results-o-n}

Consider the hexagonal lattice of mesh-size~$\delta$ whose faces are centred at~$\{\delta(k \cdot e^{\pi i/6}+\ell \cdot e^{5\pi i/6})\colon k,\ell\in\bbZ\}$; we denote the lattice by~$\Hex_\delta$ and identify its faces with their centres.
For a simply-connected domain~$\Omega$, we denote its boundary by~$\partial\Omega$.
Among all cycles on~$\Hex_\delta$ contained in~$\Omega$ consider the one that surrounds the largest area (i.e. number of faces).
We call the subgraph of~$\Hex_\delta$ consisting of vertices and edges surrounded by this cycle (or contained in it) a $\delta$-approximation of~$\Omega$ and denote it by~$\Omega_\delta$; see Fig.~\ref{fig:intro-notations}.
Denote the sets of vertices, edges and faces in~$\Omega_\delta$ by $\Vertices_\delta(\Omega)$, $\Edges_\delta(\Omega)$ and~$\Faces_\delta(\Omega)$ respectively.
Denote the set of edges and vertices on boundary cycle of~$\Omega_\delta$ by~$\partial\calE_\delta(\Omega)$ and~$\partial\Vertices_\delta(\Omega)$ respectively.
For brevity we often omit~$\delta$ in the notation.

Let~$b,b'\in \partial\calV(\Omega)$ be two such vertices that have degree three in~$\Omega$.
Define two sets of configurations:
\begin{itemize}
	\item $\ConfOl{\varnothing}$ is the set of all spanning subgraphs of~$\Omega\setminus \partial\calE(\Omega)$, in which each vertex has an even degree ($0$ or~$2$).
	\item $\ConfOl{b,b'}$ is the set of all spanning subgraphs of~$\Omega\setminus \partial\calE(\Omega)$, in which~$b$ and~$b'$ have degree~$1$ and all other vertices have even degree.
\end{itemize}
Take~$\b\in \{\varnothing , \{b,b'\}\}$.
The loop~$O(n)$ model under~$\b$ boundary conditions is a probability measure on~$\ConfOl{\varnothing}$ defined by
\begin{equation}\label{eq:def-loop-o-n}
	{\sf Loop}^\b_{\Omega_\delta,n,x}(\omega) = \tfrac{1}{\cZ(\Omega,n,x)}n^{\#\loops(\omega)}x^{\#\edges(\omega)},
\end{equation}
where~$\#{\rm loops}(\omega)$ and~$\#{\rm edges}(\omega)$ count the number of loops and edges in~$\omega$ respectively and~$\cZ^\b(\Omega,n,x)$ is the unique normalising constant (called {\em partition function}) that renders~${\sf Loop}_{\Omega,n,x}$ a probability measure.

We restrict to the case~$n\in[0,2]$ and suggest a new combinatorial observable at~$x\in \{x_c(n),\tilde{x}_c(n)\}$.
It is obtained by evaluating the partition function after a local deformation of the lattice along the integrable line.
To the best of our knowledge there is no known integrable way to define the loop~$O(n)$ model on a deformed hexagonal lattice
by adding non-homogeneous weights on the edges.
On the other hand, one can switch to the dual lattice and view the loop~$O(n)$ model as a model on the plane tiled with equilateral triangles.
Gluing a pair of adjacent triangles, we get a model on the plane tiled with equilateral triangles and a rhombus with angle~$\pi/3$:
each non-empty triangle contains exactly one arc of a loop and has weight~$x$;
the rhombus is either empty or contains one or two arcs and its weight is a function of the total length of the arcs (see Fig.~\ref{figConfE}).
Allowing for one more local configuration, Nienhuis discovered~\cite{Nie90} a family of integrable weights of rhombuses
satisfying the Yang-Baxter equation, the weights~$(u_1, u_2, v, w_1, w_2)_\theta$ being parametrized by the angle~$\theta$ of a rhombus\footnote{The works~\cite{Nie90} and~\cite{IkhCar09} contain minor typos in the weights, see~\cite{Gla13} for the correction.}:

\begin{figure}
	\includegraphics[width=0.8\textwidth]{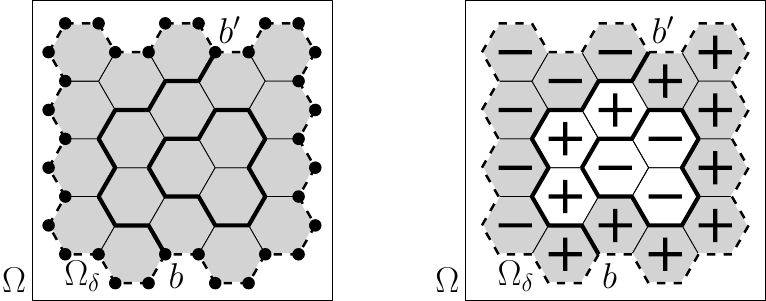}
	\caption{
	\textsc{Left:} 
	Simply-connected domain~$\Omega$ (square) with faces of~$\Omega_\delta$ ($\delta$-approximation of~$\Omega$) in gray.
	The sets of boundary vertices~$\partial\Vertices_\delta(\Omega)$ (bold dots) and edges~$\partial\calE_\delta(\Omega)$ (dashed).
	The Dobrushin conditions are defined by two boundary vertices~$b,b'$ (identified with edges of~$\calE_\delta(\Omega)\setminus \partial\calE_\delta(\Omega)$ incident to them).
	In bold is a sample of a loop configuration in~$\ConfOl{b,b'}$ with a path (called {\em interface}) linking~$b$ and~$b'$.
	\textsc{Right:} The (boundary) faces in~$\partial\calF_\delta(\Omega)$ are in gray.
	The spins~$+/-$ define a spin configuration in~$\ConfOs{b,b'}$. 
	The domain walls between~$+$ and~$-$ form a loop configuration from the left figure; this defines a bijection between~$\ConfOl{b,b'}$ and~$\ConfOs{b,b'}$.}
	\label{fig:intro-notations}
\end{figure}

\begin{align}
\label{eq_u1}
u_1&=\tfrac{1}{t}\cdot\sin[(\pi-\theta)(\sigma-1)]\cdot\sin[\tfrac{2\pi}{3}(\sigma - 1)]\,,\\
\label{eq_u2}
u_2&=\tfrac{1}{t}\cdot\sin[\theta (\sigma - 1)]\cdot\sin[\tfrac{2\pi}{3}(\sigma - 1)]\,,\\
\label{eq_v}
v&=\tfrac{1}{t}\cdot\sin[\theta (\sigma - 1)]\cdot\sin[(\pi-\theta)(\sigma - 1)]\,,\\
\label{eq_w1}
w_1&=\tfrac{1}{t}\cdot\sin[(\tfrac{2\pi}{3}-\theta)(\sigma-1)]\cdot\sin[(\pi-\theta)(\sigma - 1)]\,,\\
\label{eq_w2}
w_2&=\tfrac{1}{t}\cdot\sin[(\theta-\tfrac{\pi}{3})(\sigma - 1)]\cdot\sin[\theta (\sigma - 1)]\,,\\
\text{where} \quad
t&=\frac{\sin^3[\tfrac{2\pi}{3}(\sigma - 1)]}{\sin[\tfrac{\pi}{3}(\sigma - 1)]}+
\sin[(\theta-\tfrac{\pi}{3})(\sigma - 1)]\cdot \sin[(\tfrac{2\pi}{3}-\theta)(\sigma - 1)]. \nonumber
\end{align}
It was pointed out by Ikhlef and Cardy~\cite{IkhCar09} that for these weights the parafermionic observable still satisfies (as it does on the hexagonal lattice) a part of the Cauchy-Riemann equations; see~\cite{AlaBat14} for a connection between the two approaches.
This brings us to a definition of an integrable loop~$O(n)$ model on a Riemann surface tiled by rhombi and equilateral triangles
and allows to deform each rhombus (see Section~\ref{sub:stress-tensor-as-deformation} for details).

The observable~$\Te$ on edges of the hexagonal lattice is then defined as derivative of the partition function under the following perturbation of the original triangular tiling:
replacing two adjacent triangles by a rhombus of angle~$\theta=\pi/3 +\varepsilon$.
This deformation corresponds to the edges of the hexagonal lattice.
Alternatively, $\Te$ can be written as an expectation over the loop~$O(n)$ measure of a certain operator that depends only on how the configuration looks like near a particular edge and on the connectivity pattern on the loops passing through it.

\begin{definition}\label{def:t-e-loop-o-n}
	Take any~$n\in [0,2]$ and~$x\in \{x_c(n),\tilde{x}_c(n)\}$.
	Let~$e=uv$ be a non-boundary edge of~$\Omega$.
	Denote the edges incident to~$u$ and~$v$ by~$uw_1,uw_2,vw_3,vw_4$ in the counter-clockwise order.
%	Define~$\ConfOe$ as the set of all spanning subgraphs of~$\Omega\setminus\partial\Omega$ that have degrees~$0$, $2$ or~$3$ at~$u$ and~$v$ and even degrees at all other vertices; see Fig.~\ref{figConfE}.
	Define
	\[
		\Te^\b (e) =  \Ce + \bbE_{\Omega,n,x}[\tedge(\omega)]\,,  \text{ where }
	\]
%		\frac{1}{\cZ(\Omega, n,x)}\sum_{\omega\in \ConfOe}
%		{n^{\#\loops(\omega;e)}x^{\#\edges(\omega;e)}\cdot \tedge(\omega)}\,, 
	\[
		\tedge(\omega) =
		\begin{cases}
			u_1'(\tfrac{\pi}{3})/u_1(\tfrac{\pi}{3}) & \text{if} \quad \mathrm{deg}_\gamma(u) \neq \mathrm{deg}_\gamma(v),\\
			u_2'(\tfrac{\pi}{3})/u_2(\tfrac{\pi}{3}) & \text{if} \quad uw_1, uv, vw_4 \in \gamma \text{ or }  uw_2, uv, vw_3 \in \gamma,\\
			v'(\tfrac{\pi}{3})/v(\tfrac{\pi}{3}) & \text{if} \quad uw_1, uv, vw_3 \in \gamma \text{ or }  uw_2, uv, vw_4 \in \gamma,\\
			(w_1'(\tfrac{\pi}{3})+w_2'(\tfrac{\pi}{3})n)/w_1(\tfrac{\pi}{3}) & \text{if} \quad uw_1, uw_2, vw_3, vw_4 \in \gamma \text{ and } 
			u\xleftrightarrow{\gamma} v,\\
			(w_1'(\tfrac{\pi}{3})+w_2'(\tfrac{\pi}{3})n^{-1})/w_1(\tfrac{\pi}{3}) & \text{if} \quad uw_1, uw_2, vw_3, vw_4 \in \gamma \text{ and } 
			u\not\xleftrightarrow{\gamma} v
		\end{cases}
	\]
	and~$\Ce=\Ce(n,x)$ is the constant that ensures that~$\Te^\b \to 0$ when~$\Omega_k\nearrow \Hex$.
\end{definition}

\tempskiped{\begin{figure}
    \begin{center}
      \includegraphics[width=\textwidth]{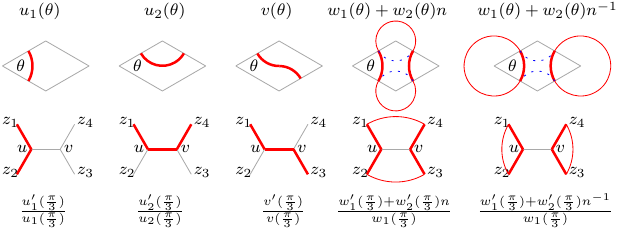}
    \end{center}
    \caption{
    Configurations in a rhombus corresponding to the edge~$e=(u,v)$
    together with their images in~$\ConfOe(\b)$, their weights (at the top) and the coefficients from Definition~\ref{def:t-e-loop-o-n} (at the bottom).}
    \label{figConfE}
\end{figure}}

\begin{remark}
	Though~$\Ce$ should not depend on the sequence of domains~$(\Omega_k)_{k\geq 1}$, this has not been shown in the full generality so far.
	This will follow once all thermodynamic limits are shown to be the same.
	This is known only in the region~$n\geq 1, x \leq 1/\sqrt{n}$~\cite{GlaMan23} that covers a part~$n\in [1,2],x=x_c(n)$ of the critical curve (here also the Russo--Seymour--Welsh estimates have been established~\cite{DumGlaPel21}).
	In the current work, we compute~$\Ce$ exactly for the critical Ising model ($n=1,x=1/\sqrt{3}$).
\end{remark}

\begin{remark}
	In our definition of the loop~$O(n)$ model on a deformed lattice we keep the loop-weight equal to~$n$
	and change only the edge-weight~$x$.
	Note that this is consistent with standard considerations of the loop~$O(n)$ model on the cylinder, e.g. see~\cite{Car13, Car96}.
	It is probably necessary to deform the loop-weight to understand better the relation with the Coulomb gas
	following the idea of Nienhuis~\cite{Nie82} of describing the full weight of a configuration as product of local weights.
	In this case the derivative of the partition function will get an additional term corresponding to varying the loop-weight.
	We do not pursue this direction here.
\end{remark}

The real-valued observable~$\Te$ describes the respond of (the partition function of) the loop $O(n)$ model in a given discrete domain to an infinitesimal change of the discrete complex structure.
We construct a complex-valued observables~$\Tgeom^\b$ which projections are given by~$\Te^\b$.
In field theories the corresponding operator is called the stress-energy tensor.
Hence, we refer to the observables~$\Te^\b$ as a discrete stress-energy tensor.
Although one could add any null-field without affecting the OPEs in the limit stated below, the definition given above has two benefits: clear geometrical interpretation and local relations stated in the following proposition. 
These relations can be viewed as a part of s-holomorphicity (or propagation) equations that imply the Cauchy-Riemann relations, see Section~\ref{sub:fermion}.

\begin{proposition}\label{prop:relation-Te}
	Let~$a$, $b$, $c$, $d$, $e$, $f$ be the edges of a hexagon of~$\Omega$ (listed in the counter-clockwise order). Then the following holds:
	\begin{align}\label{eq_relation_Te}
		\Te^{\b}(d)-\Te^{\b}(a)=\Te^{\b}(b)-\Te^{\b}(e)=\Te^{\b}(f)-\Te^{\b}(c)\,.
	\end{align}
\end{proposition}

According to the conformal field theory, the stress-energy tensor has the Schwarzian conformal covariance:
\begin{align*}
\langle T(w)O_1(\varphi(z_1))\dots O_k(\varphi(z_k))\rangle_\Omega
&=\left[ (\varphi'(w))^2\langle T(\varphi(w)) O_1(\varphi(z_1))\dots O_k(\varphi(z_k)) \rangle_{\Omega'} \right.\\
&\left.+ \tfrac{c}{12}[\schw\varphi](w) \langle O_1(\varphi(z_1))\dots O_k(\varphi(z_k))\rangle_{\Omega'}\right]
\prod_{j=1}^k |\varphi'(z_j)|^{\Delta_j}\,,
\end{align*}
where $[\schw\varphi](w)\,:=\,\frac{\varphi'''(w)}{\varphi'(w)}-\frac{3}{2}\left[\frac{\varphi''(w)}{\varphi'(w)}\right]^2$
is the Schwarzian derivative of a conformal mapping $\varphi \, : \, \Omega \to \Omega'$,
and~$O_j(z_j)$ stands for a real-valued primary field at~$z_j\in\Omega\setminus\{w\}$ with a scaling exponent~$\Delta_j$,
and~$c$ is the central charge of the theory.

As we already mentioned above, the loop~$O(n)$ model is conjectured to have a conformally invariant scaling limit which can be described by the CFT with a central charge given by eq.~\eqref{eq:intro-central-charge}.
Hence, we conjecture that~$\Tgeom^\dob$, when properly normalized, converges to the unique martingale of~$\sle_\kappa$ which has the Schwarzian conformal covariance.

{\bf Conjecture T.} There exists a lattice dependent constant~$\mathcal{C}$ such that for~$\sigma = 3/\kappa - 1/2$
\begin{align*}
\delta^{-2}\Tgeom &\xrightarrow[\delta\to 0]{} \mathcal{C}\cdot\frac{c}{12}[\schw\varphi](w)\,,\\
\delta^{-2}\Tgeom^{b,b'} &\xrightarrow[\delta\to 0]{} \mathcal{C}\cdot\left(\sigma\left[\frac{\varphi'(w)}{\varphi(w)}\right]^{2}+\frac{c}{12}[\schw\varphi](w)\right)\,.
\end{align*}

It is easy to see that~$\Te^{b,b'}$ is a martingale with respect to the growing interface from~$b$ to~$b'$.
Hence, the proof of the above conjecture would imply convergence of the interface to the~$\sle_\kappa$.
This is similar to the approach of the parafermionic observables proposed in~\cite{Smi06}.
As a strong support for this conjecture, we prove it for~$n=1$ (Ising model).

We note that on the boundary the conjecture for~$n=0$ agrees with the restriction property of the~$\sle_{8/3}$.
Indeed, in this case one has~$c=0$ and Conjecture~T means that~$\Tgeom^{b,b'}$, after a proper rescaling,
converges to~$\frac{5}{8}|\varphi'/\varphi|^2\cdot\delta^{2}$.
On the boundary the latter is the probability to intersect the $\delta$-neighbourhood of a boundary point.
The boundary values of~$\Tgeom^{b,b'}$ can be interpreted in a similar way.
Indeed, only paths touching a boundary edge contribute to the value of~$\Tgeom^{b,b'}$ at it.

\subsection{Ising model: convergence statements}

Let~$\Omega$ be a simply-connected domain on~$\bbC$.
Fix~$\delta>0$ and recall that~$\Omega_\delta$ is a~$\Hex_\delta$-approximation of~$\Omega$.
Denote the set faces in~$\Omega_\delta$ adjacent to this cycle by~$\partial\Faces_\delta(\Omega)$.
Define~$\partial\Edges_\delta(\Omega)$ as the set of half-edges separating two faces in~$\partial\Faces_\delta(\Omega)$ and incident to a vertex in~$\Vertices_\delta(\Omega) \setminus \partial\Vertices_\delta(\Omega)$.
As above, we often omit~$\delta$ in the notation for brevity.

We identify each vertex in~$\partial\calV(\Omega)$ that has degree three in~$\Omega$ with the unique edge in~$\Edges_\delta(\Omega)\setminus \partial\calE_\delta(\Omega)$ incident to it.
Let~$b,b'\in \Edges_\delta(\Omega)\setminus \partial\calE_\delta(\Omega)$.
Define two sets of configurations:
\begin{itemize}
	\item $\ConfOs{\varnothing}$ is the set of all possible ways to assign~$\pm 1$ to the faces in $\Faces_\delta(\Omega)$ that assign~$+1$ to all faces in~$\partial\Faces_\delta(\Omega)$.
	\item $\ConfOs{b,b'}$ is the set of all possible ways to assign~$\pm 1$ to the faces in $\Faces_\delta(\Omega)$ such that all faces in $\partial\Faces_\delta(\Omega)$ between~$b$ and~$b'$ in the counter-clockwise order get~$+1$, and all faces between~$b'$ and~$b$ get~$-1$ (see Fig.~\ref{fig:intro-notations}).
\end{itemize}

The Ising model on the faces of~$\Omega_\delta$ at an inverse temperature~$\beta>0$ and under boundary conditions~$\b\in \{\varnothing,\{b,b'\}\}$ is supported on~$\ConfOs{\b}$ and is defined by 
\begin{equation}
	\label{eq:def-ising-hex}
	\bbP_{\Omega_\delta,\beta}^{\b}(\sigma) = \tfrac{1}{\cZ_{\Omega_\delta,\beta}^{\b}}\cdot 
	\exp[\beta\sum_{u\sim v}{\sigma(u)\sigma(v)}],
\end{equation}
where the sum is taken over all pairs of adjacent faces~$u,v\in \Faces_\delta(\Omega)$ and~$\cZ_{\Omega_\delta,\beta}^{\b}$ is the partition function.

We chose to define the Ising model on the faces of the hexagonal lattice instead of vertices of the triangular lattice, since this is more convenient for a geometric construction discussed in Section~\ref{sec:combi}.
The value~$\beta_c=\log \sqrt{3}/2$ is known to be critical for this Ising model, see~\cite{Wan50,AizBarFer87}: the correlations decay exponentially in the distance when~$\beta < \beta_c$ and remain uniformly positive when~$\beta>\beta_c$. 

We consider the six possible orientations of edges of the dual lattice:
\[
	\wp:= \{1,\varrho,\varrho^2,\varrho^3,\varrho^4,\varrho^5\}, 
	\quad
	\text{where } \varrho:= e^{\pi i/3}.
\]

\begin{definition}\label{def:T-via-spins}
	Fix a simply-connected domain~$\Omega$ and a mesh-size~$\delta>0$.
	Let a face~$w\in\Faces_\delta(\Omega)$ be such that the six neighbours of~$w$ are also in~$\Faces_\delta(\Omega)$.
	We index these faces by~$\rho\in\wp$: define~$\weOne:=w - \delta i \rho$; see Fig.~\ref{fig:t-edge-mid-rel}.
	Now fix some~$\rho\in\wp$ and define
	\begin{align*}
		%%\label{eq_def_Teta}\label{eq_def_Reta}
		\Trho(w):=   ~\CmT  & + {\textstyle\frac{1}{24\sqrt{3}}}(1\!+\!\sigma_0\seOne)(1\!+\!\sigma_0\seFour)(2\!-\!\sigma_0\seTwo)(2\!-\!\sigma_0\seThree),
		\\
		\Rrho(w) := ~\CmR & - {\textstyle\frac{1}{2\sqrt{3}}}[(1\!-\!\sigma_0\seOne)+(1\!-\!\sigma_0\seFour)]\\
		& + {\textstyle\frac{1}{12\sqrt{3}}}(1\!-\!\seOne\seFour)[(2\!-\!\sigma_0\seTwo)(2\!-\!\sigma_0\seThree)+(2\!-\!\sigma_0\seFive)(2\!-\!\sigma_0\seSix)]\,,
	\end{align*}
	where~$\sigma_0, \seOne, \seTwo, \seThree, \seFour, \seFive, \seSix$ are the spins at~$w, \weOne, \weTwo, \weThree, \weFour, \weFive, \weSix$ respectively and the constants~$\CmT$ and~$\CmR$ are given by
	\begin{equation}
		\label{eq:CmTCmRCmid=}
		\CmT:=\tfrac{3}{2\pi}\! - \!\tfrac{1}{\sqrt{3}}\,,\qquad\CmR:=\tfrac{7}{3\sqrt{3}}\! - \!\tfrac{4}{\pi}.
	\end{equation}
	The projections~$\mathcal{T}^{[\rho]}$ of the discrete stress-energy tensor are defined by
	\[
		\mathcal{T}^{[\rho]}(w):=\Trho(w)+\Tmrho(w)+\Rrho(w)
	\]
	Finally, we define the discrete stress-energy tensor~$T$:
	\[
		T(w) :=~ -\tfrac{2}{3}\cdot \sum_{\rho=1,\varrho^2,\varrho^4}\,\bar{\rho}^2 \,\mathcal{T}^{[\rho]}(w) \,.
	\]
\end{definition}

\begin{remark} \label{rem:T-and-R}
	1) Note that~$\Rrho(w)=\Rmrho(w)$, whence~$\mathcal{T}^{[\rho]}(w) = \mathcal{T}^{[-\rho]}(w)$.
	
	2) We prefer to separate the terms~$\Trho(w)$ and~$\Rrho(w)$ for the following reason: its expectations have a higher order of decay than~$\delta^2$ and hence disappear in the scaling limits considered in Theorem~\ref{thm:Ising1}. Moreover, it has the same effect in Theorems~\ref{thm:Ising2} and~\ref{thm:Ising3}: the terms containing~$\Rrho(w)$ have higher orders of decay comparing to those containing~$\Trho(w)+\Tmrho(w)$ and hence do not contribute to the limits, see Proposition~\ref{prop_via_fermion_t} and Section~\ref{sec:Ising-convergence} for more details.
\end{remark}

We now formulate the first of our convergence results.

\begin{theorem} \label{thm:Ising1}
	Let~$\Omega\subset\C$ be a bounded simply connected domain, $w\in\Omega$ and~$b,b'\in\partial\Omega$.
	For a mesh-size~$\delta>0$, consider~$w_\delta\in\Faces(\Omega_\delta)$ that contains~$w$ and~$b^{\vphantom{\prime}}_\delta,b'_\delta \in\partial\Edges_\delta(\Omega)\setminus \partial\calE_\delta(\Omega)$ that are~$3\delta$-close to~$b,b'$.
	Assume that~$b,b'$ are on straight parts of~$\partial \Omega$ that are orthogonal to the edges~$b^{\vphantom{\prime}}_\delta,b'_\delta$ respectively.
	Then, for any~$\rho\in\{1, e^{\frac{2\pi i}{3} },e^{\frac{4\pi i}{3}}\}$, one has, as~$\delta\to 0$,
\begin{align*}
	\delta^{-2}\Edelta^+[\mathcal{T}^{[\rho]}(w_\delta)]&~\rightrightarrows~\tfrac{3}{\pi}\Re[(i\rho)^2\CorrO{T(w)}], \\
	\delta^{-2}\Edelta^{b_\delta,b'_\delta}[\mathcal{T}^{[\rho]}(w_\delta)]&~\rightrightarrows~\tfrac{3}{\pi}\Re[(i\rho)^2\Corr{T(w)}_\Omega^{b,b'}],
\end{align*}
where the expectation is according to~$\bbP_{\Omega_{\delta},\beta_c}^{\emptyset}$ and~$\bbP_{\Omega_{\delta},\beta_c}^{b_\delta,b_\delta'}$ respectively and the convergence is uniform on compact subsets of~$\Omega$. This readily implies
\[
	\delta^{-2}\Edelta^+[T(w_\delta)]\ \rightrightarrows\ \tfrac{3}{\pi}\CorrO{T(w)}
	\quad\text{and}\quad \delta^{-2}\Edelta^{b^{\vphantom{\prime}}_\delta,b'_\delta}[T(w_\delta)]\ \rightrightarrows\ \tfrac{3}{\pi}\Corr{T(w)}_\Omega^\dob\,.
\]
\end{theorem}

\begin{remark} \label{rem:boundary_thm1}
The assumption about the orthogonality of the half-edges~$b^{\vphantom{\prime}}_\delta$ and~$b'_\delta$ to the straight parts of the boundary~$\partial\Omega$ can be relaxed using the results from~\cite{CheSmi11}.
Moreover, the techniques developed in~\cite{HonKyt13} allow to relax these technical assumptions even further and prove
a similar statement in the case of rough boundaries.
\end{remark}

\begin{remark} \label{rem:strategy_thm1}
	To prove Theorem~\ref{thm:Ising1}, we express~$\mathcal{T}^{[\rho]}$ via the values of discrete fermionic observables defined by~\eqref{eq_F_real_def}, see Section~\ref{sub:T-via-fermionic} for more details. 
	More generally, for boundary conditions~$\b$ with~$|\b|=2m$ one can exploit the fermionic nature of these observables and show that~$\Tm^\b(m)$ admits an expression as a Pfaffian of some~$2(m\!+\!1)\times 2(m\!+\!1)$ matrix with entries given by the values of the fermionic observables; see Section~\ref{sub:TT-via-fermionic}, where the case of Dobrushin boundary conditions~$\b=\{b,b'\}$ is treated. 
	Therefore, one can use existing discrete complex analysis techniques to derive Theorem~\ref{thm:Ising1} from the convergence results for the fermionic observables. 
	In continuum, the fact that the stress-energy tensor~$T=-\frac{1}{2}:\!\psi\partial\psi\!:$ can be expressed via fermions is a particular feature of the corresponding CFT, and there is no surprise that the similar discrete quantities converge to their putative limits as~$\delta\to 0$. 
\end{remark}

The techniques discussed in Remark~\ref{rem:strategy_thm1} allow one to generalize these convergence results to all the multi-point \emph{correlation functions}~$\Edelta^{\b_\delta}[T(w_\delta)T(w'_\delta)T(w''_\delta)\dots]$. 
For shortness, we do not address such a general question in this paper and restrict ourselves to the convergence of two-point expectations with~`$+$' boundary conditions. 
Together with the proof of Theorem~\ref{thm:Ising1}, this already contains all necessary ingredients to treat the general situation.

\begin{theorem} \label{thm:Ising2}
	Let~$\Omega\subset\C$ be a bounded simply connected domain and $w,w'$ be two distinct points of~$\Omega$. 
	For a mesh-size~$\delta>0$, let~$w_\delta$ and~$w'_\delta$ be the faces of~$\Omega_\delta$ that contain~$w$ and~$w'$, respectively.
	Then, as~$\delta \to 0$, one has
	\[
		\delta^{-4}\Edelta^+[T(w_\delta)T(w'_\delta)]~\rightrightarrows~\tfrac{9\,}{\pi^2}\CorrO{T(w)T(w')}
	\]
	where~$\CorrO{T(w)T(w')}$ are holomorphic functions of~$w,w'\in\Omega$ defined in Section~\ref{sub:T-and-energy-via-fermions}.
	The convergence is uniform provided~$w$ and~$w'$ are separated from each other and from~$\partial\Omega$.
\end{theorem}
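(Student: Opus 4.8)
The plan is to follow the strategy outlined in Remark~\ref{rem:strategy_thm1}: reduce $\Edelta^+[T(w_\delta)T(w'_\delta)]$ to a finite combination of Pfaffians built from discrete fermionic observables, and then pass to the limit in that combination using the known convergence of those observables with `$+$' boundary conditions. Recall from Section~\ref{sub:T-as-local-field} that $T(w_\delta)=-\tfrac{2}{3}\sum_{\eta\in\{1,e^{\pi i/3},e^{2\pi i/3}\}}\overline{\eta}^{4}T_\mathrm{mid}(w_\delta^{[\eta]})$, and that each local field $T_\mathrm{mid}(m_\delta)$ equals, up to an additive constant (the same constant entering~\eqref{eq:intro-constants} via~\eqref{eq:intro-def-TeTm}), a fixed local combination of energy densities $\sigma_u\sigma_w$ near $m_\delta$; via the Kadanoff--Ceva representation used in Section~\ref{sub:T-via-fermionic} this is a local quadratic expression in fermions. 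Expanding the product, $\Edelta^+[T(w_\delta)T(w'_\delta)]$ becomes a fixed finite linear combination of two-point expectations $\Edelta^+[T_\mathrm{mid}(m_\delta)T_\mathrm{mid}(m'_\delta)]$ with $m_\delta$ near $w_\delta$ and $m'_\delta$ near $w'_\delta$. Since `$+$' boundary conditions carry no disorder insertions, Wick's theorem for the free fermion applies, and each such expectation is a finite linear combination of Pfaffians of fixed-size antisymmetric matrices whose entries are values of the discrete fermionic observable $F_{\Omega_\delta}$ (given by~\eqref{eq:intro-fermion} with $n=1$, $\sigma=\tfrac12$, and `$+$' boundary conditions) between pairs of midpoints appearing in $T_\mathrm{mid}(m_\delta)$ and $T_\mathrm{mid}(m'_\delta)$; this is the two-point analogue of the Pfaffian formula of Section~\ref{sub:TT-via-fermionic}.

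Regrouping the pairings in the Pfaffian expansion into ``diagonal'' ones (both fermionic arguments near the same point) and ``cross'' ones separates
\[
\Edelta^+[T_\mathrm{mid}(m_\delta)T_\mathrm{mid}(m'_\delta)]\;=\;\Edelta^+[T_\mathrm{mid}(m_\delta)]\cdot\Edelta^+[T_\mathrm{mid}(m'_\delta)]\;+\;(\text{connected part}),
\]
where the connected part is a fixed combination of products of ``cross'' fermionic observables between the neighbourhoods of $w$ and of $w'$. For the disconnected part one invokes Theorem~\ref{thm:Ising1} together with~\eqref{eq_T_as_expectation_intro}: after the $\delta^{-4}$ normalization it converges, uniformly on the relevant compacts, to $\tfrac{9}{\pi^2}\CorrO{T(w)}\CorrO{T(w')}$, which is precisely the disconnected part of $\CorrO{T(w)T(w')}$ as defined in Section~\ref{sub:T-and-energy-via-fermions}.

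For the connected part one uses the convergence of the (appropriately normalized) discrete fermionic observable $F_{\Omega_\delta}$ with `$+$' boundary conditions towards its continuous counterpart, together with the short-distance expansion of the latter near each of $w,w'$ \emph{up to the order needed to realize the discrete difference operators implicit in $T_\mathrm{mid}$} --- these discrete differences playing the role of the derivative $\partial$ in the continuous identity $T=-\tfrac12\!:\!\psi\partial\psi\!:$. Such convergence, with the required uniformity and precision, is available from~\cite{ChSmi_Universal} and is of the same nature as the energy-density estimates of~\cite{HonSmi,Hon_thesis}, since $w$, $w'$ and $\partial\Omega$ are pairwise separated. Taking $\delta\to0$, the entries of the matrices converge to the corresponding continuous fermionic two-point functions, the finite combination of Pfaffians converges to its continuous analogue, and the extra terms coming from the \emph{geometric} definition~\eqref{eq:intro-def-TeTm} (as opposed to the local-field expression~\eqref{eq_T_as_expectation_intro}) carry a strictly larger power of $\delta$ --- as already analyzed for one point in Remark~\ref{rem:T-and-R} and Proposition~\ref{prop_via_fermion_t} --- so they drop out after multiplication by $\delta^{-4}$. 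Comparing the resulting continuous expression, with its prefactor $(-\tfrac23)^2\sum_{\eta,\eta'}\overline{\eta}^{4}\overline{\eta'}^{4}(\dots)$, with the Wick/Pfaffian definition of $\CorrO{T(w)T(w')}$ in Section~\ref{sub:T-and-energy-via-fermions} identifies the limit, the numerical factor $\tfrac{9}{\pi^2}$ being forced by consistency with Theorem~\ref{thm:Ising1}.

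The main obstacle is the analysis of the ``diagonal'' pairings: within a single $T_\mathrm{mid}(m_\delta)$ two fermionic arguments collide at the lattice scale, so the corresponding matrix entries are \emph{not} individually convergent and must be replaced by a regularized fermionic observable --- $F_{\Omega_\delta}$ with its full-plane singular part and the model-dependent constant (the very $\Cm$, $\Ce$ of~\eqref{eq:intro-constants}) subtracted off. One must check that this regularization is exactly the one produced by the normalization~\eqref{eq_T_as_expectation_intro}, that the subtracted constants coincide in the one- and two-point settings (i.e.\ that they are a genuine bulk effect, independent of the boundary conditions and of the presence of a second insertion), and that the regularized quantities --- including their discrete derivatives, i.e.\ a second-order-accurate version of the convergence --- converge with enough uniformity to legitimize the passage to the limit inside the Pfaffians. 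Once the diagonal blocks are under control, the cross blocks and the accounting of $\delta$-powers are routine, given the ingredients quoted in Remark~\ref{rem:strategy_thm1} and already used in the proof of Theorem~\ref{thm:Ising1}.
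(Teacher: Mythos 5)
Your proposal follows essentially the same route as the paper: the paper proves the theorem by writing $\E^+[\Teta(w)\Tmu(w')]$ via the four-point fermionic observable and its Pfaffian expansion (Lemma~\ref{lem_pfaffian}, Proposition~\ref{prop_via_ferm_tt}), so that the diagonal pairings reproduce the product $\E^+[\Teta(w)]\,\E^+[\Tmu(w')]$ already controlled in Theorem~\ref{thm:Ising1} (where the full-plane subtraction and the constants~\eqref{eq:intro-constants} enter), while the cross pairings converge after normalization by Theorem~\ref{thm:convergence-fermions} and Lemma~\ref{lemma:conv-derivatives}, the $\Reta$ remainders vanish, and averaging over $\eta,\mu\in\{1,\rho,\rho^2\}$ matches Definition~\ref{def:TT-and-Te-correlations}. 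The ``diagonal regularization'' you single out as the main obstacle is thus exactly the content already supplied by the one-point analysis, so your plan is correct and coincides with the paper's proof in all essentials.
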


Another natural question to ask is the convergence of~\emph{mixed} correlations of~$T(w_\delta)$ and other local fields. The two most important examples are the spin~$\sigma(u_\delta)$ and the energy density
\[
\varepsilon(a_\delta)~:=~\sigma(a^+_\delta)\sigma(a^-_\delta)-\textstyle{\frac{2}{3}}\,,
\]
where~$a_\delta$ is an edge of~$\Omega_\delta$ and~$a^+_\delta$ and~$a^-_\delta$ denote the two faces adjacent to an edge~$a_\delta$ of~$\Omega_\delta$. 
Note that the constant~$\tfrac{2}{3}$ in the last definition is lattice dependent and corresponds to the infinite-volume limit of the Ising model (e.g., it should be replaced by~$\sqrt{2}/2$ when working on the square lattice, see~\cite{HonSmi13}). %The next theorem states the convergence of such mixed correlations.

\begin{theorem}\label{thm:Ising3}
	Let~$\Omega\subset\C$ be a bounded simply connected domain,~$w\in\Omega$ and $u,a\in\Omega\setminus\{w\}$. 
	For a mesh-size~$\delta>0$, let~$w_\delta,u_\delta$ be the faces of~$\Omega_\delta$ that contain~$w$ and~$u$, respectively, while~$a_\delta$ denotes the closest to the point~$a$ edge of~$\Omega_\delta$. 
	Then, as~$\delta \to 0$, one has
	\[
		\delta^{-2}\frac{\Edelta^+[T(w_\delta)\varepsilon(a_\delta)]}{\Edelta^+[\varepsilon(a_\delta)]}~\rightrightarrows~\frac{3}{\pi}\cdot \frac{\CorrO{T(w)\varepsilon(a)}}{\CorrO{\varepsilon(a)}}\,,\quad
		\delta^{-2}\frac{\Edelta^+[T(w_\delta)\sigma(u_\delta)]}{\Edelta^+[\sigma(u_\delta)]}~\rightrightarrows~\frac{3}{\pi}\cdot \frac{\CorrO{T(w)\sigma(u)}}{\CorrO{\sigma(u)}},
	\]
where the functions~$\CorrO{T(w)\varepsilon(a)}$, $\CorrO{\varepsilon(a)}$, $\CorrO{T(w)\sigma(u)}$ and~$\CorrO{\sigma(u)}$ are defined in Sections~\ref{sub:T-and-energy-via-fermions} and~\ref{sub:T-spin-correlations}. The convergence is uniform provided~$a,u$ and~$w$ are separated from each other and from~$\partial\Omega$.
\end{theorem}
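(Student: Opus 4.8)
The plan is to follow the strategy of Remark~\ref{rem:strategy_thm1} and reduce both statements to the convergence of discrete fermionic and spinor observables, whose limits are already known. Recall that, by the definition preceding Theorem~\ref{thm:Ising2}, $\Edelta^+[T(w_\delta)X]=-\tfrac23\sum_{\eta}\overline\eta^4\,\Edelta^+[T_{\mathrm{mid}}(w_\delta^{[\eta]})X]$, so everything reduces to the correlators $\Edelta^+[T_{\mathrm{mid}}(m_\delta)X]$ with $X\in\{\varepsilon(a_\delta),\sigma(u_\delta)\}$; since $T_{\mathrm{mid}}(m_\delta)$ is a polynomial in nearby spins (see~\eqref{eq_T_as_expectation_intro}), these are ordinary multi-spin expectations of the critical Ising model, amenable to the fermionic/Pfaffian formalism of Section~\ref{sub:TT-via-fermionic}. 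In continuum $T=-\tfrac12\!:\!\psi\partial\psi\!:$, and accordingly $T_{\mathrm{mid}}(m_\delta)$ is, up to terms of higher scaling exponent (cf.\ Remark~\ref{rem:T-and-R} and Proposition~\ref{prop_via_fermion_t}), a regularized coincidence limit of the two-point fermionic observable~\eqref{eq:intro-fermion} (with $n=1$, $\sigma=\tfrac12$) with both sources approaching the midline $m_\delta$. Finally, the denominators $\Edelta^+[\varepsilon(a_\delta)]$ and $\Edelta^+[\sigma(u_\delta)]$ are precisely the quantities whose convergence, after rescaling by $\delta$ and by $\delta^{1/8}$ respectively, was proven in~\cite{HonSmi,Hon_thesis} and~\cite{ChIz,ChHonIz15}; forming the ratios cancels these scalings together with the lattice-dependent constants, which is why only the factor $\tfrac{3}{\pi}$ from Theorem~\ref{thm:Ising1} survives.

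For the energy-density correlator I would first rewrite $\Edelta^+[T_{\mathrm{mid}}(m_\delta)\varepsilon(a_\delta)]$ in terms of values, near $m_\delta$, of the discrete fermionic observable in $\Omega_\delta$ carrying the additional disorder along the edge $a_\delta$ --- this is exactly the observable used in~\cite{HonSmi} to compute energy densities, and its convergence to the corresponding continuum solution of a Riemann-type boundary value problem with a prescribed singularity at $a$ is known, uniformly on compact subsets of $\Omega\setminus\{a\}$. I would then perform the regularized coincidence limit at the base point: the short-distance expansion of this observable near $m_\delta$ has a \emph{universal} singular part (coming from the full-plane fermionic kernel, independent of $\Omega$ and of the $\varepsilon$-insertion), which is removed by the same subtraction that produced the constants $\Ce,\Cm$ in~\eqref{eq:intro-def-TeTm}; what remains is a finite quantity converging to the continuum coincidence limit, which, after dividing by the limit of $\Edelta^+[\varepsilon(a_\delta)]$, matches $\CorrO{T(w)\varepsilon(a)}/\CorrO{\varepsilon(a)}$ as defined in Section~\ref{sub:T-and-energy-via-fermions}. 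This is the analogue, in a once-punctured domain, of the argument behind Theorem~\ref{thm:Ising1}; the higher-exponent terms alluded to above cancel in the limit $\delta\to0$.

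The spin correlator is treated identically, with the discrete spinor observables of Section~\ref{sub:T-spin-via-spinors} (as in~\cite{ChIz,ChHonIz15}) in place of the fermionic observable: the relevant object now lives on the double cover of $\Omega_\delta$ branched at the face $u_\delta$, the spin insertion being encoded by the monodromy $-1$ around $u$; one expresses $\Edelta^+[T_{\mathrm{mid}}(m_\delta)\sigma(u_\delta)]$ through near-$m_\delta$ values of this spinor, uses its known convergence (uniform on compacts of $\Omega\setminus\{u\}$), performs the regularized coincidence limit, and divides by $\Edelta^+[\sigma(u_\delta)]$. I expect the main obstacle to be precisely this regularized coincidence limit of the two-point fermionic (resp.\ spinor) observable near $w$ in the \emph{presence of the extra singularity} at $a$ (resp.\ $u$): one must check that the universal singular part near $w$ is unaffected by the far-away insertion --- a locality statement for the short-distance expansion --- obtain the convergence of the remaining finite part with sufficient uniformity, and identify it with the CFT operator product expansions $T(w)\varepsilon(a)$ and $T(w)\sigma(u)$ at central charge $c=\tfrac12$. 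The required discrete-complex-analysis input (sharp a~priori estimates and convergence with explicit singular structure for the fermionic, energy and spinor observables) is available in~\cite{ChSmi_Universal,HonSmi,Hon_thesis,ChIz,ChHonIz15}, so the remaining work consists in assembling these ingredients in the same way as for Theorems~\ref{thm:Ising1} and~\ref{thm:Ising2}.
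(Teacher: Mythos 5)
Your overall strategy (reduce $T(w_\delta)$ to the local fields $T_{\mathrm{mid}}$, rewrite the correlators via fermionic/spinor observables, use their known convergence, and handle the normalizing one-point functions) is the paper's strategy, and your treatment of the spin correlator is essentially the paper's proof: Proposition~\ref{prop_via_fermion_ts} expresses the ratio $\Edelta^+[\Teta(w_\delta)\sigma(u_\delta)]/\Edelta^+[\sigma(u_\delta)]$ \emph{directly} as a discrete derivative of the spinor observable $\FdeltaU$ (which is normalized by $\E^+[\sigma(u)]\cZ_\Omega^+$ in its very definition), and Theorem~\ref{thm:convergence-spinors} --- proved in the paper for the honeycomb grid, since the square-lattice results of \cite{ChIz,ChHonIz15} are not directly quotable --- supplies the convergence including the regularized value at the source, exactly as in the proof of Theorem~\ref{thm:Ising1}. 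Note, however, that no convergence of $\delta^{-1/8}\Edelta^+[\sigma(u_\delta)]$ is needed (nor available here): your first-paragraph claim that forming the ratios ``cancels'' separately proved scalings of the denominators is accurate only for the energy case, where the paper establishes the honeycomb statement itself (Corollary~\ref{cor:energy-density-convergence}) rather than importing it from the square-lattice literature.

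The genuine gap is in your plan for $\Edelta^+[T(w_\delta)\varepsilon(a_\delta)]$. You propose to pass to a fermionic observable ``carrying the additional disorder along $a_\delta$'' and then to perform a regularized coincidence limit at $w$ \emph{in the presence of that insertion}; you correctly flag this coincidence limit (and the locality of its singular part) as the main obstacle, but you do not resolve it, and the convergence of such an insertion-modified observable, uniformly up to the source with the full-plane singularity subtracted, is not one of the quoted results. The paper sidesteps this entirely: by the Pfaffian identity (Lemma~\ref{lem_pfaffian}) one obtains Proposition~\ref{prop_via_ferm_tt}(ii), which says that the \emph{connected} part of $\E^+[\Teta(w)\varepsilon(a)]$ is a bilinear expression in values of the ordinary two-point observable $\F$ with one argument near $w$ and the other at $a$ or $\overline{a}$, i.e.\ at \emph{separated} points. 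Consequently only Theorem~\ref{thm:convergence-fermions} together with Lemma~\ref{lemma:conv-derivatives} (off the diagonal) is required there, while the coincidence-limit subtraction enters solely through the disconnected term $\E^+[\Teta(w)]\cdot\E^+[\varepsilon(a)]$, already handled in Theorem~\ref{thm:Ising1}; the ``locality of the short-distance expansion'' you worry about is exactly what this factorization delivers for free, and the $\Reta$ remainders vanish as before. To turn your sketch into a proof you would either have to supply the missing convergence-with-coincidence-limit statement for the insertion-modified observable, or, more simply, carry out the Pfaffian reduction you mention only in passing --- which is the paper's route.
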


\begin{remark} (i) In the square lattice setup, the convergence of the energy density expectations~$\delta^{-1}\Edelta[\varepsilon(a_\delta)]$ to their CFT limits~$\mathcal{C}\cdot \CorrO{\varepsilon(a)}$ was proved in~\cite{HonSmi13}. Along the way, we also prove a version of this result with the lattice-dependent constant~$\mathcal{C}=\sqrt{3}/\pi$, see Corollary~\ref{cor:energy-density-convergence}.

\noindent (ii) In the square lattice setup, the convergence of the spin expectations~$\delta^{-\frac{1}{8}}\Edelta[\sigma(u_\delta)]$ to their CFT limits~$\mathcal{C}\cdot\CorrO{\sigma(u)}$ was proved in~\cite{CheHonIzy15}. We do not discuss the possible generalization of this result to the triangular lattice in our paper.
Nevertheless, let us mention that the proof given in~\cite{CheHonIzy15} can be adapted to our case in a rather straightforward manner.
\end{remark}

The methods developed in~\cite{CheSmi11,HonSmi13,Hon10,Izy11,CheHonIzy15} and this paper can be used to generalize Theorem~\ref{thm:Ising3} and to treat the convergence of \emph{all} mixed correlations of the local fields~$\sigma$,~$\varepsilon$ and~$T$, as well as its anti-holomorphic counterpart~$\overline{T}$, to their scaling limits. Note that one can \emph{define} all these limits without any reference to the CFT means, using the language of solutions to Riemann-type boundary value problems for holomorphic functions in~$\Omega$ instead. To illustrate this route, we devote Section~\ref{sec:CFT-correlations} to a self-contained exposition of a number of such boundary value problems, which allow one to define all the CFT correlation functions~$\CorrO{\dots}$ involved into Theorems~\mbox{\ref{thm:Ising1}--\ref{thm:Ising3}}. Moreover, in Section~\ref{sub:Schwarzian+OPE} we show that the Schwarzian conformal covariance of the stress-energy tensor expectations~$\Corr{T(w)}_\Omega^\b$ and the particular structure of singularities of the correlation functions~$\CorrO{T(w)T(w')}$,~$\CorrO{T(w)\varepsilon(a)}$ and~$\CorrO{T(w)\sigma(u)}$ (which comes from the OPEs in the standard CFT approach) can be easily deduced from simple properties of solutions to these boundary value problems.

It is worth noting that one can construct other local fields that (conjecturally) have the CFT stress-energy tensor as a scaling limit. For the critical Ising model on the square lattice, a simple linear combination of spin-spin expectations was suggested for this purpose by Kadanoff and Ceva~\cite{KadCev71}, see also the work of Koo and Saleur~\cite{KooSal94} for a generalization of this construction to other critical integrable lattice models.
A discrete stress-energy tensor for the three-state Potts model was analyzed by Mong, Clarke, Alicea, Lindner, and Fendley~\cite{MonClaAliLinFen14},
see also~\cite{AasMonFen16} where an algebraic approach to topological defects on the lattice was developed.

\paragraph{{\bf Organisation of the paper.}}
In Section~\ref{sec:combi}, we discuss the geometric construction of the discrete stress-energy tensor~$T$ and show that it satisfies a part of s-holomorphicity relations.
In Section~\ref{sec:via-fermions-disc}, we express~$T$ and its correlations with other discrete fields via discrete fermions.
In Section~\ref{sec:CFT-correlations}, we construct continuous correlation functions using continuous fermions and spinors.
We show that the correlations satisfy the same conformal covariance and OPEs as the corresponding correlations from the CFT.
In particular the stress-energy tensor has a Schwarzian conformal covariance.
In Section~\ref{sec:Ising-convergence}, we prove convergence results.
Section~\ref{sec:o-n} contains the proof of Proposition~\ref{prop:relation-Te}.
Appendix contains some background on s-holomorhpicity and a construction of the full-plane fermionic observable on the hexagonal lattice.

\medskip

%\addtocontents{toc}{\SkipTocEntry}
\noindent{\bf Acknowledgements.}
We thank John Cardy, Cl\'ement Hongler and Konstantin Izyurov for many valuable discussions.
The contribution of the first author was made during his stay at the ITS-ETH Zürich in 2014/15. The first author gratefully acknowledges the support of Dr. Max Rössler, the Walter Haefner Foundation and the ETH Foundation.
This research was funded in part by the Austrian Science Fund (FWF) 10.55776/P34713.
The second and the third authors were partially supported by the Russian Science Foundation grant 14-21-00035.
All three authors were partially supported by SWISSMAP, FNS and ERC AG COMPASP.

\setcounter{equation}{0}

\section{Combinatorial section}
\label{sec:combi}

We start this section by extending the definition of the Ising model to non-planar graphs obtained by gluing rhombi and equilateral triangles. This definition includes two- and four-spin interactions and allows to deform the triangular lattice locally. By means of such deformations, we give a geometric construction of a discrete stress-energy tensor that satisfies a part of s-holomorphicity relations.
The latter are described at the end of the section, together with the fermionic observables.

\subsection{Ising model on non-planar rhombic-triangular grids}
\label{sub:ising-on-rhombic-tiling}

In order to obtain a geometrical construction of the stress-energy tensor, we need to extend the definition of the critical Ising model to the non-planar rhombic-triangular grids. Such grids appear naturally as a result of infinitesimal deformations of the triangular lattice.

Let $(G, \Phi)$ be a grid glued from rhombi and equilateral triangles and defined by the underlying graph~$G$ of their vertices and angles~$\Phi := (\varphi_1,\dots, \varphi_k)$ of the rhombuses. 
The Ising model on~$(G, \Phi)$ is supported on all possible ways to assign~$\pm 1$ to the vertices of~$G$ and is defined by
 \begin{equation}\label{eq:o-1-def}
	\bbP_{G,\Phi,\weight}(\sigma) = \frac{1}{\cZ_{G,\Phi}}\prod_{f\in F(G)}{\weight_{f}(\sigma)},
\end{equation}
where~$Z_{G,\Phi}$ is the normalising constant (partition function) that renders~$\bbP_{G,\Phi}$ a probability measure, the product is taken over all faces (triangles and rhombi) of~$G$, we write~$\weight(f)$ for the weight of a face~$f$.
For a triangle, this weight is~$1$ if all spins are the same and~$x$ otherwise;
for a rhombus, this weight is~$1$ if all spins are the same, 
$u(\varphi)$ if a spin at a vertex of angle~$\varphi$ differs from the spins at the three other vertices, 
$v(\varphi)$ if the spins disagree along both diagonals,
$w(\varphi)$ if the spins are plus on one diagonal and minus on the other (see Fig.~\ref{figWeights}).

\begin{figure}
    \begin{center}
      \tempskiped{\includegraphics[scale=0.85]{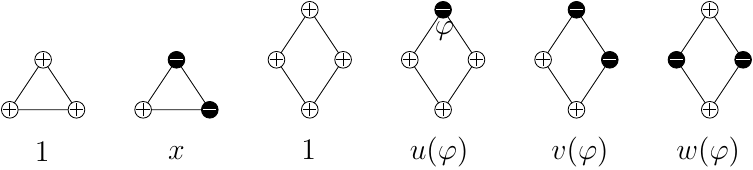}}
    \end{center}
    \caption{Possible local configurations inside the equilateral triangle and
    a rhombus with angles~$\varphi$ and~$\pi-\varphi$ and their weights.
    }
    \label{figWeights}
\end{figure}

The critical Ising model on the faces of the hexagonal lattice defined by~\eqref{eq:def-ising-hex} corresponds to~\eqref{eq:o-1-def} when~$G$ consists of equilateral triangles and~$x=1/\sqrt3$.
When rhombuses are present, the Hamiltonian contains four-point interactions.
Still, the definition seems natural since the model is invariant under the Yang--Baxter transformation for the following weights found by Nienhuis~\cite{Nie90}\footnote{The works~\cite{Nie90,IkhCar09} consider the loop~$O(n)$ model, which at~$n=1$ corresponds to the domain wall representation of the Ising model.
In the notation of~\eqref{eq_u1}-\eqref{eq_w2}, we have~$u(\varphi) = u_1(\varphi)$ and~$w(\varphi) = w_1(\varphi) + w_2(\varphi)$ (since a rhombus with alternating spins has two loop realisations).}:
\begin{equation} \label{eq:weights}
	u(\varphi) = \tfrac{\sqrt{3} \cos \tfrac{\varphi}2}{\sqrt{3}+\sin\varphi}; \quad
%	u_2(\varphi) = \tfrac{\sqrt{3} \sin \tfrac{\varphi}2}{\sqrt{3}+\sin\varphi}; \quad
	v(\varphi) = \tfrac{\sin \varphi}{\sqrt{3}+\sin\varphi}; \quad
	w(\varphi) = \tfrac{\sqrt{3} - \sin \varphi}{\sqrt{3}+\sin\varphi}.
%w_1(\varphi) = \tfrac{\tfrac{\sqrt{3}}2 + \cos (\varphi+\tfrac{\pi}6)}{\sqrt{3}+\sin\varphi}; \,
%w_2(\varphi) = \tfrac{\tfrac{\sqrt{3}}2 - \cos (\varphi -\tfrac{\pi}6)}{\sqrt{3}+\sin\varphi}.
\end{equation}
We refer to these weights as {\em critical} though we will not check this explicitly.
%The intuition behind this is that the Yang--Baxter equation implies invariance of the model to a particular local planar transformation of the tiling thus suggesting the universality i.e., that the limiting correlations are independent of the lattice.
In Proposition~\ref{prop:inv-to-transform} below, we show that the model is invariant also to a tiling-change of a pentagon.

We first note that the weights are well-defined, in a sense that they have the following symmetries:
\[
	v(\varphi) = v(\pi-\varphi) \quad \text{and} \quad w(\varphi) = w(\pi-\varphi).
\]
In a rhombus has angles~$\frac{\pi}{3}$ and~$\frac{2\pi}{3}$, the weights are~$u(\frac{\pi}{3}) = \tfrac{1}{\sqrt{3}}$ and $u(\frac{2\pi}{3}) = v(\frac{\pi}{3}) = w(\frac{\pi}{3}) = \tfrac13$.
This corresponds to the critical Ising model on the triangular lattice~\cite{Wan50}.

\begin{definition}\label{def:invariance-to-transform}
	Let~$P$ be a bounded polygon in the plane, whose interior can be tiled by rhombuses and equilateral triangles in (at least) two different ways that we denote by~$(G_{P},\Phi_P)$ and~$(\tilde{G}_{P},\tilde\Phi_P)$. 
	We say that~$\bbP_{G,\Phi,\weight}$ is {\em invariant to the transformation} turning tiling~$(G_{P},\Phi_P)$ into~$(\tilde{G}_{P},\tilde\Phi_P)$ if, for any configuration~$\tau$ at the vertices of~$P$,
	\[
		\bbP_{G,\Phi,\weight}(\cdot \, | \, \sigma_{| P} = \tau) = \bbP_{\tilde{G},\tilde{\Phi},\weight}(\cdot \, | \, \sigma_{| P} = \tau).
	\]
\end{definition}

\begin{figure}
	\includegraphics[width=0.7\textwidth]{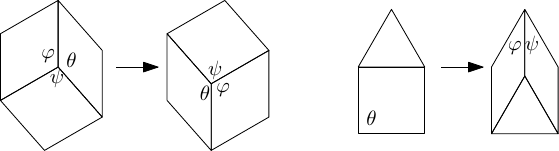}
	\caption{{\it Left:} Yang--Baxter (star-triangle) transformation; $\varphi+\psi+\theta=2\pi$.
	 {\it Right:} Star-segment transformation; $\theta=\frac{\pi}{3}+\varphi$ and $\varphi +\psi = \frac{\pi}{3}$.}
	 \label{fig:yb-pent-transform}
\end{figure}

\begin{proposition}\label{prop:inv-to-transform}
	The Ising model defined by \eqref{eq:o-1-def} with Nienhuis' critical weights \eqref{eq:weights} is invariant to the following transormations (see fig.~\ref{fig:yb-pent-transform}):
	\begin{itemize}
		\item[(T1)] Splitting a rhombus of angle~$\tfrac{\pi}{3}$ into two equilateral triangles.
		\item[(T2)] {\em Yang--Baxter transformation.} Replacing three rhombi tiling a flat hexagon with their symmetric images with respect to the center of the hexagon.
		\item[(T3)] {\em Star--segment transformation.} Replacing two rhombi and an equilateral triangle tiling a flat pentagon with a rhombus and an equilateral triangle tiling the same pentagon.
	\end{itemize}
\end{proposition}
The proof for~(T1) is straightforward; (T2) was established by Nienhuis \cite{Nie90} by checking all possible boundary conditions; (T3) can be proven in the same way, see Appendix~\ref{sec:star-segment}.

\begin{remark}\label{rem:pent-transform}
	1) We are not aware of any prior works mentioning transformation (T3) and we could not establish any connection to the seminal pentagon identity~\cite{MooSei89}. Yang--Baxter equation has proven to be very powerful in solving the models of statistical mechanics, including the recent proof of rotational invariance of the random-cluster model~\cite{DumKozKraManOul20}, and it is known to be related to discrete holomorphic observables~\cite{IkhCar09}. 
	Finding possible applications of the star--segment transformation would be of great interest.
	
	2) Proposition \ref{prop:inv-to-transform} should imply that this definition of the Ising model on a surface tiled with rhombi and triangles does not depend on the tiling but we do not address this general question here.
	
	3) A tiling of triangles and rhombuses is not an isoradial graph and thus does not fit in the framework of~\cite{CheSmi11}.
\end{remark}

This Ising model can also be defined using a Hamiltonian. Indeed, take
	\[
		- H_{G,\Phi}(\sigma) =
		\sum_{\substack{i\sim j \\ \text{angles } \varphi, \psi }} \tfrac12 (J_{1}(\varphi) + J_{1}(\psi))\sigma_{i}\sigma_{j} +
		\sum_{\substack{i \text{ diag. }  j \\  \text{angle } \varphi }} J_{2}(\varphi) \sigma_{i}\sigma_{j} +
		\sum_{\substack{i\sim j\sim k\sim \ell \\  \text{angle } \varphi }} U(\varphi) \sigma_{i}\sigma_{j}\sigma_{k}\sigma_{\ell},
	\]
	where the first sum runs over all pairs of adjacent vertices and~$\varphi,\psi$ are angles in the two rhombuses containing them (if one of them is a triangle, take the angle~$\tfrac{\pi}{3}$); 
the second sum runs over all pairs of non-adjacent spins belonging to the same rhombus and~$\varphi$ denotes its angle at~$i$ (or~$j$);
the third sum runs over all quadruples of vertices belonging to the same rhombus and~$\varphi$ is its angle.

\begin{lemma}
	Let $(G, \Phi)$ be a grid glued from rhombuses and equilateral triangles.
	Take $J_1,J_2,U>0$ that satisfy
	\begin{equation}
		\label{eq:hamiltonian-coefficients}
		e^{8J_1(\varphi)} = \tfrac{1}{w^{2} (\varphi)}, \quad
		e^{8J_2(\varphi)} = \tfrac{u (\pi-\varphi)w (\varphi)}{v^2 (\varphi)u^2 (\varphi)},   \quad
		e^{8U(\varphi)}=\tfrac{v^2 (\varphi)w (\varphi)}{u^2 (\varphi)u^2(\pi-\varphi)}.
	\end{equation}
	Then, the measure~$\bbP_{G,\Phi}(\sigma)$ defined by~\eqref{eq:o-1-def} can be written as follows:
	\begin{equation}\label{eq:def-via-hamiltonian}
		\bbP_{G,\Phi}(\sigma) = \frac{1}{\cZ_{G,\Phi}}\cdot e^{-H_{G,\Phi}(\sigma)}.
	\end{equation}
\end{lemma}

\begin{proof}
	The Hamiltonian~$H_{G,\Phi}(\sigma)$ includes two- and four-spin interactions, and the weight of a rhombus~1234 of angle~$\varphi$ at the first vertex can be written in the following form:
	\begin{align}
		\label{eq:hamiltonian-one-rhombus}
		\weight_{1234} (\sigma) 
		= \exp &\left [\tfrac12 J_1(\varphi)(\sigma_1\sigma_2 + \sigma_2\sigma_3 + \sigma_3\sigma_4 + \sigma_4\sigma_1)\right. \\
		&\left.+ J_2(\varphi)\sigma_1\sigma_3 + J_2(\pi -\varphi)\sigma_2\sigma_4 
		+ U(\varphi)\sigma_1\sigma_2\sigma_3\sigma_4 \right], \nonumber
	\end{align}
	while the weight of a triangular face~123 equals
	\[
		\weight_{123} (\sigma) = \exp \left [\tfrac12 J_1(\tfrac{\pi}{3})(\sigma_1\sigma_2 + \sigma_2\sigma_3 + \sigma_3\sigma_1)\right].
	\]
	Equating these weights to those appearing in definition~\eqref{eq:o-1-def}, we arrive at the following:
	\begin{align*}
		\exp[{2J_1(\varphi) + J_2(\varphi) + J_2(\pi -\varphi) - U(\varphi)}] &= 1,\\
		\exp[{-J_2(\varphi) + J_2(\pi -\varphi) - U(\varphi)}] &= u(\varphi),\\
%		\exp[{J_2(\varphi) - J_2(\pi -\varphi) - U(\varphi)}] &= u_2(\varphi),\\
		\exp[{-J_2(\varphi) - J_2(\pi -\varphi) + U(\varphi)}] &= v(\varphi),\\
		\exp[{-2J_1(\varphi) + J_2(\varphi) + J_2(\pi -\varphi) + U(\varphi)}] &= w(\varphi).
	\end{align*}
	Solving this system, we arrive at~\eqref{eq:hamiltonian-coefficients}.
\end{proof}

Note the first and the last sums in the Hamiltonian~\eqref{eq:def-via-hamiltonian} are well-defined due to the symmetries of~$J_1$ and~$U$ given by~\eqref{eq:hamiltonian-coefficients}:
\[
	J_{1}(\varphi) = J_{1}(\pi- \varphi), \quad U(\varphi) = U(\pi- \varphi).
\]

Let~$\b = (b_1, \dots, b_{2k})$ be a subset of the midpoints of edges on the boundary of a grid~$(G,\Phi)$. The Ising model on~$(G,\Phi)$ with boundary conditions~$\b$ is supported on spin configurations that are fixed to be~$+1$ on all arcs~$b_{2i-1}b_{2i}$ and~$-1$ on all arcs~$b_{2i}b_{2i+1}$, and is defined by the same Hamiltonian \eqref{eq:def-via-hamiltonian}. Denote by~$\cZ^\b(G,\Phi)$ the corresponding partition function.

\subsection{Stress-energy tensor as an infinitesimal non-planar deformation}
\label{sub:stress-tensor-as-deformation}

In this subsection we describe two sets of infinitesimal deformations of the hexagonal lattice: for each edge and for each midline of a face (i.e. an interval between the midpoints of the opposite edges). Real-valued observables~$\Te$ and~$\Tm$ are defined as the logarithmic derivatives of the partition function under these deformations. 
These observables can be regarded as projections of a discrete stress-energy tensor.

Recall that~$\Hex_\delta$ is a hexagonal lattice of mesh-size~$\delta>0$ and that~$\Omega_\delta$ is a $\Hex_\delta$-approximation of a given simply connected planar domain~$\Omega$.
Define~$\T_\delta$ to be a triangular lattice dual to~$\Hex_\delta$.
Define~$\OmegaDual_\delta$ as the graph obtained via the following glueing procedure:
\begin{itemize}
	\item consider all faces of~$\T_\delta$ that correspond to the vertices of~$\Omega_\delta$;
	\item view all vertices of these faces as distinct;
	\item now identify the vertices that correspond to the same face of~$\Omega_\delta$.
\end{itemize}
Note that the vertices of~$\OmegaDual_\delta$ corresponding to a face outside of~$\Omega_\delta$ are not identified and thus appear with multiplicity.
Below in this subsection we will often omit~$\delta$ and write just~$\Omega$ and~$\OmegaDual$ instead of~$\Omega_\delta$ and~$\OmegaDual_\delta$, if no confusion arises.

Consider an edge~$e$ of the graph~$\Omega$. Replace in~$\OmegaDual$ two triangles corresponding to the endpoints of~$e$
by a rhombus of angle~$\varphi$ and denote the resulting tiling by~$\OmegaE(e;\varphi)$. 
The distribution of the Ising model on~$\Omega$ and~$\OmegaE(e;\tfrac\pi{3})$ are the same by the invariance under transformation (T1);
moreover, the partition function is also the same.
Varying the angle~$\varphi$ modifies the Hamiltonian at one rhombus and corresponds to the insertion of a local field. 

Define~$\Te^{\b} (e)$ as the derivative of this field with respect to~$\varphi$ at~$\pi/3$ corrected by a constant~$\Ce = -\tfrac{1}{2\pi} + \tfrac{1}{4\sqrt{3}}$. More precisely, in view of~\eqref{eq:hamiltonian-one-rhombus}, we define
\begin{align}\label{eq:t-edge}
	\Te^{\b}(e) := \Ce &+ \bbE_{\Omega}^{\b}\left[\tfrac12 J_1'(\tfrac\pi{3}) (\sigma_1\sigma_2 + \sigma_2\sigma_3 + \sigma_3\sigma_4 + \sigma_4\sigma_1)\right.\\
		&\left.+ J_2'(\tfrac\pi{3}) \sigma_1\sigma_3
		- J_2'(\tfrac{2\pi}{3}) \sigma_2\sigma_4
		+ U'(\tfrac\pi{3}) \sigma_1\sigma_2\sigma_3\sigma_4\right], \nonumber
\end{align}
where the coefficients can be easily computed substituting~\eqref{eq:weights} into~\eqref{eq:hamiltonian-coefficients},
\[
	J_1'(\tfrac\pi{3}) = \tfrac{1}{3\sqrt{3}}, \quad 
	J_2'(\tfrac\pi{3}) = \tfrac{1}{6\sqrt{3}}, \quad 
	J_2'(\tfrac{2\pi}{3}) = \tfrac{-5}{6\sqrt{3}}, \quad 
	U'(\tfrac\pi{3}) = \tfrac{-1}{12\sqrt{3}}.
\]
The constant~$\Ce$, as well as constants~$\CmT$ and~$\CmR$, eventually comes from the local values~\eqref{eq_local_values_zero}, \eqref{eq_local_values} of some full-plane observable and are not important for the combinatorial considerations. The observable $\Te^{\b}$ is real-valued and we refer to it as a projection of the discrete stress-energy tensor introduced in Definition~\ref{def:T-via-spins} (see Proposition~\ref{prop:T-via-edges}).

Although one could add any null-field without affecting the OPEs in the limit stated in the introduction, the definition given above has two benefits: clear geometrical interpretation and local relations stated in Proposition~\ref{prop:relation-Te}.
These relations can be viewed as a part of s-holomorphicity (or propagation) equations, see Section~\ref{sub:fermion}.
In Proposition~\ref{prop:relation-TeTm} below we relate~$\Te$ to the operators introduced in Definition~\ref{def:T-via-spins}.
It is straightforward to see that Proposition~\ref{prop:relation-Te} (for~$n=1,\sigma=1/2$) is its direct consequence.

\begin{remark}
	The most direct way to prove Proposition~\ref{prop:relation-Te} is by pairing up configurations differing only in the spin at the face $abcdef$~--- we do this in Section~\ref{sec:o-n} for a general loop~$O(n)$ model.
	Another possibility is to take derivative in the Yang--Baxter equation (i.e. (T2) in Proposition~\ref{prop:inv-to-transform}). 
\end{remark}

\begin{definition}\label{def:T-mid}
	For a non-boundary vertex~$w$ of~$\Omega$ and any~$\rho\in\wp$, let~$\wrho$ be the midline of~$w$ oriented in the direction~$i\rho$.
	We define
	\[
		\Tm^{\b}(\wrho) := \bbE_{\Omega}^{\b}\ \left[\calT^{[\rho]}(w)\right]\,.
	\]
\end{definition}

Similarly to~$\Te$, the observable~$\Tm$ also has an interpretation in terms of infinitesimal deformations~--- one can insert a rhombus of angle~$\varepsilon$ along the midline~$\wrho$ and take a derivative in~$\varepsilon$ at~$0$. The expressions in Definition~\ref{def:T-via-spins} can be obtained from the Hamiltonian after adding a vertex to it. This geometric interpretation, together with the invariance under star--segment transformation (T3) described above, leads to the following relation.

\begin{proposition}
	\label{prop:relation-TeTm}
	Let~$a,b,c,d,e,f$ be the edges of a a face in~$\Omega$ (listed in a cyclic order). Let~$m$ be its midline linking midpoints of~$c$ and~$f$ (see fig.~\ref{fig:t-edge-mid-rel}). Then,
	\begin{align}
		\label{eq_relation_Te_Tm}
		\Tm^{\b}(m) = \Te^{\b}(a) + \Te^{\b}(b)\,.
	\end{align}
\end{proposition}

\begin{figure}
    \begin{center}
      \tempskiped{\includegraphics[width=0.9\textwidth]{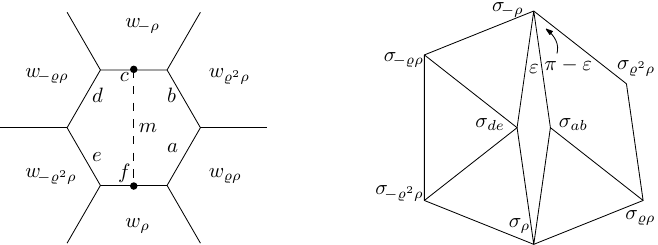}}
    \end{center}
    \caption{
    {\em Left:} Hexagon~$w:=abcdef$ with a midline~$m = \wrho$ (for~$\rho=1$) and the six faces around~$w$.
    {\em Right:} Deformation of the lattice at the midline~$m$.
    }
    \label{fig:t-edge-mid-rel}
\end{figure}

It is immediate that~\eqref{eq_relation_Te_Tm} implies Proposition~\ref{prop:relation-Te}.
We now focus on proving~\eqref{eq_relation_Te_Tm}.

\begin{proof}
	Instead of pairing up configurations differing at $abcdef$, we choose to derive \eqref{eq_relation_Te_Tm} from the invariance to the star-segment transformation stated in Proposition \ref{prop:inv-to-transform}. In addition to circumventing a lengthy system of linear equations, this provides a direct link between local relations on observables and the Yang--Baxter-type equation and puts forward the idea of taking a derivative with respect to a local deformation of the lattice. We start by defining this deformation explicitly.
	
	Recall that $\OmegaDual$ denotes the graph glued from the faces of the triangular lattice $\T$ corresponding to the vertices of $\Omega$. Denote the face $abcdef$ and the faces around it by $w$, $\weOne$, $\weTwo$, $\weThree$, $\weFour$, $\weFive$, $\weSix$, as shown on Fig~\ref{fig:t-edge-mid-rel}. We now define a tiling $(G, \Phi[\varepsilon])$ obtained from $\OmegaDual$ by gluing triangles $w \weTwo \weThree$ and $w \weThree \weFour$ into a rhombus of angle $\tfrac{\pi}{3} -\varepsilon$ and inserting a rhombus of angle $\varepsilon$ at the place of two edges $w \weOne$ and $w \weFour$ (see Fig.~\ref{fig:t-edge-mid-rel}):
	\begin{itemize}
		\item replace vertex $w$ in $\OmegaDual$ by two vertices $\wrho$ and $\wmrho$;
		\item $\wrho$ is connected by edges to $\weOne$, $\weTwo$, $\weFour$ and vertex $\wmrho$ is connected to $\weFour$, $\weFive$, $\weSix$,  $\weOne$;
		\item the triangles $\wrho \weOne \weTwo$, $\wmrho \weFour \weFive$,   $\wmrho \weFive \weSix$, $\wmrho \weSix \weOne$ are equilateral;
		\item the quadrilaterals $ r_{m}:=\weFour \wrho  \weOne \wmrho$ and $ r_{b}:=\weFour \wrho \weTwo \weThree$ are rhombuses whose angles at the vertex $\weFour$ are $\varepsilon$ and $\tfrac{\pi}{3} - \varepsilon$, respectively.
	\end{itemize}
	Use the notation~$\sigma_0^{[\rho]}$ and~$\sigma_0^{[-\rho]}$ for the spins at~$\wrho$ and~$\wmrho$ respectively.
	At~$\varepsilon = 0$, by~\eqref{eq:weights}, $\sigma_0^{[\rho]}=\sigma_0^{[-\rho]}$.
	In addition, Proposition \ref{prop:inv-to-transform} asserts equivalence between the two equilateral triangles $w \weTwo \weThree$, $w \weThree \weFour$ and a rhombus  $w \weTwo \weThree \weFour$ of angle $\tfrac{\pi}{3}$. 
	Altogether, this gives the following equality of partition functions (recall~\eqref{eq:o-1-def}):
	\begin{equation} \label{eq:0-angle-case}
		\cZ_{G,\Phi[0]} = \cZ_{\OmegaDual}.
	\end{equation}
	Define $(\tilde{G},\tilde{\Phi}[\varepsilon])$ as the tiling obtained from $(G,\Phi[\varepsilon])$ by removing the vertex $\rho$ and adding the edge $\wmrho\weThree$, so that $w \weThree \weFour$ is an equilateral triangle and the rhombus $r_{a}:=w\weOne\weTwo\weThree$ has angle $\tfrac{\pi}{3} + \varepsilon$. By Proposition \ref{prop:inv-to-transform}, the spins  at $w$, $\weOne$, $\weTwo$, $\weThree$, $\weFive$ have the same distribution under $\bbP_{\tilde{G},\tilde{\Phi}[\varepsilon]}$ and $\bbP_{G,\Phi[\varepsilon]}$. Setting all spins to be plus, we obtain the coefficient of proportionality of the partition functions:
	\begin{equation} \label{eq:part-func-ratio}
		\cZ_{G,\Phi[\varepsilon]}  = (1+ u(\pi-\varepsilon)u(\tfrac{2\pi}{3} + \varepsilon)x_{c}) \cZ_{\tilde{G},\tilde{\Phi}[\varepsilon]},
	\end{equation}
	where $ u(\pi-\varepsilon)u(\tfrac{2\pi}{3} + \varepsilon)x_{c}$ is the weight of the configuration when the spin at $\wrho$ is minus, while all other spins are plus. 
	We divide both sides of~\eqref{eq:part-func-ratio} by~$\cZ_{\OmegaDual}=\cZ_{G,\Phi[0]}=\cZ_{\tilde{G},\tilde{\Phi}[0]}$, and take a derivative at $\varepsilon = 0$ using that $u(\pi) = 0$:
	\begin{align} \label{eq:part-func-derivative}
		\frac{1}{\cZ_{G,\Phi[0]}} 
		&\cdot  \sum_{\sigma\in \{\pm 1\}^{V(G)}}\weight'_{r_{m}}(\sigma) \prod_{\substack{f\in F(G)\\ f\neq r_{m}}} \weight_{f}(\sigma)  -   
		\frac{1}{\cZ_{\OmegaDual}} \cdot 
		\sum_{\sigma\in \{\pm 1\}^{V(G)}}\weight'_{r_{b}}(\sigma)\prod_{\substack{f\in F(G)\\ f\neq r_{b}}} \weight_{f}(\sigma) \nonumber \\
		&= \frac{1}{\cZ_{\OmegaDual}} \cdot \sum_{\sigma\in \{\pm 1\}^{V(G)}}\weight'_{r_{a}}(\sigma) \prod_{\substack{f\in F(G)\\ f\neq r_{a}}} \weight_{f}(\sigma) + u'(\pi)u(\tfrac{2\pi}{3})x_{c}.
	\end{align}
	The sums on the LHS are denoted by $S_{m}$ and $S_{b}$ respectively, and the sum on the RHS is denoted by $S_{a}$. 
	In~$S_b$, we have~$\sigma_0^{[\rho]}=\sigma_0^{[-\rho]}$, since $u(\pi) = v(0) = 0$.
	Identifying both~$\sigma_0^{[\rho]}$ and~$\sigma_0^{[-\rho]}$ with~$\sigma_0$, we get a bijection between spin configurations on $G$ and spin configurations on the faces of $\Omega$, whence
	\begin{equation}\label{eq:rel-proof-Tb}
		S_{b} = \sum_{\sigma\in \{\pm 1\}^{F(\Omega)}}\tfrac{\weight'_{r_{b}}(\sigma)}{\weight_{r_{b}}(\sigma)} \cdot \bbP_{\Omega}(\sigma) 
		= \bbE_{\Omega}\left( \tfrac{\weight'_{r_{b}}(\sigma)}{\weight_{r_{b}}(\sigma)} \right) = \Te^{\b}(b) - \Ce,
	\end{equation}
	where we used the definition of $\Te$ given by \eqref{eq:t-edge}. Similarly,
	\begin{equation}\label{eq:rel-proof-Ta}
		S_{a} = \Te^{\b}(a) - \Ce.
	\end{equation}
	We will now relate $S_{m}$ to $\Tm$.
	We split all spin configurations on~$V(G)$ into four sets: 
	\begin{itemize}
		\item $\ConfOs{\b,\Trho}:=\left\{ \sigma\in\{\pm 1\}^{V(G)} \colon \sigma^{[-\rho]} \neq \sigma^{[\rho]} = \seOne = \seFour\right\}$;
		\item $\ConfOs{\b,\Tmrho}:= \left\{\sigma\in\{\pm 1\}^{V(G)} \colon \sigma^{[\rho]} \neq \sigma^{[-\rho]} = \seOne = \seFour\right\}$;
		\item $\ConfOs{\b,\Rrho}:= \left\{\sigma\in\{\pm 1\}^{V(G)} \colon \seOne \neq \seFour \,\, \text{or} \,\, \sigma^{[-\rho]} = \sigma^{[\rho]} \neq \seOne = \seFour\right\}$;
		\item $\ConfOs{\b, 0}:= \left\{\sigma\in\{\pm 1\}^{V(G)} \colon \sigma^{[\rho]} = \sigma^{[-\rho]} =\seOne = \seFour\right\}$.
	\end{itemize}
	In the last case, $\sigma$ contributes~$0$ to~$S_m$, because~$\weight_{r_m}'=0$.
	We now show that
	\begin{align}
		\bbE_{\Omega}[\Trho(w)] &= \CmT + \sum_{\sigma\in \ConfOs{\b,\Trho}}\weight'_{r_{m}}(\sigma) \prod_{\substack{f\in F(G), f\neq r_{m}}} \weight_{f}(\sigma), \label{eq:trho-via-double-spin}\\
		\bbE[\Tmrho(w)] &= \CmT + \sum_{\sigma\in \ConfOs{\b,\Tmrho}}\weight'_{r_{m}}(\sigma) \prod_{\substack{f\in F(G), f\neq r_{m}}} \weight_{f}(\sigma),\label{eq:tmrho-via-double-spin}\\
		\bbE[\Rrho(w)] &= \CmR + \sum_{\sigma\in \ConfOs{\b,\Rrho}}\weight'_{r_{m}}(\sigma) \prod_{\substack{f\in F(G), f\neq r_{m}}} \weight_{f}(\sigma).\label{eq:rrho-via-double-spin}
	\end{align}
	Indeed, for every~$\sigma\in \ConfOs{\b,\Trho}$, we view it as a spin configuration on the faces of~$\Omega$ given by~$\sigma_0:= \sigma_0^{[\rho]}$ at~$w$ and by~$\sigma(u)$, for~$u\neq w$.
	By considering~$\seFive,\seSix\in \{\pm1\}$, we get
	\begin{equation}\label{eq:double-spin-to-spin}
		\prod_{\substack{f\in F(G), f\neq r_{m}}} \weight_{f}(\sigma) = \weight_\Omega(\sigma) \cdot \tfrac{1}{3\sqrt{3}} \cdot (2-\sigma_0\seFive)(2-\sigma_0\seSix).
	\end{equation}
	Since~$\sigma^{[-\rho]} \neq \sigma^{[\rho]} = \seOne = \seFour$, we have~$\sigma_0 = \seOne = \seFour$,
	Rewrite the indicator function:
	\begin{equation}\label{eq:indicator-to-spin}
		1_{\sigma_0 = \seOne = \seFour} = \tfrac14(1+\sigma_0\seOne)(1+\sigma_0\seOne).
	\end{equation}
	We also have~$\weight_{r_m}' = u'(\pi) = \frac12$.
	Multiplying~\eqref{eq:double-spin-to-spin} and~\eqref{eq:indicator-to-spin} and summing over all~$\sigma\in \ConfOs{\b}$, we arrive at~\eqref{eq:trho-via-double-spin}. We prove~\eqref{eq:tmrho-via-double-spin} in the same way. 

	For~\eqref{eq:rrho-via-double-spin}, we only need to adapt the definition of~$\sigma_0$:
	\begin{itemize}
		\item If~$\seOne \neq \seFour$, then $\sigma^{[\rho]} = \sigma^{[-\rho]}$ and we take~$\sigma_0 := \sigma^{[-\rho]} = \sigma^{[\rho]}$;
		\item If~$\seOne = \seFour$, then~$\sigma^{[\rho]} \neq \sigma^{[-\rho]}$ and we take~$\sigma_0 =\sigma^{[\rho]}$ or~$\sigma_0 =\sigma^{[-\rho]}$ with probability~$1/2$.
	\end{itemize}
	Since~$w'(0) = -\frac{2}{\sqrt{3}} = 2u'(0)$, the first case contributes
	\[
		-\tfrac1{\sqrt{3}} \sum_{\sigma\in\ConfOs{\b}} [(1-\sigma_0\seOne) + (1-\sigma_0\seFour)].
	\]
	In the second case, $\weight_{r_m}' = v'(0) = \frac1{\sqrt{3}}$ and we need to include a correction for the weight of~$\sigma$ (as in the proof for~$\Trho$ above). Overall, this case contributes
	\[
		\tfrac1{12\sqrt{3}}\sum_{\sigma\in\ConfOs{\b}} (1-\seOne\seFour)\left[(2-\sigma_0\seTwo)(2-\sigma_0\seThree) + (2-\sigma_0\seFive)(2-\sigma_0\seSix) \right].
	\]	
	Substituting~\eqref{eq:trho-via-double-spin}-\eqref{eq:rrho-via-double-spin}, we get that~$S_{m} = \Tm^{\b}(m) - 2\CmT - \CmR$.
	Combining this with~\eqref{eq:part-func-derivative}, \eqref{eq:rel-proof-Tb} and \eqref{eq:rel-proof-Ta}, we obtain
	\[
		\Tm^{\b}(m) - 2\CmT - \CmR = \Te^{\b}(a) +  \Te^{\b}(a) - 2\Ce + \tfrac{1}{6\sqrt{3}},
	\]
	where we used that $u'(\pi)u(\tfrac{2\pi}{3})x_{c} = \tfrac{1}{6\sqrt{3}}$. Finally, substituting the constants from~\eqref{eq:t-edge} and~\eqref{eq:CmTCmRCmid=}, we see that they cancel out and the required relation follows.
\end{proof}

Putting together Definitions~\ref{def:T-via-spins} and~\ref{def:T-mid} and Proposition~\ref{prop:relation-TeTm}, we obtain that~$\Te$ can be viewed as a projection of~$T$:

\begin{proposition}\label{prop:T-via-edges}
	Let~$w$ be face of~$\Omega$ bounded by edges~$a,b,c,d,e,f$ in the counter-clockwise order, starting by a top horizontal edge~$a$.
	Then,
	\[
		\bbE_{\Omega}^{\b}[T(w)] = \tfrac23 \cdot \left[\Te^{\b}(a) + e^{2\pi i/3}  \cdot \Te^{\b}(c) + e^{4\pi i/3}\cdot  \Te^{\b}(e)\right]. 
	\]
\end{proposition}

\subsection{Graphical representation of the stress-energy tensor}
\label{sub:graphical}

Let~$\Omega$ be discrete domain on~$\Hex$ and~$\b$ be an even subset of its boundary half-edges~$\partial\Edges(\Omega)$.
We recall that we identify the boundary edges with their half-edges incident to a non-boundary vertex of~$\Omega$.
For a spin configuration~$\sigma\in \ConfOs{\b}$, its domain wall representation is a collection of edges in~$\Edges(\Omega)$ separating adjacent faces with opposite spins.
In this subgraph, each vertex in~$\Vertices(\Omega)$ has degree~$0$ or~$2$, all half-edges of~$\b$ belong to this subgraph, no other half-edge of~$\partial\Edges(\Omega)$ are present.
This describes a bijection between~$\ConfOs{\b}$ and subgraphs consists of cycles (called {\em loops}) and paths ending at half-edges in~$\b$.
The pushforward of the critical Ising measure under this bijection assigns to every such subgraph~$\gamma$ a probability proportional to~$\weight(\gamma)$ given by
\[
	\weight(\gamma)= (1/\sqrt{3})^{|\gamma|},
\]
where~$|\gamma|$ denotes the number of edges in~$\gamma$ (half-edges being counted as~$\tfrac{1}{2}$).
This measure is also known as the loop~$O(1)$ model.

It is natural to consider also paths ending at half-edges inside~$\Omega$.
In general, subgraphs with such paths do not correspond to any spin configuration but this extension will allow us to express~$\Tm$ and the fermionic observable that we define below.

\begin{definition}\label{def:loop-conf-on-tiling}
	For two (inner) half-edges~$a,e$ of~$\Omega$ and an even subset~$\b$ of~$\partial\Edges(\Omega)$, we denote by~$\ConfO{\b\cup\{a,e\}}$ the set
	of subgraphs of~$\Omega$ consisting of several non-intersecting cycles and paths whose ends are the half-edges
	contained in~$\b\cup\{a,e\}$, so that each half-edge of~$\b\cup\{a,e\}$ is the end of exactly one of the paths.
	The corresponding partition function is given by
	\[
		\Weight^\b_\Omega(a,e)~:=~ \sum\nolimits_{\gamma\in\ConfO{\b\cup\{a,e\}}}\weight(\gamma)\,.
	\]
\end{definition}

For an \emph{oriented} midline~$m$ of an inner face of~$\Omega$, let~$m_{\mathrm{down}}$
and~$m_{\mathrm{up}}$ denote the two half-edges oriented in the direction~$-im$,
the former starting at the beginning of~$m$ and the latter at its endpoint, see Fig.~\ref{fig:winding}.
Further, we denote by~$\overline{m}$ the same midline oriented in the opposite direction.
Slightly abusing the notation, we continue to use the symbol~$\Tm(m)=\Tm(\overline{m})$ for the value of the observable~$\Tm$ discussed above (and originally defined for unoriented midlines).

\begin{lemma}\label{lemma:Tm-Ising-as-sums}
	Let~$w$ be a face in~$\Omega$ and~$\rho\in\wp$. Recall~$\CmT$ and~$\CmR$ defined by~\eqref{eq:CmTCmRCmid=}. Then,
	\begin{align}
		\bbE_{\Omega}^\b[\Trho(w)] 
		&= \CmT + \frac{1}{2\cZ^\b_\Omega} \cdot \Weight^\b_\Omega(m_{\mathrm{down}},m_{\mathrm{up}}),
		\label{eq:Trho-Ising-as-sums} \\
		\bbE_{\Omega}^\b[\Tmrho(w)] 
		&= \CmT + \frac{1}{2\cZ^\b_\Omega} \cdot \Weight^\b_\Omega(\overline{m}_{\mathrm{down}},\overline{m}_{\mathrm{up}}),
		\label{eq:Tmrho-Ising-as-sums} \\
		\bbE_{\Omega}^\b[\Rrho(w)]
		&= \CmR + \frac{1}{\sqrt{3}\cZ^\b_\Omega}\left[\Weight^\b_\Omega(m_{\mathrm{down}},\overline{m}_{\mathrm{down}}) +\Weight^\b_\Omega(m_{\mathrm{up}},\overline{m}_{\mathrm{up}})\right] \notag\\
		& -~\frac{1}{\sqrt{3}\cZ^\b_\Omega}\left[\Weight^\b_\Omega(m_{\mathrm{down}},\overline{m}_{\mathrm{up}}) + \Weight^\b_\Omega(m_{\mathrm{up}},\overline{m}_\mathrm{down})\right].
		\label{eq:Rrho-Ising-as-sums}
	\end{align}
\end{lemma}

\begin{figure}[ht]
    \begin{center}
      \tempskiped{\includegraphics[scale=1.2]{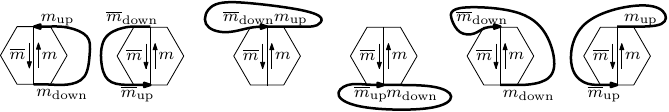}}
    \end{center}
    \caption{A schematic drawing of curves connecting two half-edges adjacent to a midline~$m$, which correspond to the six terms~$\Weight^\b_\Omega(\cdot,\cdot)$ in~\eqref{eq:Trho-Ising-as-sums}-\eqref{eq:Rrho-Ising-as-sums}. Due to topological reasons, the winding (total rotation angle) of an oriented curve equals~$\pi$ or~$-3\pi$ in the first two cases and~$\pm2\pi$ in the last four.}
    \label{fig:winding}
\end{figure}

\begin{proof}
	Recall the setting of Proposition~\ref{prop:relation-TeTm}: $abcdef=w$ is a face of~$\Omega$ and $m=\wrho$ is its midline; $\weOne$, $\weTwo$, $\weThree$, $\weFour$, $\weFive$, $\weSix$ are the six faces around~$w$ (see Fig~\ref{fig:t-edge-mid-rel}); 
	a tiling~$(G,\Phi[0])$ coincides with the triangulation dual to~$\Omega$ everywhere except at~$w$, and this face of~$\Omega$ is represented by two vertices~$\wrho$ and~$\wmrho$ in~$(G,\Phi[0])$; the quadrilateral~$r_m:=\weFour \wrho  \weOne \wmrho$ is a rhombuses whose angle at vertex $\weFour$ is~$0$.
	Construct a domain wall representation~$\gamma$ for spin configuration~$\sigma$ on~$G$:
	\begin{itemize}
		\item (half-)edges~$m_{\mathrm{up}}$, $a$, $b$, or~$m_{\mathrm{down}}$ are in~$\gamma$ if and only if~$\sigma^{[\rho]}$ differs from~$\sigma(\weOne)$, $\sigma(\weTwo)$, $\sigma(\weThree)$, or~$\sigma(\weFour)$ respectively;
		\item (half-)edges~$\overline{m}_{\mathrm{up}}$, $a$, $b$, or~$\overline{m}_{\mathrm{down}}$ are in~$\gamma$ if and only if~$\sigma^{[-\rho]}$ differs from~$\sigma(\weFour)$, $\sigma(\weFive)$, $\sigma(\weSix)$, or~$\sigma(\weOne)$ respectively;
		\item other (half-)edges are in~$\gamma$ if and only if they separate two faces with opposite spins.
	\end{itemize}
	Then, \eqref{eq:Trho-Ising-as-sums} and~\eqref{eq:Tmrho-Ising-as-sums} follow directly from~\eqref{eq:trho-via-double-spin} and~\eqref{eq:tmrho-via-double-spin} respectively; we also directly obtain the first two terms in~\eqref{eq:Rrho-Ising-as-sums} from~\eqref{eq:rrho-via-double-spin}.
	In order to get the terms~$\Weight^\b_\Omega(m_{\mathrm{down}},\overline{m}_{\mathrm{up}})$ and~$\Weight^\b_\Omega(m_{\mathrm{up}},\overline{m}_\mathrm{down})$, we take any~$\gamma$ containing~$c$ or~$f$ and split this edge into two half-edges.
	Clearly, in this way we count twice every~$\gamma\in\ConfO{\b}$ that contains both~$c$ and~$f$~--- and this is exactly what we need, because
	\[
		w'(0) = -\tfrac2{\sqrt{3}} = 2 u'(0).\qedhere
	\]
\end{proof}

\subsection{Fermions in the planar Ising model}
\label{sub:fermion}

The crucial tool that allows us to prove convergence results for correlations of the stress energy-tensor~$T(z)$ is the \emph{discrete fermionic observable}. 
It was proposed by the third author (see~\cite[Section~4]{Smi10a}) as a convenient tool to study the scaling limit of the Ising model (and other models) considered in \emph{general} planar domains; see also~\cite[Section~3]{CheCimKas15} for a discussion of their relations with dimers, spin-disorders and Grassmann variables techniques.

\subsubsection{Definitions of discrete fermionic observables}

For an edge~$e$ of~$\Omega$, let~$z_e$ denote its midpoint. Recall that we identify oriented edges~$e$ of~$\Omega$ with the half-edges emanating from~$z_e$ in the same directions.
Given an oriented edge~$e$, we denote the oppositely oriented edge by~$\overline{e}$.
We use the same notation for oppositely oriented edges emanating from~$z_e$.

Recall that~$\wp$ denotes the set of all six possible directions of the oriented edges in~$\Hex$.
We split these directions into positive and negative ones: 
\[
	\wp=\wp^+\cup\wp^-:=\{1,\varrho,\varrho^2\}\cup \{-1,-\varrho,-\varrho^2\}.
\]
For~$\rho\in \wp$, we define~$\eta(\rho)$ as a particular choice of its square root: 
\begin{equation}
	\label{eq:eta=rho}
	\eta(\rho):=e^{i\frac{\pi}{4}}\overline{\rho}^{\frac{1}{2}}=e^{\pm i\frac{\pi}{4}}\rho,\quad \rho\in\wp^\pm.
\end{equation}
For an oriented edge~$a$, we denote its direction by~$\rho_a$ and define~$\eta_a:=\eta(\rho_a)$.

\begin{definition}
\label{def_fermion}
Let~$a,e$ be two distinct half-edges in a discrete domain~$\Omega$. For a configuration~$\gamma\in\ConfO{a,e}$, we define its \emph{complex phase}~$\phase(\gamma;a,e)$ as
\begin{equation}
\label{eq_phase_def}
\phase(\gamma;a,e) := \exp \left[ -{\textstyle\frac{i}{2}}\wind(\gamma;a,e) \right],
\end{equation}
where by~$\wind(\gamma;a,e)$ we denote the total rotation angle of the (unique) path in~$\gamma$ linking the half-edges~$a$ and~$e$, oriented from~$a$ to~$e$ (so that it starts in the direction of the half-edge~$a$ and ends in the direction opposite to~$e$). Then, the \emph{real-valued fermionic observable}~$\F(a,e)$ is defined as
\begin{equation}
\label{eq_F_real_def}
	\F(a,e)~:=~\frac{{(-i\eta_a{\vphantom{1}}\overline{\eta}_e)}}{\cZ_\Omega^+}\sum_{\ConfO{a,e}}\phase(\gamma;a,e)\weight(\gamma).
\end{equation}
Further, for~$z_e\ne z_a$, the \emph{complex-valued fermionic observable}~$\F(a,z_e)$ is defined as
\begin{align}
\notag
\F(a,z_e):= & ~\eta_e\F(a,e)+\eta_{\overline{e}}\F(a,\overline{e})\\
\label{eq_F_complex_def} = & \frac{{(-i\eta_a)}}{\cZ_\Omega^+}\biggl[\,\sum_{\ConfO{a,e}}\phase(\gamma;a,e)\weight(\gamma)\ \ + \sum_{\ConfO{a,\overline{e}}}\phi(\gamma;a,\overline{e})\weight(\gamma)\biggr].
\end{align}
Above we set~$\ConfO{a,\overline{b}}:=\emptyset$ if~$e=b$ is a boundary half-edge. We also formally define $\F(a,a):=0$ and~$\F(a,z_a):=\eta_{\overline{a}}\F(a,\overline{a})$.
\end{definition}
\begin{remark} \label{rem:fermionic-via-Grassmann}
Since~$\eta_{\overline{e}}=\pm i\eta_e$ for~$\rho_e\in\wp^\pm$, we have
\begin{equation}
\label{eq_F_real_via_complex}
\F(a,e)=\Re[\overline{\eta}_e\F(a,z_e)]\,,\quad \F(a,\overline{e})=\Re[\overline{\eta}_{\overline{e}}\F(a,z_e)]=\pm\Im[\overline{\eta}_e\F(a,z_e)]\,.
\end{equation}

\begin{remark}
\label{rem:bvp_in_discrete}
It is easy to check that~$\F(a,e)$ is real and~$\F(e,a)=-\F(a,e)$ for all~$a,e$, since
$\wind(\gamma;e,a)=-\wind(\gamma;a,e)$, for all $\gamma\in\ConfO{a,e}$.
In particular, the function~$\F(a,\,\cdot\,)$ satisfies the following boundary conditions:
\begin{equation}
\label{eq_F_boundary_conditions}
\Re[i\overline{\eta}_b\F(a,z_b)]=\Im[\F(a,b)]=0
\end{equation}
for all boundary half-edges~$b$.
\end{remark}

Using the Grassmann variables formalism (see~\cite[Sections~3.2,3.6]{CheCimKas15} for the relation of combinatorial fermionic observables with this notation), for~$z_e\ne z_a$, one can write
\[
F(a,e)=\CorrFerm{}{\phi_e\phi_a}=-\CorrFerm{}{\phi_a\phi_e}\quad \text{and}\quad \F(a,z_e)~=~\CorrFerm{}{\psi(z_e)\phi_a},
\]
where $\psi(z_e):=i(\overline{\eta}_e\phi_e+\overline{\eta}_{\overline{e}}\phi_{\overline{e}})$ and the formal correlators~$\CorrFerm{}{\phi_e\phi_a}$ are associated with the classical Pfaffian representation of the Ising model partition function~$\cZ_\Omega^+$, see~\cite{CheCimKas15} for more details. Note that our definitions differ from those discussed in~\cite{CheCimKas15} in three respects. First, we drop the additional normalization~$t_e=(x_{\mathrm{crit}}+x_{\mathrm{crit}}^{-1})^{1/2}$ in the definition of the complex-valued observable: this factor does not depend on~$e$ and thus is irrelevant when working with the homogeneous model. Second,~(\ref{eq_F_complex_def}) does~\emph{not} contain the additional complex factor~$\exp[-i\frac{\pi}{4}]$, thus it differs by this factor from the definition used, e.g., in~\cite{CheSmi12,CheIzy13,CheCimKas15} and coincides with the one used, e.g., in~\cite{HonSmi13,Hon10,CheHonIzy15}. Third, our definition of the set~$\ConfO{a,\overline{a}}$ differs from the set~$\mathcal{C}(a,\overline{a})$ used in~\cite[Theorem~1.2]{CheCimKas15}: one is the complement of the other, thus
\[
F(a,\overline{a})~=~\pm 1~+\CorrFerm{}{\phi_{\overline{a}}\phi_{a}},\quad \rho_a\in\wp^\pm
\]
(recall that, according to our choice of square roots, we have~$\eta_{\overline{a}}=i\eta_a$ if~$\eta_a\in\{1,\rho,\rho^2\}$).
\end{remark}

\subsubsection{S-holomorphicity of discrete fermionic observables} \label{sub:s-holomorphicity}
In this section we briefly discuss the crucial property of complex-valued fermionic observables~(\ref{eq_F_complex_def}): the so-called \emph{s-holomorphicity}, which allows one to think about these observables as about solutions to some discrete boundary value problems and to prove their convergence (as~$\delta\to 0$) to holomorphic functions solving the similar boundary value problems in continuum.
\begin{definition}
\label{def:s-hol}
Let~$\Omega$ be a discrete domain drawn on the regular hexagonal grid and~$F$ be a complex-valued function defined on (a subset of) the set of medial vertices (aka midedges)~$z_e$ of~$\Omega$. We say that~$F$ is an \emph{s-holomorphic function}, if one has
\begin{equation}
\label{eq:s-hol-rel}
	\mathrm{Proj}[F(z_e)\,;(w-u)^{-\frac{1}{2}}\R]=\mathrm{Proj}[F(z_{e'})\,;(w-u)^{-\frac{1}{2}}\R]
\end{equation}
for all pairs of adjacent medial vertices~$z_e,z_{e'}$, where~$u$ denotes the common vertex of the edges~$e$ and~$e'$, a point~$w$ is the center of the face adjacent to both~$e$ and~$e'$, and
\[
\mathrm{Proj}[F\,;\tau\R]:=\tfrac{1}{2}[\,F+(\tau/|\tau|)^2\overline{F}\,]
\]
denotes the orthogonal projection of the complex number~$F$, considered as a point in the plane, onto the line~$\tau\R\subset\C$.
\end{definition}

\begin{remark}
	This property of discrete fermionic observables, which is a stronger version of usual definitions of discrete holomorphic functions, was pointed out by the third author in~\cite{Smi06} and used to prove convergence results for similar observables appearing in the random-cluster representation of the Ising model (considered on the square lattice) in~\cite{Smi10}. The name \mbox{s-holomorphicity} was suggested in~\cite[Section~3]{CheSmi12}, where similar convergence results were proved for the spin representation of the Ising model, considered on arbitrary isoradial graphs; see also~\cite[Section~4]{Smi10a}, \cite[Section~3]{CheCimKas15} and reference therein for a discussion of relations between s-holomorphic functions and other techniques used to study the 2D Ising model (dimers, disorder insertions, etc). Also, it is worth mentioning that there are two versions of this definition appearing in the literature, which differ from each other by the factor~$\exp[-i\frac{\pi}{4}]$. The definition given above coincides with those used, e.g., in~\cite{HonSmi13,Hon10,CheHonIzy15} and differs by this factor from the one used, e.g., in~\cite{Smi10,CheSmi12,CheIzy13,CheCimKas15}, cf. Remark~\ref{rem:fermionic-via-Grassmann}.
\end{remark}

It is well-known that, for any half-edge~$a_\delta$ of~$\Omega_\delta$, the complex-valued fermionic observable~$\Fdelta(a_\delta,\,\cdot\,)$ defined by~(\ref{eq_F_complex_def}) satisfies the s-holomorphicity condition everywhere in~$\Omega_\delta$ except at the midedge~$z_{a_\delta}$, we recall this fact in Proposition~\ref{prop:s-hol}. This allows to define the (real) fermionic observable on the corners of the hexagonal lattice, i.e. on the pairs $(u_\delta,v_\delta)$ as in Definition~\ref{def:s-hol}. We will denote it by~$\Fdelta (a_\delta,u_\delta,v_\delta)$.
According to the direction of~$u_\delta - v_\delta$ there are six type of corner. Each of them forms a triangular lattice, and the fact that~$\Fdelta (a_\delta,\cdot,\cdot)$ are projections of a function defined on the edges implies that on each type of corners the function~$\Fdelta (a_\delta,\cdot,\cdot)$ is harmonic away from~$a_\delta$ (see Appendix and Fig.~\ref{fig:full_plane_projections} for more details).

\begin{remark}
\label{rem:propag-eq-and-mercedes}
	We now describe in which sense the relation~\eqref{eq_relation_Te} on~$\Te$ can be viewed as a part of the s-holomorphicity relations~\eqref{eq:s-hol-rel}.
	Indeed, let~$u$ be a vertex surrounded by the faces of~$\Omega$ centred at~$w_1,w_2,w_3$.
	For~$j=1,2,3$, denote the second common vertex of the faces centered~$w_j$ and~$w_{j+1}$ (in the cyclic order) by~$v_{j+2}$
	It is easy to see that if~$F$ is s-holomorphic, then
	\[
		F(v_1,w_2) - F(v_2,w_1) = F(v_2,w_3) - F(v_3,w_2) = F(v_3,w_1) - F(v_1,w_3),
	\]
	which coincides with~\eqref{eq_relation_Te}.
	In addition, one has
	\[
		F(v_1,w_2) + F(v_2,w_1) + F(v_2,w_3) + F(v_3,w_2) + F(v_3,w_1) + F(v_1,w_3) = 0.
	\]
	One can show that these two display equations are in fact equivalent to the s-holomorphicity.
\end{remark}

Moreover, the function~$\Fdelta(a_\delta,\,\cdot\,)$ satisfies the boundary condition~(\ref{eq_F_boundary_conditions}) for all boundary (half-)edges of~$\Omega_\delta$, which is nothing but the discrete analogue of the boundary condition~(\ref{eq_feta_boundary_conditions}).

\setcounter{equation}{0}
\section{Correlation functions via fermions in discrete}
\label{sec:via-fermions-disc}

The aim of this section is to give expressions similar to Proposition~\ref{prop_via_fermion_t} for the expectation~$\mathbb{E}^\dob[T(w)]$ as well as for the two-point expectations~$\E^+[\Trho(w)\Ttau(w')]$, $\E^+[\Trho(w)\varepsilon(a)]$ and also for the ratio~$\E^+[\Trho(w)\sigma(u)]/\E^+[\sigma(u)]$. Recall that, for an (oriented) edge~$a$ of~$\Omega$, we define the energy density~$\varepsilon(a)$ as
\begin{equation}
\label{eq_e_definition}
\varepsilon(a)~:=~\sigma(a^+)\sigma(a^-)-\textstyle{\frac{2}{3}} ~=~ \varepsilon(\overline{a}),
\end{equation}
where~$a^+$ and~$a^-$ are the two adjacent faces to~$a$ and the lattice-dependent constant~$\frac{2}{3}$ corresponds to the infinite-volume limit of the Ising model on the honeycomb lattice (e.g., it should be replaced by~$\sqrt{2}/2$ when working on the square lattice). Since~$\wind(\gamma;a,\overline{a})=\pm 2\pi$ for any configuration~$\gamma\in\ConfO{a,\overline{a}}$, it easily follows from~(\ref{eq_F_real_def}) that
\begin{equation}
\label{eq_e_via_fermion}
\E^+[\varepsilon(a)] ~=~ 1-2(\cZ_\Omega^+)^{-1}\Weight_\Omega^+(a,\overline{a})-\tfrac{2}{3} ~=~ {\textstyle{\frac{1}{3}}}\mp 2\F(a,\overline{a})\quad \text{if}\ \ \rho_a\in\wp^\pm.
\end{equation}

\subsection{Stress-energy tensor expectations via fermionic observables} \label{sub:T-via-fermionic}

Consider the empty boundary conditions~$\b=\varnothing$. In this case, Lemma~\ref{lemma:Tm-Ising-as-sums} expresses the quantity~$\Tm(m)$ as a linear combination of sums over the sets $\ConfO{m_{\mathrm{down}},m_{\mathrm{up}}}$, $\ConfO{m_{\mathrm{down}},\overline{m}_{\mathrm{up}}}$ etc. It is easy to see that the complex phase~(\ref{eq_phase_def}) of a configuration~$\gamma$ is constant on each of these sets due to topological reasons. Therefore, one can easily rewrite~$\Tm(m)$ using several values $\F(m_{\mathrm{down}},m_{\mathrm{up}})$, $\F(m_{\mathrm{down}},\overline{m}_{\mathrm{up}})$ of the real fermionic observable introduced above.
To shorten the expressions, we introduce notation for discrete derivatives and mean values.

\begin{definition}
\label{def:discrete-derivatives}
For a function~$F(a,e)$ defined on half-edges of a discrete domain~$\Omega$, an inner face~$w$ of~$\Omega$ and $\rho\in\wp$,
we put
\[
\begin{array}{ll}
[\dOne F](\wrho,e):=\frac{1}{\sqrt{3}} [ F(\wup,e) - F(\wdown,e) ]\,,& [\sOne F](\wrho,e) := \frac{1}{2} [ F(\wup,e) + F(\wdown,e) ]\,, \\{}
[\dTwo F](a,\wrho):=\frac{1}{\sqrt{3}} [ F(a,\wup) - F(a,\wdown) ]\,,& [\sTwo F](a,\wrho) := \frac{1}{2} [ F(a,\wup) + F(a,\wdown) ]\,,
\end{array}
\]
where~$\wrho$ denotes the midline of the face~$w$ oriented in the direction~$i\rho$ and $\wdown,\wup$ are the two boundary half-edges of~$w$ oriented in the direction~$\rho$, see Fig.~\ref{fig:t-edge-mid-rel} and~\ref{fig:winding}.
\end{definition}

\begin{proposition} \label{prop_via_fermion_t}
Let~$w$ be an inner face of~$\Omega$ { and~$\rho\in\wp^\pm$}.
% and~$\eta=e^{\pm i \frac{\pi}{4}}\rho$}.
Then one has
\begin{align*}
\E^+[\Trho(w)] &~=~ \CmT~{-}~{\textstyle\frac{\sqrt{3}}{2}}[\sOne\dTwo\F](\wrho,\wrho)\,,\\
\E^+[\Rrho(w)] &~=~ \CmR~{\pm}~{\sqrt{3}}\,[\dOne\dTwo\F](\wrho,\wmrho)\,,
\end{align*}
where
$\Trho(w)$ and $\Rrho(w)$ are given by Definition~\ref{def:T-via-spins}, and~$\F$ is defined by~(\ref{eq_F_real_def}).
\end{proposition}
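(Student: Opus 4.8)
The plan is to combine Proposition~\ref{prop:T-as-expectation} (which expresses~$\E^+[\Teta(w)]$ and~$\E^+[\Reta(w)]$ as linear combinations of the partition functions~$\Weight^+_\Omega(\cdot,\cdot)$) with the observation, already noted in the text just before Definition~\ref{def:discrete-derivatives}, that the complex phase~$\phase(\gamma;a,e)$ of~\eqref{eq_phase_def} is \emph{constant} on each of the six sets~$\ConfO{a,e}$ appearing there, for purely topological reasons (see the caption of Fig.~\ref{fig:winding}: the winding of the connecting curve is~$\pi$ or~$-3\pi$ in the ``same-half-edge-pair'' cases and~$\pm2\pi$ in the four ``cross'' cases). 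Hence each~$\Weight^+_\Omega(a,e)$ equals, up to an explicit unimodular scalar, the corresponding value~$\cZ^+_\Omega\cdot\overline{\eta}_a\eta_e^{-1}(i)^{-1}\F(a,e)$ of the real fermionic observable from~\eqref{eq_F_real_def}. One then just has to substitute these identities into~\eqref{eq:Ising_Teta_in_terms_of_W} and~\eqref{eq:Ising_Reta_in_terms_of_W} and check that the resulting combination of~$\F$-values is exactly~${\textstyle\frac{\sqrt3}{2}}[\sOne\dTwo\F](\weta,\weta)$ in the first case and~$\sqrt3\,[\dOne\dTwo\F](\weta,\wieta)$ in the second.

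Concretely, I would first fix~$m=\weta$, so~$\overline m=\wieta$, and record the four relevant half-edges: the beginnings/ends of~$\weta$ are~$\wdown,\wup$ (oriented in the direction~$\eta^2$), and those of~$\wieta$ are~$\widown,\wiup$ (oriented in the direction~$-\eta^2=i^2\eta^2$). Since~$\eta_{\overline e}=i\eta_e$ for~$\eta_e\in\{1,\rho,\rho^2\}$ by our choice of square roots, the square-root factors~$i\overline\eta_a\eta_e$ in~\eqref{eq_F_real_def} take only the values~$\pm1,\pm i$ on these four half-edges, and I would compute them case by case. The winding being constant on each configuration set, I get e.g.
\[
\Weight^+_\Omega(\wdown,\wup)=\cZ^+_\Omega\cdot\varepsilon_1\cdot\F(\wdown,\wup)
\]
for an explicit sign/phase~$\varepsilon_1$ coming from~$\phase\cdot(i\overline\eta_a\eta_e)^{-1}$, with the winding~$\pi$ (resp.~$-3\pi$) giving~$\phase=e^{-i\pi/2}=-i$ (resp.~$e^{3i\pi/2}=-i$) — consistently; similarly for the other five terms, where the winding~$\pm2\pi$ gives~$\phase=\mp1$. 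Plugging into~\eqref{eq:Ising_Teta_in_terms_of_W} and~\eqref{eq:Ising_Tieta_in_terms_of_W} should collapse the two ``down–up'' terms to~$\tfrac12[F(\wup,\wup^{\text{(oppo)}})+F(\wdown,\wdown)]$-type sums — i.e. exactly the average~$\sOne$ in the first index combined with the difference~$\dTwo$ in the second (the factors~$\tfrac1{\sqrt3}$ in~$\dOne,\dTwo$ and the~$\tfrac1{\sqrt3}$ in~\eqref{eq:Ising_Reta_in_terms_of_W} matching the~$\tfrac1{\sqrt3}$ and~$\tfrac1{\sqrt3\cZ}$ prefactors there). For~$\Reta$, the four ``cross'' terms with signs~$+,+,-,-$ are precisely~$[F(\wdown,\widown)+F(\wup,\wiup)]-[F(\wdown,\wiup)+F(\wup,\widown)]$, which, after using~$\wiup$ and~$\widown$ as the two half-edges of~$w$ opposite to~$\wdown,\wup$ and antisymmetry~$\F(e,a)=-\F(a,e)$, reorganizes exactly into~$[\dOne\dTwo\F](\weta,\wieta)$ times~$3$ (the~$\tfrac1{\sqrt3}\cdot\tfrac1{\sqrt3}$ from the two discrete derivatives versus the~$\tfrac1{\sqrt3}$ in the lemma accounting for the remaining~$\sqrt3$). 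The antisymmetry and the boundary conditions~\eqref{eq_F_boundary_conditions} are not really needed here since~$w$ is an inner face, but antisymmetry is used to merge the two orientations of each midline cleanly.

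The only genuinely delicate point is bookkeeping of the unimodular prefactors: one must verify that the six topological winding values listed in Fig.~\ref{fig:winding}, together with the six values of~$i\overline\eta_a\eta_e$ on the relevant half-edge pairs, conspire to turn the \emph{signed} real quantities~$\Weight^+_\Omega(\cdot,\cdot)/\cZ^+_\Omega$ (all of which are manifestly positive, being sums of positive weights) into the \emph{signed} fermionic values~$\F(\cdot,\cdot)$ with exactly the signs~$+,+$ (for the two~$\Teta,\Tieta$ terms) and~$+,+,-,-$ (for the four~$\Reta$ terms) demanded by Definition~\ref{def:discrete-derivatives}. I expect this to come out automatically from the fact that the winding in the ``same-pair'' cases is~$\pi\ (\mathrm{mod}\ 4\pi)$ while in the ``cross'' cases it is~$\pm2\pi$, so~$\phase=-i$ versus~$\phase=\mp1$, and the extra factor~$i$ in~\eqref{eq_F_real_def} cancels the~$-i$ while the~$\overline\eta_a\eta_e$ factors take care of the rest; but checking all signs is where essentially all the work lies. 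Once this is done, substituting back into Proposition~\ref{prop:T-as-expectation} and collecting the constants~$\CmT,\CmR$ (which simply pass through untouched) finishes the proof.
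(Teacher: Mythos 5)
Your strategy is the paper's own: feed the identities \eqref{eq:Ising_Teta_in_terms_of_W}--\eqref{eq:Ising_Reta_in_terms_of_W} from Proposition~\ref{prop:T-as-expectation} into the definition~\eqref{eq_F_real_def}, use that the complex phase is topologically constant on each of the six configuration sets, and then match the resulting values of~$\F$ with the discrete-derivative combinations. The one substantive problem is located exactly in the step you yourself call ``where essentially all the work lies'' and leave unverified: you assert that winding~$\pm2\pi$ gives~$\phase=\mp1$, whereas $\exp[-\tfrac{i}{2}(\pm2\pi)]=e^{\mp i\pi}=-1$ in \emph{both} cases. This is not cosmetic: among the four ``cross'' sets some have winding~$+2\pi$ and some~$-2\pi$ (Fig.~\ref{fig:winding}), so with your phase values the terms with winding~$-2\pi$ would acquire the wrong sign and the claimed match with~\eqref{eq:Ising_Reta_in_terms_of_W} would fail. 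With the correct value~$\phase=-1$, every cross pair has prefactor $i\overline{\eta}_a\eta_e\cdot\phase=(i\cdot\overline{\eta}\cdot i\eta)\cdot(-1)=+1$ (for~$\eta\in\{1,\rho,\rho^2\}$ the half-edges~$\wdown,\wup$ carry square root~$\eta$ and~$\widown,\wiup$ carry~$i\eta$), i.e.\ $\F(a,e)=\Weight^+_\Omega(a,e)/\cZ^+_\Omega$ for all four pairs, and the $+,+,-,-$ pattern of~\eqref{eq:Ising_Reta_in_terms_of_W} is produced purely by $[\dOne\dTwo\F](\weta,\wieta)=\tfrac13[\F(\wup,\wiup)-\F(\wup,\widown)-\F(\wdown,\wiup)+\F(\wdown,\widown)]$; the constants then give exactly $\sqrt3\,[\dOne\dTwo\F](\weta,\wieta)$.

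For the first identity the check is shorter than your sketch suggests, and this is how the paper argues: since $\F(\wdown,\wdown)=\F(\wup,\wup)=0$ and $\F(\wup,\wdown)=-\F(\wdown,\wup)$, one has ${\textstyle\frac{\sqrt3}{2}}[\sOne\dTwo\F](\weta,\weta)=\tfrac12\F(\wdown,\wup)$, and a single phase evaluation ($\phase=-i$ on~$\ConfO{\wdown,\wup}$ together with $i\overline{\eta}_{\wdown}\eta_{\wup}=i$) identifies this with $\tfrac{1}{2\cZ^+_\Omega}\Weight^+_\Omega(\wdown,\wup)$, i.e.\ with the right-hand side of~\eqref{eq:Ising_Teta_in_terms_of_W} minus~$\CmT$; only~\eqref{eq:Ising_Teta_in_terms_of_W} is needed here, not~\eqref{eq:Ising_Tieta_in_terms_of_W}. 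So the route is right, but as written the decisive sign verification both contains the phase slip above and is not carried out; correcting the phase to~$-1$ and performing the short check closes the argument.
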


\begin{proof}
Recall that $F(\wdown,\wdown)=F(\wup,\wup)=0$ and~$F(\wdown,\wup)=-F(\wup,\wdown)$.
Thus,
\begin{align*}
{\textstyle\frac{\sqrt{3}}{2}}[\sOne\dTwo\F](\weta,\weta) =
{\textstyle\frac{\sqrt{3}}{2}}\cdot\tfrac{1}{2}\cdot {\textstyle\frac{1}{\sqrt{3}}}
\left[
\F(\wdown, \wup) - \F(\wup, \wdown)
\right] = \tfrac{1}{2}\F(\wdown, \wup)\,.
\end{align*}
Since~$\phase(\gamma;\wdown,\wup) = -i$ for any~$\gamma\in \ConfO{\wdown, \wup}$, it is easy to see that the right-hand side coincides with the the right-hand side of~\eqref{eq:Trho-Ising-as-sums}, see also Fig.~\ref{fig:winding}.
The expectation~$\E^+[\Reta(w)]$ can be treated in the same way.
\end{proof}

\subsection{Four-point fermionic observables} In order to handle the correlations mentioned above, we need a combinatorial definition of \emph{four-point} fermionic observables. Given four distinct half-edges~$a_1,a_2,a_3,a_4$ of~$\Omega$, let~$\ConfO{a_1,a_2|a_3,a_4}$ denote the set of configurations~$\gamma\in\ConfO{a_1,a_2,a_3,a_4}$ containing several loops and two paths linking~$a_1$ with~$a_2$ and~$a_3$ with~$a_4$. Note that
\[
\ConfO{a_1,a_2,a_3,a_4}=\ConfO{a_1,a_2|a_3,a_4}\sqcup \ConfO{a_1,a_3|a_2,a_4}\sqcup \ConfO{a_1,a_4|a_2,a_3}
\]
since we work with subsets of a trivalent lattice. The next combinatorial definition of four-point fermionic observables can be thought of as a generalization of the definition~(\ref{eq_F_real_def}). The~$2n$-point version of this construction was used in~\cite{Hon10} to treat the scaling limits of expectations~$\E^\b[\varepsilon(a_1)\dots\varepsilon(a_n)]$ on the square lattice, for any boundary conditions~$\b=\{b_1,\dots,b_{2m}\}$.

\begin{definition}
\label{def_fermion_4pt}
Let~$a_1,a_2,a_3,a_4$ be four distinct half-edges in a discrete domain~$\Omega$. For a configuration~$\gamma\in\ConfO{a_1,a_2|a_3,a_4}$, we define its \emph{complex phase}~$\phase(\gamma;a_1,a_2|a_3,a_4)$ as
\[
\phase(\gamma;a_1,a_2|a_3,a_4) := \exp \left[ -{\textstyle\frac{i}{2}}(\wind(\gamma;a_1,a_2)+\wind(\gamma;a_3,a_4))\right],
\]
where~$\wind(\gamma;a_1,a_2)$ and~$\wind(\gamma;a_3,a_4)$ stand for the total rotation angles of the (unique) paths in~$\gamma$ linking~$a_1$ with~$a_2$ and~$a_3$ with~$a_4$, oriented from~$a_1$ to~$a_2$ and from~$a_3$ to~$a_4$, respectively. Then, we put
\[
\F(a_1,a_2|a_3,a_4)~:=~\frac{(-i\eta_{a_1}^{\vphantom{1}}\overline{\eta}_{a_2})(-i\eta_{a_3}^{\vphantom{1}}\overline{\eta}_{a_4})} {\cZ_\Omega^+}\!\!\!\!\! \sum_{\gamma\in\ConfO{a_1,a_2|a_3,a_4}} {\phase(\gamma;a_1,a_2|a_3,a_4)\weight(\gamma)}
\]
and define the real-valued \emph{four-point fermionic observable}~$\F(a_1,a_2,a_3,a_4)$ by
\[%\begin{equation}\label{eq_F_4pt_def}
\F(a_1,a_2,a_3,a_4)~:=~ \F(a_1,a_2|a_3,a_4)-\F(a_1,a_3|a_2,a_4)+\F(a_1,a_4|a_2,a_3)\,.
\]
\end{definition}

The next lemma reflects the free fermionic structure of these combinatorial observables.

\begin{figure}[ht]
    \begin{center}
      \tempskiped{\includegraphics[scale=1.2]{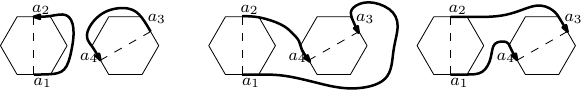}}
    \end{center}
    \caption{
    Three possible ways to connect four half-edges~$a_1$, $a_2$, $a_3$ and~$a_4$
    (the orientation of curves are chosen as in Lemma~\ref{lem_pfaffian}).
    The complex phases are
    $\exp \left[ -\tfrac{i}{2} (\tfrac{\pi}{2}+\tfrac{\pi}{2}) \right]$,
    $\exp \left[ -\tfrac{i}{2} (\tfrac{5\pi}{3} - \tfrac{\pi}{3}) \right]$ and
    $\exp \left[ -\tfrac{i}{2} (-\tfrac{\pi}{3} - \tfrac{\pi}{3}) \right]$, respectively.
    }
    \label{figTwoPaths}
\end{figure}

\begin{lemma}
\label{lem_pfaffian}
Let~$a_1,a_2,a_3,a_4$ be four distinct half-edges of~$\Omega$. Then, the following Pfaffian formula is fulfilled:
\[
\F(a_1,a_2,a_3,a_4)~=~ \F(a_1,a_2)\F(a_3,a_4)-\F(a_1,a_3)\F(a_2,a_4)+\F(a_1,a_4)\F(a_2,a_3)\,.
\]
In particular,~$\F(a_1,a_2,a_3,a_4)$ is an antisymmetric function of its arguments: for any permutation~$\varpi\in S_4$, one has~$\F(a_{\varpi(1)},a_{\varpi(2)},a_{\varpi(3)},a_{\varpi(4)})=(-1)^{\sgn(\varpi)}\F(a_1,a_2,a_3,a_4)$.
\end{lemma}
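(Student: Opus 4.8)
The plan is to establish the Pfaffian identity combinatorially, by expanding both sides as sums over appropriate classes of subgraphs and matching contributions term by term. First I would recall that, by Definition~\ref{def_fermion_4pt}, the left-hand side is a signed sum of three pairing observables, and by Definition~\ref{def_fermion} each product~$\F(a_i,a_j)\F(a_k,a_l)$ on the right-hand side unfolds into a double sum over pairs of configurations~$(\gamma_1,\gamma_2)$ with~$\gamma_1\in\ConfO{a_i,a_j}$ and~$\gamma_2\in\ConfO{a_k,a_l}$, each weighted by~$\phi(\gamma_1;a_i,a_j)\phi(\gamma_2;a_k,a_l)\weight(\gamma_1)\weight(\gamma_2)$ and the appropriate~$\eta$-prefactors. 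The key combinatorial device is the symmetric difference (XOR) map: given a pair~$(\gamma_1,\gamma_2)$, the set~$\gamma_1\bigtriangleup\gamma_2$ is an even subgraph with respect to all vertices except possibly those carrying the half-edges~$a_1,\dots,a_4$, hence an element of~$\ConfO{a_1,a_2,a_3,a_4}$ (once one checks the half-edge parities are correct), and this operation is the standard two-to-one correspondence relating products of two-point objects to four-point objects.

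The heart of the matter is then to check that the combinatorial weights and, crucially, the complex phases transform correctly under the XOR map. For the weights this is routine: since~$x=1/\sqrt3$ and~$n=1$, we have~$\weight(\gamma_1)\weight(\gamma_2)=x^{|\gamma_1|+|\gamma_2|}$, and the number of edges is additive mod~$2$ under symmetric difference, but the precise bookkeeping of edges counted once versus twice needs to be reconciled with~$|\gamma_1\bigtriangleup\gamma_2|$ together with the loop structure — here the fact that~$n=1$ makes the topological factor trivial is what keeps this elementary. For the phases I would use the topological rigidity already exploited elsewhere in the paper (e.g. in the proof of Lemma~\ref{lemma:Tm-Ising-as-sums} and Proposition~\ref{prop_via_fermion_t}): the winding of a path between two fixed half-edges of~$\Omega$ is determined modulo~$4\pi$ by its endpoints, so the phase factors are \emph{constants} on each~$\ConfO{a_i,a_j}$, and one only needs to verify a handful of numerical identities among sixth roots of unity, exactly as illustrated in Fig.~\ref{figTwoPaths}. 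The three pairing terms~$\F(a_1,a_2|a_3,a_4)$, $\F(a_1,a_3|a_2,a_4)$, $\F(a_1,a_4|a_2,a_3)$ arise because an element of~$\gamma_1\bigtriangleup\gamma_2$ partitions~$\{a_1,a_2,a_3,a_4\}$ into two pairs of path-endpoints in one of three ways, and the signs~$+,-,+$ in both Definition~\ref{def_fermion_4pt} and the Pfaffian expansion must be shown to agree; this sign is where I expect the only real subtlety to sit, and I would pin it down by a careful orientation/winding computation for the generic configuration in each of the three classes.

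Once the term-by-term matching is in place, the identity~$\F(a_1,a_2,a_3,a_4)=\F(a_1,a_2)\F(a_3,a_4)-\F(a_1,a_3)\F(a_2,a_4)+\F(a_1,a_4)\F(a_2,a_3)$ follows. The antisymmetry statement is then immediate: the right-hand side is visibly the Pfaffian of the~$4\times4$ antisymmetric matrix with entries~$\F(a_i,a_j)$ (using~$\F(a_j,a_i)=-\F(a_i,a_j)$ from Remark~\ref{rem:fermionic-via-Grassmann}), and a Pfaffian of a matrix transforms by the sign of the permutation under simultaneous row-and-column permutations, giving~$\F(a_{\varpi(1)},a_{\varpi(2)},a_{\varpi(3)},a_{\varpi(4)})=(-1)^{\sgn(\varpi)}\F(a_1,a_2,a_3,a_4)$ for all~$\varpi\in S_4$. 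Alternatively, if one prefers to avoid the bijective argument, the whole statement can be read off from the Grassmann-variable formalism recalled in Remark~\ref{rem:fermionic-via-Grassmann}, where~$\F(a_1,a_2,a_3,a_4)=\CorrFerm{}{\phi_{a_1}\phi_{a_2}\phi_{a_3}\phi_{a_4}}$ and the Pfaffian formula is simply Wick's theorem for the Gaussian (free-fermion) measure underlying~$\cZ_\Omega^+$; the main obstacle there is merely to confirm that the combinatorial sign conventions of Definition~\ref{def_fermion_4pt} match those of the Grassmann correlators, which again reduces to the same finite phase check.
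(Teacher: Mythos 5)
Your main route---expanding each product $\F(a_i,a_j)\F(a_k,a_l)$ as a double sum over pairs $(\gamma_1,\gamma_2)$ and matching it term by term with the four-point sum via the XOR map---does not work, and the obstruction is not a finite sign check among sixth roots of unity. First, the claim that the winding of a path joining two fixed half-edges is determined modulo~$4\pi$ by its endpoints, so that $\phase(\gamma;a_i,a_j)$ is constant on $\ConfO{a_i,a_j}$, is false for half-edges in the bulk: the winding is pinned down only modulo~$2\pi$ by the endpoint directions, and a configuration whose path makes an extra full turn (say, around one of its endpoints) changes the phase by $-1$. The rigidity you invoke holds only in special situations---two boundary half-edges, as used for the constant $\phase^{b,b'}$ in the proof of Proposition~\ref{prop_via_fermion_tdob}, or the adjacent half-edges of Fig.~\ref{fig:winding}, where the admissible windings differ by~$4\pi$. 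If the phase were constant on $\ConfO{a,e}$ in general, $\F(a,e)$ would be a fixed unimodular factor times $\Weight^+_\Omega(a,e)/\cZ^+_\Omega$, which is incompatible with the nontrivial holomorphic limit of Theorem~\ref{thm:convergence-fermions}. Second, the XOR map itself does not respect the data: pairs sharing edges give $x^{|\gamma_1|+|\gamma_2|}\ne x^{|\gamma_1\triangle\gamma_2|}$, the map is many-to-one (the fibre over a four-point configuration amounts to adding an arbitrary common even subgraph), and, crucially, the product $\phase(\gamma_1;\cdot)\phase(\gamma_2;\cdot)$ is not a function of $\gamma_1\triangle\gamma_2$, because the two paths reconnect when superimposed. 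The Pfaffian identity is precisely the statement that these varying signs cancel in the aggregate, and that cancellation is what requires genuine machinery: either the uniqueness argument for s-holomorphic solutions of discrete Riemann-type boundary value problems (\cite[Proposition~84]{Hon_thesis}, \cite[Section~3.5]{ChIz}) or the Grassmann/dimer identity of \cite[Theorem~1.2]{ChCiKa}. No term-by-term bijection establishes it.

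Your fallback---reading the lemma off from the Grassmann formalism of Remark~\ref{rem:fermionic-via-Grassmann} via Wick's theorem---is in fact the route the paper takes, but the point you wave off as ``merely confirming sign conventions'' is where the paper's proof does its actual work: when $\{a_1,\dots,a_4\}$ contains a pair of opposite half-edges $a,\overline{a}$ (the case needed, e.g., in Proposition~\ref{prop_via_ferm_tt}(ii)), the set $\ConfO{a,\overline{a}}$ used here is the \emph{complement} of the set $\mathcal{C}(a,\overline{a})$ appearing in \cite{ChCiKa}, so the combinatorial expansion there does not literally coincide with $\F(a_1,a_2,a_3,a_4)$; one must run an inclusion--exclusion and use $\F(a,\overline{a})=1+\CorrFerm{}{\phi_a\phi_{\overline{a}}}$ to check that the discrepancy matches the corresponding correction on the Pfaffian side (and similarly when two such pairs occur). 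Your deduction of the antisymmetry statement from the Pfaffian formula is fine once the formula itself is established.
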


\begin{proof}
A similar result was proved in~\cite[Proposition~84]{Hon10} in the square lattice setup. This proof is based on a uniqueness theorem for the complex-valued fermionic observables viewed as the so-called s-holomorphic functions defined in a discrete domain~$\Omega$; the same argument can be also found in~\cite[Section~3.5]{CheIzy13}. In principle, it can be easily repeated in the honeycomb lattice setup, but we prefer to refer the reader to a general version of the same result provided, e.g., by~\cite[Theorem~1.2]{CheCimKas15}. In~\cite{CheCimKas15}, the definition of the four-point observable given above appears as a combinatorial expansion of the formal correlator
\begin{equation}
\label{eq_F_Pfaff}
\F(a_1,a_2,a_3,a_4)~=~\CorrFerm{}{\phi_{a_4}\phi_{a_3}\phi_{a_2}\phi_{a_1}}~=~ \mathrm{Pf}\big[\!\!\CorrFerm{}{\phi_{a_q}\phi_{a_p}}\!\!\big]_{p,q=1}^4 =~\mathrm{Pf}\big[F(a_p,a_q)\big]_{p,q=1}^4
\end{equation}
of Grassmann variables~$\phi_a$, which are naturally associated with a classical Pfaffian formula for the Ising model partition function. The only special case which should be considered separately is when the set~$\{a_1,a_2,a_3,a_4\}$ contains a pair of opposite half-edges, since in this case our definition of the set~$\ConfO{a_1,a_2,a_3,a_4}$ differs from the set~$\mathcal{C}(a_1,a_2,a_3,a_4)$ appearing in the expansion provided by~\cite[Theorem~1.2]{CheCimKas15}, cf.~Remark~\ref{rem:fermionic-via-Grassmann}. If, say,~$a_4=\overline{a}_3$ but $a_2\ne\overline{a}_1$, then $\ConfO{a_1,a_2,a_3,\overline{a}_3}=\mathcal{C}(a_1,a_2)\setminus\mathcal{C}(a_1,a_2,a_3,\overline{a}_3)$
and it is easy to check that
\[%\begin{align*}
\F(a_1,a_2,a_3,\overline{a}_3)~=~\CorrFerm{}{\phi_{a_2}\phi_{a_1}} + \CorrFerm{}{\phi_{\overline{a}_3}\phi_{a_3}\phi_{a_2}\phi_{a_1}}\quad \text{if}\ \ \rho_{a_3}\in\wp^+.
\]
Thus, the Pfaffian identity~(\ref{eq_F_Pfaff}) still holds true in this situation due to~(\ref{eq_F(aa)=1+<aa>}). If both~\mbox{$a_1=\overline{a}_2$} and~$a_4=\overline{a}_3$, then the inclusion-exlusion formula applied to the sets~$\mathcal{C}(\varnothing)$, $\mathcal{C}(a_1,\overline{a}_1)$, $\mathcal{C}(a_3,\overline{a}_3)$ and~$\mathcal{C}(a_1,a_2,a_3,a_4)$ to get the set~$\ConfO{a_1,a_2,a_3,a_4}$ leads to
\[
\F(a_1,\overline{a}_1,a_3,\overline{a}_3)~=~1~+\CorrFerm{}{\phi_{\overline{a}_1}\phi_{a_1}} + \CorrFerm{}{\phi_{\overline{a}_3}\phi_{a_3}} + \CorrFerm{}{\phi_{\overline{a}_3}\phi_{a_3}\phi_{\overline{a}_1}\phi_{a_1}}\quad \text{if}\ \  \rho_{a_1},\rho_{a_3}\in\wp^+,
\]
which eventually gives the same Pfaffian formula for~$\F(a_1,\overline{a}_1,a_3,\overline{a}_3)$.
\end{proof}

\subsection{Two-point expectations via four-point observables} \label{sub:TT-via-fermionic}
We begin with an analogue of Proposition~\ref{prop_via_fermion_t} for Dobrushin boundary conditions.

\begin{proposition}
\label{prop_via_fermion_tdob}
Let~$w$ be an inner face of~$\Omega$ and~$\rho\in\wp^\pm$. Then, for any two boundary edges~$b_{},b'_{}$ of~$\Omega$, one has
\begin{align*}
\E^{\dob}[\Trho(w)] ~ =& \ \E^+[\Trho(w)] ~{-}~{\textstyle\frac{\sqrt{3}}{2}}\left[\F(b_{},b'_{})\right]^{-1} \\ & \quad \times\left[[\dTwo\F](b_{},\wrho)\cdot[\sTwo\F](b'_{},\wrho)-[\sTwo\F](b_{},\wrho)\cdot[\dTwo\F](b'_{},\wrho)\right], \\
\E^{\dob}[\Rrho(w)]~=& \ \E^+[\Rrho(w)]~{\pm}~{\sqrt{3}}\left[\F(b_{},b'_{})\right]^{-1}\\ & \quad \times\left[[\dTwo\F](b_{},\wrho)\cdot[\dTwo\F](b'_{},\wmrho)-[\dTwo\F](b_{},\wmrho)\cdot[\dTwo\F](b'_{},\wrho)\right],
\end{align*}
where $\Trho(w)$ and $\Rrho(w)$ are given by Definition~\ref{def:T-via-spins} and~$\F$ is defined by~(\ref{eq_F_real_def}).
\end{proposition}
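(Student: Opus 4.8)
The plan is to repeat the computation behind Proposition~\ref{prop_via_fermion_t}, but with the ``$+$'' partition function and two-point fermionic observables replaced by their Dobrushin/four-point analogues, and then to collapse the resulting four-point observables back to two-point ones via the Pfaffian identity of Lemma~\ref{lem_pfaffian}. First I would apply Proposition~\ref{prop:T-as-expectation} with $\b=\{\dob\}$, which expresses $\E^{\dob}[\Teta(w)]$ as $\CmT+\tfrac{1}{2}(\cZ^{\dob}_\Omega)^{-1}\Weight^{\dob}_\Omega(\wdown,\wup)$ and, likewise, $\E^{\dob}[\Reta(w)]$ as $\CmR$ plus the same fixed linear combination of the quantities $(\cZ^{\dob}_\Omega)^{-1}\Weight^{\dob}_\Omega(p,q)$ with $(p,q)$ running over $(\wdown,\widown),(\wup,\wiup),(\wdown,\wiup),(\wup,\widown)$ that appears in~\eqref{eq:Ising_Teta_in_terms_of_W}--\eqref{eq:Ising_Reta_in_terms_of_W}.

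The core of the argument consists of two ``dictionary'' identities translating combinatorial weight sums into fermionic observables. First, since $\Omega$ is simply connected and $b,b'$ are boundary half-edges, the complex phase $\phase(\gamma;b,b')$ is the same for every $\gamma\in\ConfO{b,b'}$; comparing with the normalization in~\eqref{eq_F_real_def} (the unit prefactor $i\overline\eta_b\eta_{b'}\phase(\gamma;b,b')$ being $+1$ with our orientation conventions, precisely as in~\eqref{eq_e_via_fermion}), this yields $\cZ^{\dob}_\Omega=\Weight^+_\Omega(b,b')=\cZ^+_\Omega\cdot\F(b,b')$. Second, for each relevant pair $(p,q)$ one has $\Weight^{\dob}_\Omega(p,q)=\cZ^+_\Omega\cdot\F(b,b',p,q)$: one splits $\ConfO{\{b,b'\}\cup\{p,q\}}$ into the three topological pairing classes $\ConfO{b,b'|p,q}$, $\ConfO{b,p|b',q}$, $\ConfO{b,q|b',p}$, observes that the complex phase entering the corresponding summand of $\F(b,b',p,q)$ is constant on each class (the same topological rigidity of windings, illustrated in Fig.~\ref{fig:winding}, that underlies the proof of Proposition~\ref{prop_via_fermion_t}), and checks that the $i\overline\eta\eta$-normalizers together with the $(+,-,+)$ sign built into Definition~\ref{def_fermion_4pt} reassemble the three classes into $\cZ^+_\Omega\F(b,b',p,q)$ with no extra factor. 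Taking $\b=\varnothing$ here recovers the identities used in Proposition~\ref{prop_via_fermion_t}, a useful consistency check; and for $(p,q)=(a,\overline a)$ it reproduces the expected Dobrushin formula for $\E^{\dob}[\varepsilon(a)]$.

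With these identities in hand, I would substitute the Pfaffian formula $\F(b,b',p,q)=\F(b,b')\F(p,q)-\F(b,p)\F(b',q)+\F(b,q)\F(b',p)$ of Lemma~\ref{lem_pfaffian} and divide by $\cZ^{\dob}_\Omega=\cZ^+_\Omega\F(b,b')$. The $\F(b,b')\F(p,q)$ contributions, collected over the relevant linear combination of pairs, reassemble by Proposition~\ref{prop_via_fermion_t} into exactly $\E^+[\Teta(w)]-\CmT$, respectively $\E^+[\Reta(w)]-\CmR$. The remaining contributions, divided by $\F(b,b')$, collect into the $2\times 2$ determinants built from the values $\F(\wdown,b),\F(\wup,b),\F(\wdown,b'),\F(\wup,b')$ (and, in the $\Reta$ case, their $\wieta$-counterparts); rewriting these via the operators $\dOne,\sOne$ of Definition~\ref{def:discrete-derivatives} and using the antisymmetry $\F(e,a)=-\F(a,e)$ gives precisely the correction terms in the statement. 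This last step is a routine algebraic rearrangement.

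The main obstacle is the second dictionary identity: checking that the unweighted sums $\Weight^{\dob}_\Omega(p,q)$ equal $\cZ^+_\Omega\F(b,b',p,q)$ on the nose, i.e.\ that on each pairing class the product of the complex phase with the $i\overline\eta\eta$-normalizers and the $(+,-,+)$ sign is $+1$. This is not automatic: it depends on how the windings $\wind(\gamma;\cdot,\cdot)$ jump between the three pairing classes, and it is the same delicate point that makes Lemma~\ref{lem_pfaffian} nontrivial. For this reason I expect the cleanest route is to identify $(\cZ^+_\Omega)^{-1}\Weight^{\dob}_\Omega(p,q)$ with the Grassmann four-point correlator $\CorrFerm{}{\phi_b\phi_{b'}\phi_p\phi_q}$ directly via the combinatorial expansion of the Ising partition function, exactly as the proof of Lemma~\ref{lem_pfaffian} does (with the handful of special cases involving a pair of opposite half-edges treated there), rather than to track all the windings by hand.
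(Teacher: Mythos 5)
Your proposal is correct and follows essentially the same route as the paper's proof: the combinatorial identity \eqref{eq:Ising_Teta_in_terms_of_W} of Proposition~\ref{prop:T-as-expectation} applied with $\b=\{\dob\}$, the identification of the Dobrushin weight sums with four-point fermionic observables normalized by $\F(b,b')$ (using that the complex phase is constant on each of the three pairing classes), the Pfaffian formula of Lemma~\ref{lem_pfaffian}, and the routine rearrangement into the $\dOne,\sOne$ operators. The only difference is cosmetic: the paper tracks the windings of the three pairing classes by hand and works directly with the ratio $\Weight^{\dob}_\Omega(\wdown,\wup)/\cZ^{\dob}_\Omega=\F(b,\wdown,\wup,b')/\F(b,b')$, in which the unit prefactor $i\overline{\eta}_b\eta_{b'}\phase^{b,b'}$ (which need not equal $+1$) cancels automatically, so the two separate "dictionary" identities you worry about never have to be normalized on the nose, while your alternative of delegating the bookkeeping to the Grassmann identification is equally viable since the paper's own proof of Lemma~\ref{lem_pfaffian} already relies on it.
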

\begin{proof}
Let~$\phase^{b,b'}$ denote the common value of~$\phase(\gamma;b,b')$ for~$\gamma\in\ConfO\b$, which is fixed by topological reasons and does not depend on~$\gamma$. Then, one can easily check that
\begin{align*}
\F(b,\wdown \, | \,  \wup,b') & = \frac{\left(-i\eta_{b}\overline{\eta}\right)\left(-i\eta\overline{\eta}_{b'}\right)}{\cZ_\Omega^{+}}\cdot
(-i\phase^{b,b'}) \cdot \Weight_\Omega(b,\wup\, | \, \wdown,b')\,,\\
\F(b,\wup\, | \, \wdown, b') & = \frac{\left(-i\eta_{b}\overline{\eta}\right)\left(-i\eta\overline{\eta}_{b'}\right)}{\cZ_\Omega^{+}}\cdot
i\phase^{b,b'} \cdot \Weight_\Omega(b,\wup\, | \, \wdown,b')\,,\\
\F(b,b' \, | \,  \wdown, \wup) & = \frac{\left(-i\eta_{b}\overline{\eta}_{b'}\right)\left(-i\eta\overline{\eta}\right)}{\cZ_\Omega^{+}}\cdot
(-i\phase^{b,b'}) \cdot \Weight_\Omega(b,b'\, | \, \wdown,\wup) \,,
\end{align*}
where we use the notation~$ \Weight_\Omega(a_1,a_2\, | \, a_3,a_4) :=
\sum_{\gamma\in \ConfO{a_1,a_2\, | \, a_3,a_4}}{\weight(\gamma)}$. Therefore,
%Plugging these expressions into the definition of~$\F(b,\wdown, \wup, b')$, we get
\begin{align*}
\F(b,\wdown, \wup, b') = \frac{i\eta_{b}\overline{\eta}_{b'}}{\cZ_\Omega^{+}} \cdot\phase^{b,b'}\cdot \Weight^{\dob}_\Omega(\wdown,\wup) =
-\frac{\F(b,b')}{\cZ_\Omega^{\dob}}\cdot \Weight^{\dob}_\Omega(\wdown,\wup)\,,
\end{align*}
where we used the definition of~$\F(b,b')$ in the last equality. Together with~\eqref{eq:Trho-Ising-as-sums} and the definition of~$\F(\wdown,\wup)$ this implies
\begin{align*}
\E^{\dob}[\Trho(w)] - \E^+[\Trho(w)] &=
\frac{1}{2}\biggl[ \frac{1}{\cZ_\Omega^{\dob}}\cdot\Weight^{\dob}_\Omega(\wdown,\wup)
-  \frac{1}{\cZ_\Omega^{+}}\cdot\Weight^+_\Omega(\wdown,\wup)  \biggr]\\
&= \frac{1}{2}\biggl[ -\frac{\F(b,\wdown,\wup,b')}{\F(b,b')} + \F(\wdown,\wup)  \biggr] \\
&= \frac{1}{2\F(b,b')}\left[ \F(b,\wdown)\F(b',\wup) - \F(b,\wup)\F(b',\wdown) \right],
\end{align*}
where the last equality is a consequence of the Pfaffian representation of the four-point observable (see Lemma~\ref{lem_pfaffian}) and the anti-symmetry of two-point observables.
The claim for~$\E^{\dob}[\Trho(w)]$ follows immediately. The proof of the formula for~$\E^{\dob}[\Rrho(w)]$ is similar.
\end{proof}

Exactly the same techniques can be used to derive expressions of the two-point expectations~$\E^+[\Trho(w)\Ttau(w')]$ and~$\E^+[\Trho(w)\varepsilon(a)]$ in terms of two-point fermionic observables~$\F$. As in Proposition~\ref{prop_via_fermion_t} and Proposition~\ref{prop_via_fermion_tdob}, one can also write similar expressions for the two-point expectations~$\E^+[\Rrho(w)\Ttau(w')]$,~$\E^+[\Rrho(w)\Rtau(w')]$ and~$\E^+[\Rrho(w)\varepsilon(a)]$. They all contain one more discrete derivative operator~$\partial$ instead of~$\varsigma$ comparing to the same expectations with the local field~$\Rrho(w)$ replaced by~$\Trho(w)$, and thus disappear in the scaling limit. Below we do not include these explicit expressions in the presentation for shortness.

\begin{proposition}
\label{prop_via_ferm_tt}
(i) Let~$w,w'$ be two non-adjacent inner faces of~$\Omega$ and~$\rho,\tau\in\wp$. Then,
\begin{align*}
\E^+ &[\Trho(w)\Ttau(w')]
% & =\F(\wdown,\wup,\wpdown,\wpup) \\
~ = \ \E^+[\Trho(w)]\cdot \E^+[ \Ttau(w')]\\
& +{\textstyle\frac{3}{4}}\left[[\sOne\dTwo\F](\wrho,\wptau)\cdot[\dOne\sTwo \F](\wrho,\wptau)- [\sOne\sTwo\F](\wrho,\wptau)\cdot[\dOne\dTwo\F](\wrho,\wptau) \right].
\end{align*}

\noindent (ii) Let~$w$ be an inner face of~$\Omega$, $\rho\in\wp$, and~$a$ be an inner edge of~$\Omega$ with~$\rho_a\in\wp^\pm$ and such that it is not adjacent to~$w$. Then, one has
\begin{align*}
\E^+ [\Trho(w)\varepsilon(a)]~ = \ &\E^+[\Trho(w)] \cdot \E^+\left[ \varepsilon(a)\right]\\
& {\mp} \sqrt{3}\left[[\sTwo \F](a,\wrho)\cdot[\dTwo \F](\overline{a},\wrho)-[\dTwo \F](a,\wrho)\cdot[\sTwo\F](\overline{a},\wrho)\right],
\end{align*}
where the local field (energy density)~$\varepsilon(a)$ is defined by~(\ref{eq_e_definition}).
\end{proposition}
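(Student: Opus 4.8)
The plan is to follow the template of Proposition~\ref{prop_via_fermion_tdob}: in each of the two expectations only the \emph{connected} part survives (the additive constant $\CmT$ and the lattice constant $\tfrac23$ cancel between the joint expectation and the product of the marginals), and this connected part is read off from the Pfaffian expansion (Lemma~\ref{lem_pfaffian}) of a suitable four-point fermionic observable, after which an elementary algebraic identity brings it to the stated bilinear form in $\sOne,\dOne,\sTwo,\dTwo$.

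Concretely, for part~(i) I would write $\Teta(w)=\CmT+P_w$ and $\Tmu(w')=\CmT+P_{w'}$ with $P_w,P_{w'}$ the polynomial parts from Definition~\ref{def:T-via-spins}, so that $\E^+[\Teta(w)\Tmu(w')]-\E^+[\Teta(w)]\E^+[\Tmu(w')]=\E^+[P_wP_{w'}]-\E^+[P_w]\E^+[P_{w'}]$. Since $w$ and $w'$ are non-adjacent, the XOR operations on the right semi-hexagons of $w$ and of $w'$ act on disjoint regions, and performing both identifies the spin configurations contributing to $P_wP_{w'}$ with $\ConfO{\wdown,\wup,\wpdown,\wpup}$, the two weight ratios multiplying exactly as in the proof of Proposition~\ref{prop:T-as-expectation}; this gives $\cZ_\Omega^+\cdot\E^+[P_wP_{w'}]=\tfrac14\,\Weight_\Omega^+(\wdown,\wup,\wpdown,\wpup)$. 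The complex phase $\phase(\gamma;\cdot)$ is constant on each of the three connection patterns inside $\ConfO{\wdown,\wup,\wpdown,\wpup}$ for topological reasons (compare Fig.~\ref{fig:winding}), and the prefactors $i\overline{\eta}_{\wdown}\eta_{\wup}=i$, $i\overline{\eta}_{\wpdown}\eta_{\wpup}=i$ conspire with these phases precisely as in Proposition~\ref{prop_via_fermion_tdob}, so the three pieces assemble into $\cZ_\Omega^+\cdot\F(\wdown,\wup,\wpdown,\wpup)$ and hence $\E^+[P_wP_{w'}]=\tfrac14\F(\wdown,\wup,\wpdown,\wpup)$. Expanding by Lemma~\ref{lem_pfaffian} and using $\tfrac14\F(\wdown,\wup)\F(\wpdown,\wpup)=\E^+[P_w]\E^+[P_{w'}]$ (Proposition~\ref{prop_via_fermion_t}), the leftover equals $\tfrac14[\F(\wdown,\wpup)\F(\wup,\wpdown)-\F(\wdown,\wpdown)\F(\wup,\wpup)]$, which a one-line expansion of $\sOne,\dOne,\sTwo,\dTwo$ rewrites as $\tfrac34\big([\sOne\dTwo\F][\dOne\sTwo\F]-[\sOne\sTwo\F][\dOne\dTwo\F]\big)(\weta,\wpmu)$.

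For part~(ii) I would use $\varepsilon(a)=(\sigma(a^+)\sigma(a^-)-1)+\tfrac13$ together with $\sigma(a^+)\sigma(a^-)-1=-2\cdot\mathbf 1[a\in\gamma]$; the $\tfrac13$-term contributes $\tfrac13\E^+[\Teta(w)]$ to both sides and cancels, so again only $\E^+[P_w(\sigma(a^+)\sigma(a^-)-1)]-\E^+[P_w]\E^+[\sigma(a^+)\sigma(a^-)-1]$ remains. Performing the XOR on the right semi-hexagon of $w$ (disjoint from the edge $a$) and then splitting $a$ into its two half-edges identifies the configurations counted by $\E^+[P_w(\sigma(a^+)\sigma(a^-)-1)]$ with $\ConfO{\wdown,\wup,a,\overline a}$, giving $\cZ_\Omega^+\cdot\E^+[P_w(\sigma(a^+)\sigma(a^-)-1)]=-\Weight_\Omega^+(\wdown,\wup,a,\overline a)=-\cZ_\Omega^+\,\F(\wdown,\wup,a,\overline a)$, the identity $i\overline{\eta}_a\eta_{\overline a}=-1$ taking care of the prefactor exactly as in the derivation of~\eqref{eq_e_via_fermion}. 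The Pfaffian formula then gives $-[\F(\wdown,\wup)\F(a,\overline a)-\F(\wdown,a)\F(\wup,\overline a)+\F(\wdown,\overline a)\F(\wup,a)]$; subtracting $\E^+[P_w]\E^+[\sigma(a^+)\sigma(a^-)-1]=-2\E^+[P_w]\F(a,\overline a)=-\F(\wdown,\wup)\F(a,\overline a)$ cancels the first term, and what is left, $\F(\wdown,a)\F(\wup,\overline a)-\F(\wdown,\overline a)\F(\wup,a)$, equals $\sqrt3\big([\sOne\F](\weta,a)[\dOne\F](\weta,\overline a)-[\dOne\F](\weta,a)[\sOne\F](\weta,\overline a)\big)$ after expanding $\sOne,\dOne$.

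I expect the only genuinely delicate point to be the phase/sign bookkeeping in the step identifying the combinatorial sums over $\ConfO{\wdown,\wup,\wpdown,\wpup}$ (resp.\ $\ConfO{\wdown,\wup,a,\overline a}$) with the four-point observables $\F(\cdot,\cdot,\cdot,\cdot)$: one must check that the windings are constant on each of the three connection patterns and that the resulting powers of $i$ together with the topological phases assemble the three positive combinatorial sums into the \emph{alternating} sum of Definition~\ref{def_fermion_4pt}. This is, however, exactly the computation already carried out in the proof of Proposition~\ref{prop_via_fermion_tdob} (and, in the two-path situation, it is the same phenomenon that underlies the Pfaffian identity of Lemma~\ref{lem_pfaffian}), so it amounts to tracking a few signs. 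Everything else reduces to the two routine algebraic identities relating products of $\F$-values at the four half-edges to the stated bilinear expressions in $\sOne,\dOne,\sTwo,\dTwo$.
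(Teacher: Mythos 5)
Your proposal is correct and follows essentially the same route as the paper's proof: the paper likewise reduces to the connected part, identifies it with $\tfrac14[\F(\wdown,\wpdown,\wpup,\wup)-\F(\wdown,\wup)\F(\wpdown,\wpup)]$ "in the same way as in Propositions~\ref{prop_via_fermion_tdob} and~\ref{prop:T-as-expectation}" (i.e.\ the double XOR plus per-pattern phase bookkeeping you describe), expands via Lemma~\ref{lem_pfaffian}, and concludes with exactly the algebraic identity in $\sOne,\dOne,\sTwo,\dTwo$ that you state, treating (ii) by substituting $a,\overline a$ and using~\eqref{eq_e_via_fermion}. Your sign and factor computations (the $\tfrac14$, the cancellation of the $\F(\wdown,\wup)\F(a,\overline a)$ term, and both final bilinear identities) check out against the paper.
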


\begin{proof}
In the same way as in the proofs of Propositions~\ref{prop_via_fermion_t} and~\ref{prop_via_fermion_tdob} one gets
\begin{align*}
\E^+[\Trho(w)\Ttau(w')] &- \E^+[\Trho(w)]\E^+[\Ttau(w')] \\
%&= \E^+[(\Trho(w)- \CmT)(\Ttau(w')- \CmT)] - \E^+[\Trho(w)- \CmT]\E^+[\Ttau(w')- \CmT]\\
&= \tfrac{1}{4}\left[\F(\wdown,\wpdown,\wpup,\wup) - \F(\wdown,\wup)\F(\wpdown,\wpup) \right]\\
&= \tfrac{1}{4}\left[\F(\wdown,\wpdown)\F(\wpup,\wup) - \F(\wdown,\wpup)\F(\wpdown,\wup) \right].
\end{align*}
Then, the expression for~$\E^+[\Teta(w)\Tmu(w')]$ follows from the identity
\begin{align*}
\tfrac{1}{4}&\left[\F(\wdown,\wpdown)\F(\wpup,\wup)  - \F(\wdown,\wpup)\F(\wpdown,\wup) \right]\\
&=\tfrac{3}{4}\left[ [\sOne\dTwo\F](\wrho,\wptau)\cdot[\dOne\sTwo \F](\wrho,\wptau)- [\sOne\sTwo\F](\wrho,\wptau)\cdot[\dOne\dTwo\F](\wrho,\wptau) \right].
\end{align*}
The formula for~$\E^+ [\Teta(w)\varepsilon(a)]$ can be derived in the same way.
One just needs to substitute~$\wpdown$,~$\wpup$ with~$a$,~$\overline{a}$
and to use the identity~$\E^+[\varepsilon(a)-\tfrac{1}{3}] = \mp2\F(a,\overline{a})$, see~(\ref{eq_e_via_fermion}).
\end{proof}

\subsection{Spinor observables and correlations with the spin field} \label{sub:T-spin-via-spinors}

In order to handle the two-point expectations involving the spin field~$\sigma(u)$, we need a generalization of combinatorial observables discussed in Section~\ref{sub:T-via-fermionic}, the so-called spinor observables.
For a simply connected discrete domain~$\Omega$ and its inner face~$u\in\Omega$, we denote by~$\OmegaU$ the double-cover of~$\Omega$ branching around~$u$ (and not branching around any other face) and by~$\pi:\OmegaU\to\Omega$ the corresponding projection mapping. For an (oriented) edge ~$e$ of~$\OmegaU$, we denote by~$e^*\ne e$ the other (oriented) edge of~$\OmegaU$ such that~$\pi(e^*)=\pi(e)$.
One can view the next definition as a generalization of Definition~\ref{def_fermion} in the presence of the branching.

\begin{definition}
\label{def_spinor}
Let~$\OmegaU$ be the double-cover of a simply connected discrete domain~$\Omega$ branching around its inner face~$u$. For two half-edges~$a,e$ of~$\OmegaU$ such that~$\pi(a)\ne\pi(e)$ and a configuration~$\gamma\in\ConfO{\pi(a),\pi(e)}$, let $\loopsU(\gamma)$ denote the number of loops in~$\gamma$ that surround~$u$ and
\begin{equation}
\label{eq_phaseU_def}
\phaseU(\gamma;a,e) := \exp \left[ -{\textstyle\frac{i}{2}}\windU(\gamma;a,e) \right]\cdot (-1)^{\loopsU(\gamma)},
\end{equation}
where~$\windU(\gamma;a,e):=\wind(\gamma;\pi(a),\pi(e))$ if the path linking the projections~$\pi(a)$ and~$\pi(e)$ in~$\gamma$ lifts to a path linking~$a$ and~$e$ on the double-cover~$\OmegaU$, and~$\wind(\gamma;\pi(a),\pi(e))+2\pi$ if this path lifts to a path linking~$a$ with~$e^*$ on~$\OmegaU$. Then, the \emph{real-valued spinor observable}~$\FU(a,e)$ is defined as
\begin{equation}
\label{eq_F_spinor_def}
\FU(a,e)~:=~\frac{(-i\eta_{\pi(a)}\overline{\eta}_{\pi(e)})}{\E^+[\sigma(u)]\cdot\cZ_\Omega^+} \sum_{\ConfO{\pi(a),\pi(e)}}\phaseU(\gamma;a,e)\weight(\gamma)\,.
\end{equation}
We also formally set~$\FU(a,a)=\FU(a,a^*):=0$ and, for a midedge~$z_e$ on~$\OmegaU$, define the \emph{complex-valued spinor observable}~$\FU(a,z_e)$ by
\[
\F(a,z_e):= {\eta}_{\pi(e)}\FU(a,e)+{\eta}_{\pi(\overline{e})}\FU(a,\overline{e}).
\]
\end{definition}

\begin{remark}
\label{rem:spinor-discrete} As in the non-branching case, one can easily check that~$\FU(a,e)$ is real and anti-symmetric:~$\FU(a,e)=-\FU(e,a)$. Moreover, it directly follows form the definition of~$\windU(\gamma;a,e)$ that~$\FU(a,e^*)=-\FU(a,e)$ for all~$a,e$ in~$\OmegaU$.
\end{remark}

The next proposition is a straightforward analogue of Proposition~\ref{prop_via_fermion_t}. Similarly to Proposition~\ref{prop_via_ferm_tt}, we do not include a formula for the correlations~$\E^+[\Rrho(w)\sigma(u)]$ for shortness.

\begin{proposition}
\label{prop_via_fermion_ts}
Let~$w,u$ be two { distinct} inner faces of~$\Omega$ and~$\rho\in\wp$. Then, one has
\begin{align*}
\frac{\E^+[\Trho(w)\sigma(u)]}{\E^+[\sigma(u)]} &~=~ \CmT-{\textstyle\frac{\sqrt{3}}{2}}[\sOne\dTwo\FU](\wrho,\wrho)\,,
%\\ \frac{\E^+[\Reta(w)\sigma(u)]}{\E^+[\sigma(u)]} &~=~ [CONST]+{\sqrt{3}}\,[\dOne\dTwo\FU](\weta,\wieta)\,,
\end{align*}
where
the spinor observable~$\FU$ is defined by~(\ref{eq_F_spinor_def}) and the arguments~$\wup,\wdown$ of~$\FU$ are assumed to be lifted on the same sheet of the double-cover~$\OmegaU$.
\end{proposition}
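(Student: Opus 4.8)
The plan is to transcribe the proof of Proposition~\ref{prop_via_fermion_t} almost word for word, replacing the fermionic observable~$\F$ by the spinor observable~$\FU$ of Definition~\ref{def_spinor}; the only genuinely new input is the combinatorial spin--disorder correspondence. Concretely, I would first record that in the domain--wall representation of the loop~$O(1)$ model the value of~$\sigma(u)$ on a configuration~$\gamma\in\ConfO{\varnothing}$ equals~$(-1)^{\loopsU(\gamma)}$ (the number of domain walls of~$\gamma$ crossed along a fixed path from~$u$ to~$\partial\Omega$ has the same parity as the number of loops of~$\gamma$ surrounding~$u$). Hence, for any local field~$X$ supported near a face~$w\neq u$,
\[
\E^+[X\sigma(u)]\cdot\cZ_\Omega^+~=~\sum\nolimits_{\gamma}X(\gamma)\,(-1)^{\loopsU(\gamma)}\weight(\gamma)\,,
\]
which is exactly the loop~$O(1)$ partition function twisted by a disorder line emanating from~$u$, equivalently a signed sum over configurations on the double cover~$\OmegaU$. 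In particular, taking~$X\equiv 1$ gives the identity~$\E^+[\sigma(u)]\cdot\cZ_\Omega^+=\sum_{\gamma\in\ConfO{\varnothing}}(-1)^{\loopsU(\gamma)}\weight(\gamma)$ that underlies the normalization of~$\FU$ in~(\ref{eq_F_spinor_def}).

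Next I would apply this with~$X=\Teta(w)$. Separating off the constant term as in the proof of Proposition~\ref{prop:T-as-expectation}, only configurations~$\gamma$ with~$\sigma=\seOne=\seFour$ (no domain wall through the endpoints of~$\weta$) contribute nontrivially, and the polynomial part of~$\Teta(w)$ equals~$\tfrac{1}{6\sqrt{3}}(2-\sigma\seTwo)(2-\sigma\seThree)$ on these. Performing the XOR operation on the right semi-hexagon at~$w$ produces the same~$1$-to-$1$ correspondence with~$\ConfO{\pi(\wdown),\pi(\wup)}$ as in Proposition~\ref{prop:T-as-expectation}, together with the weight ratio~$\weight(\gamma')=\tfrac{1}{3\sqrt{3}}(2-\sigma\seTwo)(2-\sigma\seThree)\weight(\gamma)$. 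The delicate point is the sign: since the XOR is localized at~$w$ (far from~$u$), the quantities~$(-1)^{\loopsU(\gamma)}$ and~$(-1)^{\loopsU(\gamma')}$ differ precisely by the parity of the winding of the~$\wdown$--$\wup$ path of~$\gamma'$ around~$u$, i.e. by which sheet of~$\OmegaU$ that path reaches; this is exactly the datum carried by the factor~$\exp[-\tfrac{i}{2}\windU(\gamma';\wdown,\wup)]$ in~(\ref{eq_phaseU_def}) once~$\wdown$ and~$\wup$ are fixed on the same sheet. Combining this with the topologically constant value~$\exp[-\tfrac{i}{2}\wind(\gamma';\wdown,\wup)]=-i$ on~$\ConfO{\pi(\wdown),\pi(\wup)}$ and with the normalizing prefactor~$i\overline{\eta}_{\pi(\wdown)}\eta_{\pi(\wup)}=i$ coming from our choice of square roots, the disorder--twisted sum collapses to~$\tfrac{1}{2}\,\E^+[\sigma(u)]\,\cZ_\Omega^+\cdot\FU(\wdown,\wup)$. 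Together with a computation of the constant term parallel to the one fixing~$\CmT$ in Definition~\ref{def:T-via-spins}, this yields~$\E^+[\Teta(w)\sigma(u)]/\E^+[\sigma(u)]=(\tfrac{3}{\pi}-\tfrac{2}{\sqrt{3}})+\tfrac{1}{2}\FU(\wdown,\wup)$.

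Finally I would rewrite~$\tfrac{1}{2}\FU(\wdown,\wup)=\tfrac{\sqrt{3}}{2}[\sOne\dTwo\FU](\weta,\weta)$, using~$\FU(\wdown,\wdown)=\FU(\wup,\wup)=0$ and the antisymmetry~$\FU(a,e)=-\FU(e,a)$ (see Remark~\ref{rem:spinor-discrete}), which is exactly the computation in the proof of Proposition~\ref{prop_via_fermion_t}. The main obstacle I anticipate is precisely the sheet/sign bookkeeping in the middle step: one has to verify carefully that the change of~$\loopsU$ induced by the local XOR at~$w$ is compensated on the nose by the~$2\pi$-ambiguity in~$\windU$, so that the disorder--twisted combinatorial sum lands exactly on the quantity defining~$\FU(\wdown,\wup)$ with~$\wdown,\wup$ on the same sheet, rather than on~$\pm\FU(\wdown,\wup)$ or on a spinor with a shifted normalization. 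Everything else is a direct transcription of the non-branching computations of Propositions~\ref{prop:T-as-expectation} and~\ref{prop_via_fermion_t}.
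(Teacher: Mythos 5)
Your proposal is correct and follows essentially the same route as the paper's proof: the XOR correspondence of Proposition~\ref{prop:T-as-expectation}, the key identity $\phaseU(\gamma;\wdown,\wup)=-i(-1)^{\loopsU(\gamma')}$ identifying the sheet/sign bookkeeping in $\windU$ with the value of $\sigma(u)$ after the XOR, and the final rewriting $\tfrac12\FU(\wdown,\wup)=\tfrac{\sqrt3}{2}[\sOne\dTwo\FU](\weta,\weta)$ via antisymmetry, exactly as in Proposition~\ref{prop_via_fermion_t}. The only caveat is the additive constant: the computation you outline (like the paper's own proof) produces $\CmT=\tfrac{3}{2\pi}-\tfrac{1}{\sqrt3}$, not the value $\tfrac{3}{\pi}-\tfrac{2}{\sqrt3}=2\CmT$ that you assert to match the displayed statement, so that step is glossed over — this reflects a factor-of-two discrepancy between the proposition's stated constant and the paper's proof rather than a gap in your argument.
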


\begin{proof}
For each~$\gamma \in \ConfO{\pi(\wdown),\pi(\wup)}$, denote by~$\gamma'$ the
configuration obtained from~$\gamma$ by making the XOR operation on the right semi-hexagon of~$\pi(\wrho)$.
%that contains~$\pi(\wdown)$ and~$\pi(\wup)$.
Then one has
\begin{align*}
\phaseU (\gamma;\wdown,\wup) = -i(-1)^{\loopsU(\gamma')}
%\exp[-\tfrac{i}{2}\wind(\gamma;\pi(\wdown),\pi(\wup))]\cdot (-1)^{\loopsU(\gamma')} = -i \gammaxor(u)\,.
\end{align*}
and~$(-1)^{\loopsU(\gamma')}$ is just the value of~$\sigma(u)$ in the spin configuration corresponding to~$\gamma'$.
Using the same arguments as in Lemma~\ref{lemma:Tm-Ising-as-sums},
we get the following analogue of~\eqref{eq:Trho-Ising-as-sums}:
\begin{align*}
\frac{\E^+[\Trho(w)\sigma(u)]}{\E^+[\sigma(u)]} - \CmT
%&= \frac{\sum_{\gamma\in \ConfO{\pi(\wdown),\pi(\wup)}}{\weight(\gamma)\gammaxor(u)}}{2\cZ^+\E^+[\sigma(u)]}\\
&~=~ \frac{i\sum_{\gamma\in \ConfO{\pi(\wdown),\pi(\wup)}}{\weight(\gamma)\phaseU (\gamma;\wdown,\wup)}}{2\cZ^+\E^+[\sigma(u)]}\\
&~=~ -\tfrac{1}{2}\FU(\wdown,\wup) ~=~ -{\textstyle\frac{\sqrt{3}}{2}}[\sOne\dTwo\FU](\weta,\weta)\,,
\end{align*}
where the first equality follows from considerations of~$\phaseU (\gamma;\wdown,\wup)$ given above,
the second uses the definition of~$\FU$ and the third repeats the one in Proposition~\ref{prop_via_fermion_t}.
\end{proof}

\setcounter{equation}{0}
\section{Correlation functions via fermions in continuum} \label{sec:CFT-correlations}

In this section we briefly discuss correlation functions~$\CorrO{T(w)}$\,, $\CorrO[\dob]{T(w)}$,~$\CorrO{T(w)T(w')}$ etc coming from the (free fermion, central charge~$\frac{1}{2}$) Conformal Field Theory, which is known to describe the scaling limit of the 2D Ising model at criticality (e.g., see~\cite[Sec. 14.2]{Mus10}). It is worth noting that we do \emph{not} refer to the usual CFT language along this discussion and adopt the approach discussed in~\cite[Section~4]{Che18} and~\cite{CheHonIzy21} instead, though keeping the presentation self-contained for the sake of the reader. Thus, all the correlation functions in continuum that we need to formulate Theorems~\mbox{\ref{thm:Ising1}--\ref{thm:Ising3}} are \emph{defined} via solutions to appropriate Riemann-type boundary value problems for holomorphic functions in~$\Omega$. In particular, this approach fits our proofs of Theorems~\mbox{\ref{thm:Ising1}--\ref{thm:Ising3}}, which are presented in Section~\ref{sec:Ising-convergence} and are based on convergence results for solutions to similar \emph{discrete} boundary value problems.

\subsection{Fermionic correlators as solutions to boundary value problems for holomorphic functions} \label{sub:fermions-via-bvp}

For a while, let us assume that~$\Omega$ is a \emph{bounded} simply connected domain with a \emph{smooth} boundary~$\partial\Omega$.

\begin{definition}\label{def:feta}
Let~$\Omega$ be a bounded simply connected domain with a smooth boundary,~$a\in\Omega$, and $\eta\in \mathbb{C}$.
We define~$\feta(a,\cdot)$ to be the (unique) holomorphic in~$\Omega\setminus\{a\}$ and continuous up to~$\partial\Omega$ function such that
$\feta (a,z) = \overline{\eta}\cdot(z-a)^{-1}+O(1)$ as~$z\to a$ and
\begin{equation}
\label{eq_feta_boundary_conditions}
\Im\big[\feta (a,\zeta)\sqrt{\tau(\zeta)}\,\big]=0~\quad\text{for~all}\ \ \zeta\in\partial \Omega,
\end{equation}
where~$\tau(\zeta)$ denotes the tangent vector to~$\partial\Omega$ at the point~$\zeta$, oriented counterclockwise and considered as a complex number.
\end{definition}

\begin{remark} \label{rem:feta-def}
(i) The uniqueness of solution to this boundary value problem can be easily deduced from the following consideration: the difference~$f(z)$ of any two solutions should be holomorphic \emph{everywhere} in~$\Omega$, which implies~$\oint_{\partial\Omega} \Im[(f(\zeta))^2d\zeta]=0$ and contradicts to the boundary conditions.

\noindent (ii) Let~$\varphi \, : \, \Omega \to \Omega'$ be a conformal mapping and~$\eta' = \eta\cdot (\varphi'(a))^{\frac{1}{2}}$. Then one has
\begin{equation}%\begin{align}
\label{eq_covariance_feta}
\feta (a,z) = \fetaprime (\varphi(a), \varphi(z))\cdot (\varphi'(z))^{\frac{1}{2}},\quad a,z\in\Omega\,.
\end{equation}
Indeed, the right-hand side satisfies all the conditions used to define~$\feta (a,z)$, thus this covariance rule follows from the uniqueness property discussed above.

\noindent (iii) The \emph{existence} of~$\feta(a,z)$ can be easily derived by solving the corresponding boundary value problem in the upper half-plane~$\H$ and then applying~(\ref{eq_covariance_feta}) with~$\varphi:\Omega\to\H$, but we prefer to postpone these explicit formulae, taking this existence for granted until Proposition~\ref{prop:fH_explicit}. Note that the explicit formulae~(\ref{eq_f_fdag_explicit}) and the decomposition~(\ref{eq_feta_as_f_plus_fdag}) given below can be also used to define the functions~$\feta(a,z)$ for rough and/or unbounded domains~$\Omega$.
\end{remark}

\begin{proposition} \label{propContFermDecomp}
Let~$\Omega$ be a bounded simply connected domain with a smooth boundary, and let~$a\in\Omega$. Then, there exists unique pair of holomorphic in~$\Omega\setminus\{a\}$ and continuous up to~$\partial\Omega$ functions~$\f(a,\cdot)$ and~$\fstar(a,\cdot)$ such that the functions~$\freg(a,z):=\f(a,z)-2(z\!-\!a)^{-1}$ and~$\fstar(a,z)$ are bounded as~$z\to a$ and
\begin{equation}
\label{eq_f_fdag_boundary_conditions}
\fstar(a,\zeta) = \overline{\tau(\zeta)\cdot \f(a,\zeta)}~\quad\text{for~all}\ \ \zeta\in\partial \Omega.
\end{equation}
Moreover, for any~$\eta\in\C$ and~$z\in\Omega\setminus\{a\}$, one has
\begin{equation}
\label{eq_feta_as_f_plus_fdag}
\feta (a,z)  = \tfrac{1}{2}[\overline{\eta}\f (a,z)  + \eta\fstar(a,z)]\,.
\end{equation}
\end{proposition}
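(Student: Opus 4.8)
The plan is to construct the pair $(\f(a,\cdot),\fdag(a,\cdot))$ directly from the already-defined family $\feta(a,\cdot)$ by taking two convenient values of the parameter $\eta$, then to verify the boundary relation \eqref{eq_f_fdag_boundary_conditions} and the decomposition \eqref{eq_feta_as_f_plus_fdag}, and finally to deduce uniqueness from the uniqueness of $\feta(a,\cdot)$. First I would set $\f(a,z):=f_\Omega^{[1]}(a,z)$, i.e.\ the solution of Definition~\ref{def:feta} with $\eta=1$, so that $\f(a,z)-(z-a)^{-1}$ is bounded near $a$ and $\Im[\f(a,\zeta)\sqrt{\outnorm(\zeta)}]=0$ on $\partial\Omega$. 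Next I would set $\fdag(a,z):=f_\Omega^{[i]}(a,z)$, the solution with $\eta=i$; since $i\cdot(z-a)^{-1}$ is the prescribed singularity but we want $\fdag(a,z)$ itself bounded near $a$, I would instead put $\fdag(a,z):=\tfrac{1}{i}\,f_\Omega^{[i]}(a,z)+\text{(nothing)}$\,—\,more carefully, observe that $i\f(a,z)$ already has the singularity $i(z-a)^{-1}$ and satisfies the rotated boundary condition $\Im[i\f(a,\zeta)\cdot i^{-1/2}\sqrt{\outnorm(\zeta)}]=0$, so the natural candidate is to let $\fdag$ be a genuinely bounded holomorphic function, which forces me to build it differently.

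A cleaner route, which I would actually follow, is: define $\fdag(a,z)$ to be the unique holomorphic-in-$\Omega\setminus\{a\}$, continuous-up-to-$\partial\Omega$, \emph{bounded near $a$} function satisfying the boundary relation \eqref{eq_f_fdag_boundary_conditions} with $\f:=f_\Omega^{[1]}(a,\cdot)$ on the right-hand side. Existence of such $\fdag$ is itself a boundary value problem of the same Riemann--Hilbert type: writing $g(z):=\fdag(a,z)$, condition \eqref{eq_f_fdag_boundary_conditions} reads $g(\zeta)=\overline{i\outnorm(\zeta)\f(a,\zeta)}$, which (using $\Im[\f(a,\zeta)\sqrt{\outnorm(\zeta)}]=0$, i.e.\ $\f(a,\zeta)\sqrt{\outnorm(\zeta)}\in\R$) one computes equals $-i\,\overline{\sqrt{\outnorm(\zeta)}}\cdot\overline{\f(a,\zeta)\sqrt{\outnorm(\zeta)}}=-i\,\overline{\sqrt{\outnorm(\zeta)}}\cdot\f(a,\zeta)\sqrt{\outnorm(\zeta)}$, hence $g(\zeta)\sqrt{\outnorm(\zeta)}=-i\,|\outnorm(\zeta)|\,\f(a,\zeta)\sqrt{\outnorm(\zeta)}\in i\R$, so $\Im[(-i)g(\zeta)\sqrt{\outnorm(\zeta)}]=0$; thus $-ig(a,\cdot)$ solves exactly the homogeneous version of Definition~\ref{def:feta} with a prescribed bounded-near-$a$ behaviour, and the only such solution with no singularity at $a$ must be obtained by matching with one of the $\feta$'s. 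Concretely, I would simply \emph{define} $\fdag(a,z):=\tfrac{1}{2i}\bigl[\,f_\Omega^{[i]}(a,z)-i\,f_\Omega^{[1]}(a,z)\,\bigr]$ after checking that the $(z-a)^{-1}$ singularities cancel (coefficient $\tfrac{1}{2i}(i-i\cdot 1)=0$), leaving $\fdag(a,\cdot)$ holomorphic on all of $\Omega$; then I would verify \eqref{eq_f_fdag_boundary_conditions} using the boundary conditions of $f_\Omega^{[1]}$ and $f_\Omega^{[i]}$ (each of the form $\Im[f_\Omega^{[\eta]}(a,\zeta)\sqrt{\outnorm(\zeta)}]=$ linear in $\eta$ with the right phase), and finally establish \eqref{eq_feta_as_f_plus_fdag} by noting that both sides are holomorphic in $\Omega\setminus\{a\}$, continuous up to $\partial\Omega$, have the same singularity $\eta(z-a)^{-1}$ at $a$ (the $\fdag$ term contributing nothing), and satisfy the same boundary condition \eqref{eq_feta_boundary_conditions}; uniqueness of $\feta(a,\cdot)$ (Remark~\ref{rem:feta-def}(i)) then forces equality. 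Since the map $\eta\mapsto\feta(a,\cdot)$ is $\R$-linear, it suffices to check \eqref{eq_feta_as_f_plus_fdag} for $\eta=1$ and $\eta=i$, which are immediate from the definitions $\f=f_\Omega^{[1]}$, $i\f+i\overline{i}\fdag=i\f+\fdag$ and the identity defining $\fdag$.

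For uniqueness of the pair $(\f,\fdag)$: suppose $(\tilde\f,\tilde\fdag)$ is another pair with the stated properties. From \eqref{eq_feta_as_f_plus_fdag} applied to $\eta=1$ we get $f_\Omega^{[1]}(a,\cdot)=\f(a,\cdot)+i\fdag(a,\cdot)=\tilde\f(a,\cdot)+i\tilde\fdag(a,\cdot)$; applied to $\eta=i$ we get $f_\Omega^{[i]}(a,\cdot)=i\f(a,\cdot)+\fdag(a,\cdot)=i\tilde\f(a,\cdot)+\tilde\fdag(a,\cdot)$. But wait\,—\,\eqref{eq_feta_as_f_plus_fdag} is part of what we are proving for $(\f,\fdag)$, so to argue uniqueness I should instead reason as follows: the differences $h:=\f-\tilde\f$ and $h^\dag:=\fdag-\tilde\fdag$ are holomorphic on all of $\Omega$ (singularities cancel), continuous up to $\partial\Omega$, and satisfy $h^\dag(\zeta)=\overline{i\outnorm(\zeta)\cdot h(\zeta)}$ (subtracting the two instances of \eqref{eq_f_fdag_boundary_conditions}, noting the right-hand side is $\C$-antilinear\,—\,actually $\R$-linear\,—\,in the $\f$-slot, so the relation passes to differences). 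Then $h+ih^\dag$ is holomorphic in $\Omega$ and satisfies the homogeneous boundary condition $\Im[(h+ih^\dag)(\zeta)\sqrt{\outnorm(\zeta)}]=0$, hence vanishes identically by the uniqueness argument of Remark~\ref{rem:feta-def}(i); similarly $ih+h^\dag\equiv 0$. These two relations give $h\equiv h^\dag\equiv 0$. The main obstacle I anticipate is purely bookkeeping with phases and square roots of the normal vector $\outnorm(\zeta)$: one must be careful that $\sqrt{\outnorm(\zeta)}$ is only locally defined (it changes sign as $\zeta$ traverses $\partial\Omega$), but every expression appearing above is invariant under $\sqrt{\outnorm}\mapsto-\sqrt{\outnorm}$, so the arguments are globally meaningful; verifying this invariance carefully is the one place where a slip is easy.
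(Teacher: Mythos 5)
Your overall strategy (real-linearity of $\eta\mapsto\feta(a,\cdot)$, reduction to $\eta=1,i$, uniqueness of $\feta$) is the same as the paper's, and your uniqueness argument for the pair $(\f,\fdag)$ is sound: the differences $h,h^\dag$ extend holomorphically to all of $\Omega$, the combinations $h+ih^\dag$ and $ih+h^\dag$ satisfy the homogeneous boundary condition $\Im[\,\cdot\,\sqrt{\outnorm}\,]=0$, and the argument of Remark~\ref{rem:feta-def}(i) kills them. The existence construction, however, contains a genuine error: you identify $\f(a,\cdot)$ with $f_\Omega^{[1]}(a,\cdot)$. If \eqref{eq_feta_as_f_plus_fdag} holds then $f_\Omega^{[1]}=\f+i\fdag$ and $f_\Omega^{[i]}=i\f+\fdag$, so the correct pair consists of the \emph{combinations} $\f=\tfrac12\bigl(f_\Omega^{[1]}-i f_\Omega^{[i]}\bigr)$ and $\fdag=\tfrac12\bigl(f_\Omega^{[i]}-i f_\Omega^{[1]}\bigr)$; in particular $\f$ is not itself one of the $\feta$'s (in $\H$, $f_\H^{[1]}(a,z)=(z-a)^{-1}+i(z-\overline a)^{-1}$ while $\f_\H(a,z)=(z-a)^{-1}$). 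With your choices ($\f:=f_\Omega^{[1]}$ and $\fdag:=\tfrac1{2i}[f_\Omega^{[i]}-i f_\Omega^{[1]}]$, the latter also carrying a spurious prefactor $\tfrac1{2i}$ instead of $\tfrac12$) the identity \eqref{eq_feta_as_f_plus_fdag} fails already at $\eta=1$, since it would force $i\fdag=f_\Omega^{[1]}-\f\equiv0$, and the boundary relation \eqref{eq_f_fdag_boundary_conditions} fails as well, as a direct check against the explicit formulae of Proposition~\ref{prop:fH_explicit} shows.

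Your intermediate ``cleaner route'' is also unsalvageable as stated: there is \emph{no} function, holomorphic and bounded near $a$, continuous up to $\partial\Omega$, with boundary values $\overline{i\outnorm(\zeta)f_\Omega^{[1]}(a,\zeta)}$. Indeed, since $\Im[f_\Omega^{[1]}(a,\zeta)\sqrt{\outnorm(\zeta)}]=0$ and $|\outnorm(\zeta)|=1$, one has $\overline{i\outnorm(\zeta)f_\Omega^{[1]}(a,\zeta)}=-i f_\Omega^{[1]}(a,\zeta)$ on $\partial\Omega$; hence such a function $g$ would make $g+if_\Omega^{[1]}(a,\cdot)$ holomorphic in $\Omega\setminus\{a\}$ with a simple pole at $a$ and identically vanishing boundary values, which is impossible (reflection, or the contour-integral argument of Remark~\ref{rem:feta-def}(i)). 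The repair is exactly the paper's route taken in the right order: first prove $f_\Omega^{[\alpha\eta+\beta\mu]}=\alpha\feta+\beta f_\Omega^{[\mu]}$ for $\alpha,\beta\in\R$ by uniqueness of the boundary value problem; real-linearity in $\eta$ then \emph{produces} the decomposition $\feta(a,z)=\eta\f(a,z)+i\overline\eta\fdag(a,z)$ with $\f,\fdag$ given by the combinations above, from which holomorphicity, the behaviour at $a$, and the equivalence of \eqref{eq_feta_boundary_conditions} (for all $\eta$ simultaneously) with \eqref{eq_f_fdag_boundary_conditions} follow.
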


\begin{proof} It is easy to see that, for any~$\alpha,\beta\in\R$ and~$\eta,\mu\in\C$, one has
\[
\f^{[\alpha\eta+\beta\mu]}(a,z)=\alpha\f^{[\eta]}(a,z)+\beta\f^{[\mu]}(a,z),
\]
since the right-hand side satisfies all the conditions from Definition~\ref{def:feta} that determine the function~$\f^{[\alpha\eta+\beta\mu]}(a,\cdot)$. In other words, for any~$z\in\Omega\setminus\{a\}$, the quantity~$\feta(a,z)$ is a real-linear functional of~$\eta\in\C$, and hence identity~(\ref{eq_feta_as_f_plus_fdag}) holds true for some~$\f(a,z)$ and~$\fstar(a,z)$. Since these functions can be easily represented as linear combinations of~$\feta(a,z)$, they both are holomorphic in~$\Omega\setminus\{a\}$ and continuous up to~$\partial\Omega$, and their behavior as~$z\to a$ follows immediately. Moreover, boundary conditions~(\ref{eq_feta_boundary_conditions}) considered for all~$\eta\in\C$ simultaneously are equivalent to boundary conditions~(\ref{eq_f_fdag_boundary_conditions}).
\end{proof}

\begin{remark}
(i) It follows from~(\ref{eq_covariance_feta}) and~(\ref{eq_feta_as_f_plus_fdag}) that
the functions~$\f(a,z)$ and~$\fstar(a,z)$ have the following covariances under conformal mappings $\varphi \, : \, \Omega \to \Omega'$:
\begin{equation}
\label{eq_covariance_f_fdag}
\begin{array}{l}
\f (a,z) = \fprime (\varphi(a), \varphi(z)) \cdot (\varphi^\prime(a)\varphi^\prime(z))^\frac{1}{2}\,, \\
\fstar (a,z) = \fstarprime (\varphi(a), \varphi(z))  \cdot (\overline{\varphi^\prime(a)}\varphi^\prime(z))^\frac{1}{2}\,,
\end{array}
\quad a,z\in\Omega.
\end{equation}

\noindent (ii) Let~$\eta,\mu \in \mathbb{C}$ and~$a,z\in\Omega$.  Applying the Cauchy residue theorem to the product of the functions~$\feta(a,\cdot)$ and $\fmu(z,\cdot)$ and taking into account boundary conditions~(\ref{eq_feta_boundary_conditions}), one obtains
\[
0 = \Im\biggl[\oint_{\partial \Omega} \feta(a,\zeta)\fmu(z,\zeta)d\zeta\biggr] = 2\pi \cdot \Re\left[\overline{\eta}\fmu(z,a)+\overline{\mu}\feta(a,z)\right].
\]
Since~$\eta$ and~$\mu$ can be chosen arbitrary, the decomposition~(\ref{eq_feta_as_f_plus_fdag}) easily implies the identities
\begin{equation}
\label{eq_symmetries_f}
\f(z,a) = - \f(a,z), \qquad \fstar(z,a) = -\overline{\fstar(a,z)}\,.
\end{equation}
In particular, the function~$\f(\cdot,z)$ is holomorphic in~$\Omega\setminus\{z\}$ while the function~$\fstar(\cdot,z)$ is anti-holomorphic in~$\Omega$.
\end{remark}

\begin{proposition} \label{prop:fH_explicit}
Let~$\Omega$ be a bounded simply connected domain with a smooth boundary and~$\varphi:\Omega\to\H$ be a conformal mapping from~$\Omega$ onto the upper half-plane~$\H$. Then, for any~$a,z\in\Omega$, one has
\begin{equation}
\label{eq_f_fdag_explicit}
\f(a,z)=\frac{2(\varphi'(a)\varphi'(z))^{\frac12}}{\varphi(z)-\varphi(a)}\,,\qquad \fstar(a,z)= \frac{2(\overline{\varphi'(a)}\varphi'(z))^{\frac12}}{\varphi(z)-\overline{\varphi(a)}}\,.
\end{equation}
\end{proposition}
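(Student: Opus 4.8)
The plan is to produce the formulas~\eqref{eq_f_fdag_explicit} by exhibiting two explicit functions and then invoking the uniqueness part of Proposition~\ref{propContFermDecomp}: it suffices to check that the right-hand sides of~\eqref{eq_f_fdag_explicit} are holomorphic in $\Omega\setminus\{a\}$, continuous up to~$\partial\Omega$, have the prescribed singularity at~$a$, and satisfy the boundary relation~\eqref{eq_f_fdag_boundary_conditions}; uniqueness then forces them to coincide with~$\f(a,\cdot)$ and~$\fdag(a,\cdot)$. First I would make sense of the square roots: since~$\varphi'$ is holomorphic and non-vanishing on the simply connected domain~$\Omega$, I fix a holomorphic branch~$\psi$ of~$(\varphi')^{1/2}$ on~$\Omega$ and read~$(\varphi'(a)\varphi'(z))^{1/2}$ as~$\psi(a)\psi(z)$ (this is independent of the choice of branch). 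Write $g(a,z):=\psi(a)\psi(z)/(\varphi(z)-\varphi(a))$ and $g^\dag(a,z):=\overline{\psi(a)}\psi(z)/(\varphi(z)-\overline{\varphi(a)})$ for the two candidates. One could alternatively first establish the half-plane formulas $\fH(a,z)=(z-a)^{-1}$, $\fdagH(a,z)=(z-\overline{a})^{-1}$ and transport them via~\eqref{eq_covariance_f_fdag}, but since~$\H$ is unbounded and the uniqueness argument underlying~\eqref{eq_covariance_f_fdag} was stated for bounded smooth domains, it is cleaner to verify everything directly on~$\Omega$.

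The routine checks go as follows. The denominators of~$g$ and~$g^\dag$ do not vanish on~$\Omega\setminus\{a\}$: $\varphi(z)=\varphi(a)$ only at~$z=a$ by injectivity of~$\varphi$, while~$\overline{\varphi(a)}$ lies in the open lower half-plane and hence is not a value of~$\varphi$ on~$\overline{\Omega}$; thus~$g$ is holomorphic in~$\Omega\setminus\{a\}$ and~$g^\dag$ is holomorphic in all of~$\Omega$, and both extend continuously to~$\partial\Omega$ because~$\varphi$ does (with real boundary values). Near~$z=a$ one has~$\varphi(z)-\varphi(a)=\psi(a)^2(z-a)(1+o(1))$, so~$\freg(a,z):=g(a,z)-(z-a)^{-1}$ and~$g^\dag(a,z)$ are both bounded as~$z\to a$, as required.

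The one place that needs care — and the main (though minor) obstacle — is the boundary condition~\eqref{eq_f_fdag_boundary_conditions}. For~$\zeta\in\partial\Omega$ put~$x:=\varphi(\zeta)\in\R$; then~$\overline{\varphi(\zeta)-\varphi(a)}=x-\overline{\varphi(a)}$, which gives $g^\dag(a,\zeta)/\overline{g(a,\zeta)}=\psi(\zeta)/\overline{\psi(\zeta)}=\psi(\zeta)^2/|\psi(\zeta)|^2=\varphi'(\zeta)/|\varphi'(\zeta)|$. On the other hand, since~$\varphi$ carries a counterclockwise parametrization of~$\partial\Omega$ to an orientation-preserving (left-to-right) parametrization of~$\R$, the outward unit normal transforms as~$\outnorm(\zeta)=-i\,\overline{\varphi'(\zeta)}/|\varphi'(\zeta)|$, whence~$\overline{i\outnorm(\zeta)}=\varphi'(\zeta)/|\varphi'(\zeta)|$. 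Comparing the two computations yields exactly~$g^\dag(a,\zeta)=\overline{i\outnorm(\zeta)\,g(a,\zeta)}$, i.e.~\eqref{eq_f_fdag_boundary_conditions}. Hence the pair~$(g,g^\dag)$ satisfies all the conditions that characterize~$(\f(a,\cdot),\fdag(a,\cdot))$ in Proposition~\ref{propContFermDecomp}, so~$g=\f$ and~$g^\dag=\fdag$, which is~\eqref{eq_f_fdag_explicit}. The only thing to be careful about throughout is the bookkeeping of the square-root branch and the precise phase of~$\outnorm(\zeta)$ in terms of~$\varphi'(\zeta)$ on~$\partial\Omega$; this is what makes the boundary condition come out on the nose, and everything else is bookkeeping.
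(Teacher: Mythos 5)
Your proposal is correct and follows essentially the same strategy as the paper: exhibit the explicit candidates, check that they solve the Riemann-type boundary value problem of Proposition~\ref{propContFermDecomp} (holomorphy, continuity up to~$\partial\Omega$, the simple pole of residue~$1$ at~$a$, and the boundary relation~\eqref{eq_f_fdag_boundary_conditions}), and invoke its uniqueness. The only difference is where the boundary condition is verified: the paper checks the trivial identity $f_\H^\dag(a,\zeta)=\overline{f_\H(a,\zeta)}$ on~$\R$ and transports it to~$\Omega$ via the M\"obius covariance in~\eqref{eq_covariance_f_fdag} (which also shows independence of the uniformization map), whereas you verify~\eqref{eq_f_fdag_boundary_conditions} directly on~$\partial\Omega$ through the correct phase identity $\outnorm(\zeta)=-i\,\overline{\varphi'(\zeta)}/|\varphi'(\zeta)|$, making explicit a bookkeeping step the paper leaves implicit (both routes tacitly use that, for a smooth boundary, $\varphi'$ extends continuously and without zeros to~$\partial\Omega$).
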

\begin{proof} Note that the functions $f_\H(a,z):=2(z-a)^{-1}$ and $\fstarH(a,z):=2(z-\overline{a})^{-1}$
satisfy the conformal covariance rules~(\ref{eq_covariance_f_fdag}) for all M\"obius automorphisms~$\varphi:\H\to\H$ and solve the boundary value problem from Proposition~\ref{propContFermDecomp} in the upper half-plane~$\H$:
\[
\fstarH(a,\zeta)=\overline{f_\H(a,\zeta)}\quad \text{for}\ \ a\in\H\ \ \text{and}\ \ \zeta\in\partial\H=\R.
\]
Therefore, the functions~$\f(a,\cdot)$ and~$\fstar(a,\cdot)$ given by~(\ref{eq_f_fdag_explicit}) do not depend on the choice of the uniformization map~$\varphi:\Omega\to\H$, are holomorphic in~$\Omega\setminus\{a\}$ and continuous up to~$\partial\Omega$, have the correct behavior as~$z\to a$, and satisfy boundary conditions~(\ref{eq_f_fdag_boundary_conditions}) everywhere on~$\partial\Omega$.
\end{proof}

\begin{remark}\label{rem:CFT-terminology}
Adopting the colloquial CFT language (e.g., see \cite[Sec. 4, pp.20-21]{GhoZam94} for a discussion of boundary conditions in the upper half-plane), one can write
\begin{align*}
\f(a,z)~&=~\CorrFerm\Omega{\psi(z)\psi(a)}~=~-\CorrFerm\Omega{\psi(a)\psi(z)}\,,\\
\fstar(a,z)~&=~\CorrFerm\Omega{\psi(z)\opsi(a)}~=~-\CorrFerm\Omega{\opsi(a)\psi(z)}\,,
\end{align*}
where we use the notation~$\CorrFerm\Omega{\dots}$ for correlators of the free (holomorphic~$\psi(z)$ and anti-holomorphic~$\opsi(z)$) fermions in a simply connected domain~$\Omega$, coupled along the boundary so that $\opsi(\zeta)=\overline{\tau(\zeta)\psi(\zeta)}$ for~$\zeta\in\partial\Omega$; note that this coupling respects the conformal covariance~(\ref{eq_covariance_f_fdag}) and reads simply as~$\opsi(\zeta)=\overline{\psi(\zeta)}$ if~$\Omega=\H$ is the upper half-plane and~$\zeta\in\R$. Therefore, the function~$\feta(a,z)$ can be interpreted as the fermionic correlator
\[
\feta(a,z)~=~\CorrFerm\Omega{\psi(z)\psi^{[\eta]}(a)}\,,\quad\text{where}\quad \psi^{[\eta]}(a):=\tfrac{1}{2}[\overline{\eta}\psi(a)+\eta\opsi(a)]\,.
\]
We also refer the interested reader to~\cite[Section~3.6]{CheCimKas15}, where a similar `Grassmann variables' notation for the discrete combinatorial observables from Section~\ref{sec:via-fermions-disc} is discussed.
\end{remark}

\subsection{Stress-energy tensor and energy density correlations} \label{sub:T-and-energy-via-fermions}

In the free fermion CFT framework, it is well-known (e.g., see~\cite[Sec. 12.3 and 14.2.1]{Mus10}, note however that we use slightly different multiplicative normalization~(\ref{eq_f_fdag_explicit}) of fermions) that both the (holomorphic) stress-energy tensor~$T$ and the energy density field~$\varepsilon$ can be written in terms of~$\psi,\opsi$ as
\[
T\,=\,{-\textstyle\frac{1}{4}}:\!\psi\partial\psi\!:\quad\text{and}\quad \varepsilon\,=\,{\textstyle\frac{i}{2}}\psi\opsi
\]
(as usual, the colons mean that some regularization is required in the former case). This motivates the following definition, which uses the language of solutions to Riemann-type boundary value problems developed in Section~\ref{sub:fermions-via-bvp} instead of the CFT terminology.

\begin{definition} \label{def:T-and-eps-one-point}
Given a simply connected domain~$\Omega$ and~$w,a\in\Omega$, we set %define the quantities~$\CorrO{T(w)}$ and~$\CorrO{\varepsilon(a)}$ by
\[
\CorrO{T(w)}\,:=\,{\textstyle\frac{1}{4}}\partial_z \freg(w,z)\big|_{z=w}\quad\text{and}\quad \CorrO{\varepsilon(a)}\,:=\, {\textstyle\frac{i}{2}}\fstar(a,a)\,,
\]
where the functions~$\freg$ and~$\fstar$ are given by Proposition~\ref{propContFermDecomp} (and Definition~\ref{def:feta}).
\end{definition}

Note that one can easily derive explicit formulae for the quantities~$\CorrO{T(w)}$ and~$\CorrO{\varepsilon(a)}$ using~(\ref{eq_f_fdag_explicit}). Namely, for any uniformization mapping~$\varphi:\Omega\to \H$, one has
\begin{equation}
\label{eq_T_explicit}
\CorrO{T(w)}\,=\,\frac{[\schw\varphi](w)}{24}\,,\quad \text{where}\quad [\schw\varphi](w)\,:=\,\frac{\varphi'''(w)}{\varphi'(w)}-\frac{3}{2}\left[\frac{\varphi''(w)}{\varphi'(w)}\right]^2
\end{equation}
denotes the Schwarzian derivative of the conformal mapping~$\varphi$ at the point~$w$, and
\begin{equation}
\label{eq_e_explicit}
\CorrO{\varepsilon(a)}\,=\,|\varphi'(a)|\cdot (2\Im\varphi(a))^{-1}\,.
\end{equation}
In particular,~$\CorrO{\varepsilon(a)}\in\R$ for all~$a\in\Omega$ while~$\CorrO{T(\cdot)}$ is a holomorphic function in $\Omega$, this is why we prefer to use the letter~$w$ (instead of~$a$) for its argument.

The next step is to give a definition of the function~$\CorrO[\dob]{T(w)}$ from Theorem~\ref{thm:Ising1}, as well as two-point functions~$\CorrO{T(w)\varepsilon(a)}$ and~$\CorrO{T(w)T(w')}$ from Theorems~\ref{thm:Ising2} and~\ref{thm:Ising3}. It is worth mentioning that all these quantities can be alternatively defined via explicit formulae in the upper half-plane and appropriate conformal covariance rules. Nevertheless, we prefer to use the functions~$\f$ and~$\fstar$ as the starting point for all our definitions since they provide clear links with both the standard CFT notation on the one hand and with the discrete fermionic observables discussed in Section~\ref{sec:via-fermions-disc} on the other.

\begin{definition} \label{def:T-Dobrushin-expectation} Given a simply connected domain~$\Omega$, an inner point~$w\in\Omega$ and two boundary points (or prime ends) $b_{},b'_{}\in\partial\Omega$, we define the quantity~$\CorrO[\dob]{T(w)}$ by
\[
\CorrO[\dob]{T (w)} ~:=~
\CorrO{T(w)} ~+~ \frac{\f(b',w)\partial_w\f(b,w)-\f(b,w)\partial_w\f(b',w)}{4\f(b_{},b'_{})}\,,
\]
where %the function~$\f$ is given by Proposition~\ref{propContFermDecomp} (and Definition~\ref{def:feta}) and
the one-point function~$\CorrO{T(w)}$ is given by Definition~\ref{def:T-and-eps-one-point}.
\end{definition}

Note that for any uniformization mapping~$\varphi:\Omega\to\H$ the formula~(\ref{eq_f_fdag_explicit}) yields
\[
\CorrO[\dob]{T (w)} ~=~ \CorrO{T(w)} ~+~ \frac{1}{2}\left[\frac{\varphi'(w)\cdot(\varphi(b')-\varphi(b))} {(\varphi(b')-\varphi(w))(\varphi(b)-\varphi(w))}\right]^2.
\]
Therefore, if~$\varphi:\Omega\to\H$ is the (unique up to a scaling) uniformization mapping such that~\mbox{$\varphi(b_{})=0$} and~\mbox{$\varphi(b'_{})=\infty$} or vice versa, then
\begin{equation}
\label{eq_T_Dobrushin_explicit}
\CorrO[\dob]{T(w)}= \,\frac{1}{2}\left[\frac{\varphi'(w)}{\varphi(w)}\right]^2+\frac{[\schw\varphi](w)}{24}\,.
\end{equation}

\begin{remark} \label{rem:T-Dobrusin}
(i) Following the discussion given in Remark~\ref{rem:CFT-terminology}, one can interpret the ratio of partition functions of the Ising model in~$\Omega$ with Dobrushin and~`$+$' boundary conditions as the (absolute value) of the fermionic correlator~$\CorrFerm\Omega{\psi(b'_{})\psi(b_{})} = \f(b_{},b'_{})$. Note that this interpretation perfectly agrees with the combinatorial definition of discrete fermionic observables. Therefore, Definition~\ref{def:T-Dobrushin-expectation} can be thought about as a Pfaffian formula for the four-point fermionic correlator~$-\frac{1}{4}\CorrFerm\Omega{\;:\!\psi(w)\partial\psi(w)\psi(b'_{})\psi(b_{})\!:\;}$ normalized by~$\CorrFerm\Omega{\psi(b'_{})\psi(b_{})}$\,.

\noindent (ii) Note that the values of the function~$\f(w,\cdot)$ at boundary points~$b_{},b'_{}$ may be ill-defined if~$\partial\Omega$ is non-smooth. Nevertheless, the ratio involved in the definition of~$\CorrO[\dob]{T(w)}$ is always well-defined: the easiest way to see this is to apply a uniformization map~$\varphi:\Omega\to\H$ and to rewrite all the instances of~$\f(w,\cdot)$ using~(\ref{eq_f_fdag_explicit}). Along the way, both the numerator and the denominator gain the same ill-defined factor~$(\phi'(b_{})\phi'(b'_{}))^{\frac{1}{2}}$, which cancels out.
\end{remark}

\begin{definition}\label{def:TT-and-Te-correlations}
Given a simply connected domain~$\Omega$ and points~$w,w',a\in\Omega$, we define the quantities~$\CorrO{T(w)\varepsilon(a)}$ and~$\CorrO{T(w)T(w')}$ by
\begin{align*}
\langle T (w) T (w')\rangle^+_\Omega := & \langle T (w) \rangle^+_\Omega  \langle T(w') \rangle^+_\Omega\\
& + {\textstyle \frac{1}{16}}( \partial_w \f (w,w')\partial_{w'} \f (w,w') \!-\! \f (w,w')\partial_w \partial_{w'} \f (w,w')), \\
\langle T (w) \varepsilon (a) \rangle^+_\Omega := & \langle T (w) \rangle^+_\Omega \langle \varepsilon (a) \rangle^+_\Omega\\
& + \textstyle{\frac{i}{8}}(\f(a,w)\partial_w \fstar (a,w)
 - \fstar(a,w)\partial_w \f(a,w))\,,
\end{align*}
where %the functions~$\f$ and~$\fstar$ are given by Proposition~\ref{propContFermDecomp} (and Definition~\ref{def:feta}) and
the one-point functions~$\CorrO{T(w)}$ and~$\CorrO{\varepsilon(a)}$ are given by Definition~\ref{def:T-and-eps-one-point}.
\end{definition}

\begin{remark}\label{rem:TT-and-Te-Pfaff}
Similarly to Remark~\ref{rem:T-Dobrusin}, Definition~\ref{def:TT-and-Te-correlations} can be understood as Pfaffian formulae for the correlators~$\frac{1}{16}\!\CorrFerm{\Omega}{\;:\!\psi(w)\partial\psi(w)\psi(w')\partial\psi(w')\!:\;}$ and~$-\frac{i}{8}\!\CorrFerm{\Omega}{\psi(w)\partial\psi(w)\psi(a)\opsi(a)}$\,.
\end{remark}

\subsection{Correlations with the spin field} \label{sub:T-spin-correlations}
The main aim of this section is to define the quantities~$\CorrO{\sigma(u)}$ and~$\CorrO{T(w)\sigma(u)}$, which appear in the statement of Theorem~\ref{thm:Ising3}. In principle, this can be done by writing explicit formulae for the case~$\Omega=\H$ and then applying appropriate covariance rules. Nevertheless, similarly to Section~\ref{sub:fermions-via-bvp} we prefer to use the language of Riemann-type boundary value problems for holomorphic spinors branching over a point~$u\in\Omega$. Note that such continuous spinors are natural counterparts of discrete branching observables discussed in Section~\ref{sec:via-fermions-disc}, which are the crucial tool for proving the convergence of discrete correlation functions involving the spin field to their limits, see~\cite{CheIzy13,CheHonIzy15}.

Below we need some terminology. Given a simply connected planar domain~$\Omega$ and a point~$u\in\Omega$, we denote by~$\OmegaU$ the (canonical) double-cover of~$\Omega$ branching over the point~$u$, and by~$\pi:\OmegaU\to\Omega$ the corresponding projection. For a point~$z\in\OmegaU\setminus\{u\}$, let~$z^*\in\OmegaU$ be such that~$\pi(z^*)=\pi(z)$ and~$z^*\ne z$. In other words,~$z^*$ is the point lying on the other sheet of~$\OmegaU$ over the same point of~$\Omega$ as~$z$.
We say that a function~$f:\OmegaU\setminus\{u\}\to\C$ is a \emph{holomorphic spinor} if~$f(z^*)=-f(z)$ for all~$z\neq u$ and~$f(\cdot)$ is  holomorphic on~$\OmegaU\setminus\{u\}$.

The following description of the one-point function~$\CorrO{\sigma(u)}$ in continuum was used in~\cite{CheHonIzy15}, we refer the interested reader to this paper for more details.

\begin{definition}
\label{def:spinor} Let~$\Omega$ be a bounded simply connected domain with a smooth boundary and~$u\in\Omega$. We denote by~$\spinor(z)$  the (unique) holomorphic spinor defined on~$\OmegaU\setminus\{u\}$ such that~$\spinorReg(z):=\spinor(z)-e^{-i\frac{\pi}{4}}(z-u)^{-1/2}$ is bounded as~$z\to u$ and
\[
\Im\big[\spinor(\zeta)\sqrt{\tau(\zeta)}\,\big]=0~\quad\text{for~all}\ \ \zeta\in\partial \OmegaU\,.
\]
Then, we define the quantity~$\CorrO{\sigma(u)}$ as the (unique up to a multiplicative normalization) real-valued function in~$\Omega$ satisfying the differential equation
\begin{equation}
\label{eq_sigma_def}
d\log\CorrO{\sigma(u)}~=~\Re[\mathcal{A}_\Omega(u)du]~=~\Re{\mathcal{A}_\Omega(u)}dx-\Im\mathcal{A}_\Omega(u)dy\,,
\end{equation}
where~$u=x+iy$ and~$\mathcal{A}_\Omega(u):=\tfrac{1}{2}e^{i\frac{\pi}{4}}(z-u)^{-1/2}\spinorReg(z)\big|_{z=u}$\,.
\end{definition}
\begin{remark} (i) The uniqueness of solution~$\spinor(\cdot)$ to the boundary value problem described above can be easily deduced similarly to Remark~\ref{rem:feta-def}(i). Moreover, one can use this uniqueness property similarly to Remark~\ref{rem:feta-def}(ii) to prove that
\begin{equation}\label{eq_covariance_spinor}
\spinor(z) = \spinorprime(\varphi(z))\cdot (\varphi^\prime(z))^\frac{1}{2}
\end{equation}
for any conformal mapping~$\varphi:\Omega\to\Omega'$, where~$u'=\varphi(u)$. Then, one can use this covariance rule and the explicit formula
\[
g_{[\H,u]}(z)=(2\Im u)^{\frac{1}{2}}\cdot ((z-u)(z-\overline{u}))^{-\frac{1}{2}}
\]
to prove the existence of~$\spinor(\cdot)$, and to define this function for rough and/or unbounded~$\Omega$.

\noindent (ii) A priori, it is not clear that the differential form~(\ref{eq_sigma_def}) is exact. The easiest (though not the most conceptual) way to check this fact is to start with the explicit formula given above for~$\Omega=\H$ and then to pass to the general case using~(\ref{eq_covariance_spinor}).

\noindent (iii) Equation~(\ref{eq_sigma_def}) defines the function~$\CorrO{\sigma(u)}$ up to a multiplicative normalization which, in principle, could depend on~$\Omega$. There exists a coherent way to choose this normalization for all planar domains~$\Omega$ simultaneously, which leads to the following formula:
\begin{equation}
\label{eq_spin-explicit}
\CorrO{\sigma(u)}= (2|\varphi'(u)|)^{\frac{1}{8}}\cdot (\Im\varphi(u))^{-\frac{1}{8}},\quad u\in\Omega\,,
\end{equation}
where~$\varphi$ stands for a uniformization mapping from~$\Omega$ onto~$\H$, see~\cite{CheHonIzy15} for more details.
\end{remark}

\begin{definition}
\label{def:fetaU}
Let~$\OmegaU$ be a double-cover of a bounded simply connected domain~$\Omega$ with a smooth boundary, branching over the point~$u\in\Omega$. Let~$a\in\OmegaU\setminus\{u\}$ and $\eta\in \mathbb{C}$. We define~$\fetaU(a,\cdot)$ to be the (unique) holomorphic spinor in~$\OmegaU\setminus\{u,a,a^*\}$ such that
\[
\begin{array}{ll}%\begin{equation}\label{eq_fetaU_boundary_conditions}
\Im\big[\fetaU (a,\zeta)\sqrt{\tau(\zeta)}\,\big]=0 & \text{for~all}\ \ \zeta\in\partial \OmegaU\,,\\
\fetaU(a,z) = \overline{\eta}\cdot(z-a)^{-1} + O(1) & \text{as}\ \ z\to a\,,\\
\fetaU(a,z) = e^{i\frac{\pi}{4}} c\cdot(z-u)^{-\frac{1}{2}}+O(|z-u|^{\frac{1}{2}}) & \text{as}\ \ z\to u,
\end{array}
\]%\end{equation}
for some (unknown) constant~$c\in\R$.
\end{definition}

Repeating the arguments given in Section~\ref{sub:fermions-via-bvp} for holomorphic functions~$\feta(a,z)$, one can prove that the boundary value problem for holomorphic spinors~$\fetaU(a,z)$ described in~Definition~\ref{def:fetaU} has a unique solution which can be represented as
\[
\fetaU(a,z)  = \tfrac{1}{2}[\overline{\eta} \fU(a,z)  + \eta \fstarU(a,z)],\quad a,z\in\OmegaU\,.
\]
Moreover, one easily sees that the function
\[
\fregU (a,z)~ := ~\fU (a,z) - 2(z\!-\!a)^{-1}
\]
is bounded in a vicinity of the point~$a$ (note that this function is defined only locally). Further, holomorphic spinors~$\fU(a,\cdot)$ and~$\fstarU(a,\cdot)$ satisfy the same symmetries~(\ref{eq_symmetries_f}) and conformal covariance rules~(\ref{eq_covariance_f_fdag}), and one can use the latter to define these spinors in rough and/or unbounded domains starting with the explicit formulae
\[
f_{[\H,u]}(a,z)= \left[\frac{(a-u)(a-\overline{u})}{(z-u)(z-\overline{u})}\right]^{\frac{1}{2}}\cdot \left[\frac{2}{z-a}+\frac{1}{a-u}+\frac{1}{a-\overline{u}}\right],
\]
\[
f^\star_{[\H,u]}(a,z)= \left[\frac{(\overline{a}-u)(\overline{a}-\overline{u})}{(z-u)(z-\overline{u})}\right]^{\frac{1}{2}}\cdot \left[\frac{2}{z-\overline{a}}+\frac{1}{\overline{a}-u}+\frac{1}{\overline{a}-\overline{u}}\right].
\]
The next definition mimics Definition~\ref{def:T-and-eps-one-point} in the presence of the branching point~$u\in\Omega$.
\begin{definition}
\label{def:Tsigma-correlation}
Given a simply connected domain~$\Omega$ and points~$w,u\in\Omega$, we set
\[
\CorrO{T(w)\sigma(u)}~:=~\CorrO{\sigma(u)}\cdot {\textstyle\frac{1}{4}}\partial_w\fregU(w,z)\big|_{z=w}\,,
\]
where~$\CorrO{\sigma(u)}$ is given by~(\ref{eq_spin-explicit}) and in the right-hand side we assume that~$w$ is lifted onto~$\OmegaU$. Note the choice of~$w\in\OmegaU$ is irrelevant since~$\fU(w,z)=\fU(w^*,z^*)$.
\end{definition}

\begin{remark}
\label{rem:f_fdag_to_g}
It is not hard to see that for any fixed~$z\in\Omega\setminus\{u\}$ the following holds as~$a\to u$:
\[
\fU(a,z)\ \sim \ e^{i\frac{\pi}{4}}(a-u)^{-\frac{1}{2}}\cdot \spinor(z)\,,\qquad
\fstarU(a,z)\ \sim \ e^{-i\frac{\pi}{4}}(\overline{a}-\overline{u})^{-\frac{1}{2}}\cdot \spinor(z)\,.
\]
The easiest way to prove these asymptotics in the setup of our paper is to use the explicit formulae given above for~\mbox{$\Omega=\H$} and then to apply the covariance rules~(\ref{eq_covariance_f_fdag}),~(\ref{eq_covariance_spinor}) in order to treat the general case. More conceptually, one could derive these asymptotics by analyzing solutions to the boundary value problems from Definitions~\ref{def:feta} and~\ref{def:spinor}; see~\cite{CheHonIzy21} for details.
\end{remark}

\subsection{Schwarzian conformal covariance and singularities of correlation functions}
\label{sub:Schwarzian+OPE}

In this section we mention several CFT formulae for the two-point correlation functions involving the (holomorphic) stress-energy tensor~$T(w)$, in which the central charge~$c=\frac{1}{2}$ of the theory and the conformal dimensions~$\frac{1}{16}$ and~$\frac{1}{2}$ of the primary fields~$\varepsilon$ and~$\sigma$ appear. It should be said that all these claims are perfectly well-known in the CFT context. Nevertheless, let us emphasize that we do \emph{not} use any CFT concepts in our definitions of correlation functions in continuum and all these quantities are constructed starting with natural continuous counterparts of \emph{discrete} combinatorial observables discussed in Section~\ref{sec:via-fermions-disc}. From this perspective, the Schwarzian nature of the stress-energy tensor (see Proposition~\ref{prop:Schwarzian-covariance}) and the particular coefficients of singular terms in the asymptotics of correlation functions (see Proposition~\ref{prop:OPE}) appear as simple corollaries of the conformal covariance rules for the solutions to boundary value problems from Definitions~\ref{def:feta}~\ref{def:spinor} and~\ref{def:fetaU}.

\begin{proposition}
\label{prop:Schwarzian-covariance} Let~$\Omega$ be a simply connected domain,~$b_{},b'_{}\in\partial\Omega$ and~$a,u\in\Omega$. Then, each of the four functions
\[
\mathcal{F}_\Omega(w\,;\dots)~=~\CorrO{T(w)}\,,~\CorrO[\dob]{T(w)},~\frac{\CorrO{T(w)\varepsilon(a)}}{\CorrO{\varepsilon(a)}}\,,~ \frac{\CorrO{T(w)\sigma(u)}}{\CorrO{\sigma(u)}}
\]
is holomorphic in~$w\in\Omega$ (except at the points~$a$ and~$u$ for the third and the fourth ones, respectively) and obey the Schwarzian covariance
\begin{equation}
\label{eq_Schwarzian_covariance}
\mathcal{F}_\Omega(w\,;\dots)= \mathcal{F}_{\Omega'}(\varphi(w)\,;\dots)\cdot (\varphi'(w))^2+ {\textstyle\frac{1}{24}}[\schw\varphi](w)
\end{equation}
under conformal mappings~$\varphi:\Omega\to\Omega'$.
\end{proposition}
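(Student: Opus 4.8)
The plan is to deduce the Schwarzian covariance~\eqref{eq_Schwarzian_covariance} for each of the four functions directly from the conformal covariance rules already established for the building blocks~$\f$, $\fdag$, $\fU$, $\fdagU$, $\spinor$, together with the explicit definitions of the correlation functions in Section~\ref{sub:T-and-energy-via-fermions} and Section~\ref{sub:T-spin-correlations}. The holomorphicity of each~$\mathcal{F}_\Omega(w\,;\dots)$ in~$w$ (away from the indicated singular points) is immediate from those definitions, since the regularized functions~$\freg(w,\cdot)$, $\fregU(w,\cdot)$ are holomorphic near~$w$ and the ratios $\f(w,b)/\f(b,b')$ etc.\ depend holomorphically on~$w$; so the real content is the transformation rule. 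The key observation is that the inhomogeneous Schwarzian term~$\tfrac{1}{24}[\schw\varphi](w)$ is exactly what one gets from the first (one-point) summand alone, namely from~$\CorrO{T(w)}=\tfrac12\partial_z\freg(w,z)|_{z=w}$, while \emph{all} the remaining (two-point-type) summands transform as genuine quadratic differentials, i.e.\ pick up only the factor~$(\varphi'(w))^2$.

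First I would treat the base case~$\mathcal{F}_\Omega=\CorrO{T(w)}$. Writing~$\f(a,z)=\fprime(\varphi(a),\varphi(z))\cdot(\varphi'(a)\varphi'(z))^{1/2}$ from~\eqref{eq_covariance_f_fdag} and subtracting the pole~$(z-a)^{-1}$, a short computation (expanding~$\varphi(z)-\varphi(a)$ to third order in~$z-a$) shows that~$\tfrac12\partial_z\freg(w,z)|_{z=w}$ transforms precisely with the anomaly~$\tfrac{1}{24}[\schw\varphi](w)$ and the factor~$(\varphi'(w))^2$; this is the standard Schwarzian-from-the-free-fermion computation and is already recorded as~\eqref{eq_T_explicit}. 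The case of~$\CorrO[\dob]{T(w)}$ then follows by adding the correction term from Definition~\ref{def:T-Dobrushin-expectation}: applying~\eqref{eq_covariance_f_fdag} to each instance of~$\f$, the prefactors~$(\varphi'(b))^{1/2}$, $(\varphi'(b'))^{1/2}$ cancel between numerator and denominator (cf.\ Remark~\ref{rem:T-Dobrusin}(ii)), and differentiating~$\f(w,\cdot)$ in~$w$ produces one factor~$\varphi'(w)$ from each of the two derivatives, so the whole correction term scales by~$(\varphi'(w))^2$ with no extra anomaly. Hence~\eqref{eq_Schwarzian_covariance} holds for~$\CorrO[\dob]{T(w)}$.

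Next, for~$\CorrO{T(w)\varepsilon(a)}/\CorrO{\varepsilon(a)}$ I would use Definition~\ref{def:TT-and-Te-correlations}: the ratio equals~$\CorrO{T(w)}$ plus~$\tfrac{i}{2}\bigl(\f(a,w)\partial_w\fdag(a,w)-\fdag(a,w)\partial_w\f(a,w)\bigr)/\CorrO{\varepsilon(a)}$. Using~\eqref{eq_covariance_f_fdag} and~\eqref{eq_e_explicit}, every factor of~$(\varphi'(a))^{1/2}$ or~$|\varphi'(a)|^{1/2}$ arising from the~$a$-dependence cancels between the correction term and~$\CorrO{\varepsilon(a)}$, while the two occurrences of a~$w$-derivative again each release a factor~$\varphi'(w)$; thus the correction term is a quadratic differential in~$w$, and adding the one-point term gives the anomalous rule with the same constant~$\tfrac{1}{24}$. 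The spin case~$\CorrO{T(w)\sigma(u)}/\CorrO{\sigma(u)}=\tfrac12\partial_w\fregU(w,z)|_{z=w}$ is handled identically, now using the covariance rule for~$\fU$ (the spinor analogue of~\eqref{eq_covariance_f_fdag}, valid on the double cover): subtracting the pole and differentiating reproduces exactly the Schwarzian computation of the base case, since the branch-point data at~$u$ enters only through~$\spinor$-type factors that are $w$-independent. The main obstacle, and the only step requiring genuine care, is the base-case Schwarzian computation itself---verifying that~$\tfrac12\partial_z\freg(w,z)|_{z=w}$ transforms with precisely the coefficient~$\tfrac{1}{24}$ in front of~$[\schw\varphi](w)$---and checking that in the two-point cases the spurious prefactors at the ``other'' point (here~$b,b',a,u$) cancel cleanly; once these bookkeeping points are settled, the proposition follows by assembling the pieces.
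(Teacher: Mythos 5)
Your proposal is correct and follows essentially the same route as the paper: holomorphicity read off from the definitions, the one-point case $\CorrO{T(w)}$ (and, by the identical local computation, the spin ratio) derived from the covariance rule~(\ref{eq_covariance_f_fdag}) applied to the regularized fermion, and the Dobrushin and energy cases reduced to checking that the Wronskian-type correction terms transform as quadratic differentials in~$w$, with the $(\varphi'(b)\varphi'(b'))^{1/2}$ resp.\ $|\varphi'(a)|$ prefactors cancelling against $\f(b,b')$ resp.\ $\CorrO{\varepsilon(a)}$. The only difference is that you spell out the bookkeeping that the paper leaves as ``straightforward'' (pointing instead to the explicit formulae~(\ref{eq_T_explicit}) and~(\ref{eq_T_Dobrushin_explicit})).
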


\begin{proof} The holomorphicity of all these functions directly follows from their definitions and the holomorphicity of the functions~$\f(a,\cdot)$~$\fstar(a,\cdot)$ and~$\fU(a,\cdot)$. The Schwarzian covariance of the first function~$\CorrO{T(w)}$ can be easily derived from its definition and the conformal covariance rule~(\ref{eq_covariance_f_fdag}), see also the explicit formula~(\ref{eq_T_explicit}). Also, exactly the same local computation leads to~(\ref{eq_Schwarzian_covariance}) for the fourth function~$\CorrO{T(w)\sigma(u)}/\CorrO{\sigma(u)}$. To prove~(\ref{eq_Schwarzian_covariance}) for the second and the third functions, it is enough to check that the second terms in their definitions (see Definition~\ref{def:T-Dobrushin-expectation} and Definition~\ref{def:T-and-eps-one-point}, respectively) are multiplied by~$(\varphi'(w))^2$ when applying a conformal mapping~$\varphi:\Omega\to\Omega'$. This is straightforward, see also~(\ref{eq_T_Dobrushin_explicit}).
\end{proof}

\begin{proposition} \label{prop:OPE} Let~$\Omega$ be a simply connected domain and~$w',a,u\in\Omega$. Then, one has
\begin{align}
 \label{eq_OPE_TT} \langle T(w)T(w')\rangle_\Omega^+ &=
 \frac{1}{4}\cdot\frac{1}{(w-w')^4}+\frac{2\langle T(w')\rangle_\Omega^+}{(w-w')^2}
 +\frac{\partial_{w'} \langle T(w')\rangle_\Omega^+}{w-w'} + O(1)\quad\text{as}\ \ w\to w',\\
 \label{eq_OPE_Te} \langle T(w) \varepsilon(a)\rangle_\Omega^+ &=
 \frac{1}{2}\cdot\frac{\langle \varepsilon(a)\rangle_\Omega^+}{(w-a)^2}
 +\frac{\partial_a \langle \varepsilon(a)\rangle_\Omega^+}{w-a}+ O(1)\quad \ \text{as}\ \ w\to a,\\
 \label{eq_OPE_Ts} \langle T(w) \sigma(u)\rangle_\Omega^+ &=
  \frac{1}{16}\cdot \frac{\langle \sigma(u)\rangle_\Omega^+}{(w-u)^{2}}
  +\frac{\partial_u\langle \sigma(u)\rangle_\Omega^+}{w-u}+ O(1)\quad \text{as}\ \ w\to u.
\end{align}
\end{proposition}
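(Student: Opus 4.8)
All three identities~\eqref{eq_OPE_TT}--\eqref{eq_OPE_Ts} are local statements: they only describe the singular part of a function of $w$ near the point $w'$, $a$ or $u$, respectively. The plan is to read them off directly from the defining Pfaffian-type formulae of Definitions~\ref{def:TT-and-Te-correlations} and~\ref{def:Tsigma-correlation} by Laurent-expanding the relevant fermionic/spinor observables near the diagonal. For~\eqref{eq_OPE_TT} and~\eqref{eq_OPE_Te} this can be done by a self-contained computation valid in an arbitrary $\Omega$, using only the pole structure $f_\Omega(a,z)=(z-a)^{-1}+f^\sharp_\Omega(a,z)$ with $f^\sharp_\Omega$ regular, the holomorphicity of $f^\dag_\Omega(a,\cdot)$ near $a$, and the symmetries~\eqref{eq_symmetries_f}. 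For~\eqref{eq_OPE_Ts} I would instead first reduce to $\Omega=\H$ using the already established Schwarzian covariance (Proposition~\ref{prop:Schwarzian-covariance}), and then compute in $\H$ with the explicit formula for $f_{[\H,u]}$.

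For~\eqref{eq_OPE_TT}, write $f_\Omega(w,w')=(w'-w)^{-1}+h(w,w')$ with $h:=f^\sharp_\Omega$ holomorphic on a bidisc around the diagonal. The antisymmetry $f^\sharp_\Omega(z,a)=-f^\sharp_\Omega(a,z)$ from~\eqref{eq_symmetries_f} gives $h(w,w)=0$ and, after differentiating it, $\partial_1\partial_2 h(w,w)=0$; together with Definition~\ref{def:T-and-eps-one-point} this yields $\langle T(w)\rangle^+_\Omega=\tfrac12\,\partial_2 h(w,w)$ and $\partial_{w'}\langle T(w')\rangle^+_\Omega=\tfrac12\,\partial_2^2 h(w',w')$. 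Substituting the expansions of $f_\Omega$, $\partial_w f_\Omega$, $\partial_{w'}f_\Omega$, $\partial_w\partial_{w'}f_\Omega$ into $\tfrac14\big(\partial_w f_\Omega\cdot\partial_{w'}f_\Omega-f_\Omega\cdot\partial_w\partial_{w'}f_\Omega\big)$ and collecting powers of $s:=w'-w$, one obtains $\tfrac14 s^{-4}+s^{-2}\partial_2 h(w,w)+\tfrac12 s^{-1}\partial_2^2 h(w,w)+O(1)$ (the leading $\tfrac14=\tfrac c2$ with $c=\tfrac12$ comes from the cancellation $-s^{-4}+2s^{-4}$). Re-expanding $\partial_2 h$ and $\partial_2^2 h$ around the point $(w',w')$ turns this into exactly the right-hand side of~\eqref{eq_OPE_TT}, the product $\langle T(w)\rangle^+_\Omega\langle T(w')\rangle^+_\Omega$ being $O(1)$ near the diagonal.

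For~\eqref{eq_OPE_Te}, as $w\to a$ one has $f_\Omega(a,w)=(w-a)^{-1}+f^\sharp_\Omega(a,w)$ while $f^\dag_\Omega(a,\cdot)$ is holomorphic at $a$; plugging this into the second term of Definition~\ref{def:TT-and-Te-correlations} and expanding, the combination $f_\Omega(a,w)\partial_w f^\dag_\Omega(a,w)-f^\dag_\Omega(a,w)\partial_w f_\Omega(a,w)$ equals $f^\dag_\Omega(a,a)(w-a)^{-2}+2\big[\partial_w f^\dag_\Omega(a,w)\big]_{w=a}(w-a)^{-1}+O(1)$. It then suffices to recall $\langle\varepsilon(a)\rangle^+_\Omega=i f^\dag_\Omega(a,a)$ (Definition~\ref{def:T-and-eps-one-point}) and, using that $f^\dag_\Omega(\cdot,z)$ is anti-holomorphic by~\eqref{eq_symmetries_f} so that the holomorphic derivative $\partial_a$ only sees the second slot of the diagonal value, $\partial_a\langle\varepsilon(a)\rangle^+_\Omega=i\big[\partial_w f^\dag_\Omega(a,w)\big]_{w=a}$; this gives~\eqref{eq_OPE_Te} (again $\langle T(w)\rangle^+_\Omega\langle\varepsilon(a)\rangle^+_\Omega=O(1)$). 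The same mechanism underlies~\eqref{eq_OPE_Ts}, but there is an extra half-integer ingredient which is cleanest to handle in $\H$.

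For~\eqref{eq_OPE_Ts}, Proposition~\ref{prop:Schwarzian-covariance} applied to the fourth function gives, for a uniformization $\varphi:\Omega\to\H$ and $v:=\varphi(u)$,
\[
\frac{\langle T(w)\sigma(u)\rangle^+_\Omega}{\langle\sigma(u)\rangle^+_\Omega}\,=\,\frac{\langle T(\varphi(w))\sigma(v)\rangle^+_\H}{\langle\sigma(v)\rangle^+_\H}\,(\varphi'(w))^2+\tfrac1{24}[\schw\varphi](w),
\]
where the Schwarzian term is $O(1)$ as $w\to u$ since $\varphi$ is conformal at the interior point $u$. Hence it remains to (a) compute the $\H$ correlator explicitly and (b) do the change-of-variables bookkeeping. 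For (a), inserting the explicit formula $f_{[\H,u]}(a,z)=\tfrac{Q(a)}{Q(z)}\big((z-a)^{-1}+\tfrac{Q'(a)}{Q(a)}\big)$, with $Q(\xi):=((\xi-u)(\xi-\bar u))^{1/2}$, into Definition~\ref{def:Tsigma-correlation} and letting $z\to w$ yields $\langle T(w)\sigma(u)\rangle^+_\H/\langle\sigma(u)\rangle^+_\H=-\tfrac14\,Q''(w)/Q(w)$ (the overall sign being pinned down by consistency with the one-point formula, i.e.\ by the subleading term below); since $Q(w)\sim(u-\bar u)^{1/2}(w-u)^{1/2}$ near $u$, the Laurent expansion of $-\tfrac14 Q''/Q$ at $w=u$ is $\tfrac1{16}(w-u)^{-2}-\tfrac1{8(u-\bar u)}(w-u)^{-1}+O(1)$, whose singular part equals $\tfrac1{16}(w-u)^{-2}+\partial_u\log\langle\sigma(u)\rangle^+_\H\cdot(w-u)^{-1}$ by the explicit formula~\eqref{eq_spin-explicit}. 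For (b), using $\langle\sigma(u)\rangle^+_\Omega=(2|\varphi'(u)|)^{1/8}(\Im\varphi(u))^{-1/8}$ together with the Taylor expansions $\varphi(w)-v=\varphi'(u)(w-u)(1+\tfrac{\varphi''(u)}{2\varphi'(u)}(w-u)+\dots)$ and $(\varphi'(w))^2=(\varphi'(u))^2(1+\tfrac{2\varphi''(u)}{\varphi'(u)}(w-u)+\dots)$, one checks by the standard computation that a singularity $\tfrac{1/16}{(\tilde w-v)^2}+\tfrac{\partial_v\log\langle\sigma(v)\rangle^+_\H}{\tilde w-v}$ is transported to $\tfrac{1/16}{(w-u)^2}+\tfrac{\partial_u\log\langle\sigma(u)\rangle^+_\Omega}{w-u}$, and multiplying back by $\langle\sigma(u)\rangle^+_\Omega$ gives~\eqref{eq_OPE_Ts}.

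I expect the main obstacle to be~\eqref{eq_OPE_Ts}: unlike in the other two cases, the behaviour as $w\to u$ is governed by the interplay between the simple pole of $f_{[\Omega,u]}(w,z)$ at $z=w$ (needed to define the regularized spinor $f^\sharp_{[\Omega,u]}$) and the branch singularity $(z-u)^{-1/2}$ of the spinor at $u$, and keeping track of the resulting half-integer powers takes care. Reducing to $\H$ via the already-proven Schwarzian covariance is the cleanest way around this. A more intrinsic alternative would start from the asymptotics $f_{[\Omega,u]}(a,z)\sim\tfrac12(a-u)^{-1/2}g_{[\Omega,u]}(z)$ of Remark~\ref{rem:f_fdag_to_g}, but one would need a sharpening of that expansion to its next order (and a comparable control of $g_{[\Omega,u]}$ near $u$), which in the end amounts to the same explicit computation. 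The coefficient $\tfrac1{16}$ that emerges is of course the holomorphic conformal weight $h_\sigma$, forced by the exponent $-\tfrac12$ of the spinor.
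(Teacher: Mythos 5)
Your treatment of~\eqref{eq_OPE_TT} and~\eqref{eq_OPE_Te} is correct and is essentially the paper's own argument: the facts you extract from the antisymmetry of~$\freg$ and the anti-holomorphicity of~$\fdag(\cdot,z)$ (namely $\partial_2 h(w,w)=2\CorrO{T(w)}$, $\tfrac12\partial_2^2h(w,w)=\partial_w\CorrO{T(w)}$ and $\partial_a\CorrO{\varepsilon(a)}=i\partial_w\fdag(a,w)|_{w=a}$) are exactly the content of Lemma~\ref{lemma:second-coeff}, and the subsequent Laurent bookkeeping matches the proof in the text. For~\eqref{eq_OPE_Ts}, however, you take a genuinely different route. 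The paper stays in~$\Omega$: it integrates $(z-u)^{1/2}\fU(w,z)(z-w)^{-2}$ over a fixed contour surrounding $w$ and $u$, evaluates it by the residue at $z=w$ (which produces the $\tfrac1{16}$ and the regularized diagonal derivative defining $\CorrO{T(w)\sigma(u)}$), and then lets $w\to u$ using only the leading asymptotics of Remark~\ref{rem:f_fdag_to_g} on the contour, identifying the resulting coefficient with $\mathcal{A}_\Omega(u)=2\partial_u\log\CorrO{\sigma(u)}$ straight from Definition~\ref{def:spinor}/\eqref{eq_sigma_def}. This is intrinsic (no uniformization, no use of~\eqref{eq_spin-explicit}), and it dissolves the obstacle you flag at the end: no next-order refinement of Remark~\ref{rem:f_fdag_to_g} is needed because the asymptotics is only used away from $u$. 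Your route instead reduces to~$\H$ via Proposition~\ref{prop:Schwarzian-covariance} (legitimate and non-circular, since that proposition is proved independently of the OPEs), computes $-\tfrac14 Q''/Q$ from the explicit $f_{[\H,u]}$, and transports the singular part back using~\eqref{eq_spin-explicit}; the covariance check you call "standard" does close ($\tfrac1{16}\varphi''/\varphi'+\varphi'\,\partial_v\log\Corr{\sigma(v)}_\H^+=\partial_u\log\CorrO{\sigma(u)}$), and the multiplicative normalization ambiguity of $\CorrO{\sigma(u)}$ is harmless since only its logarithmic derivative enters. The only soft spot is your parenthetical fixing the overall sign of the $\H$-correlator "by consistency": compute it outright from Definition~\ref{def:Tsigma-correlation} as the diagonal $z$-derivative of the regularized spinor (as in the paper's residue computation), which gives $-\tfrac14 Q''(w)/Q(w)$ directly rather than by fiat.
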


We first prove a simple lemma, which identifies the second coefficients in the expansions of the functions~$\f(w,z)$ and~$\fstar(a,z)$ near the singularity.

\begin{lemma} \label{lemma:second-coeff}
Let~$\Omega$ be a simply connected domain and~$w,a\in\Omega$. Then, the following asymptotic expansions hold as~$z\to w$ and~$z\to a$, respectively:
\begin{align*}
\tfrac{1}{4}\f (w,z) &= ~\tfrac{1}{2}(z-w)^{-1}\,+\, \langle T (w)\rangle^+_\Omega \cdot (z-w)
+ \tfrac{1}{2}\partial_w\CorrO{T (w)} \cdot (z-w)^2 + O((z-w)^3)\,,\\
\tfrac{i}{2}\fstar (a,z) & = \langle\varepsilon (a) \rangle^+_\Omega + \partial_a\langle\varepsilon (a) \rangle^+_\Omega \cdot (z-a) + O((z-a)^2)\,.
\end{align*}
\end{lemma}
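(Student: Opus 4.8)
The plan is to prove the two displays separately; each is a short Taylor expansion combined with one of the symmetry relations~\eqref{eq_symmetries_f}.

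For~$\f(w,z)$ I would work with the regular part~$\freg(w,z)=\f(w,z)-(z-w)^{-1}$, which by Proposition~\ref{propContFermDecomp} is bounded, hence holomorphic in~$z$, near~$z=w$. Since~$(z-w)\f(w,z)$ is separately holomorphic in both variables (using that~$\f(u,z)=-\f(z,u)$ by~\eqref{eq_symmetries_f}, so that~$\f(\cdot,z)$ is holomorphic off the diagonal) and locally bounded, it is jointly holomorphic by Hartogs, and therefore~$\freg$ extends to a jointly holomorphic function in a neighbourhood of the diagonal; write~$\freg(w,z)=\sum_{j\ge 0}c_j(w)(z-w)^j$ with holomorphic coefficients~$c_j$. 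By Definition~\ref{def:T-and-eps-one-point} one has~$c_1(w)=2\CorrO{T(w)}$, so it remains to pin down~$c_0$ and~$c_2$. For this I would use that~$\freg(w,z)=-\freg(z,w)$, which follows at once from~$\f(w,z)=-\f(z,w)$ together with~$(z-w)^{-1}+(w-z)^{-1}=0$. Setting~$z=w$ gives~$c_0\equiv 0$. Expanding the antisymmetry one order further — comparing the Taylor series in~$h$ of~$\freg(w,w+h)$ and of~$-\freg(w+h,w)$, where in the latter the coefficients~$c_j(w+h)$ themselves have to be expanded — the order-$h^2$ terms give~$2c_2(w)=c_1'(w)$, i.e.~$c_2(w)=\partial_w\CorrO{T(w)}$. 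Adding back~$(z-w)^{-1}$ yields the first display.

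For~$\fdag(a,z)$ the argument is even shorter. By Proposition~\ref{propContFermDecomp} this function is bounded, hence holomorphic in~$z$, near~$z=a$, so~$i\fdag(a,z)=i\fdag(a,a)+i\,\partial_z\fdag(a,z)\big|_{z=a}\cdot(z-a)+O((z-a)^2)$; the constant term is~$\CorrO{\varepsilon(a)}$ by Definition~\ref{def:T-and-eps-one-point}. For the linear coefficient I would read~$\partial_a\CorrO{\varepsilon(a)}$ as~$\partial_a\bigl[i\fdag(a,a)\bigr]$ with~$\partial_a=\tfrac12(\partial_x-i\partial_y)$ acting on this (real-analytic, not holomorphic) function of~$a$, and apply the chain rule: by the second relation in~\eqref{eq_symmetries_f} the map~$\fdag(\cdot,z)$ is anti-holomorphic in its first argument, so only the second-slot derivative survives and~$\partial_a\bigl[\fdag(a,a)\bigr]=\partial_z\fdag(a,z)\big|_{z=a}$. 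Hence the linear coefficient above is exactly~$\partial_a\CorrO{\varepsilon(a)}$, which gives the second display.

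The whole computation is routine; the only mildly delicate points are the bookkeeping in the antisymmetry expansion for~$\f$ (it is the coefficients of~$\freg(w+h,w)$, not of~$\freg(w,w+h)$, that get differentiated) and making explicit in which sense~$\partial_a$ acts on the non-holomorphic quantity~$\CorrO{\varepsilon(a)}$ in the~$\fdag$-case. As an alternative that sidesteps the symmetry arguments, one can simply substitute the explicit formulae~\eqref{eq_f_fdag_explicit} for a uniformization~$\varphi:\Omega\to\H$, Taylor-expand~$\varphi$ around~$w$ (respectively~$a$), and match the resulting coefficients with~\eqref{eq_T_explicit} and~\eqref{eq_e_explicit}; this is more mechanical but somewhat longer.
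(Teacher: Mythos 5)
Your argument is correct and follows essentially the same route as the paper: the constant and quadratic coefficients of $\freg$ are pinned down by the antisymmetry $\f(w,z)=-\f(z,w)$ (your order-$h^2$ Taylor comparison is just the paper's identity $0=\partial_w\partial_z\freg(w,z)\big|_{z=w}=2\partial_w\CorrO{T(w)}-2c_2(w)$ in disguise), and the linear coefficient of $i\fdag(a,\cdot)$ is identified exactly as in the paper by using that $\fdag(\cdot,z)$ is anti-holomorphic in its first argument, so only the second-slot derivative contributes to $\partial_a[i\fdag(a,a)]$. The extra Hartogs remark and the alternative via the explicit formulae~\eqref{eq_f_fdag_explicit} are fine but not needed beyond what the paper implicitly assumes.
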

\begin{proof}
Let us start with the second expansion. The definition of~$\CorrO{\varepsilon(a)}$ yields
\[
\tfrac{i}{2}\fstar(a,z)= \langle\varepsilon (a) \rangle^+_\Omega + c_1(a)\cdot (z-a) + O((z-a)^2)\,,\quad z\to a\,,
\]
with some (unknown) function~$c_1(a)$. Since the function~$\fstar(a,z)$ is anti-holomorphic in~$a$, one easily concludes that
\[
0=\tfrac{i}{2}\partial_a\fstar(a,z)\big|_{z=a}=\partial_a\CorrO{\varepsilon(a)} - c_1(a)\,.
\]
To derive the first expansion, note that~$\freg(w,w)=0$ since~$\f(w,z)= -\f(z,w)$ for all~$w,z$. Therefore, our definition of~$\CorrO{T(w)}$ yields
\[
\tfrac{1}{4}\freg(w,z) = \CorrO{T(w)} \cdot (z-w) + c_2(w)\cdot (z-w)^2 + O((z-w)^3)\,,\quad z\to w,
\]
for some~$c_2(w)$. Using the symmetry~$\f(w,z)=-\f(z,w)$ once again, one easily arrives at
\[
0=\tfrac{1}{4}\partial_w\partial_z\freg(w,z)\big|_{z=w} = \partial_w\CorrO{T(w)} - 2c_2(w)\,,
\]
which gives the result.
\end{proof}

\begin{proof}[Proof of Proposition~\ref{prop:OPE}] Recall that the two-point functions~$\CorrO{T(w)T(w')}$ and~$\CorrO{T(w)\varepsilon(a)}$ are given by Definition~\ref{def:TT-and-Te-correlations}. Thus, one can use the expansions provided by Lemma~\ref{lemma:second-coeff} to compute the singular parts of these functions as~$w'\to w$ and~$w\to a$, respectively. In particular, straightforward computations based on the asymptotics
\begin{align*}
&\tfrac{1}{4}\f(w,w')=\, \tfrac{1}{2}(w'\!-\!w)^{-1}\!+\CorrO{T(w)}\cdot(w'\!-\!w)+\tfrac{1}{2}\partial_{w}\CorrO{T(w)}\cdot(w'\!-\!w)^2+O((w'\!-\!w)^3), \\
&\tfrac{1}{4}\partial_{w'}\f(w,w') =\, -\tfrac{1}{2}(w'\!-\!w)^{-2}\!+\CorrO{T(w)}+\partial_{w}\CorrO{T(w)}\cdot(w'\!-\!w)+O((w'\!-\!w)^2), \\
&\tfrac{1}{4}\partial_{w}\f(w,w') =\,\tfrac{1}{2}(w'\!-\!w)^{-2}\! - \CorrO{T(w)} + O((w'\!-\!w)^2), \\
&\tfrac{1}{4}\partial_{w'}\partial_{w}\f(w,w') =\, -(w'\!-\!w)^{-3}\!+O(w'\!-\!w)
\end{align*}
lead to~(\ref{eq_OPE_TT}) and simpler computations give~(\ref{eq_OPE_Te}). To prove the last asymptotics~(\ref{eq_OPE_Ts}), note that Definition~\ref{def:Tsigma-correlation} and the Cauchy residue theorem imply the equality
\[
\frac{1}{2\pi i}\oint \frac{(z-u)^{\frac{1}{2}}\fU(w,z)dz}{(z-w)^2}\ = \ (w-u)^{\frac{1}{2}}\cdot\left[\frac{4\CorrO{T(w)\sigma(u)}}{\CorrO{\sigma(u)}}~-~\frac{1}{4(w-u)^2} \right],
\]
provided the (fixed) contour of integration surrounds both points~$w$ and~$u$. Passing to the limit~$w\to u$ and using Remark~\ref{rem:f_fdag_to_g} for all~$z$ on this contour, one concludes that
\[
\frac{e^{i\frac{\pi}{4}}}{2\pi i}\oint \frac{\spinor(z)dz}{(z-u)^{3/2}}\ =\ (w-u)\cdot\left[\frac{4\CorrO{T(w)\sigma(u)}}{\CorrO{\sigma(u)}}~-~\frac{1}{4(w-u)^2} \right],
\]
The formula~(\ref{eq_OPE_Ts}) easily follows since the left-hand side is equal to
\[
e^{i\frac{\pi}{4}}(z-u)^{-\frac{1}{2}}\spinorReg(z)\big|_{z=u}~=~2\mathcal{A}_\Omega(u)~=~4\partial_u\CorrO{\sigma(u)}\big/\CorrO{\sigma(u)}
\]
due to Definition~\ref{def:spinor}.
\end{proof}

We conclude this section by mentioning that one can use the language of solutions to boundary value problems for holomorphic functions and spinors discussed above to construct \emph{all} the correlations of the fields~$\sigma,\varepsilon$ and~$T$, as well as the anti-holomorphic counterpart~$\overline{T}$ of the stress-energy tensor, for any boundary conditions~$\b=\{b_1,\dots,b_{2m}\}$\,. If the spin field is not involved, this simply amounts to writing down a proper Pfaffian formula in the spirit of Definitions~\ref{def:T-Dobrushin-expectation},~\ref{def:TT-and-Te-correlations}, and Remarks~\ref{rem:T-Dobrusin},~\ref{rem:TT-and-Te-Pfaff}; cf.~\cite{Hon10} where the multi-point expectations~$\CorrO[\b]{\varepsilon(a_1)\dots\varepsilon(a_n)}$ were considered. In the presence of several spins~$\sigma(u_1),\dots,\sigma(u_k)$, the same techniques can be applied for correlations normalized by~$\CorrO{\sigma(u_1)\dots\sigma(u_k)}$ similarly to Definition~\ref{def:Tsigma-correlation}; cf.~\cite[Section~5]{IkhCar09} where the ratios~$\CorrO[\b]{\sigma(u_1)\dots\sigma(u_k))}/\CorrO{\sigma(u_1)\dots\sigma(u_k))}$ were discussed. Finally, the spin correlations~$\CorrO{\sigma(u_1)\dots\sigma(u_k)}$ themselves were treated in~\cite{CheHonIzy15}.

\setcounter{equation}{0}
\section{Convergence theorems}
\label{sec:Ising-convergence}

In this section we prove Theorems~\mbox{\ref{thm:Ising1}--\ref{thm:Ising3}}. Recall that, given a bounded simply connected domain~$\Omega\subset\C$, we denote by~$\Omega_\delta$ its lattice approximations on the regular honeycomb grid~$\C_\delta$ with the mesh size~$\delta$. For simplicity, one can define~$\Omega_\delta$ as the (largest connected component of the) union of all grid cells that are contained in~$\Omega$, together with appropriately chosen boundary half-edges, or just to assume that~$\Omega_\delta$ converges to~$\Omega$ in the Hausdorff sense as~$\delta\to 0$. It is worth mentioning that all results discussed below remain true if one assumes that discrete domains~$\Omega_\delta$ approximate~$\Omega$ in the so-called Carath\'eodory topology, which is a much weaker notion than the Hausdorff convergence, see~\cite[Chapter~1.4]{Pom92} and~\cite[Section~3.2]{CheSmi11} for definitions.

We use the same strategy as in the series of papers~\cite{HonSmi13,Hon10,CheIzy13,CheHonIzy15} devoted to convergence of various correlation functions in the Ising model to their scaling limits. Let~$w_\delta$ denote the face of the discrete domain~$\Omega_\delta$ containing~$w$,~$b_\delta$ denote the nearest to~$b\in\partial\Omega$ boundary edge of~$\Omega_\delta$ etc. As discussed in Section~\ref{sec:via-fermions-disc}, all the expectations~$\Edelta^+[T(w_\delta)]$, $\Edelta^{b_\delta^{\vphantom{\prime}},b'_\delta}[T(w_\delta)]$, $\Edelta^+[T(w_\delta)T(w'_\delta)]$ etc can be written in terms of discrete derivatives of the fermionic observables~$F_{\Omega_\delta}$ (or their spinor analogues). At the same time, their putative scaling limits~$\CorrO{T(w)}$, $\Corr{T(w)}_\Omega^{\dob}$, $\CorrO{T(w)T(w')}$ etc can be \emph{defined} in a similar manner using derivatives of the continuous functions~$\f$ and~$\fstar$ (or their spinor analogues), as discussed in Section~\ref{sec:CFT-correlations}. Therefore, to prove Theorems~\mbox{\ref{thm:Ising1}--\ref{thm:Ising3}} it is essentially enough to prove the $C^1$-convergence of (appropriately normalized) discrete fermionic observables~$F_{\Omega_\delta}$ to their continuous counterparts. The first result of this kind was obtained by the third author in~\cite{Smi10} (in a slightly different context of the random-cluster representation of the Ising model aka FK-Ising model). Later, more involved discrete complex analysis techniques treating such convergence problems were developed in~\cite{CheSmi11,CheSmi12,HonSmi13,Hon10,CheIzy13,CheHonIzy15,CheHonIzy21}. 
Below we do not need any new machinery and just provide relevant references.

\subsection{Convergence of discrete holomorphic observables} \label{sub:convergence-fermions}
In this section we state the needed convergence results (as~$\delta\to 0$) for the discrete holomorphic observables discussed in Sections~\ref{sec:combi} and~\ref{sec:via-fermions-disc} to holomorphic functions defined in Section~\ref{sec:CFT-correlations}.

Recall that, given a half-edge~$a_\delta$ of~$\Omega_\delta$, the complex-valued fermionic observable~$\Fdelta(a_\delta,\,\cdot\,)$ defined by~(\ref{eq_F_complex_def}) can be viewed as a solution to some discrete Riemann-type boundary value problem in~$\Omega^\delta$. In particular, this function is s-holomorphic everywhere in~$\Omega_\delta$ except at~$a_\delta$. Moreover, similarly to~\cite{HonSmi13} one can \emph{explicitly} construct a function~$F_{\C_\delta}(a_\delta,\,\cdot\,)$ such that,
 %which can be viewed as the infinite-volume limit of the funcitons~$\Fdelta(a_\delta,\,\cdot\,)$. In particular,
for any discrete domain~$\Omega_\delta$, the difference
\begin{equation}
\label{eq:Fdag=}
\Fdelta^\dag(a_\delta,\,\cdot\,)~:=~\Fdelta(a_\delta,\,\cdot\,) - F_{\C_\delta}(a_\delta,\,\cdot\,)
\end{equation}
is s-holomorphic \emph{everywhere} in~$\Omega_\delta$, including the midedge~$a_\delta$.
The construction of~$F_{\C_\delta}(a_\delta,\,\cdot\,)$ is given in the Appendix, here
we only mention two of its properties leading to the explicit constants in Theorems~\mbox{\ref{thm:Ising1}--\ref{thm:Ising3}}. { Recall that~$\eta_{a_\delta}=\eta(\rho_{a_\delta})$ is defined by~\eqref{eq:eta=rho}.}
\begin{itemize}
\item \emph{Asymptotic behavior.} For midedges~$z_{e_\delta}$ such that~$|z_{e_\delta}-z_{a_\delta}|^{-1}=o(\delta^{-1})$, one has
\begin{equation}
\label{eq_full_plane_asymptotics}
\delta^{-1}F_{\C_\delta}(a_\delta,z_{e_\delta}) ~=~ \tfrac{\sqrt{3}}{2}\,{\overline{\eta}_{a_{\delta}}}{(z_{e_\delta}\!-z_{a_\delta})^{-1}}+o(1)\quad \text{as}\ \ \delta\to 0\,.
\end{equation}
\item \emph{Local values.} One has
\begin{align}
\label{eq_local_values_zero}
F_{\C_\delta}(a_\delta,z_{a_\delta})~& =~({-i\eta_{a_\delta}})\cdot (-{\textstyle\frac{1}{6}})\,, \\
\label{eq_local_values} F_{\C_\delta}(a_\delta,z_{a_\delta}^\pm)~& =~({-i\eta_{a_\delta}})\cdot [{\textstyle(1-\frac{2\sqrt{3}}{\pi})}~\pm~i(\textstyle\frac{3}{\pi}-\frac{2}{\sqrt{3}})]\,,
\end{align}
\end{itemize}
where~$z_{a_\delta}^\pm:=z_{a_\delta}\pm\sqrt{3}\delta\cdot i{\rho_{a_\delta}}$ denote the midedges of the two neighboring to~$a_\delta$ edges of~$\C_\delta$, which are oriented in the same direction~$\rho_{a_\delta}$ as the (half-)edge~$a_\delta$.

The following convergence theorem for discrete fermionic observables~$\Fdelta(a_\delta,\,\cdot\,)$ is a straightforward analogue of~\cite[Theorem~1.8]{HonSmi13}. In particular, to obtain this result, one can simply repeat the proof given in~\cite{HonSmi13} with minor modifications needed when dealing with the honeycomb grid instead of the square one. Alternatively, one can use a more robust scheme of proving convergence results of this kind, which was proposed in~\cite[Section~3.4]{CheHonIzy15}. Note that the latter approach does not require any regularity assumption on the boundary of~$\Omega^\delta$.

\begin{theorem}
\label{thm:convergence-fermions}
Given a bounded simply connected domain~$\Omega\subset\C$, two points~$a,z\in\Omega$ and~$\rho\in\wp^\pm$, let~$\Omega_\delta$ denote a discrete approximation of~$\Omega$ on the regular honeycomb grid~$\C_\delta$, $a_\delta$~be the closest to~$a\in\Omega$ half-edge of~$\Omega_\delta$ oriented in the direction~$\rho$, and~$z_{e_\delta}$ be the closet to~$z$ midedge of~$\Omega_\delta$. Then, the following convergence results hold as~$\delta\to 0$:
\begin{align*}
\delta^{-1}\cdot\Fdelta(a_\delta,z_{e_\delta})&\ \rightrightarrows\ \tfrac{\sqrt{3}}{2\pi}\feta(a,z),\qquad \quad z\ne a, \\
\delta^{-1}\cdot\Fdelta^\dag(a_\delta,z_{e_\delta})&\ \rightrightarrows\ {\tfrac{\sqrt{3}}{4\pi}}(\overline{\eta}\freg(a,z)+\eta\fstar(a,z)),
\end{align*}
where~$\eta=\eta(\rho)$, the function~$\feta(a,z)$ is given by Definition~\ref{def:feta}, and~$\freg(a,z)$, $\fstar(a,z)$ are defined in Proposition~\ref{propContFermDecomp}. Moreover, the first convergence is uniform (with respect to~$(a,z)$) on compact subsets of~$(\Omega\times\Omega)\setminus D$, where~$D:=\{(a,a),a\in\Omega\}$, and the second is uniform on compact subsets of~$\Omega\times\Omega$, in particular it holds near the diagonal~$D\subset\Omega\times\Omega$.
\end{theorem}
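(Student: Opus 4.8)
The plan is to reduce Theorem~\ref{thm:convergence-fermions} to the general machinery for proving convergence of s-holomorphic functions, exactly in the spirit of~\cite{HonSmi,Hon_thesis,ChHonIz15}. First I would fix~$a\in\Omega$ and~$\eta\in\wp$, pick the half-edge~$a_\delta$ as in the statement, and recall (see the Appendix) that~$\Fdelta(a_\delta,\cdot)$ is s-holomorphic everywhere in~$\Omega_\delta$ except at~$z_{a_\delta}$, satisfies the discrete boundary condition~\eqref{eq_F_boundary_conditions}, and that subtracting the explicit full-plane observable~$F_{\C_\delta}(a_\delta,\cdot)$ removes the singularity, so that~$\Fdelta^\sharp(a_\delta,\cdot)$ is s-holomorphic in \emph{all} of~$\Omega_\delta$ with the same boundary values as~$-F_{\C_\delta}(a_\delta,\cdot)$ on~$\partial\Omega_\delta$.

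The core of the argument is then the following: (i) an a~priori bound, via the Beurling-type estimate for s-holomorphic functions or via the discrete integral~$H_\delta$ of~$|\F|^2$ (harmonic majorization), showing that the family~$\{\delta^{-1}\Fdelta^\sharp(a_\delta,\cdot)\}$ is uniformly bounded on compact subsets of~$\Omega$, together with equicontinuity coming from s-holomorphicity; (ii) a compactness/normal-families argument extracting subsequential limits, which are automatically holomorphic in~$\Omega$ because the s-holomorphicity condition passes to the limit; (iii) identification of any subsequential limit as the unique solution of the boundary value problem of Proposition~\ref{propContFermDecomp}. For step~(iii) one uses the convergence of~$F_{\C_\delta}$ near~$z_{a_\delta}$: by~\eqref{eq_full_plane_asymptotics} the singular part of~$\delta^{-1}\Fdelta(a_\delta,\cdot)$ matches~$\tfrac{\sqrt3}{2\pi}\eta(z-a)^{-1}$, and the boundary values match~\eqref{eq_feta_boundary_conditions}; the uniqueness statement in Definition~\ref{def:feta} (and Remark~\ref{rem:feta-def}(i)) then pins the limit down to~$\tfrac{\sqrt3}{2\pi}\feta(a,z)$, whence the first displayed convergence. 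The second displayed convergence follows by applying the first one with both~$\eta$ and, say,~$i\eta$ (or~$\rho\eta$) and using the real-linear decomposition~\eqref{eq_feta_as_f_plus_fdag}: the combination~$\eta\freg+i\bar\eta\fdag$ is precisely what one reads off from~$\feta$ after removing the pole, and on the discrete side this is~$\Fdelta^\sharp$ by construction. The local values~\eqref{eq_local_values_zero}--\eqref{eq_local_values} are what guarantee that the subtracted singularity is the right one up to~$o(\delta)$ and are also needed later (through~$\CmT$, $\CmR$) but play no essential role in the convergence itself.

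The main obstacle is the convergence \emph{near the diagonal}, i.e.\ the second assertion holding uniformly on compact subsets of~$\Omega\times\Omega$ including~$D$: here one cannot treat~$a$ as fixed and must control the two-variable function~$(a_\delta,z_{e_\delta})\mapsto \delta^{-1}\Fdelta^\sharp(a_\delta,z_{e_\delta})$ uniformly as~$z\to a$. The resolution is to exploit the antisymmetry~$\Fdelta(a,e)=-\Fdelta(e,a)$ together with the fact that, after subtracting~$F_{\C_\delta}$, the regularized observable is genuinely s-holomorphic in \emph{both} variables simultaneously on a neighbourhood of the diagonal, so the same a~priori bounds apply; alternatively one invokes the bound on~$\Fdelta^\sharp$ uniformly over the position of~$a_\delta$ as in~\cite[Section~3.4]{ChHonIz15}. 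Once this uniform control is in place, the identification of the limit near~$D$ is again by uniqueness, using that~$\freg(a,a)=0$ and the symmetries~\eqref{eq_symmetries_f}. I would present steps~(i)--(iii) as references to the cited literature with the honeycomb-lattice modifications spelled out only where they differ, and give the near-diagonal argument in slightly more detail since that is the one genuinely new point; the analogous statement for spinor observables (Section~\ref{sub:convergence-spinors}) is then proved by the same scheme, which is why a self-contained proof is included there.
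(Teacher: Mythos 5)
Your proposal is correct and follows essentially the same route as the paper, which proves Theorem~\ref{thm:convergence-fermions} precisely by referring to the proof of \cite[Theorem~1.8]{HonSmi} or to the scheme spelled out for spinors in Theorem~\ref{thm:convergence-spinors}: s-holomorphicity of~$\Fdelta(a_\delta,\cdot)$ away from~$z_{a_\delta}$, the anti-derivative~$\Hdelta=\int\Re[(\delta^{-1}\Fdelta)^2dz]$ with sub/superharmonic components and Dirichlet boundary data, compactness and passage of s-holomorphicity to the limit, identification via the uniqueness of the boundary value problem of Definition~\ref{def:feta}, and the full-plane observable~(\ref{eq_full_plane_asymptotics}) both fixing the constant~$\tfrac{\sqrt{3}}{2\pi}$ and making~$\Fdelta^\sharp$ s-holomorphic through~$a_\delta$ so that the maximum principle gives the near-diagonal control. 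The one caution is that your step~(i) is not a one-line Beurling-type a~priori estimate: in the paper's implementation the uniform boundedness of~$\Hdelta$ is obtained by the normalization/blow-up contradiction argument (the claim~(\ref{enum_htilde_non_vanishing}) in the proof of Theorem~\ref{thm:convergence-spinors}), so your sketch should be read as invoking that argument (or the constructive estimates of~\cite{HonSmi}) rather than a direct bound.
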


\begin{proof}
	See~\cite[Theorem~1.8]{HonSmi13} or~\cite{CheHonIzy21}.
\end{proof}

The following corollary is a straightforward analogue of the main result of~\cite{HonSmi13} for discrete domains drawn on the refining honeycomb grids~$\C_\delta$ (instead of square ones considered in~\cite{HonSmi13}).

\begin{corollary}
\label{cor:energy-density-convergence} Given a bounded simply connected domain~$\Omega\subset\C$ and a point~$a\in\Omega$, let~$\Omega_\delta$ denote a discrete approximation of~$\Omega$ on the regular honeycomb grid~$\C_\delta$, and $a_\delta$~be the closest to~$a\in\Omega$ half-edge of~$\Omega_\delta$\,. Then, the following convergence holds as~$\delta\to 0$:
\[
\delta^{-1}\Edelta^+[\varepsilon(a_\delta)]\ \mathop{\to}_{\delta\to 0}\ \tfrac{\sqrt{3}}{\pi}\CorrO{\varepsilon(a)}\,,
\]
where the energy density~$\varepsilon(a_\delta)$ and the funciton~$\CorrO{\varepsilon(a)}$ are given by~(\ref{eq_e_definition}) and Definition~\ref{def:T-and-eps-one-point}.
\end{corollary}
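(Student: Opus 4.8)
The plan is to deduce this statement from the combinatorial identity~(\ref{eq_e_via_fermion}) together with the convergence theorem for discrete fermionic observables, Theorem~\ref{thm:convergence-fermions}. First I would fix the orientation of the boundary edge $a_\delta$ so that $\eta_{a_\delta}\in\{1,\rho,\rho^2\}$; then~(\ref{eq_e_via_fermion}) reads
\[
\Edelta^+[\varepsilon(a_\delta)]~=~\tfrac{1}{3}-2\Fdelta(a_\delta,\overline{a}_\delta)\,.
\]
Next I would split $\Fdelta(a_\delta,\overline{a}_\delta)$ into its full-plane part and the s-holomorphic remainder. Writing $\Fdelta^\sharp(a_\delta,\,\cdot\,)=\Fdelta(a_\delta,\,\cdot\,)-F_{\C_\delta}(a_\delta,\,\cdot\,)$ and passing from midedges back to half-edges (using $\Fdelta(a_\delta,a_\delta)=0$ together with the relation between the real- and complex-valued observables from Definition~\ref{def_fermion}), the local value~(\ref{eq_local_values_zero}) of the full-plane observable, $F_{\C_\delta}(a_\delta,z_{a_\delta})=-\tfrac16\overline{\eta}_{a_\delta}$, contributes exactly $\tfrac16$ to $\Fdelta(a_\delta,\overline{a}_\delta)$. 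Hence the constant $\tfrac13$ cancels and one is left with
\[
\Edelta^+[\varepsilon(a_\delta)]~=~-2\bigl[\Fdelta(a_\delta,\overline{a}_\delta)-\tfrac16\bigr],
\]
where the bracketed quantity is (the half-edge translation of) the value of the \emph{regular} part $\Fdelta^\sharp(a_\delta,\,\cdot\,)$ at its own point; in particular $\Edelta^+[\varepsilon(a_\delta)]\to 0$, and the whole issue is to capture the $O(\delta)$ correction.

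For the second step I would invoke Theorem~\ref{thm:convergence-fermions}: since $\Fdelta^\sharp(a_\delta,\,\cdot\,)$ is s-holomorphic \emph{everywhere} in $\Omega_\delta$ (including at $a_\delta$), one has $\delta^{-1}\Fdelta^\sharp(a_\delta,z_{e_\delta})\rightrightarrows\tfrac{\sqrt3}{2\pi}(\eta\freg(a,z)+i\overline{\eta}\fdag(a,z))$ uniformly on compact subsets of $\Omega\times\Omega$, in particular near the diagonal. Evaluating at $z=a$ and using $\freg(a,a)=0$ (which follows from the antisymmetry $\f(a,z)=-\f(z,a)$ of Proposition~\ref{propContFermDecomp}), only the $\fdag(a,a)$ term survives. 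Tracking the $\overline{\eta}_{a_\delta}$-prefactors through the half-edge translation gives
\[
\delta^{-1}\bigl[\Fdelta(a_\delta,\overline{a}_\delta)-\tfrac16\bigr]~\longrightarrow~-\tfrac{\sqrt3}{2\pi}\,i\fdag(a,a)\,,
\]
so that $\delta^{-1}\Edelta^+[\varepsilon(a_\delta)]\to\tfrac{\sqrt3}{\pi}\,i\fdag(a,a)=\tfrac{\sqrt3}{\pi}\CorrO{\varepsilon(a)}$ by Definition~\ref{def:T-and-eps-one-point}; uniformity on compact subsets of $\Omega$ is inherited from Theorem~\ref{thm:convergence-fermions}.

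The only genuinely delicate point is the bookkeeping of normalizations: matching the factor $\tfrac{\sqrt3}{2\pi}$ coming from the full-plane asymptotics~(\ref{eq_full_plane_asymptotics}), the factor $2$ in~(\ref{eq_e_via_fermion}), the square-root-of-direction prefactors, and (most importantly) verifying that the full-plane observable contributes \emph{precisely} $\tfrac16$ to $\Fdelta(a_\delta,\overline{a}_\delta)$ — this is exactly what makes the leading constant cancel and exposes the first-order term. Beyond this arithmetic, the statement is an immediate consequence of the cited convergence theorem; indeed, it is nothing but the honeycomb-lattice analogue of the main result of~\cite{HonSmi}, and the same proof scheme (or the one from~\cite{ChHonIz15}) applies verbatim after replacing the square grid by $\C_\delta$.
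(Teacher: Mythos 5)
Your proposal is correct and follows essentially the same route as the paper's own proof: the identity~(\ref{eq_e_via_fermion}), cancellation of the constant~$\tfrac13$ via the full-plane local value~(\ref{eq_local_values_zero}) so that only~$2\Re[\eta\Fdelta^\sharp(a_\delta,z_{a_\delta})]$ remains, and then Theorem~\ref{thm:convergence-fermions} near the diagonal together with~$\freg(a,a)=0$ and~$i\fdag(a,a)=\CorrO{\varepsilon(a)}$. (A minor slip: $a_\delta$ is an interior half-edge, not a boundary one, and the freedom to fix its orientation comes from~$\varepsilon(\overline{a}_\delta)=\varepsilon(a_\delta)$; this does not affect the argument.)
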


\begin{proof}
Since~$\varepsilon(\overline{a}_\delta)=\varepsilon(a_\delta)$ does not depend on the orientation of~$a_\delta$, it is enough to consider the case~$\eta_{a_\delta}=\eta\in\wp^+$. Recall that the following identities hold (see~(\ref{eq_e_via_fermion}) and~(\ref{eq_F_real_via_complex})):
\begin{align*}
\Edelta^+[\varepsilon(a_\delta)] ~=~ {\textstyle{\frac{1}{3}}}-2\F(a_\delta,\overline{a}_\delta)~=&~ \tfrac{1}{3}-2\Re[{\overline{\eta}_{\overline{a}_\delta}}\F(a_\delta,z_{a_\delta})] \\
~=&~ \tfrac{1}{3}-2\Im[{\overline{\eta}_{a_\delta}}\F(a_\delta,z_{a_\delta})] ~=~ -2\Im[{\overline{\eta}_{a_\delta}}\F^\dag(a_\delta,z_{a_\delta})]\,,
\end{align*}
where we also used~(\ref{eq_local_values_zero}). Theorem~\ref{thm:convergence-fermions} now implies
\[
\delta^{-1}\Edelta^+[\varepsilon(a_\delta)] ~\mathop{\to}_{\delta\to 0}~ {-\tfrac{\sqrt{3}}{2\pi}\Im}[{\overline{\eta}^2}\freg(a,a)+\fstar(a,a)]~=~\tfrac{\sqrt{3}}{\pi}\CorrO{\varepsilon(a)}\,.
\]
since~$\freg(a,a)=-\freg(a,a)=0$ and~$\fstar(a,a)=-{2i}\CorrO{\varepsilon(a)}$.
\end{proof}

We also need an analogue of Theorem~\ref{thm:convergence-fermions} for the branching observables~$\FdeltaU(a_\delta,\cdot)$ discussed in Section~\ref{sub:T-spin-via-spinors}. Similarly to~(\ref{eq:Fdag=}), let
\begin{equation}
\label{eq:FUdag=}
\FdeltaU^\dag(a_\delta,\cdot)~:=~\FdeltaU(a_\delta,\cdot)-F_{\C_\delta}(a_\delta,\cdot)
\end{equation}
As in the non-branching case, the difference~(\ref{eq:FUdag=}) is s-holomorphic everywhere \emph{near}~$a_\delta$. Note that, contrary to~(\ref{eq:Fdag=}), this function is defined only in a vicinity of~$a_\delta$ since the full-plane observable~$F_{\C_\delta}$ does not respect the spinor property of~$\FdeltaU$\,.

\begin{theorem}
\label{thm:convergence-spinors}
Given a bounded simply connected domain~$\Omega\subset\C$, two distinct points~$a,u\in\Omega$ and~$\rho\in\wp^\pm$, let~$\Omega_\delta$ denote a discrete approximation to~$\Omega$ (converging in the Carath\'eodory sense as~$\delta\to 0$, see~\cite[Section~3.2]{CheSmi11} for definitions), $u_\delta$~be the face of~$\Omega_\delta$ containing~$u$, and~$a_\delta$ be the closest to~$a\in\Omega$ half-edge of~$\Omega_\delta$ oriented in the direction~$\rho$, which we assume to be lifted on the discrete double-cover~$[\Omega_\delta,u_\delta]$. For a point~$z\in\OmegaU$, let~$z_{e_\delta}$ denote the closest to~$z$ midedge of~$[\Omega_\delta,u_\delta]$. Then, the following convergence results hold as~$\delta\to 0$:
\begin{align}
\label{eq_spinor_convergence}
\delta^{-1}\FdeltaU(a_\delta,z_{e_\delta})&\ \rightrightarrows\ \tfrac{\sqrt{3}}{2\pi}\fetaU(a,z),\quad\qquad z\ne a,a^*, \\
\delta^{-1}\FdeltaU^\dag(a_\delta,z_{e_\delta})&\ \rightrightarrows\ {\tfrac{\sqrt{3}}{4\pi}}({\overline{\eta}}\fregU(a,z)+{\eta}\fstarU(a,z)),
\label{eq_spinor_reg_convergence}
\end{align}
where~$\eta=\eta(\rho)$ and~$\feta(a,z)$, $\freg(a,z)$, $\fstar(a,z)$ are defined in Section~\ref{sub:T-spin-correlations}. Moreover, the first convergence is uniform (with respect to~$(a,z)$)
on compact subsets of~$(\OmegaU\times\OmegaU)\setminus D$, where~$D:=\{(a,a),a\in\OmegaU\}\cup\{(a,a^*)\in\OmegaU\}\cup\{u\}$, and the second is uniform for $a\in\OmegaU$ lying in compact subsets of~$\OmegaU\setminus\{u\}$ and~$z$ in a vicinity of~$a$.
\end{theorem}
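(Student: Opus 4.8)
The proof of Theorem~\ref{thm:convergence-spinors} follows the abstract scheme for convergence of s-holomorphic observables developed in~\cite{Iz_thesis,ChHonIz15}, adapted to the honeycomb double-cover $[\Omega_\delta,u_\delta]$. The key point is that both the discrete spinor observables $\FdeltaU(a_\delta,\cdot)$ and their continuous counterparts $\fetaU(a,\cdot)$ are singled out by the same structural properties: s-holomorphicity (resp. holomorphicity) away from the singularities, the spinor sign-flip $F(z^*)=-F(z)$ between the two sheets, the Riemann-type boundary condition $\Im[F(\zeta)\sqrt{\outnorm(\zeta)}]=0$, the prescribed simple-pole behavior $\eta\cdot(z-a)^{-1}$ at the source $a$, and a controlled branching behavior $\sim ic\cdot(z-u)^{-1/2}$ at $u$ with $c\in\R$. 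Convergence then reduces to a compactness-plus-identification argument: show that the normalized family $\delta^{-1}\FdeltaU(a_\delta,\cdot)$ is precompact, that every subsequential limit is a holomorphic spinor with the right boundary condition and the right local data, and that the boundary value problem from Definition~\ref{def:fetaU} has a unique solution (proved just as in Remark~\ref{rem:feta-def}(i) by integrating $(f(\zeta))^2\,d\zeta$ over $\partial\OmegaU$ and contradicting the boundary condition).

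\textbf{Steps in order.} First I would split off the singular parts: near $a_\delta$ subtract $F_{\C_\delta}(a_\delta,\cdot)$ to obtain $\FdeltaU^\sharp$, which is s-holomorphic in a full neighborhood of $a_\delta$; near $u_\delta$ subtract a suitable discrete full-plane spinor with the $(z-u)^{-1/2}$ singularity (the discrete analogue of $i c\,(z-u)^{-1/2}$), so that the remaining function is s-holomorphic throughout $[\Omega_\delta,u_\delta]$ away from $\partial\Omega_\delta$. Second, I would invoke the a priori bounds for s-holomorphic functions with the given boundary conditions (equicontinuity of the associated discrete primitives of $F_\delta^2$, Harnack-type estimates for the harmonic components $F_\delta^\mu$, as in~\cite[Section~3]{ChSmi_Universal} and~\cite{ChHonIz15}) to get uniform boundedness and precompactness of $\delta^{-1}\FdeltaU$ on compact subsets away from $a,a^*,u$ and, for $\FdeltaU^\sharp$, near $a$. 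Third, pass to a subsequential limit $f$: it is a holomorphic spinor on $\OmegaU$, satisfies $\Im[f(\zeta)\sqrt{\outnorm(\zeta)}]=0$ on $\partial\OmegaU$ (the discrete boundary condition passes to the limit since $\eta_b=[-\outnorm(b)]^{1/2}$ is locally constant along the straight boundary pieces used in the statement), has the correct residue at $a$ — read off from~\eqref{eq_full_plane_asymptotics} that $\delta^{-1}F_{\C_\delta}(a_\delta,z_{e_\delta})\to \tfrac{\sqrt 3\,\eta}{2\pi(z-a)}$, so the pole of $\delta^{-1}\FdeltaU$ has the right strength — and has branching of the form $ic\,(z-u)^{-1/2}$ with $c\in\R$ (the sign-flip symmetry forces the half-integer power, and the reality of $c$ comes from the boundary condition combined with the spinor symmetry, exactly as the unknown real constant in Definition~\ref{def:fetaU}). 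Fourth, apply the uniqueness statement for the boundary value problem of Definition~\ref{def:fetaU} to conclude that $f=\tfrac{\sqrt3}{2\pi}\fetaU(a,\cdot)$; since the limit is independent of the subsequence, the full family converges, proving~\eqref{eq_spinor_convergence}. For~\eqref{eq_spinor_reg_convergence}, the same argument applied to $\FdeltaU^\sharp$ near $a$ identifies the limit with $\tfrac{\sqrt3}{2\pi}(\eta\fregU(a,z)+i\overline\eta\fdagU(a,z))$, using the decomposition $\fetaU = \eta\fU + i\overline\eta\fdagU$ recorded after Definition~\ref{def:fetaU}; the local values~\eqref{eq_local_values_zero},~\eqref{eq_local_values} of $F_{\C_\delta}$ pin down the constant and regular terms.

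\textbf{Main obstacle.} The genuinely delicate point is the behavior at the branch point $u$: unlike in the non-branching case, $F_{\C_\delta}$ does not respect the spinor property, so one cannot simply subtract a global full-plane object, and one must construct by hand an appropriate \emph{local} discrete spinor carrying the $(z-u)^{-1/2}$ singularity on $[\Omega_\delta,u_\delta]$ and show that the difference is s-holomorphic there. Equally, one must verify that the a priori estimates for s-holomorphic functions survive the passage to the double-cover — i.e. that the discrete maximum-principle and Harnack-type inputs hold on $[\Omega_\delta,u_\delta]$ near $u_\delta$ despite the monodromy — and that the limiting constant $c$ at $u$ is real. These are precisely the technical ingredients handled in~\cite{ChHonIz15} in the square-lattice setting, and the work here is to transcribe them to the refining honeycomb grids $\C_\delta$; I would organize this as a sequence of cited lemmas (s-holomorphicity and its preservation under subtraction, a priori bounds, boundary regularity on straight pieces of $\partial\Omega$ as in Remark~\ref{rem:conv-on-boundary}) rather than reproving them, and concentrate the new argument on the compactness-and-identification skeleton above.
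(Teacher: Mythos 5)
Your overall skeleton (compactness plus identification through uniqueness of the boundary value problem of Definition~\ref{def:fetaU}) is indeed the paper's strategy, but there is a genuine gap at the step you dispose of in one sentence: you ``invoke the a priori bounds \dots to get uniform boundedness and precompactness'' of $\delta^{-1}\FdeltaU$. No such bound is available off the shelf. The regularity theory you cite (e.g.\ the equicontinuity statement in \cite[Theorem~3.12]{ChSmi_Universal}) takes uniform boundedness of the primitive $\Hdelta=\int\Re[(\delta^{-1}\FdeltaU(a_\delta,z_\delta))^2dz_\delta]$ as a \emph{hypothesis}, and the normalization of the spinor observable by $\E^+[\sigma(u)]\cZ_\Omega^+$ gives no a priori control of its size. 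This is precisely where the bulk of the paper's proof lives: one first records the structural properties of $\Hdelta$ (sub/superharmonicity of $\Hdelta^\circ$, $\Hdelta^\bullet$, the comparability estimate between them, Dirichlet boundary values), then argues by dichotomy. If $M_\delta(r):=\max_{\Omega_\delta(r)}|\Hdelta|$ stayed bounded one runs your identification argument; to exclude the blow-up scenario one renormalizes by $M_\delta(r)$, shows that the subsequential limit $\widetilde h$ of $\tHdelta$ cannot vanish identically (this requires the three-case analysis of where the maximizer sits near $\partial B_r(a)$ or $\partial B_r(u)$, using the $\Hdelta^\bullet$ versus $\Hdelta^\circ$ comparability and harmonic-measure estimates), propagates the bound from $\Omega(r)$ to $\Omega(r')$, and then derives a contradiction because the normalization kills the pole at $a$, so the limit would solve the homogeneous problem, which has only the trivial solution. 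Without this argument your proof is circular at the compactness step.

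You also locate the ``genuinely delicate point'' in constructing a discrete spinor kernel at $u$ to subtract, so that the remainder is s-holomorphic near $u_\delta$. The paper neither does this nor needs it, and as stated it cannot be carried out: the coefficient of the $(z-u)^{-1/2}$ singularity is an \emph{unknown} of the boundary value problem (it depends on $\Omega$, $a$, $u$), so there is no canonical discrete object to subtract with the right amplitude. Instead, the behavior at the branch point is controlled purely through the observation that $\Hdelta^\bullet$ is superharmonic in a whole neighborhood of $u_\delta$ (only $\Hdelta^\circ$ has a defect at the face $u_\delta$), hence bounded below there; in the limit this forces $h(z)=-c\log|z\!-\!u|+O(1)$ with $c\ge 0$ and therefore $f(z)= i\sqrt{c}\,(z-u)^{-1/2}+O(|z-u|^{1/2})$, exactly the admissible singularity with real constant required in Definition~\ref{def:fetaU}. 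A secondary inaccuracy: the theorem assumes only Carath\'eodory convergence of $\Omega_\delta$, so the boundary condition cannot be passed to the limit via ``straight boundary pieces'' and local constancy of $\eta_b$ (that device belongs to the Dobrushin part of Theorem~\ref{thm:Ising1}); in this proof it is recovered from the Dirichlet condition $\Hdelta=0$ on $\partial\Omega_\delta$ together with weak Beurling-type estimates for discrete harmonic measure, which is what makes the statement robust for rough boundaries.
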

\begin{proof}
	See \cite[Theorem~3.16]{CheHonIzy21}.
\end{proof}

The next lemma states that the convergence of discrete derivatives of the fermionic observables follows from the convergence of these functions themselves, as usual when working with discrete holomorphic or discrete harmonic functions on regular lattices.

\begin{lemma}
\label{lemma:conv-derivatives}
Let $f:\LambdaU\to\C$ be a holomorphic function defined in a planar domain~$\LambdaU$, $\rho\in\wp$, and~$F_\delta:\LambdaU_\delta\to \C$ be a family of discrete s-holomorphic functions such that~$\delta^{-1}F_\delta (z_{e_\delta}) \rightrightarrows f(z)$ as~$\delta\to 0$ uniformly on compact subsets of~$\LambdaU$, where~$e_\delta$ denotes the closest to~$z\in\LambdaU$ edge of~$\LambdaU_\delta$ { oriented in the direction~$\rho$}. Then, one has
\[
(\sqrt{3}\delta)^{-1}\cdot (\delta^{-1}F_\delta(z_{e_\delta}^\pm)-\delta^{-1}F_\delta(z_{e_\delta}))~ \rightrightarrows~ \pm i{\rho}\cdot \partial f(z)\quad\text{as}\ \ \delta\to 0
\]
uniformly on compact subsets of~$\LambdaU$, where~$z_{e_\delta}^\pm:=z_{e_\delta}\pm\sqrt{3}\delta\cdot i{\rho}$ are the midpoints of the two neighboring to~$e_\delta$ edges oriented in the same direction~$\rho$.
\end{lemma}

\begin{proof} As discussed in {Section~\ref{sub:s-holomorphicity}}, one can view s-holomorphic functions~$F_\delta$ as collections of six discrete harmonic components~$F_\delta^\mu$, each of them being defined on the corners of~$\LambdaU_\delta$ of type~$\mu\in\wp$, which form a piece of a regular triangular lattice with the mesh size~$\sqrt{3}\delta$. It is easy to see that the uniform convergence of the functions~$\delta^{-1}F_\delta$ implies the uniform convergence (to~$\mathrm{Proj}[f(z);\mu\R]$) of each of the components~$\delta^{-1}F_\delta^\mu$. Such a convergence of discrete harmonic functions always yields the uniform convergence of their discrete derivatives (e.g., see~\cite[Proposition~3.3]{CheSmi11}, where a similar result is proved for harmonic functions on general isoradial graphs). Finally, as the discrete derivatives of~$\delta^{-1}F_\delta$ can be easily reconstructed from the discrete derivatives of their components~$\delta^{-1}F_\delta^\mu$, one easily sees that the uniform convergence of the latter implies the uniform convergence of the former.
\end{proof}

\begin{remark}
\label{rem:conv-on-boundary}
To prove the convergence of the stress-energy tensor expectations for Dobrushin boundary conditions, one also needs an extension of Theorem~\ref{thm:convergence-fermions} to the case when one or both points~$a,z$ lie on straight parts of the boundary~$\partial\Omega$, oriented in one of the directions~$ i\rho$, orthogonal to the directions~$\rho\in\wp$ of lattice edges. Note that the functions~$\feta(a,\cdot)$ are well-defined on such parts of~$\partial\Omega$ due to~(\ref{eq_feta_boundary_conditions}) and the Schwarz reflection principle. This passage from the convergence in the bulk of~$\Omega$ to the convergence on (the straight part of) the boundary~$\partial\Omega$ can be found in~\cite[Section~5.2]{CheSmi12}, where a similar result was proved with no restrictions on the local direction of~$\partial\Omega$ (and for general isoradial graphs instead of regular grids). The distilled version of this argument dealing with square lattices and particular orientations of~$\partial\Omega$ (in the directions orthogonal to lattice edges) is given in~\cite[Lemma~4.8]{CheIzy13}. It can be easily adapted to s-holomorphic functions defined on the refining honeycomb grids~$\C_\delta$.
\end{remark}

\subsection{Proof of Theorem~\ref{thm:Ising1}} \label{sub:convergence-theorem1}
We are now in the position to derive Theorem~\ref{thm:Ising1} from the convergence of basic fermionic observables and their derivatives.

\begin{proof}[Proof of Theorem~\ref{thm:Ising1} for~`$+$' boundary conditions] Recall that
\[
\mathcal{T}^{[\rho]}(w_\delta)=\Trho(w_\delta)+\Tmrho(w_\delta)+\Rrho(w_\delta)
\]
and the expectations of each of these three terms are expressed via the fermionic observables in Proposition~\ref{prop_via_fermion_t}. In particular,
\[
\Edelta^+[\Trho(w_\delta)] \ =\ \CmT-{\textstyle\frac{\sqrt{3}}{2}}[\sOne\dTwo \Fdelta](\wrho_\delta,\wrho_\delta),
\]
where the discrete derivative~$\dTwo$ with respect to the second argument and the mean value~$\sOne$ with respect to the first are introduced in Definition~\ref{def:discrete-derivatives}.
%For any half-edge~$e_\delta$ of~$\Omega$ oriented in the direction~$\rho$ and any~$a_\delta$, equation~(\ref{eq_F_real_via_complex}) gives $\Fdelta(a_\delta,e_\delta)~=~{\Re[\overline{\eta}}\Fdelta(a_\delta,z_{e_\delta})]$
Let~$\eta=\eta(\rho)$ be defined by~(\ref{eq:eta=rho}).
Due to~(\ref{eq_F_real_via_complex}), for both~$a_\delta\in\{\wrho_{\delta,\mathrm{up}}\,,\wrho_{\delta,\mathrm{down}}\}$, we have
\[
\tfrac{\sqrt{3}}{2}[\dTwo\Fdelta](a_\delta,\wrho_\delta)~=~ {\tfrac{1}{2}\Re}\left[\overline{\eta}\cdot\big(\Fdelta(a_\delta,z(\wrho_{\delta,\mathrm{up}}))- \Fdelta(a_\delta,z(\wrho_{\delta,\mathrm{down}}))\big)\right],
\]
where we use the notation~$z(e)$ instead of~$z_e$.
Moreover,~(\ref{eq_local_values}) implies
\[
\qquad \qquad \qquad\quad \CmT~=~\tfrac{1}{2}\Re\left[\overline{\eta}\cdot\big( F_{\C_\delta}(a_\delta,z(\wrho_{\delta,\mathrm{up}}))-F_{\C_\delta}(a_\delta,z(\wrho_{\delta,\mathrm{down}}))\big)\right],
\]
thus we arrive at the equation
\[
\textstyle \Edelta^+[\Trho(w_\delta)] \ =\ -{\frac{1}{4}}\sum\nolimits_{a_\delta\in\{\wrho_{\delta,\mathrm{up}}\,,\wrho_{\delta,\mathrm{down}}\}} \Re\left[\overline{\eta}\cdot\big(\Fdelta^\dag(a_\delta,z(\wrho_{\delta,\mathrm{up}}))- \Fdelta^\dag(a_\delta,z(\wrho_{\delta,\mathrm{down}}))\big)\right].
\]
For both~$a_\delta\in\{\wrho_{\delta,\mathrm{up}}\,,\wrho_{\delta,\mathrm{down}}\}$\,, it easily follows from Theorem~\ref{thm:convergence-fermions} and Lemma~\ref{lemma:conv-derivatives} that
\[
\delta^{-2}\left[\Fdelta^\dag(a_\delta,z(\weta_{\delta,\mathrm{up}}))- \Fdelta^\dag(a_\delta,z(\weta_{\delta,\mathrm{down}}))\right] \ \mathop{\to}_{\delta\to 0}\ {\tfrac{3}{4\pi}(i\rho)\cdot}\partial_{z}(\overline{\eta}\freg(w,z)+\eta\fstar(w,z))\big|_{z=w}\,.
\]
This leads to the convergence
\[
\delta^{-2}\Edelta^+[\Trho(w_\delta)]~\mathop{\to}_{\delta\to 0}~ {\tfrac{3}{8\pi}\Im\big[\rho}\cdot(\overline{\eta}^2\partial_z\freg(w,z)+\partial_z\fstar(w,z))\big]\big|_{z=w}\,.
\]
Since~$\overline{\eta}^2=-i\rho$ due to~\eqref{eq:eta=rho} and~$\partial_z\freg(w,z)\big|_{z=w}=4\CorrO{T(w)}$ (see Definition~\ref{def:T-and-eps-one-point}), we obtain
\begin{equation}\label{eq_TetaTieta_conv}
\delta^{-2}\Edelta^+[\Trho(w_\delta)+\Tmrho(w_\delta)]\ \mathop{\to}_{\delta\to 0}\  %-\tfrac{3}{4\pi}\Re[\,{\rho^2}\partial_z\freg(w,z)\big|_{z=w}\,] \ =\
-\tfrac{3}{\pi}\Re[\,{\rho^2}\CorrO{T(w)}\,]\,.
\end{equation}

Thus, it remains to prove that the term~$\delta^{-2}\E^+[\Rrho(w)]$ disappears in the limit~$\delta\to 0$. Recall that~$\Rrho(w)=\Rmrho(w)$ and thus we can assume~$\rho\in\wp^+$. Proposition~\ref{prop_via_fermion_t} gives
\[
\E^+[\Rrho(w_\delta)] ~=~ \CmR+{\sqrt{3}}\,[\dOne\dTwo\Fdelta](\wrho_\delta,\wmrho_\delta).
\]
For~$a_\delta\in\{\wrho_{\delta,\mathrm{up}}\,,\wrho_{\delta,\mathrm{down}}\}$\,, it easily follows from~(\ref{eq_local_values}) and~(\ref{eq_F_real_via_complex}) that
\[
[\dTwo F_{\C_\delta}](\wrho_{\delta,\star}\,,\wmrho_\delta)~=~\Im\big[\overline{\eta}\cdot [\dTwo F_{\C_\delta}](\wrho_{\delta,\star}\,,z(\wmrho_\delta))\big]~=~\mp\tfrac{1}{2}\CmR\,,
\]
where~$\star\in\{\text{`up',`down'}\}$ and the~`$\mp$' sign is~`$-$' if~$\star=\text{`up'}$ and `$+$' if $\star=\text{`down'}$. Therefore,
\[
\E^+[\Rrho(w_\delta)] ~=~ \sqrt{3}\,[\dOne\dTwo\Fdelta^\dag](\wrho_\delta,\wmrho_\delta)\,.
\]
Using Theorem~\ref{thm:convergence-fermions} and Lemma~\ref{lemma:conv-derivatives}, it is easy to see that
\begin{align*}
\delta^{-2}[\dTwo\Fdelta^\dag](a_\delta^{\vphantom{\rho}},\wmrho_\delta)~&=~ \Im\big[\overline{\eta}\cdot\delta^{-2}[\dTwo\Fdelta^\dag](a_\delta^{\vphantom{\rho}},z(\wmrho_\delta))\big] \\ &=~%\mathop{\to}_{\delta\to 0}~
 \tfrac{\sqrt{3}}{4\pi}\Im\big[(-i\rho\overline{\eta})\cdot \partial_w(\overline{\mu}\freg(a,w)+\mu \fstar(a,w))\big]+o_{\delta\to 0}(1)\\
 &=~\tfrac{\sqrt{3}}{4\pi}\Re\big[\overline{\mu}\cdot\rho\overline{\eta}\partial_w\freg(a,w)\big]+\tfrac{\sqrt{3}}{4\pi}\Re\big[\mu\cdot \rho\overline{\eta}\partial_w\fstar(a,w))\big]+o_{\delta\to 0}(1)
\end{align*}
uniformly for~$a_\delta$ in a (macroscopic) vicinity of~$w_\delta$, where~$\mu:=\eta_{a_\delta}$. Let~$a_\delta$ be oriented in one of the directions~$\wp^+$ so that we have~$\eta_{\overline{a}_\delta}=i\mu$. Using the antisymmetry of fermionic observables and~(\ref{eq_F_complex_def}), one can write
\[
-[\dOne\Fdelta^\dag](\wmrho_\delta,z(a_\delta^{\vphantom{\rho}}))~=~\mu\cdot [\dTwo\Fdelta^\dag](a_\delta^{\vphantom{\rho}},\wmrho_\delta)+ i\mu\cdot [\dTwo\Fdelta^\dag](\overline{a}_\delta^{\vphantom{\rho}},\wmrho_\delta)
\]
and then obtain the (uniform for~$a_\delta$ in a macroscopic vicinity of~$w_\delta$) convergence
\[
-\delta^{-2}[\dOne\Fdelta^\dag](\wmrho_\delta,z(a_\delta^{\vphantom{\rho}}))~\mathop{\to}_{\delta\to 0}~\tfrac{\sqrt{3}}{4\pi}\big[\rho\overline{\eta}\partial_w\freg(a,w)+\overline{\rho\overline{\eta}\partial_w\fstar(a,w)}\,\big]
\]
Using~(\ref{eq_F_real_via_complex}) and Lemma~\ref{lemma:conv-derivatives} once again, we get
\begin{align*}
-\delta^{-3}[\dOne\dTwo\Fdelta^\dag](\wmrho_\delta,\wrho_\delta)~&=~ -\delta^{-3}\Re\big[\overline{\eta}\cdot[\dOne\dTwo\Fdelta^\dag](\wmrho_\delta,z(\wrho_\delta))\big]\\
&\mathop{\to}_{\delta\to 0}~\tfrac{\sqrt{3}}{4\pi}\Re\big[(i\rho)\overline{\eta}\cdot\partial_z(\rho\overline{\eta}\partial_w\freg(z,w)+ \overline{\rho}\eta\overline{\partial_w\fstar(z,w)})\big|_{z=w}\,\big]
\end{align*}
Due to the antisymmetry~(\ref{eq_symmetries_f}) we have~$\partial_z\partial_w\freg(z,w)|_{z=w}=0$ and
\begin{align}
\notag
\delta^{-3}\E^+[\Rrho(w_\delta)] ~&=~ -\delta^{-3}\sqrt{3}\,[\dOne\dTwo\Fdelta^\dag](\wmrho_\delta,\wrho_\delta)~\\
&\mathop{\to}_{\delta\to 0}~\tfrac{3}{4\pi}\Im[\,\overline{\partial}_w\partial_z\fstar(w,z)\,]\big|_{z=w}~=~\tfrac{3}{8\pi}\Delta_w\CorrO{\varepsilon(w)}\,,
\label{eq_Reta_conv}
\end{align}
where the last identity follows from Lemma~\ref{lemma:second-coeff}.
In particular, $\delta^{-2}\E^+[\Rrho(w_\delta)]=O(\delta)$.
\end{proof}

\begin{proof}[Proof of Theorem~\ref{thm:Ising1} for Dobrushin boundary conditions] According to Proposition~\ref{prop_via_fermion_tdob}, it is now enough to prove the convergence of quantities
\[
\delta^{-1}[\sTwo\Fdelta](b_\delta^{\vphantom{\rho}},\wrho_\delta),\quad \delta^{-1}\Fdelta(b^{\vphantom{\prime}}_\delta,b'_\delta)\quad\text{and}\quad
\delta^{-2}[\dTwo\Fdelta](b_\delta^{\vphantom{\rho}},\wrho_\delta)%=-\delta^{-2}[\dOne\Fdelta](\weta_\delta,b_\delta)
\]
(and similarly for~$b'_\delta$, recall that~$\Fdelta(b'_\delta,b^{\vphantom{\prime}}_\delta)=-\Fdelta(b^{\vphantom{\prime}}_\delta,b'_\delta)$) as~$\delta\to 0$. As mentioned in Remark~\ref{rem:conv-on-boundary}, these results can be derived from an extension of the convergence provided by Theorem~\ref{thm:convergence-fermions} to straight parts of~$\partial\Omega$. Indeed, due to Remark~\ref{rem:bvp_in_discrete} one has
\[
\Fdelta(b_\delta^{\vphantom{\rho}},\wrho_{\delta,\star})~=~ -\Fdelta(\wrho_{\delta,\star},b_\delta^{\vphantom{\rho}})~=~-\overline{\beta}\cdot \Fdelta(\wrho_{\delta,\star},z(b_\delta^{\vphantom{\rho}}))~\in~\R, \quad \text{where}\quad \beta:=\eta_{\overline{b}}
\]
for both~$\star\in\{\text{`up',`down'}\}$ (note that~$\beta=\overline{\tau(b)}{}^{1/2}$ is defined by the direction~$\tau(b)$ of the straight part of~$\partial\Omega$ at the point~$b$ and hence does not depend on~$\delta$). Therefore, the extension of Theorem~\ref{thm:convergence-fermions} to straight boundaries (e.g., see~\cite[Lemma~4.8]{CheIzy13}) gives
\[
\delta^{-1}\Fdelta(b_\delta^{\vphantom{\rho}},\wrho_{\delta,\star})~\mathop{\to}_{\delta\to 0}~-\overline{\beta}\cdot \tfrac{\sqrt{3}}{2\pi}\feta(w,b)
\]
In particular, averaging over~$\star\in\{\text{`up',`down'}\}$ and using~(\ref{eq_f_fdag_boundary_conditions}), one gets
\begin{equation}
\delta^{-1}[\sOne\Fdelta](b_\delta^{\vphantom{\rho}},\wrho_{\delta,\star})~\mathop{\to}_{\delta\to 0}~
  -\overline{\beta}\cdot \tfrac{\sqrt{3}}{2\pi}(\overline{\eta}\f(w,b)+\eta\fstar(w,b))~=~\tfrac{\sqrt{3}}{2\pi}\Re[\,\overline{\eta}\overline{\beta}\f(b,w)].
\label{eq_x_conv_Tdob-one}
\end{equation}
Using a similar convergence result for~$\wmrho_{\delta,\star}$, %and the identity~$\eta\Re[\overline{\eta}A]\pm i\eta\Re[(\overline{\pm i\eta})A]=A$, 
we obtain
\begin{align}
\delta^{-1}\Fdelta(b_\delta,z(\wrho_{\delta,\star})) ~&=~ \delta^{-1}\left[\eta\Fdelta(b_\delta,\wrho_{\delta,\star})\pm i\eta\Fdelta(b_\delta,\wmrho_{\delta,\star})\right] \notag \\
&\mathop{\to}_{\delta\to 0}~ %\tfrac{\sqrt{3}}{2\pi}\big[\eta\Re[\overline{\eta}\overline{\beta}\f(b,w)]\pm i\eta\Re[(\overline{\pm i\eta})\f(b,w)]\big]~=~
-\overline{\beta}\cdot\tfrac{\sqrt{3}}{2\pi}\left[\eta\feta(w,b)\pm i\eta f_\Omega^{[\pm i\eta]}(w,b)\right] ~=~
\overline{\beta}\cdot\tfrac{\sqrt{3}}{2\pi}\f(b,w)\,,
\label{eq_x_conv_Tdob}
\end{align}
uniformly on compact subsets of~$\Omega$. In particular, another extension of this convergence to the boundary point~$b'$ via~\cite[Lemma~4.8]{CheIzy13} yields
\begin{equation}
\label{eq_x_conv_Tdob-zero}
\delta^{-1}\Fdelta(b_\delta^{\vphantom{\prime}},b'_\delta)~=~\overline{\beta}'\cdot \delta^{-1}\Fdelta(b_\delta^{\vphantom{\prime}},z(b'_\delta))
~\mathop{\to}_{\delta\to 0}~ \overline{\beta'\beta}\cdot\tfrac{\sqrt{3}}{2\pi}\f(b,b')~\in~\R\,,
\end{equation}
where~$\beta':=\eta_{\overline{b}'}$ does not depend on~$\delta$. Further, applying Lemma~\ref{lemma:conv-derivatives} to~(\ref{eq_x_conv_Tdob}) we obtain
\[
\delta^{-2}\cdot\tfrac{1}{\sqrt{3}}\left[\Fdelta(b_\delta,z(\wrho_{\delta,\mathrm{up}})) - \Fdelta(b_\delta,z(\wrho_{\delta,\mathrm{down}}))\right]\
\mathop{\to}_{\delta\to 0}\ i\rho\overline{\beta}\cdot\tfrac{\sqrt{3}}{2\pi}\partial_w\f(b,w)\,.
\]
Now one can use~(\ref{eq_F_real_via_complex}) once again to conclude that
\begin{equation}
\label{eq_x_conv_Tdob-two}
\delta^{-2}[\dTwo\Fdelta](b_\delta^{\vphantom{\rho}},\wrho_\delta)\
\mathop{\to}_{\delta\to 0}\ \tfrac{\sqrt{3}}{2\pi}\Re\left[\overline{\eta}\cdot i\rho\overline{\beta}\partial_w\f(b,w)\right]~=~ {-\tfrac{\sqrt{3}}{2\pi}}\Re\left[\overline{\eta}^3\cdot \overline{\beta}\partial_w\f(b,w)\right]
\end{equation}
since~$\eta^2=i\overline{\rho}$ due to~(\ref{eq:eta=rho}). A straightforward substitution of~(\ref{eq_x_conv_Tdob-one}),~(\ref{eq_x_conv_Tdob-zero}) and~(\ref{eq_x_conv_Tdob-two}) into the formula for~$\E^\dob[\Trho(w)]$ provided by Proposition~\ref{prop_via_fermion_tdob} together with the identity
\begin{equation}
\label{eq_x_eta_eta3=eta4}
\Re[\overline{\eta}^3 A]\Re[\overline{\eta} B] + \Re[(\overline{\pm i\eta})^3 A]\Re[(\overline{\pm i\eta})B]~=~\Re[\overline{\eta}^4AB]~=~{-}\Re[\rho^2AB]
\end{equation}
lead to
\begin{align*}
\delta^{-2}\Big[ \Edelta^{b^{\vphantom{\prime}}_\delta,b'_\delta}&\big[\Trho(w_\delta)+\Tmrho(w_\delta)\big] - \Edelta^+\big[\Trho(w_\delta)+\Tmrho(w_\delta)\big]\Big]\\  \mathop{\to}_{\delta\to 0}\ &\ \frac{3}{4\pi}\cdot \frac{\Re[\,\rho^2\cdot
(-\overline{\beta}\partial_w\f(b,w)\cdot \overline{\beta'}\f(b',w)+\overline{\beta}\f(b,w)\cdot \overline{\beta'}\partial_w\f(b',w))\,]}{\overline{\beta'\beta}\f(b,b')}\,.
%\\ =\ & -\frac{3}{\pi}\Re\left[\,\rho^2\cdot\Big[\frac{\f(b',w)\partial_w\f(b,w)-\f(b,w)\partial_w\f(b',w)}{4\f(b_{},b'_{})}\Big]\right].
\end{align*}
Taking into account Definition~\ref{def:T-Dobrushin-expectation} and the convergence~(\ref{eq_TetaTieta_conv}), we conclude that
\[
\delta^{-2}\Edelta^{b^{\vphantom{\prime}}_\delta,b'_\delta}\big[\Trho(w_\delta)+\Tmrho(w_\delta)\big]\ \mathop{\to}_{\delta\to 0}\ -\tfrac{3}{\pi}\Re\big[\,\rho^2\Corr{T(w)}_\Omega^\dob\,\big]\,.
\]
Finally, note that~(\ref{eq_x_conv_Tdob-zero}) and~(\ref{eq_x_conv_Tdob-two}) imply
\[
\delta^{-2}\Big[ \Edelta^{b^{\vphantom{\prime}}_\delta,b'_\delta}\big[\Rrho(w_\delta)\big]-\Edelta^+\big[\Rrho(w_\delta)\big]\Big]=O(\delta)\quad\text{as}\ \ \delta\to 0
\]
due to Proposition~\ref{prop_via_fermion_tdob} and since~$\f(b,b')\ne 0$. Together with~(\ref{eq_Reta_conv}), this guarantees that the expectation of the term~$\Rrho(w_\delta)$ is of order~$\delta^3$ and hence does not contribute to the limit.
\end{proof}

\subsection{Proofs of Theorems~\ref{thm:Ising2} and~\ref{thm:Ising3}} \label{sub:convergence-theorem1}

These proofs go exactly along the same lines as the proof of Theorem~\ref{thm:Ising1} given in the previous section.

\begin{proof}[Proof of Theorem~\ref{thm:Ising2}]
Using Proposition~\ref{prop_via_ferm_tt}, we see that it is essentially enough to compute the limits of the following quantities:
\[
\delta^{-1}\!\cdot[\sOne\sTwo\Fdelta](\wrho_\delta,\wptau_\delta),\quad \delta^{-2}\!\cdot[\dOne\sTwo\Fdelta](\wrho_\delta,\wptau_\delta)\quad \text{and}\quad \delta^{-3}\!\cdot[\dOne\dTwo\Fdelta](\wrho_\delta,\wptau_\delta)
\]
as~$\delta\to 0$. As above, let~$\wrho_{\delta,\star}$ (and similarly~$\wptau_{\delta,\star}$) denote one of the two half-edges~$\weta_{\delta,\mathrm{up}}\,,\weta_{\delta,\mathrm{down}}$\,. Let~$\eta:=\eta(\rho)$ and~$\mu:=\eta(\tau)$, where the function~$\eta(\rho)$ is given by~(\ref{eq:eta=rho}).
%, and similarly~$\wpmu_{\delta,\star}$ denote one of the two-half-edges~$\wpmu_{\delta,\mathrm{up}}\,,\wpmu_{\delta,\mathrm{down}}\}$.
Due to~(\ref{eq_F_real_via_complex}),
 %and the identity~$\Fdelta(\weta_{\delta,\star}\,,\wpmu_{\delta,\star})=-\Fdelta(\wpmu_{\delta,\star}\,,\weta_{\delta,\star})$\,,
we have
\[
\Re[\,\overline{\mu}\Fdelta(\wrho_{\delta,\star}\,,z(\wptau_{\delta,\star}))] = \Fdelta(\wrho_{\delta,\star}\,,\wptau_{\delta,\star})  =  -\Fdelta(\wptau_{\delta,\star}\,,\wrho_{\delta,\star}) = -\Re[\,\overline{\eta}\Fdelta(\wptau_{\delta,\star}\,,z(\wrho_{\delta,\star}))].
\]
Therefore, it follows from Theorem~\ref{thm:convergence-fermions} that
\begin{equation}
\label{eq_x_conv_TT_ss}
\delta^{-1}\!\cdot\Fdelta(\wrho_{\delta,\star}\,,\wptau_{\delta,\star}\,)\ \mathop{\to}_{\delta\to 0}\ \tfrac{\sqrt{3}}{2\pi}\Re[\,\overline{\mu}\feta(w,w')\,] ~=~ -\tfrac{\sqrt{3}}{2\pi}\Re[\,\overline{\eta}\fmu(w',w)\,]\,.
\end{equation}
Moreover, %for both~$a_\delta\in\{\weta_{\delta,\mathrm{up}}\,,\weta_{\delta,\mathrm{down}}\}$,
Lemma~\ref{lemma:conv-derivatives} implies also the convergence of the discrete derivatives
\begin{align}
\notag 
\delta^{-2}\!\cdot[\dTwo\Fdelta](\wrho_{\delta,\star}\,,\wptau_\delta)\ &\mathop{\to}_{\delta\to 0}\ \tfrac{\sqrt{3}}{2\pi}\Re[\,\overline{\mu}\cdot (i\tau)\partial_{w'}\feta(w,w')\,] \\
&=~-\tfrac{\sqrt{3}}{2\pi}\Re[\,\overline{\mu}^3\partial_{w'}\feta(w,w')\,]\,,
\label{eq_x_conv_TT_sd}\\
\delta^{-2}\!\cdot[\dOne\Fdelta](\wrho_\delta,\wptau_{\delta,\star})\ &\mathop{\to}_{\delta\to 0}\ \,\tfrac{\sqrt{3}}{2\pi}\Re[\,\overline{\eta}^3\partial_{w}\fmu(w',w)\,]\,.
\label{eq_x_conv_TT_ds}
\end{align}

In order to handle the mixed discrete derivative~$[\dOne\dTwo\Fdelta]$, note that the last convergence (applied to both~$\wptau_{\delta,\star}$ and~$\wpmtau_{\delta,\star}$) also implies
\begin{align*}
\delta^{-2}\!\cdot[\dOne\Fdelta](\wrho_\delta,z(\wptau_{\delta,\star}))\ 
%&= \ -\delta^{-2}\!\cdot \left(i\overline{\mu}[\dOne\Fdelta](\weta_\delta\,,\wpmu_{\delta,\star}) + \overline{\mu}[\dOne\Fdelta](\weta_\delta\,,\wpimu_{\delta,\star})\right) \\
\mathop{\to}_{\delta\to 0}&\ \tfrac{\sqrt{3}}{2\pi} \left(\mu\Re[\,\overline{\eta}^3\partial_{w}\fmu(w',w)\,] \pm i\mu \Re[\,\overline{\eta}^3\partial_{w}f_{\Omega}^{[\pm i\mu]}(w',w)\,]\right) \\
 = &\ {\tfrac{\sqrt{3}}{4\pi}}\left(\overline{\eta}^3\partial_{w}\f(w',w)+\eta^3\overline{\partial_{w}\fstar(w',w)}\right).
\end{align*}
Applying~(\ref{eq_F_real_via_complex}) and Lemma~\ref{lemma:conv-derivatives} once again, we obtain the convergence
\begin{align}
\delta^{-3}\!\cdot[\dOne\dTwo\Fdelta](\wrho_\delta,\wptau_\delta)\ \mathop{\to}_{\delta\to 0}\ & \ \tfrac{\sqrt{3}}{4\pi}\Re\left[\overline{\mu}\cdot (i\tau)\partial_{w'}\big(\overline{\eta}^3\partial_{w}\f(w',w)+\eta^3\overline{\partial_{w}\fstar(w',w)}\big)\right] \notag\\
=\ & \tfrac{\sqrt{3}}{4\pi}\Re\left[\overline{\eta}^3\overline{\mu}^3\partial_{w}\partial_{w'}\f(w,w')+ \eta^3\overline{\mu}^3\overline{\partial}_{w}\partial_{w'}\fstar(w,w')\right].
\label{eq_x_conv_TT_dd}
\end{align}
Using Proposition~\ref{prop_via_ferm_tt}(i) and formulae~(\ref{eq_x_conv_TT_ss})--(\ref{eq_x_conv_TT_dd}), we are now able to compute the limit
\begin{align*}
\delta^{-4}\mathrm{Cov}_{\Omega_\delta}^+&\big[\Trho(w_\delta);\Ttau(w'_\delta)\big] \mathop{\to}_{\delta\to 0} \\
  ~\tfrac{9\,}{64\pi^2} \big[-\Re&[\,\overline{\eta}\,\overline{\mu}^3\partial_{w'}\f(w,w')+\eta\overline{\mu}^3\partial_{w'}\fstar(w,w')] \Re[\,\overline{\eta}^3\overline{\mu}\,\partial_{w}\f(w',w)+\overline{\eta}^3\mu\partial_{w}\fstar(w',w)]  \\
 + \Re&[\,\overline{\eta}\,\overline{\mu}\f(w',w)+\overline{\eta}\mu\fstar(w',w)] \Re[\,\overline{\eta}^3\overline{\mu}^3\partial_{w}\partial_{w'}\f(w,w')+ \eta^3\overline{\mu}^3\overline{\partial}_{w}\partial_{w'}\fstar(w,w')]\big],
\end{align*}
where~$\mathrm{Cov}_{\Omega_\delta}^+[X;Y]:=\Edelta^+[XY]-\Edelta^+[X]\Edelta^+[Y]$, and then %using straightforward computations together with~(\ref{eq_TetaTieta_conv}),
arrive at the formula
\begin{align*}
\delta^{-4}\mathrm{Cov}_{\Omega_\delta}^+\big[\Trho(w_\delta)&+\Tmrho(w_\delta)\,;\,\Ttau(w'_\delta)+\Tmtau(w'_\delta)\big]\ \mathop{\to}_{\delta\to 0} \\
 \tfrac{9}{32\pi^2}\Re\big[\,&\overline{\eta}^4\overline{\mu}^4\big(\partial_{w'}\f(w,w')\partial_{w}\f(w,w') - \f(w,w')\partial_w\partial_{w'}\f(w,w')\big)\\
  - & \eta^4\overline{\mu}^4\big(\partial_{w'}\fstar(w,w')\overline{\partial}_{w}\fstar(w,w')- \fstar(w',w)\overline{\partial}_{w}\partial_{w'}\fstar(w,w')\big)\big].
\end{align*}

Recall that~$\overline{\eta}^4=-\rho^2$,~$\overline{\mu}^4=-\tau^2$ and ~$T(w_\delta)=-\frac{2}{3}\sum_{\rho\in\wp^+}\overline{\rho}^2[\Trho(w_\delta)+\Tmrho(w_\delta)+\Rrho(w_\delta)]$. As in the proof of Theorem~\ref{thm:Ising1}, it is not hard to see that all expectations containing the remainders~$\Rrho(w_\delta)$ and~$\Rtau(w'_\delta)$ have smaller orders of magnitude. Since
\[
\textstyle\frac{4}{9}\sum_{\rho,\tau\in\wp^+}\overline{\rho}^2\overline{\tau}^2\Re[\,\rho^2\tau^2A-\overline{\rho}^2\tau^2 B\,]~=~2A,
\]
we conclude that
\begin{align*}
\delta^{-4}\mathrm{Cov}_{\Omega_\delta}^+\big[T(w_\delta)\,;\,T(w'_\delta)\big]\ \mathop{\to}_{\delta\to 0}\ \tfrac{9\,}{16\pi^2}[\partial_{w'}\f(w,w')\partial_{w}\f(w,w')-\f(w,w')\partial_{w}\partial_{w'}\f(w,w')]\big]
\end{align*}
The right-hand side is equal to~$\tfrac{9}{\pi^2}[\CorrO{T(w)T(w')}-\CorrO{T(w)}\CorrO{T(w')}]$\,, see Definition~\ref{def:TT-and-Te-correlations}.
\end{proof}

\begin{proof}[Proof of Theorem~\ref{thm:Ising3} for the mixed correlation of~$T$ and~$\varepsilon$] The proof repeats the proof of Theorem~\ref{thm:Ising2}. Due to Corollary~\ref{cor:energy-density-convergence}, it is enough to check that
\begin{equation}
\label{eq:Te_conv}
\delta^{-3}\mathrm{Cov}^+_{\Omega_\delta}\big[T(w_\delta)\,;\,\varepsilon(a_\delta)\big]\ \mathop{\to}_{\delta\to 0}\ \tfrac{3\sqrt{3}}{\pi^2}\big[\CorrO{T(w)\varepsilon(a)}-\CorrO{T(w)}\CorrO{\varepsilon(a)}\big]\,.
\end{equation}
As above, one can see that~$\Rrho(w_\delta)$ does not contribute to the limit. Let~$\tau$ denote the direction of~$a_\delta$ and~$\mu:=\eta(\tau)$, note that we can assume~$\tau\in\wp^+$ since~$\varepsilon(a_\delta)=\varepsilon(\overline{a}_\delta)$. Using Proposition~\ref{prop_via_ferm_tt}(ii) together with~(\ref{eq_x_conv_TT_ss}) and~(\ref{eq_x_conv_TT_ds}) applied to~$a_\delta$,~$\overline{a}_\delta$ instead of~$\wptau_{\delta,\star}$, we get
\begin{align*}
\delta^{-3}\mathrm{Cov}^+_{\Omega_\delta}\big[& \Trho(w_\delta)\,;\,\varepsilon(a_\delta)\big]~\mathop{\to}_{\delta\to 0}\\
  &{\tfrac{3\sqrt{3}}{4\pi^2}}\big[\Re[\overline{\eta}\fmu(a,w)]\Re[\overline{\eta}^3\partial_w\f^{[i\mu]}(a,w)] - \Re[\overline{\eta}\f^{[i\mu]}(a,w)]\Re[\overline{\eta}^3\partial_w\fmu(a,w)]\big].
\end{align*}
Adding a similar formula for~$\Tmrho(w_\delta)$ and using~(\ref{eq_x_eta_eta3=eta4}) we obtain
\begin{align*}
\delta^{-3}\mathrm{Cov}^+_{\Omega_\delta}\big[& \Trho(w_\delta)+\Tmrho(w_\delta)\,;\,\varepsilon(a_\delta)\big]\\
~\mathop{\to}_{\delta\to 0}~& -\tfrac{3\sqrt{3}}{4\pi^2}\Re\big[\rho^2\cdot\big(\fmu(a,w)\partial_w\f^{[i\mu]}(a,w)-\f^{[i\mu]}(a,w)\partial_w\fmu(a,w)\big)\big]\\
=~ &-\tfrac{3\sqrt{3}}{8\pi^2}\Re\big[\rho^2\cdot i\big(\f(a,w)\partial_w\fstar(a,w)-\fstar(a,w)\partial_w\f(a,w)\big)\big]\\
=~ &-\tfrac{3\sqrt{3}}{\pi^2}\Re\big[\rho^2\cdot \big(\CorrO{T(w)\varepsilon(a)}-\CorrO{T(w)}\CorrO{\varepsilon(a)}\big)\big],
\end{align*}
see Definition~\ref{def:TT-and-Te-correlations}. The identity~$\frac{2}{3}\sum_{\rho\in\wp^+}\overline{\rho}^2\Re[\rho^2 A]= A$ concludes the proof of~(\ref{eq:Te_conv}).
\end{proof}

\begin{proof}[Proof of Theorem~\ref{thm:Ising3} for the mixed correlation of~$T$ and~$\sigma$]
This proof mimics the proof of the first part of Theorem~\ref{thm:Ising1} (convergence of expectations~$\delta^{-2}\Edelta^+[\Teta(w_\delta)]$) with the following modifications. First, one should use Proposition~\ref{prop_via_fermion_ts} instead of Proposition~\ref{prop_via_ferm_tt} to express these ratios in terms of the local values of spinor observables~$F_{[\Omega_\delta,u_\delta]}$. Second, one should use Theorem~\ref{thm:convergence-spinors} instead of Theorem~\ref{thm:convergence-fermions} to show the convergence of these discrete spinor observables to their continuous counterparts. Taking into account Definition~\ref{def:Tsigma-correlation}, the computations needed to derive the convergence of the ratios~$\delta^{-2}\Edelta^+[\Teta(w_\delta)\sigma(u_\delta)]/\Edelta^+[\sigma(u_\delta)]$ just repeat those in the proof of Theorem~\ref{thm:Ising1}.
\end{proof}

\section{Loop $O(n)$ model}
\label{sec:o-n}

We will prove Proposition~\ref{prop:relation-Te} by taking derivative in the Yang--Baxter equation for the loop~$O(n)$ model.
We first introduce some notation.

\begin{figure}
	\begin{center}
		\includegraphics[width=0.7\textwidth]{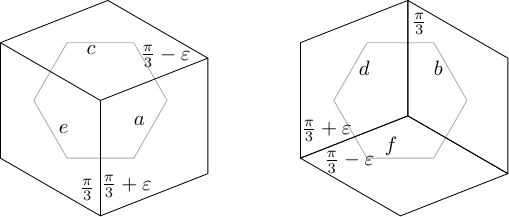}
		\caption{Graphs~$\OmegaE(a,c;\tfrac\pi{3}+\varepsilon)$ and~$\OmegaE(d,f;\tfrac\pi{3}+\varepsilon)$.}
		\label{fig:edges-transform}
	\end{center}
\end{figure}

Let~$G$ be an abstract planar tiling consisting of equilateral triangles and rhombuses: 
this is a planar graph whose inner faces are triangles or quadrangles; 
assign to each triangle angles~$(\pi/3,\pi/3,\pi/3)$; 
assign to each quadrangle angles~$(\varphi,\pi-\varphi,\varphi,\pi-\varphi)$ in a cyclic order (where~$\varphi$ might be different for different quadrangles).
The boundary of~$G$ is defined as the set of all edges belonging to only to one face.

Define~$\conf_G^\varnothing$ as the set of all loop configurations that in each triangle or rhombus look like on Fig.~\ref{figConfE} (top layer) and contain only loops.
A measure~$\mathrm{Loop}_{G,\sigma}$ is supported on~$\conf_G^\varnothing$ and is defined by
\[
	\mathrm{Loop}_{G,n,\sigma}^\varnothing(\omega) = \tfrac{1}{\cZ_G^\varnothing(\sigma)}n^{\#\mathrm{loops}(\omega)}\prod_{f\in \mathrm{Faces}(G)} \mathrm{w}_f(\omega),
\]
where the product is taken over all faces~$f$ of~$G$ and $\mathrm{w}_f(\omega)$ denotes the weight of a local configuration of~$\omega$ inside~$f$ as prescribed by Fig.~\ref{figConfE} and Eq.~\eqref{eq:weights}. Similarly, for~$e,e'$ two midpoints of different boundary edges of~$G$, a measure~$\mathrm{Loop}_{G,\sigma}^{e,e'}$ is supported on all configurations that consist of loops and a unique interface between~$e$ and~$e'$ and is defined as above.

As in Definition~\ref{def:invariance-to-transform}, we say that~$\mathrm{Loop}_{G,n,\sigma}^\b$ is invariant to some local transformation of the tiling~$G_P\mapsto G_{\tilde{P}}$ if, for any given connectivity pattern~$\xi$ on the boundary edges of~$P$,
\[
	\mathrm{Loop}_{G,n,\sigma}^\b(\cdot \, \vert \, \omega_{|P}\in\xi) =
	\mathrm{Loop}_{\tilde{G},n,\sigma}^\b(\cdot \, \vert \, \omega_{|P}\in\xi).
\]

\begin{proposition}\label{prop:inv-to-transform-o-n}
	The invariance relations (T1), (T2), (T3) of Proposition~\ref{prop:inv-to-transform} hold for the loop~$O(n)$ model with~$n=-2\cos [(1-\sigma)\frac{4\pi}{3}]$ and the local weights given by~\eqref{eq:weights}, for any~$\sigma\in \bbR$.
	Moreover, the transformations (T1) and (T2) preserve the partition function.
\end{proposition}

The Yang--Baxter equation was first discovered by Nienhuis~\cite{Nie90}.
The two remaining relations (T1) and (T3) can be proven in the same way as in Proposition~\ref{prop:inv-to-transform} ($n=1$) by considering all possible local configurations and we omit the details.

\begin{proof}[Proposition~\ref{prop:relation-Te}]
	Recall the tiling~$\OmegaE(a;\varphi)$ introduced in Section~\ref{sub:graphical}: it is obtained from the dual triangulation~$\OmegaE$ by replacing the replacing the triangles corresponding to the endpoints of~$a$ by a rhombus of angle~$\varphi$.
	We now introduce~$\OmegaE(a,c;\varphi)$, which is obtained from~$\OmegaE(a,c;\varphi)$ by replacing the triangles corresponding to the endpoints of~$c$ by a rhombus of angle~$\pi-\varphi$; see Fig.~\ref{fig:edges-transform}.
	As discussed in Section~\ref{sub:results-o-n},
	\[
		\Te^\b(a)-\Te^\b(c) = (\log \cZ_{\OmegaE(a;\tfrac\pi{3}+\varepsilon)}^{\b})'-(\log \cZ_{\OmegaE(c;\tfrac\pi{3}+\varepsilon)}^{\b})'=(\log \cZ_{\OmegaE(a,c;\tfrac\pi{3}+\varepsilon)}^{\b})',
	\]
	where the derivatives are evaluated at~$\varepsilon=0$. Define~$\OmegaE(d,f;\varphi)$ similarly and get
	\[
		\Te^\b(a)-\Te^\b(c) =(\log \cZ_{\OmegaE(d,f;\tfrac\pi{3}+\varepsilon)}^{\b})'.
	\]
	By the Yang--Baxter equation ((T2) in Proposition~\ref{prop:inv-to-transform-o-n}), $\cZ_{\OmegaE(a,c;\tfrac\pi{3}+\varepsilon)}^{\b} = \cZ_{\OmegaE(d,f;\tfrac\pi{3}+\varepsilon)}^{\b}$, whence~$\Te^\b(a)-\Te^\b(d) = \Te^\b(c)-\Te^\b(f)$.
	The other relation is analogous.
\end{proof}

\appendix

\section{Full-plane s-holomorphic observable on~$\Hex$.}
% : S-holomorphic functions on the honeycomb grid and the construction of the full-plane observable
In this part of the appendix we review the results concerning discrete s-holomorphicity (strong holomorphicity)
of the fermionic observable, contruct the full-plane fermionic observable on the honeycomb lattice
and compute its singularity.
We start by formulating the result that the fermionic observable is s-holomorphic everywhere
except for the origin.
Such a function is unique up to a multiplicative constant.
It can be constructed by its projections which are equal to particular linear combinations
of the discrete Green's functions (see~\cite{KenWil15}).
Knowing the asymptotic of the Green's function, one can easily compute the asymptotic of the
fermionic observable.
This gives the constant~$\tfrac{\sqrt{3}}{2\pi}$ in Theorem~\ref{thm:convergence-fermions}.

The next proposition states that the fermionic observable~$\Fdelta(a,\cdot)$ is s-holomorphic everywhere
except for two pairs of edges near~$a$, where it has a defect~1.

\begin{proposition}
\label{prop:s-hol}
Let~$a$ be a half-edge inside~$\Omega_\delta$.
Given two adjacent edges~$e$ and~$e'$ of~$\Omega_\delta$,
denote by~$z_e$ and~$z_{e'}$ their midpoints.
Then~$\Fdelta(a,\cdot)$ satisfies the following properties:
\begin{itemize}
\item[--] If none of~$e$ and~$e'$ contains~$a$, then the s-holomorphicity condition~\eqref{eq:s-hol-rel}
holds for~$\Fdelta(a,z_e)$ and~$\Fdelta(a,z_{e'})$;
\item[--] If~$e$ contains~$a$ and~$e'$ has a common vertex with~$a$, then the s-holomorphicity condition~\eqref{eq:s-hol-rel}
holds for~$\Fdelta(a,z_e)+1$ and~$\Fdelta(a,z_{e'})$;
\item[--] If~$e$ contains~$a$ and~$e'$ does not have a common vertex with~$a$, then
the s-holomorphicity condition~\eqref{eq:s-hol-rel} holds for~$\Fdelta(a,z_e)$ and~$\Fdelta(a,z_{e'})$.
\end{itemize}
\end{proposition}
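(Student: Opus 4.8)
The plan is to verify the three relations directly from the combinatorial definition~\eqref{eq_F_complex_def}, by a local rearrangement of loop configurations around the vertex~$v$ shared by the two adjacent edges~$e$ and~$e'$; this is the honeycomb-lattice version of Smirnov's strong-holomorphicity computation for the Ising fermionic observable, and it can alternatively be deduced from the isoradial-graph statements of~\cite{ChSmi_Universal,ChCiKa}. Fix a pair of adjacent edges~$e,e'$ with common vertex~$v$, let~$u$ be the centre of the hexagon of~$\Omega_\delta$ incident to both of them, and let~$e''$ be the third edge of~$\Omega_\delta$ at~$v$. Since every vertex of~$\Omega_\delta$ is trivalent, near~$v$ a configuration uses an even number of the three full edges~$e,e',e''$ unless the dangling end of the observed strand sits on one of the six half-edges~$e,\overline e,e',\overline{e}',e'',\overline{e}''$. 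I would begin by tabulating these few local pictures at~$v$ (a loop through~$v$, a through-strand, or the terminal strand), recording in each case the half-edge at which the strand terminates and the winding~$\wind(\gamma;a,\cdot)$ accumulated by the strand between a fixed circle around~$v$ and that half-edge; all these windings differ by multiples of~$\tfrac{\pi}{3}$, so the corresponding phases~\eqref{eq_phase_def} differ by fixed roots of unity.

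Next I would pair configurations contributing to~$\Fdelta(a,z_e)$ with those contributing to~$\Fdelta(a,z_{e'})$ by rerouting, inside a neighbourhood of~$v$ only, the last portion of the strand from the half-edge on~$e$ to the half-edge on~$e'$ and leaving everything else unchanged. Under such a move the weight~$\weight(\gamma)=x^{|\gamma|}$, $x=1/\sqrt3$, changes by a fixed power of~$x$ coming from the controlled change in the number of (half-)edges near~$v$, while the phase is multiplied by the root of unity read off from the table above. Summing, the contributions to~$\Fdelta(a,z_e)$ and to~$\Fdelta(a,z_{e'})$ that are invisible to the orthogonal projection onto the line~$(u-v)^{-1/2}\R$ cancel on each side separately, while the visible parts are carried by configurations that coincide outside a neighbourhood of~$v$ and are matched there by the weight-preserving bijection, the attached phases being mirror images across~$(u-v)^{-1/2}\R$; hence the two projections agree. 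This yields the first bullet of the proposition.

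For the remaining two bullets the source half-edge~$a$ lies on the edge~$e$, and near the vertex~$v$ to which~$a$ points the strand emanating from~$a$ may degenerate to the empty path. The empty strand contributes a formal summand~$1$ to the bracket in~\eqref{eq_F_complex_def}: tracking it shows that the correction ``$\Fdelta(a,z_e)+1$'' is exactly what makes the rerouting bijection exact when~$e'$ is the other edge of~$\Omega_\delta$ at~$v$ sharing that vertex with~$a$ (the second bullet, cf.\ the identity~$\Fdelta(a,\overline a)=1+\CorrFerm{}{\phi_a\phi_{\overline a}}$ recalled in Remark~\ref{rem:fermionic-via-Grassmann}), whereas if~$e'$ meets~$e$ at the far vertex of~$e$ the degenerate case does not arise and no correction is needed (the third bullet). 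I expect the only genuine difficulty to be bookkeeping: the local pictures at~$v$ must be listed exhaustively, and all winding angles, square-root branch choices~$\eta_a,\eta_e$, and signs have to be kept consistent with the conventions fixed in Definition~\ref{def_fermion}; once the local table is right, the projection identities follow mechanically.
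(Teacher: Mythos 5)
Your plan is the standard direct verification and the approach is sound, but note that the paper does not actually reprove this statement: its ``proof'' is a citation to Proposition~2.5 of~\cite{ChSmi_Universal}, where the same local relation is established for s-holomorphic Ising observables on general isoradial graphs, together with a pointer to Lemma~4.5 of~\cite{Smi_Ising} for the FK-Ising analogue. So what you propose --- grouping configurations that coincide outside a neighbourhood of the common vertex~$v$, rerouting the terminal strand between the half-edges of~$e$ and~$e'$, and comparing weights against half-winding phases --- is essentially the argument carried out in those references, specialized to the honeycomb lattice and to the conventions of Definition~\ref{def_fermion}; your route buys a self-contained elementary proof in the paper's exact normalization, while the paper's route buys generality and avoids redoing the local case analysis. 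Two cautions about your write-up. First, the step you call bookkeeping is where all the content sits: the projections onto $(u-v)^{-1/2}\R$ agree only because $x=1/\sqrt{3}$ is critical, i.e.\ the factors of $x^{\pm 1/2}$, $x^{\pm 1}$ produced by adding or removing (half-)edges at~$v$ must exactly balance the cosines coming from the $\pm\pi/6$, $\pm\pi/3$ changes of the half-winding phase --- within each group the two sums of $\mathrm{weight}\times\mathrm{phase}$ have \emph{equal} projections, rather than ``invisible parts cancelling on each side''. Second, the correction in the second bullet is not literally a summand of~\eqref{eq_F_complex_def}: it is the single unmatched degenerate configuration at the vertex of~$a$ (equivalently, the complement convention for $\ConfO{a,\overline{a}}$ recorded in Remark~\ref{rem:fermionic-via-Grassmann}, together with $\Fdelta(a,a):=0$ and $\Fdelta(a,z_a)=-i\overline{\eta}_{\overline{a}}\Fdelta(a,\overline{a})$), and one must check that with the prefactor $i\overline{\eta}_a\eta_e$ of~\eqref{eq_F_real_def} it comes out to be exactly~$+1$ and not some phase --- compare the computation $\widetilde{F}(0)=\widetilde{\widetilde{F}}(0)+1$ in the proof of Proposition~\ref{prop:full-plane_existence}, which is the full-plane counterpart of this defect. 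With the local table actually written out and these two points verified, your argument is complete and coincides with the proof in the cited references.
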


This proposition was proven in~\cite{CheSmi12} (Proposition~2.5) in a general case of the isoradial graphs,
and we do not give any details about the proof.
See also~\cite{Smi10} (Lemma~4.5) for the same result about the fermionic observable
in FK-Ising.

In the continuous case, notions of holomorphicity and harmonicity are very well related~---
the projections of a holomorphic functions are harmonic, and vice versa,
if a function has harmonic projections, then it is holomorphic.
The next lemma states that the same holds on a discrete level.
First we define the corners of the lattice.

\begin{center}
\begin{figure}
\tempskiped{\includegraphics[scale=1.1]{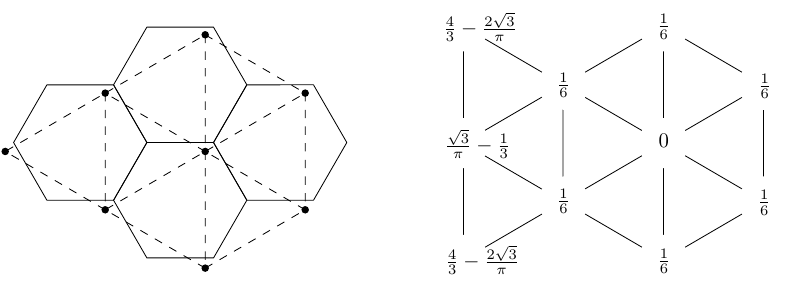}}
\caption{
{\em Left:} One type of corners of the hexagonal lattice of the mesh size~$\delta$
forms the triangular lattice of the mesh size~$\sqrt{3}\delta$.
A direction~$\eta\in \wp$ associated to the shown type of corners is~$\rho$.
{\em Right:} Local values of the Green's function (see~\cite{KenWil15}).
}
\label{fig:Green_function}
\end{figure}
\end{center}

\begin{definition}
By a {\em corner} of a lattice we mean a pair of adjacent edges.

For each corner~$c=(e,e')$ we associate the unique number~$\eta(c)\in \wp$,
such that~$\overline{\eta(c)}^2$ has the same argument as~$u-v$,
where~$v$ denotes the common vertex of
the edges~$e$ and~$e'$, a point~$u$ is the center of the face adjacent to both~$e$ and~$e'$.

According to their direction, the corners can be divided into~6 types.
Each of these types forms the triangular lattice of the mesh size~$\sqrt{3}\delta$ (see Fig.~\ref{fig:Green_function}).
\end{definition}

\begin{lemma}
\label{lem:harmonic_projections}
Let~$F$ be an s-holomorphic function on the midedges of~$\Omega_\delta$.
Then one can define six discretely harmonic functions~$F^\eta$, where~$\eta\in\wp$,
on the vertices of~$\Omega_\delta$, so that for any corner~$c=(e,e')$ the following relation holds:
$$
F^{\eta(c)}(v) = \mathrm{Proj}[F(z_e)\,;\eta(c)\R]=\mathrm{Proj}[F(z_{e'})\,;\eta(c)\R],
$$
where~$z_e$ and~$z_{e'}$ denote the midpoints of the edges~$e$ and~$e'$
and~$v$ denotes the common vertex of~$e$ and~$e'$.
\end{lemma}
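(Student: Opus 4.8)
The plan is to prove the lemma in two stages: first that the displayed formula unambiguously \emph{defines} the six functions $F^\eta$, and then that each of them is discrete harmonic. The first stage is an immediate rephrasing of s-holomorphicity; the second is the substantive point.

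\emph{Well-definedness.} Fix a corner $c=(e,e')$ with common vertex $v$, and let $u$ be the centre of the face incident to both $e$ and $e'$. By the definition of $\eta(c)$ the number $\overline{\eta(c)}^2$ has the same argument as $u-v$, hence the real line $(u-v)^{-1/2}\R$ coincides with $\eta(c)\R$. Therefore the s-holomorphicity relation~\eqref{eq:s-hol} for the adjacent midedges $z_e,z_{e'}$ reads precisely $\mathrm{Proj}[F(z_e);\eta(c)\R]=\mathrm{Proj}[F(z_{e'});\eta(c)\R]$, and we set $F^{\eta(c)}(v)$ equal to this common value. Every vertex $v$ of $\Omega_\delta$ is incident to three faces, whose centres point in three directions $120^\circ$ apart, so $v$ carries exactly one corner of each of the three corresponding types $\eta\in\wp$; these three types depend only on the colour of $v$ in the natural bipartition of the honeycomb, and for fixed $\eta$ the corners of type $\eta$ are centred at the vertices of a single colour class, which is a triangular lattice of mesh $\sqrt3\,\delta$ (Fig.~\ref{fig:Green_function}). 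This yields, for each $\eta\in\wp$, a function $F^\eta$ defined on that triangular lattice, with the required property.

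\emph{Harmonicity.} Fix $\eta$ and a vertex $v$ of the triangular lattice $\mathrm{Tri}^{[\eta]}$ of corners of type $\eta$; we must show that $F^\eta(v)$ equals the (equally weighted) average of the values of $F^\eta$ at the six neighbours of $v$ in $\mathrm{Tri}^{[\eta]}$. The mechanism is as follows. For any edge $e$ incident to $v$, the two corners at $v$ containing $e$ have non-parallel types $\eta_1,\eta_2$, so $F(z_e)$ is recovered by an explicit $\R$-linear formula from $F^{\eta_1}(v)$ and $F^{\eta_2}(v)$ alone. Writing this reconstruction for the three edges at $v$ and, one step further, for the edges at the three honeycomb-neighbours of $v$, and then re-projecting onto $\eta\R$, one expresses each increment $F^\eta(v_j)-F^\eta(v)$ along the six triangular-lattice directions in terms of the three numbers $F^{\eta'}(v)$ (with $\eta'$ running over the corner-types at $v$) and of analogous quantities attached to the intermediate, opposite-colour honeycomb vertices. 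Upon summing over the six $v_j$, each intermediate vertex is visited by exactly two consecutive directions and its contribution cancels, leaving a fixed $\R$-linear combination of the $F^{\eta'}(v)$ which vanishes identically by the elementary relations $1+\rho^2+\rho^4=0$ and its conjugate among the sixth roots of unity. I would present this step either by the direct local verification just sketched, or, more economically, by quoting the general statement on s-holomorphic functions on isoradial graphs from~\cite[Section~3]{ChSmi_Universal} (see also~\cite[Section~4]{Smi_Icm10}), of which the honeycomb is a special case; the converse implication (a function with harmonic projections is s-holomorphic), although not part of the statement, comes from the same bookkeeping run in reverse.

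The only genuinely non-tautological step is the harmonicity computation: unlike the well-definedness, it is not confined to a single corner but forces one to propagate the s-holomorphicity conditions through an entire neighbourhood of $v$ and to track carefully how the contributions of the intermediate vertices cancel. It is nonetheless a finite, fully explicit piece of plane trigonometry with no conceptual difficulty, which is why I expect to dispatch it by reduction to the isoradial result rather than reproducing the calculation from scratch.
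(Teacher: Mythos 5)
The paper states this lemma without proof, so your argument has to stand on its own. The well-definedness half does: since $\overline{\eta(c)}^{\,2}$ has the argument of $u-v$, the line $(u-v)^{-1/2}\R$ is $\eta(c)\R$, and your observation is exactly Remark~\ref{rem:s-holomorphicity_corners}. The harmonicity half, however, has a genuine gap in the claimed bookkeeping. Normalize $\delta=1$, take $\eta=1$, and let $v$ be a vertex carrying the corner types $1,\rho,\rho^2$, with edges of $v$ in the directions $\rho,\rho^3,\rho^5$; then the type-$1$ corner at any such vertex $b$ consists of the two edges emanating from $b$ in the directions $\rho$ and $\rho^5$ (the two edges of $b$ bordering the hexagon centred at $b+1$). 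Among the six triangular-lattice neighbours of $v$, only four are such that the edge joining them to the intermediate opposite-colour vertex belongs to their type-$1$ corner. For the remaining two, $v+1+\rho$ and $v+1+\rho^5$ (the two neighbours lying on the hexagon centred at $v+1$), that edge carries their corners of types $\rho$ and $\rho^2$, while their type-$1$ corners sit on edges leading to the vertices $v+2$ and $v+1+2\rho$ (resp.\ $v+2$ and $v+1+2\rho^5$), none of which is adjacent to $v$ or to its three neighbours. Hence re-projecting onto $\eta\R$ the mid-edge values reconstructed at the three intermediate vertices does \emph{not} produce $F^{1}$ at these two vertices: it produces the projection of a mid-edge value on which no corner of type $1$ sits, and for a general s-holomorphic $F$ this differs from $F^1(v_j)$. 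Consequently the asserted expression of all six increments through the corner data at $v$ and the three intermediate vertices, and the subsequent cancellation ``each intermediate vertex is visited by two consecutive directions'', cannot be carried out as described; the harmonicity at $v$ is simply not implied by the relations you allow yourself.

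The lemma is nevertheless correct, but a local proof must invoke one additional s-holomorphicity relation: the one at the corner of the \emph{opposite} vertex $v+2$ of the hexagon centred at $v+1$ (in the above normalization this corner has type $i$). Computing its value from each of the two edges $(v+1+\rho,\,v+2)$ and $(v+1+\rho^5,\,v+2)$ and equating the answers yields, writing $F^{\eta}=\eta f^{\eta}$ with $f^\eta\in\R$, the identity $f^{1}(v{+}1{+}\rho)+f^{1}(v{+}1{+}\rho^{5})=2f^{\rho}(v{+}1{+}\rho)-2f^{\rho^{2}}(v{+}1{+}\rho^{5})$; the right-hand side involves only corner values that \emph{are} reachable by your two-step propagation, so it supplies precisely the missing combination, after which the verification is again a finite explicit computation. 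Finally, your fallback of ``reduction to the isoradial result'' is not available: what \cite{ChSmi_Universal} (and the ICM survey \cite{Smi_Icm10}) prove for s-holomorphic functions is discrete holomorphicity and the sub-/super-harmonicity of the primitive $\int\Re[(F)^2dz]$ on the two sublattices, not exact discrete harmonicity of the corner projections — a statement which has no analogue on general isoradial graphs and is a special feature of the regular honeycomb/triangular lattice. So the honeycomb-specific computation sketched above cannot be replaced by a citation.
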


Given an s-holomorphic function~$F$, we will call the funcitons~$F^\eta$ for~$\eta\in\wp$
the projections of~$F$ on the corresponding corners.

\begin{remark}
\label{rem:s-holomorphicity_corners}
Using this terminology, the s-holomorphicity is equivalent to the fact these
projections on the corners are well defined, i.e. the projections from the adjacent
edges on the corner formed by them are the same.
\end{remark}

Our next goal is to construct the so-called full-plane fermionic observable~$F_{\C}(a,\cdot)$,
i.e. the function on the edges of the whole hexagonal lattice~$\C_\delta$ which is s-holomorphic
everywhere except for the neighbourhood of~$a$ (in the sense of Proposition~\ref{prop:s-hol}).
We construct~$F_{\C}(a,\cdot)$ by means of its projections.
The projections of a holomorphic function are harmonic, so the projections of a
holomorphic function with a defect are harmonic functions with some defects.
The latter can be obtained as linear combinations of the Green's functions
(on the triangular lattice).

\begin{center}
\begin{figure}[ht]
\tempskiped{\includegraphics[scale=1.1]{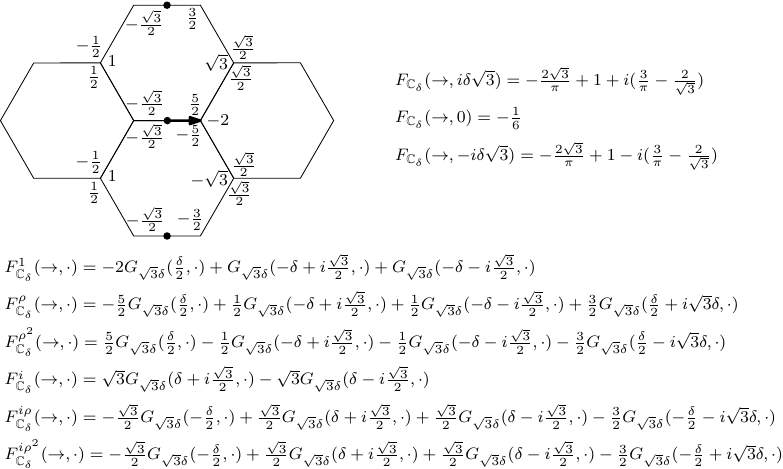}}
\caption{
Full-plane fermionic observable~$F_{\C_\delta}(\rightarrow, \cdot)$, where~$\rightarrow$ stands
for the right half-edge of the horizontal edge which has its midpoint at~0.
{\em Up-left:} Graphical representation of the singularities of the funcitons~$F_{\C_\delta}^{\eta}$
on the corners of the lattice.
The singularities for each type of corners are put in the corresponding corners of this type.
{\em Up-right:} Local values of~$F_{\C_\delta}(\rightarrow, \cdot)$.
{\em Bottom:} Explicit expressions of the functions~$F_{\C_\delta}^{\eta}$ in terms
of the Green's function.
The coefficient in front of the Green's function at a given point is prescribed by the
value of singularity of~$F_{\C_\delta}^{\eta}$ at this corner.
}
\label{fig:full_plane_projections}
\end{figure}
\end{center}

\begin{definition}
Let~$v_0$ be a vertex of the triangular lattice~$\T_\delta$.
The {\em Green's function}~$G_\delta(v_0,\cdot)$ is a (unique) function on vertices of~$\T_\delta$
with the following properties:
\begin{itemize}
\item $G_\delta(v_0,v_0) = 0$;
\item $\Delta G_\delta(v_0,v) = \delta_{v,v_0}$
\item $G_\delta(v_0,v) = O(\log |v_0 - v|)$,
\end{itemize}
where by~$\Delta$ we denote the discrete Laplacian, i.e.~$\Delta F(v) =\sum_{v'\sim v}{(F(v') - F(v))}$,
and~$\delta_{v,v_0}$ is equal to~1 if~$v=v_0$ and~0 otherwise.
\end{definition}

\begin{remark}
It is well known (\cite{Ken02,Spi01}) that the Green's function~$G_\delta(v_0,\cdot)$ exists and unique,
and that its asymptotic on the triangular lattice looks as follows:
\begin{align}
\label{eq:greens_function}
G_\delta(v_0,v) = \tfrac{1}{2\pi\sqrt{3}}\log \tfrac{|v_0 - v|}{\delta} + \tfrac{\gamma_{Euler}}{2\pi} + O(\tfrac{\delta}{|v_0 - v|}).
\end{align}
Note that in~\cite{Ken02} the discrete Laplacian is normalized differently.
Thus, the additional factor~$\frac{1}{\sqrt{3}}$ appears in our case.
\end{remark}

\begin{proposition}
\label{prop:full-plane_existence}
Let~$a$ be a half-edge of the hexagonal lattice.
There exists a unique function~$F_{\C_\delta}(a,\cdot)$ defined on all the edges of the hexagonal lattice
which is decreasing on~$\infty$ and which is s-holomorphic everywhere except for the neighbourhood of~$a$,
where it has a defect~1 (i.e. satisfies the relations from Proposition~\ref{prop:s-hol}).

Moreover, the projections of~$F_{\C_\delta}(a,\cdot)$ on the corners of the lattice can be expressed
in terms of the Green's function and the values of~$F_{\C_\delta}(a,\cdot)$ in the neighbourhood of~$a$
can be computed explicitly (see Fig.~\ref{fig:full_plane_projections}, where the case of a horizontal edge is treated).

\begin{proof}
Assume that~$a$ is the right half-edge of a horizontal edge contatining~0 (all other cases can be treated in the same way).
Take six funcitons~$F_{\C_\delta}^\eta(a,\,\cdot\,)$, where~$\eta\in\wp$,
as they are defined in Fig.~\ref{fig:full_plane_projections}.
In order to show the existence of such a function~$F_{\C_\delta}(a,\,\cdot\,)$,
it is enough to show that for any edge not containing~$a$ its four projections are coherent,
and that for the edge containing~$a$
the value defined by the projections~$F_{\C_\delta}^\rho(a,-\tfrac{\delta}{2})$ and~$F_{\C_\delta}^{\rho^2}(a,-\tfrac{\delta}{2})$,
and the value defined by the projections~$F_{\C_\delta}^{i\rho}(a,-\tfrac{\delta}{2})$ and~$F_{\C_\delta}^{i\rho^2}(a,-\tfrac{\delta}{2})$,
differ by~1.

We will investigate the case of horizontal edges to obtain a precise value of the singularity
at the origin.
Other edges can be studied in a similar manner.
Denote by~$\tilde{F}(z)$ a function on the midpoints of the horizontal edges which projections on the corresponding directions
are~$F_{\C_\delta}^\rho(a,z+\tfrac{\delta}{2})$ and~$F_{\C_\delta}^{\rho^2}(a,z+\tfrac{\delta}{2})$.
And denote by~$\tilde{\tilde{F}}(z)$ a function on the midpoints of the horizontal edges which projections on the corresponding directions
are~$F_{\C_\delta}^{i\rho}(a,z-\tfrac{\delta}{2})$ and~$F_{\C_\delta}^{i\rho^2}(a,z-\tfrac{\delta}{2})$.
Our goal is to show that~$\tilde{F}(z_e) = \tilde{\tilde{F}}(z_e)$ if~$z_e \ne 0$
and that~$\tilde{F}(0) = \tilde{\tilde{F}}(0) + 1$.

It is easy to see that
\begin{align}
\tilde{F}(z) &=&
\left[ F_{\C_\delta}^\rho(a,z+\tfrac{\delta}{2}) - F_{\C_\delta}^{\rho^2}(a,z+\tfrac{\delta}{2})\right]
&+\tfrac{i}{\sqrt{3}}\left[ F_{\C_\delta}^\rho(a,z+\tfrac{\delta}{2}) + F_{\C_\delta}^{\rho^2}(a,z+\tfrac{\delta}{2}) \right]\\
\tilde{\tilde{F}}(z) &=&
-\tfrac{1}{\sqrt{3}}\left[ F_{\C_\delta}^{i\rho}(a,z-\tfrac{\delta}{2}) + F_{\C_\delta}^{i\rho^2}(a,z-\tfrac{\delta}{2})\right]
&+\left[ F_{\C_\delta}^{i\rho}(a,z-\tfrac{\delta}{2}) - F_{\C_\delta}^{i\rho^2}(a,z-\tfrac{\delta}{2}) \right].
\end{align}
Now we plug in the linear combinations of the Green's functions from Fig.~\ref{fig:full_plane_projections}
instead of~$F_{\C_\delta}^\eta(a,\,\cdot\,)$.
Using the translation invariance of the Green's function, we replace everything by the values
of the Green's function~$G(\,\cdot\,) =G_{1}(0,\,\cdot\,)= G_{\sqrt{3}\delta}(0,\,\cdot\,\sqrt{3}\delta)$ with the source at~0:
\begin{align*}
\tilde{F}(z) &=
\left[ -5G(\tfrac{z}{\sqrt{3}\delta}) + G(\tfrac{z}{\sqrt{3}\delta}+\lambda - i) + G(\tfrac{z}{\sqrt{3}\delta}+\lambda) + \tfrac{3}{2}G(\tfrac{z}{\sqrt{3}\delta}-i)
+ \tfrac{3}{2}G(\tfrac{z}{\sqrt{3}\delta}+i) \right]\\
&-\tfrac{i\sqrt{3}}{2}\left[ G(\tfrac{z}{\sqrt{3}\delta}+i) - G(\tfrac{z}{\sqrt{3}\delta}-i)\right],\\
\tilde{\tilde{F}}(z) &=
\left[G(\tfrac{z}{\sqrt{3}\delta}) -  G(\tfrac{z}{\sqrt{3}\delta}-\lambda) -  G(\tfrac{z}{\sqrt{3}\delta}+i-\lambda)
+\tfrac{1}{2} G(\tfrac{z}{\sqrt{3}\delta}-i) +\tfrac{1}{2} G(\tfrac{z}{\sqrt{3}\delta}+i)  \right]\\
&-\tfrac{i\sqrt{3}}{2}\left[  G(\tfrac{z}{\sqrt{3}\delta}+i) - G(\tfrac{z}{\sqrt{3}\delta}-i) \right],
\end{align*}
where~$\lambda = \tfrac{\sqrt{3}}{2} + \tfrac{i}{2}$ denotes a step ``right-up'' on the triangular
lattice (Fig.~\ref{fig:Green_function}, divided by~$\sqrt{3}$).

Substracting one formula from the other, one gets
\begin{align}
\label{eq:s_hol_defect_horizontal}
\tilde{F}(z) - \tilde{\tilde{F}}(z) =
&-6G(\tfrac{z}{\sqrt{3}\delta}) + G(\tfrac{z}{\sqrt{3}\delta}+\lambda - i) + G(\tfrac{z}{\sqrt{3}\delta}+\lambda) \\
&+G(\tfrac{z}{\sqrt{3}\delta}-i) + G(\tfrac{z}{\sqrt{3}\delta}+i) + G(\tfrac{z}{\sqrt{3}\delta}-\lambda) +  G(\tfrac{z}{\sqrt{3}\delta}+i-\lambda).
\end{align}
Note that the points where the Green's function is evaluated in Eq.~\ref{eq:s_hol_defect_horizontal}
are~$\tfrac{z}{\sqrt{3}\delta}$ and its six neighbours.
By the definition, the Green's function is harmonic everywhere except~0,
where its laplacian is equal to~1.
Hence, one gets the desired relation on~$\tilde{F}(z)$ and~$\tilde{\tilde{F}}(z)$.
\end{proof}
\end{proposition}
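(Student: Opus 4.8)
The plan is to construct $F_{\C_\delta}(a,\cdot)$ from its six harmonic projections and then to upgrade this to a genuine s-holomorphic function, i.e.\ to reverse the passage of Lemma~\ref{lem:harmonic_projections}. Fix, say, $a$ to be the right half-edge of a horizontal edge through the origin; every other case follows by a lattice rotation. By Lemma~\ref{lem:harmonic_projections} and Remark~\ref{rem:s-holomorphicity_corners}, an s-holomorphic function on $\C_\delta$ is the same data as a sextuple of functions $F^\eta$, $\eta\in\wp$, each defined and discretely harmonic on the sublattice of corners of type $\eta$ (a copy of the triangular lattice of mesh $\sqrt3\delta$), subject to the compatibility that, on every edge $e$, the complex number reconstructed from the two corner projections at one endpoint of $e$ coincides with the one reconstructed at the other endpoint. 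Away from $a$ this compatibility is automatic once the six functions are harmonic; the ``defect $1$'' of Proposition~\ref{prop:s-hol} translates into prescribed nonzero values of the discrete Laplacian of the $F^\eta$ at the finitely many corners lying next to $a$ (the two corner-pairs that carry the defect), together with the requirement that the two reconstructions of the value on the edge containing $a$ differ by exactly $1$.

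First I would write each $F_{\C_\delta}^\eta(a,\cdot)$ as an explicit finite linear combination of translates of the discrete Green's function $G_{\sqrt3\delta}$ of the triangular lattice, as in Figure~\ref{fig:full_plane_projections}: a translate $c_j\,G_{\sqrt3\delta}(v_j,\cdot)$ is placed at each corner $v_j$ near $a$ where the Laplacian of $F^\eta$ should be nonzero, with $c_j$ equal to the prescribed Laplacian value there. Two checks remain. (i) For every edge not touching $a$ the four incident corner projections are coherent: this is a linear identity among the values of $G_{\sqrt3\delta}$ at a point and its six neighbours, which reduces to $\Delta G_{\sqrt3\delta}=0$ at that (regular) point. (ii) On the edge through $0$ the two reconstructions differ by $1$: after substituting the linear combinations and using translation invariance of $G_{\sqrt3\delta}$, the difference telescopes to a single expression of the shape $\Delta G_{\sqrt3\delta}$ evaluated at the source, equal to $1$, whereas at every other horizontal midedge the same expression is evaluated at a regular point and vanishes --- exactly the bookkeeping in the computation around~\eqref{eq:s_hol_defect_horizontal}. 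I would moreover arrange that the coefficients $c_j$ within each $F^\eta$ sum to zero; then, by the asymptotics~\eqref{eq:greens_function}, each $F^\eta$ is $O(1/|v|)$ at infinity, so the assembled function $F_{\C_\delta}(a,\cdot)$ decays.

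For uniqueness, if $F_1,F_2$ both satisfy the statement then $F:=F_1-F_2$ is s-holomorphic \emph{everywhere} on $\C_\delta$ (the two defects cancel) and tends to $0$ at infinity; hence all six of its projections are discretely harmonic on the whole triangular lattice and vanish at infinity, so by the discrete Liouville theorem each is constant, and therefore identically $0$; thus $F\equiv 0$. Finally, the local values~\eqref{eq_local_values_zero}--\eqref{eq_local_values} are read off by substituting the known values of $G_{\sqrt3\delta}$ near its source (Figure~\ref{fig:Green_function}, right) into the linear combinations, and the far-field asymptotics~\eqref{eq_full_plane_asymptotics} follows by expanding $\sum_j c_j\,G_{\sqrt3\delta}(v_j,v)$ with $\sum_j c_j=0$: the logarithms combine into a term $\sim -\tfrac1{2\pi\sqrt3}\,\Re\big[(\sum_j c_jv_j)/v\big]$, and reassembling the six projections into a complex value produces exactly $\tfrac{\sqrt3\,\eta_{a_\delta}}{2\pi(z_{e_\delta}-z_{a_\delta})}$, the factor $\tfrac1{\sqrt3}$ coming from the nonstandard normalisation of the Laplacian in~\eqref{eq:greens_function}.

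I expect the main obstacle to be steps (i)--(ii): pinning down the \emph{correct} coefficient pattern in Figure~\ref{fig:full_plane_projections}, i.e.\ translating the defect-$1$ condition of Proposition~\ref{prop:s-hol} into the precise Laplacian data of the six projections near $a$, and then verifying edge by edge in the finite region around $a$ both the coherence of neighbouring corner projections and the jump by $1$ across the edge carrying $a$. Away from $a$ everything is forced by harmonicity of $G_{\sqrt3\delta}$, and uniqueness together with the asymptotics are routine; the little linear-algebra verification near the source is the only genuinely delicate point.
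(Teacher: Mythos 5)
Your proposal follows essentially the same route as the paper: build the six corner-projections $F_{\C_\delta}^\eta(a,\cdot)$ as explicit linear combinations of translates of the triangular-lattice Green's function and reduce both the coherence of the four projections on every edge away from $a$ and the defect $1$ on the edge through $a$ to the fact that $\Delta G$ vanishes at regular points and equals $1$ at the source, which is exactly the computation culminating in~\eqref{eq:s_hol_defect_horizontal}. Your additional Liouville-type uniqueness argument and the zero-sum-of-coefficients decay/asymptotics check are correct and simply make explicit what the paper's written proof leaves implicit.
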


\begin{corollary}
Given a discrete domain~$\Omega_\delta$ on the hexagonal lattice and a half-edge~$a$ in~$\Omega_\delta$,
the function~$\Fdelta^\sharp(a_\delta,\,\cdot\,) = \Fdelta(a_\delta,\,\cdot\,) - F_\C(a_\delta,\, \cdot \,)$
is s-holomorphic everywhere in~$\Omega_\delta$.

\begin{proof}
It follows from propositions~\ref{prop:s-hol} and~\ref{prop:full-plane_existence} that the s-holomorphicity
is satisfied on all pairs of edges not containing~$a$.
On the other hand both fermionic observables~$\Fdelta (a, \,\cdot\,)$ and~$F_{\C_\delta}$ have the same singularity at~$a$.
Thus, their difference~$\Fdelta^\sharp(a_\delta,\,\cdot\,)$ does not have any singularity at~$a$.
\end{proof}
\end{corollary}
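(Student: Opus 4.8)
The plan is to use that s-holomorphicity (Definition~\ref{def:s-hol}) is a purely \emph{local} and \emph{homogeneous $\R$-linear} condition on the function: since $\mathrm{Proj}[\,\cdot\,;\tau\R]=\tfrac{1}{2}[\,\cdot\,+(\tau/|\tau|)^2\overline{\,\cdot\,}\,]$ is $\R$-linear, the s-holomorphicity relation at a pair $z_e,z_{e'}$ of adjacent midedges sharing a vertex $v$ (with $u$ the centre of the common face) is equivalent to $\mathrm{Proj}[\,F(z_e)-F(z_{e'})\,;(u-v)^{-1/2}\R]=0$. Hence differences of functions that both satisfy it on a given pair again satisfy it there, and a ``defect'' consisting of a fixed additive shift on one of the two midedges simply cancels upon subtraction.

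First I would record, from Proposition~\ref{prop:s-hol}, exactly which pairs carry a defect for $\Fdelta(a_\delta,\,\cdot\,)$: the honest relation on every pair $z_e,z_{e'}$ none of whose edges contains $a_\delta$, the relation for $\Fdelta(a_\delta,z_e)+1$ against $\Fdelta(a_\delta,z_{e'})$ when $e$ contains $a_\delta$ and $e'$ shares a vertex with $a_\delta$, and the honest relation when $e$ contains $a_\delta$ but $e'$ is disjoint from $a_\delta$. Next I would note that the full-plane observable produced in Proposition~\ref{prop:full-plane_existence} satisfies, by construction, the very same list with the very same unit shift on the very same midedges near $a_\delta$; since it is defined on all of $\C_\delta$, its restriction to $\Omega_\delta$ makes $\Fdelta^\sharp(a_\delta,\,\cdot\,):=\Fdelta(a_\delta,\,\cdot\,)-F_{\C_\delta}(a_\delta,\,\cdot\,)$ a well-defined function on the midedges of $\Omega_\delta$.

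It then remains to verify the s-holomorphicity relation for $\Fdelta^\sharp(a_\delta,\,\cdot\,)$ on every pair of adjacent midedges. On pairs disjoint from $a_\delta$, and on the ``contains $a_\delta$ versus disjoint'' pair, both $\Fdelta$ and $F_{\C_\delta}$ satisfy the homogeneous relation, so their difference does by $\R$-linearity. On the pair where $e$ contains $a_\delta$ and $e'$ shares a vertex with it, writing $\tau=(u-v)^{-1/2}$ we have $\mathrm{Proj}[(\Fdelta(a_\delta,z_e)+1)-\Fdelta(a_\delta,z_{e'});\tau\R]=0$ and $\mathrm{Proj}[(F_{\C_\delta}(a_\delta,z_e)+1)-F_{\C_\delta}(a_\delta,z_{e'});\tau\R]=0$; subtracting, the two $+1$'s cancel and we obtain $\mathrm{Proj}[\Fdelta^\sharp(a_\delta,z_e)-\Fdelta^\sharp(a_\delta,z_{e'});\tau\R]=0$, i.e.\ the unshifted relation. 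As this covers all pairs, $\Fdelta^\sharp(a_\delta,\,\cdot\,)$ is s-holomorphic everywhere in $\Omega_\delta$.

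I expect no serious obstacle: the argument is essentially bookkeeping. The one point that needs care is to make sure that the ``defect $1$'' of $\Fdelta$ in Proposition~\ref{prop:s-hol} and the ``defect $1$'' of $F_{\C_\delta}$ in Proposition~\ref{prop:full-plane_existence} agree \emph{literally} --- the same sign and normalization of the shift, and the same pairs of midedges near $a_\delta$ involved; for non-horizontal $a_\delta$ this is reduced to the horizontal case treated in Figure~\ref{fig:full_plane_projections} by a lattice rotation. Once the two defects are matched, the cancellation, and hence the statement, is automatic.
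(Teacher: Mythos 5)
Your proposal is correct and follows essentially the same route as the paper: away from $a_\delta$ both observables satisfy the homogeneous s-holomorphicity relation, and near $a_\delta$ the identical unit defects (guaranteed by the statement of Proposition~\ref{prop:full-plane_existence}, which asserts the relations of Proposition~\ref{prop:s-hol} verbatim) cancel under subtraction. Your write-up merely spells out the $\R$-linearity bookkeeping that the paper's terse proof leaves implicit.
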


\begin{proposition}
\label{prop:full-plane_convergence}
Given two distinct points~$a,z\in\C$ and~$\eta\in\wp$,
let~$a_\delta$~be the closest to~$a\in\Omega$ half-edge of~$\Hex_\delta$ oriented in the direction~$\eta^2$,
and~$z_{e_\delta}$ be the closet to~$z$ midedge of~$\Hex_\delta$.
Then, the following convergence results hold as~$\delta\to 0$:
\begin{align*}
\delta^{-1}\cdot F_{\C_\delta}(a_\delta,z_{e_\delta})&\ \rightrightarrows\ \tfrac{\sqrt{3}}{2\pi z}.
\end{align*}
Moreover, the convergence is uniform (with respect to~$(a,z)$) on compact subsets
of~$\C\setminus D$, where~$D:=\{(a,a),a\in\C\}$.

\begin{proof}
Assume that~$a_\delta$ is the right half-edge of a horizontal edge containing~0
and~$z_\delta$ is the midpoint of a horizontal edge
(all other cases can be treated in the same way).
While proving Proposition~\ref{prop:full-plane_existence}, we obtained the follwing expression of
the full-plane fermionic observable~$F_{\C_\delta}$
in terms of the values of the Green's function at different points:
\begin{align*}
F(a_\delta,z_\delta) &=
\left[ -5G(\tfrac{z}{\sqrt{3}\delta}) + G(\tfrac{z}{\sqrt{3}\delta}+\lambda - i) + G(\tfrac{z}{\sqrt{3}\delta}+\lambda) + \tfrac{3}{2}G(\tfrac{z}{\sqrt{3}\delta}-i)
+ \tfrac{3}{2}G(\tfrac{z}{\sqrt{3}\delta}+i) \right]\\
&-\tfrac{i\sqrt{3}}{2}\left[ G(\tfrac{z}{\sqrt{3}\delta}+i) - G(\tfrac{z}{\sqrt{3}\delta}-i)\right],
\end{align*}
where~$G(\,\cdot\,) =G_{1}(0,\,\cdot\,)= G_{\sqrt{3}\delta}(0,\,\cdot\,\sqrt{3}\delta)$
and~$\lambda = \tfrac{\sqrt{3}}{2} + \tfrac{i}{2}$.
Clearly, the right-hand side can be expressed in terms of the discrete
derivatives~$\nabla_\nu G(\tfrac{z}{\sqrt{3}\delta}) = G(\tfrac{z}{\sqrt{3}\delta} + \nu) - G(\tfrac{z}{\sqrt{3}\delta})$,
where~$\nu$ takes one of the values~$\pm\lambda$, $\pm i$, $\pm(\lambda - i)$
(which correspond to the directions of the edges):
\begin{align*}
F(a_\delta,z_\delta) &=
\nabla_{\lambda - i} G(\tfrac{z}{\sqrt{3}\delta})
+\nabla_{\lambda} G(\tfrac{z}{\sqrt{3}\delta})
+\left(\tfrac{3}{2} - \tfrac{i\sqrt{3}}{2}\right) \nabla_{i} G(\tfrac{z}{\sqrt{3}\delta})
+\left(\tfrac{3}{2} + \tfrac{i\sqrt{3}}{2}\right) \nabla_{-i} G(\tfrac{z}{\sqrt{3}\delta})\,.
\end{align*}
It is well-known that discrete derivatives of a convergent sequence of discretely harmonic functions
converge to the corresponding derivatives of the limitting function
(for instance, one can see Proposition~$3.1$ in~\cite{CheSmi12}).
Using the asymptotic~\eqref{eq:greens_function} of the Green's function,
we get that on compact subsets of~$\C\setminus \{0\}$
\begin{align*}
\tfrac{1}{\delta}\nabla_\nu G(\tfrac{z}{\sqrt{3}\delta}) &\rightrightarrows \tfrac{1}{2\pi}\Re \left[ \tfrac{\nu}{z}\right].
\end{align*}
Hence,
\begin{align*}
\tfrac{1}{\delta}F(a_\delta,z_\delta) &\rightrightarrows \tfrac{1}{2\pi} \Re \left[
\frac{(\lambda-i) + \lambda + \tfrac{3}{2}i -  \tfrac{3}{2}i }{z}
\right]
+ \tfrac{i}{2} \cdot \tfrac{\sqrt{3}}{2\pi} \Re \left[
\frac{ -i - i }{z}
\right] = \tfrac{\sqrt{3}}{2\pi z}\,.
\end{align*}
\end{proof}
\end{proposition}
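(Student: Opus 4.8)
The plan is to read off the limit directly from the explicit description of~$F_{\C_\delta}(a_\delta,\cdot)$ provided by Proposition~\ref{prop:full-plane_existence}, in which each of the six harmonic projections~$F_{\C_\delta}^\eta(a_\delta,\cdot)$ is written as a prescribed linear combination of discrete Green's functions on the triangular lattice, together with the classical asymptotics~\eqref{eq:greens_function} of the latter.

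First I would use the symmetries of the honeycomb lattice to reduce to a single representative configuration. Since~$\Hex_\delta$ is invariant under translations and under rotations by multiples of~$\tfrac{\pi}{3}$, and since the function~$F_{\C_\delta}(a_\delta,\cdot)$ is characterized by properties (the local defect from Proposition~\ref{prop:s-hol}, the decay at infinity) that are preserved by these symmetries up to the obvious rotation of~$\eta$, it suffices to prove the convergence when~$a_\delta$ is the right half-edge of a horizontal edge centred at the origin and~$z_{e_\delta}$ is the midpoint of a horizontal edge; every other case follows by applying the appropriate lattice isometry and the corresponding rotation of~$\eta^2$.

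Next I would recall, from the proof of Proposition~\ref{prop:full-plane_existence}, the explicit expression of~$F_{\C_\delta}(a_\delta,z_\delta)$ as a fixed linear combination of the values of~$G(\cdot)=G_1(0,\cdot)$ at the triangular-lattice vertex~$v=z/(\sqrt3\delta)$ and its six neighbours. Using the discrete harmonicity of~$G$ away from the origin, so that the six discrete gradients at~$v\ne 0$ sum to zero, this combination rewrites as
\[
F_{\C_\delta}(a_\delta,z_\delta)=\nabla_{\lambda-i}G(v)+\nabla_\lambda G(v)+\bigl(\tfrac32-\tfrac{i\sqrt3}{2}\bigr)\nabla_i G(v)+\bigl(\tfrac32+\tfrac{i\sqrt3}{2}\bigr)\nabla_{-i}G(v),
\]
where~$\nabla_\nu G(v)=G(v+\nu)-G(v)$ and~$\lambda=\tfrac{\sqrt3}{2}+\tfrac{i}{2}$; for the other orientations of~$a_\delta$ the six coefficients are permuted accordingly. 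Then I would invoke the standard fact that discrete gradients of a convergent family of discrete harmonic functions converge, uniformly on compact subsets away from the singularity, to the gradients of the limit (see~\cite[Proposition~3.1]{ChSmi_Universal}); combined with~\eqref{eq:greens_function} this gives
\[
\delta^{-1}\nabla_\nu G(v)\ \rightrightarrows\ \tfrac{1}{2\pi}\Re\bigl[\tfrac{\nu}{z}\bigr]
\]
uniformly on compact subsets of~$\C\setminus\{0\}$. Plugging this in and using that~$(\lambda-i)+\lambda+\tfrac32 i-\tfrac32 i=\sqrt3$ while the terms with coefficient~$\pm\tfrac{i\sqrt3}{2}$ contribute~$\tfrac{i\sqrt3}{2}\Re[-2i/z]$, a two-line computation yields~$\delta^{-1}F_{\C_\delta}(a_\delta,z_\delta)\rightrightarrows\tfrac{\sqrt3}{2\pi z}$, with the uniformity on compacts of~$\C\setminus D$ inherited from that of the gradient convergence.

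The argument is essentially bookkeeping once Proposition~\ref{prop:full-plane_existence} and~\eqref{eq:greens_function} are available; the step that demands care is verifying that the numerical coefficients entering the linear combination (dictated by the prescribed singularities of the harmonic projections~$F_{\C_\delta}^\eta$ shown in Fig.~\ref{fig:full_plane_projections}) combine to give exactly~$\tfrac{\sqrt3}{2\pi}$ and that all spurious imaginary contributions cancel. This is also the place where one has to keep track of the six possible orientations of~$a_\delta$ and of the edge carrying~$z_{e_\delta}$, though by the rotational symmetry of the construction each of them reduces to the horizontal case treated above, so I do not expect any genuine difficulty there.
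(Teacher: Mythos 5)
Your proposal is correct and follows essentially the same route as the paper: reduction to the horizontal case by lattice symmetry, the Green's-function representation from Proposition~\ref{prop:full-plane_existence}, rewriting as discrete gradients, convergence of discrete derivatives of discrete harmonic functions via~\cite{ChSmi_Universal}, and the asymptotics~\eqref{eq:greens_function}, followed by the same coefficient bookkeeping. The only cosmetic difference is that you justify the rewrite into gradients by harmonicity of~$G$ away from the origin, whereas it is simply an algebraic identity because the coefficients of the Green's-function values sum to zero; this is harmless.
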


\section{Star-segment transformation}
\label{sec:star-segment}

Here we give some details regarding (T2) in Proposition~\ref{prop:inv-to-transform}: invariance of the Ising model defined by~\eqref{eq:o-1-def} to the star-segment transformation when the weights are given by~\eqref{eq:weights}.

Take any~$\varphi \in (0,\pi/3)$ and define~$c(\varphi):= 1+ u(\pi-\varphi)u(2\pi/3+\varphi)x_c$, with the weights given by~\eqref{eq:weights}.
All boundary conditions are considered on Fig.~\ref{fig:pentagon} and below we give the corresponding relations on the weights:
\begin{align*}
	1\cdot c(\varphi) &\stackrel{1}{=} 1+ u(\pi-\varphi)u(2\pi/3+\varphi)x_c, \\
	u(2\pi/3-\varphi)x_c\cdot c(\varphi) &\stackrel{2}{=} u(\pi-\varphi)+ w(\varphi)u(2\pi/3+\varphi)x_c, \\
	v(\pi/3+\varphi)x_c\cdot c(\varphi) &\stackrel{3}{=} u(\varphi)u(2\pi/3+\varphi)x_c + v(\varphi)x_c, \\
	u(\pi/3+\varphi)x_c\cdot c(\varphi) &\stackrel{4}{=} u(\varphi)v(\pi/3-\varphi) + v(\varphi)u(\pi/3-\varphi)x_c, \\
	u(\pi/3+\varphi)\cdot c(\varphi) &\stackrel{5}{=} v(\varphi)u(2\pi/3+\varphi)x_c + u(\varphi)x_c, \\
	v(\pi/3+\varphi)\cdot c(\varphi) &\stackrel{6}{=} v(\varphi)v(\pi/3-\varphi) + u(\varphi)u(\pi/3-\varphi)x_c, \\
	v(\pi/3+\varphi)x_c\cdot c(\varphi) &\stackrel{7}{=} u(\pi-\varphi)u(2\pi/3+\varphi) + w(\varphi)w(\pi/3-\varphi), \\
	u(\pi/3+\varphi)x_c\cdot c(\varphi) &\stackrel{8}{=} u(\pi-\varphi)v(\pi/3-\varphi)x_c + w(\varphi)u(\pi/3-\varphi)x_c, \\
	w(\pi/3+\varphi)x_c\cdot c(\varphi) &\stackrel{9}{=} u(\pi-\varphi)u(\pi/3-\varphi)x_c + w(\varphi)v(\pi/3-\varphi)x_c.
\end{align*}

It is an exercise in trigonometry computations to check that~\eqref{eq:weights} implies these relations.

\begin{figure}
	\begin{center}
		\includegraphics[scale=1.1]{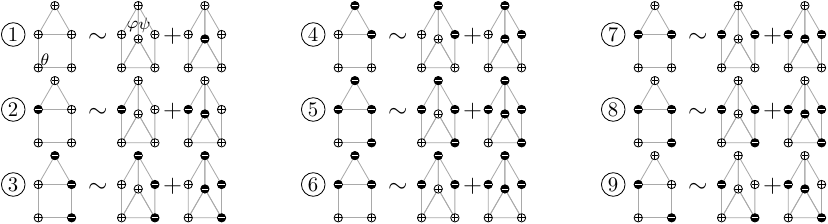}
		\caption{Local configurations before and after the transformation~(T2) under all possible boundary conditions.
		The angles of the rhombuses are~$\varphi$, $\psi:=\pi/3 - \varphi$, $\theta:=\pi/3+\varphi$, the triangles are equilateral.
		}
		\label{fig:pentagon}
	\end{center}
\end{figure}

\bibliographystyle{amsalpha}
\bibliography{biblicomplete}

\end{document}